\newcommand{\citet}{{\color{red}{add authors}}\cite} %
\newif\iflong
\newif\ifdraft
\newcommand{\mytitleb}{%
Hyperproperty-Preserving Register Specifications }
  \title{\mytitleb (Extended Version)}
  \title{\mytitleb}
\author{Yoav {Ben Shimon}}{Tel Aviv University, Israel}{yoavbenshimon@gmail.com}{https://orcid.org/0000-0002-8893-8874}{}
\author{Ori Lahav}{Tel Aviv University, Israel}{orilahav@tau.ac.il}{https://orcid.org/0000-0003-4305-6998}{}
\author{Sharon Shoham}{Tel Aviv University, Israel}{sharon.shoham@gmail.com}{https://orcid.org/0000-0002-7226-3526}{}
\authorrunning{Y. Ben Shimon et al.}
\keywords{Hyperproperties, Concurrent objects, Distributed objects, Linearizability, Strong linearizability, Simulation}
\colorlet{shadecolor}{gray!10}
\let\oldparagraph\paragraph
\renewcommand{\paragraph}[1]{\oldparagraph{\bfseries #1}}
\newcommand{\cmark}{\text{{\color{green!70!black}\ding{51}}}}%
\newcommand{\xmark}{\text{\color{red!70!black}\ding{55}}}%
\newcommand{\checkmarkt}[1]{%
	\edef\TVALUE{{#1}}%
	\expandafter\ifstrequal\TVALUE{yes}{\cmark}{}%
	\expandafter\ifstrequal\TVALUE{no}{\xmark}{}%
			\expandafter\ifstrequal\TVALUE{no*}{\xmark $^\star$}{}%
}
\newtheorem{notation}[theorem]{Notation}
\newtheorem*{notation*}{Notation}
\Crefname{section}{Section}{Section}
\crefname{corollary}{\text{Corollary}}{\text{corollaries}}
\Crefname{corollary}{\text{Corollary}}{\text{Corollaries}}
\crefname{lemma}{\text{Lemma}}{\text{Lemmas}}
\Crefname{lemma}{\text{Lemma}}{\text{Lemmas}}
\crefname{proposition}{\text{Prop.}}{\text{Propositions}}
\Crefname{proposition}{\text{Proposition}}{\text{Propositions}}
\crefname{definition}{\text{Def.}}{\text{Definitions}}
\Crefname{definition}{\text{Definition}}{\text{Definitions}}
\crefname{notation}{\text{Notation}}{\text{Notations}}
\Crefname{notation}{\text{Notation}}{\text{Notations}}
\crefname{theorem}{\text{Thm.}}{\text{Theorems}}
\Crefname{theorem}{\text{Theorem}}{\text{Theorems}}
\crefname{figure}{\text{Fig.}}{\text{Figures}}
\Crefname{figure}{\text{Figure}}{\text{Figures}}
\crefname{example}{\text{Example}}{\text{Examples}}
\Crefname{example}{\text{Example}}{\text{Examples}}
\Crefname{algorithm}{\text{Algorithm}}{\text{Algorithms}}
\newcommand{\condletter}{R}
\newcommand{\para}[1]{\medskip \noindent \emph{#1.}}
\theoremstyle{definition}
\newtheorem{definition2}[theorem]{Definition}
\newcounter{claimcounter}
\numberwithin{claimcounter}{theorem}
\newenvironment{claim2}[1]{ \refstepcounter{claimcounter}\par\noindent\underline{Claim \theclaimcounter:}\space#1}{}
\crefname{claimcounter}{\text{Claim}}{\text{Claims}}
\Crefname{claimcounter}{\text{Claim}}{\text{Claims}}
\newlist{claimenum}{enumerate}{1}
\setlist[claimenum]{label=(\arabic*),ref=\theclaimcounter~(\arabic*)}
\newcommand{\commentout}[1]{}
\newcommand{\ie}{i.e.,\xspace}
\newcommand{\eg}{e.g.,\xspace}
\newcommand{\etal}{et~al.\@\xspace}
\newcommand{\wrt}{w.r.t.\@\xspace}
\newcommand{\aka}{a.k.a.\@\xspace}
\newcommand{\wlg}{w.l.o.g.,\@\xspace}
\newcommand{\quotes}[1]{``#1''}
\newcommand{\inarr}[1]{\begin{array}{@{}l@{}}#1\end{array}}
\newcommand{\inarrII}[2]{\begin{array}{@{}l@{\;}||@{\;}l@{}}\inarr{#1}&\inarr{#2}\end{array}}
\newcommand{\inarrIII}[3]{\begin{array}{@{}l@{~~}||@{~~}l@{~~}||@{~~}l@{}}\inarr{#1}&\inarr{#2}&\inarr{#3}\end{array}}
\newcommand{\set}[1]{\{{#1}\}}
\newcommand{\pfn}{\rightharpoonup}
\renewcommand{\st}{\; | \;} %
\newcommand{\N}{{\mathbb{N}}}
\newcommand{\dom}[1]{\textsf{dom}{({#1})}}
\newcommand{\tup}[1]{{\langle{#1}\rangle}}
\newcommand{\pair}[2]{{\langle{#1,#2}\rangle}}
\newcommand{\suq}{\subseteq}
\newcommand{\size}[1]{\left|{#1}\right|}
\newcommand{\setof}[1]{\mathsf{set}(#1)}
\newcommand{\maketil}[1]{{#1}\ldots{#1}}
\newcommand{\til}{\maketil{,}}
\DeclareRobustCommand
\renewcommand*{\cdots}{\Compactcdots}
\newcommand{\rst}[1]{|_{#1}}
\newcommand{\defiff}{\mathrel{\stackrel{\mathsf{\triangle}}{\Leftrightarrow}}}
\newcommand{\defeq}{\triangleq}
\newcommand{\eqdef}{\defeq} %
\newcommand{\powerset}[1]{\mathcal{P}({#1})}
\newcommand{\range}[2]{[#1\mathbin{..}#2]}
\newcommand{\raisemath}[1]{\mathpalette{\raisem@th{#1}}}
\newcommand{\raisem@th}[3]{\raisebox{#1}{$#2#3$}}
\newcommand{\astep}[1]{\xrightarrow{}_{#1}}
\newcommand{\asteplab}[2]{{}\mathrel{\raisebox{-0.8pt}{\ensuremath{\xrightarrow{#1}}}_{#2}}{}}
\newcommand{\bsteplab}[2]{{}\mathrel{\raisebox{-0.8pt}{\ensuremath{\xRightarrow{#1}}}_{#2}}{}}
\newcommand{\xRightarrow}[2][]{\ext@arrow 0359\Rightarrowfill@{#1}{#2}}
\newcommand{\seq}{\mathbin{;}}
\newcommand{\squishlist}[1][$\bullet$]{%
 \begin{list}{#1}
  { \setlength{\itemsep}{0pt}
     \setlength{\parsep}{0pt}
     \setlength{\topsep}{1pt}
     \setlength{\partopsep}{0pt}
     \setlength{\leftmargin}{1.2em}
     \setlength{\labelwidth}{0.5em}
     \setlength{\labelsep}{0.4em} } }
\newcommand{\squishend}{
  \end{list}  }
\definecolor{DarkGreen}{rgb}{0.05, 0.45, 0.05}
\newcommand{\hide}[1]{}
\newcommand{\pc}{\mathit{pc}}
\newcommand{\mylabel}[2]{#2\def\@currentlabel{#2}\label{#1}}
\newcommand*{\eqassnh}{\mathrel{\rlap{%
                     \raisebox{0.3ex}{$\m@th\cdot$}}%
                     \raisebox{-0.3ex}{$\m@th\cdot$}}%
                     }
\newcommand{\cloc}[1]{\mathtt{
\ifthenelse{\equal{#1}{1}}{x}{
\ifthenelse{\equal{#1}{2}}{y}{
\ifthenelse{\equal{#1}{3}}{z}{
\ifthenelse{\equal{#1}{4}}{w}{
{#1}}}}}}}
\newcommand{\cspace}[1]{\mathtt{
\ifthenelse{\equal{#1}{main}}{X_\main}{
\ifthenelse{\equal{#1}{1}}{X}{
\ifthenelse{\equal{#1}{2}}{Y}{
\ifthenelse{\equal{#1}{3}}{Z}{
{#1}}}}}}}
\newcommand{\creg}[1]{\mathtt{
\ifthenelse{\equal{#1}{1}}{a}{
\ifthenelse{\equal{#1}{2}}{b}{
\ifthenelse{\equal{#1}{3}}{c}{
\ifthenelse{\equal{#1}{4}}{d}{
\ifthenelse{\equal{#1}{5}}{e}{
\ifthenelse{\equal{#1}{6}}{f}{
{#1}}}}}}}}}
\newcommand{\client}[3][]{
  \ifthenelse{\equal{#1}{}}{
    {{#2}[{#3}]}
  }{
    {{{#2}[{#3}]}\shortparallel{#1}}
  }
}
\newcommand{\rel}{R}
\newcommand{\trans}{t}
\newcommand{\logtrans}{t_\ell}
\newcommand{\transsq}{s}
\newcommand{\prestate}{\mathsf{pre}}
\newcommand{\lab}{\mathsf{label}}
\newcommand{\poststate}{\mathsf{post}}
\newcommand{\last}{\mathsf{last}}
\newcommand{\lastop}{\mathsf{last_{op}}}
\newcommand{\lts}{A}
\newcommand{\lQ}{{\mathtt{Q}}}
\newcommand{\lzero}{{\mathtt{0}}}
\newcommand{\linit}{{\mathtt{q_\lzero}}}
\newcommand{\lSigma}{{\boldsymbol{\Sigma}}}
\newcommand{\lT}{{\mathtt{T}}}
\newcommand{\intcomp}[2]{{#1}\parallel{#2}}
\newcommand{\method}{\mathit{m}}
\newcommand{\val}{v}
\newcommand{\tid}{p}
\newcommand{\id}{k}
\newcommand{\Method}{\mathsf{M}}
\newcommand{\Val}{\mathsf{Val}}
\newcommand{\Tid}{\mathsf{Tid}}
\newcommand{\Id}{\mathsf{Id}}
\newcommand{\Opid}{\mathsf{OPid}}
\newcommand{\Progint}{\mathsf{PInt}}
\newcommand{\methodf}{{\mathtt{method}}}
\newcommand{\valf}{{\mathtt{val}}}
\newcommand{\tidf}{{\mathtt{tid}}}
\newcommand{\idf}{{\mathtt{id}}}
\newcommand{\opidf}{{\mathtt{opid}}}
\newcommand{\inputf}{{\mathtt{input}}}
\newcommand{\outputf}{{\mathtt{output}}}
\newcommand{\sq}[1]{\bar{#1}}
\newcommand{\invact}{\mathtt{inv}}
\newcommand{\resact}{\mathtt{res}}
\newcommand{\objact}{\mathtt{int}} %
\newcommand{\inv}{i}
\newcommand{\res}{r} %
\newcommand{\Inv}{\mathsf{I}}
\newcommand{\Res}{\mathsf{R}}
\newcommand{\Objint}{\mathsf{IInt}}
\newcommand{\act}{a}
\newcommand{\Act}{\mathsf{Act}}
\newcommand{\op}{o}
\newcommand{\ressq}{\sq\res}
\newcommand{\actsq}{\sq\act} %
\newcommand{\opset}{S}
\newcommand{\wop}{w} %
\newcommand{\rop}{r} %
\newcommand{\wopmin}{{\wop}_{\min}}
\newcommand{\ropmin}{{\rop}_{\min}}
\newcommand{\prefof}[2]{\mathtt{pref}_{#1}({#2})}
\newcommand{\prefinv}[1]{\mathtt{pref}(#1)}
\newcommand{\preleq}{\precsim}
\newcommand{\preeq}{\sim}
\newcommand{\partleq}{\preceq}
\newcommand{\partlneq}{\precneqq}
\newcommand{\subseq}{\preceq_\mathsf{S}}
\newcommand{\subsneq}{\precneqq_\mathsf{S}}
\newcommand{\pref}{\preceq_\mathsf{P}}
\newcommand{\prefneq}{\precneqq_\mathsf{P}}
\newcommand{\permeq}{\sim^{\mathsf{w*}}}
\newcommand{\permleq}{\precsim^{\mathsf{w*}}}
\newcommand{\emptyword}{\varepsilon}
\newcommand{\wsleq}{\precsim_{\mathsf{P}}^{\mathsf{ww}}}
\newcommand{\idrel}[1]{[#1]}
\newcommand{\writeid}{\idrel{\regwrite}}
\newcommand{\linrel}{R}
\newcommand{\linreltotal}{R_t}
\newcommand{\readrespondsbefore}{\mathtt{rfr}}
\newcommand{\cdotnew}{\oplus}
\newcommand{\suf}[2]{#2/#1} %
\newcommand{\impl}{I}
\newcommand{\obj}{O}
\newcommand{\prog}{P}
\newcommand{\scheduler}{S}
\newcommand{\srefine}{\leq_\mathsf{s}}
\newcommand{\mgc}{\ensuremath{\mathtt{MGC}}}
\newcommand{\mgcinit}{{\mathtt{\hat{q}}}}
\newcommand{\progrst}{p}
\newcommand{\abs}[1]{{#1}^{\#}}
\newcommand{\fsim}{R}
\newcommand{\leqtr}[1][]{\sqsubseteq^{\ifthenelse{\isempty{#1}}{}{#1}}}
\newcommand{\leqsim}[1][]{\sqsubseteq_{\mathsf{F}}^{\ifthenelse{\isempty{#1}}{}{#1}}}
\newcommand{\leqpp}[1][]{\sqsubseteq_{\mathsf{PP}}^{\ifthenelse{\isempty{#1}}{}{#1}}}
\newcommand{\guess}{\mathtt{guess}}
\newcommand{\progex}{e_p}
\newcommand{\tr}{\rho}
\newcommand{\traces}[1]{\mathsf{traces}({#1})}
\newcommand{\traceof}[1]{\mathsf{trace}({#1})}
\newcommand{\ex}{e}
\newcommand{\exmin}{e_\mathsf{m}}
\newcommand{\h}{h}
\newcommand{\seqh}{s}
\newcommand{\hs}[1]{\mathsf{h}({#1})}
\newcommand{\exec}[1]{\mathsf{E}({#1})}
\newcommand{\executions}[1]{\exec{#1}} %
\newcommand{\seqspec}{\mathit{Spec}}
\newcommand{\regspec}{\mathsf{Spec}_\Reg}
\newcommand{\obs}{\mathtt{obs}}
\newcommand{\obsbetween}{\mathtt{obs\text{-}between}}
\newcommand{\completed}{\mathsf{completed}}
\newcommand{\linleq}{\sqsubseteq}
\newcommand{\linmap}{L}
\newcommand{\linmaps}{\mathscr{L}}
\newcommand{\chain}{\mathscr{C}}
\newcommand{\boundlin}{\linmap_{\text{bound}}}
\newcommand{\ws}{\text{WS}}
\newcommand{\dec}{\text{D}}
\newcommand{\aws}{\dec}
\newcommand{\regread}{\mathtt{read}}
\newcommand{\regwrite}{\mathtt{write}}
\newcommand{\linpoint}{lin}
\newcommand{\rellinpoint}{f}
\newcommand{\verf}{ver}
\newcommand{\class}{\mathcal{I}}
\newcommand{\linclass}[1]{\class_{#1}}
\newcommand{\classrel}[2]{\class_{#1}(#2)}
\newcommand{\wslin}{\linclass{\mathsf{ws}}}
\newcommand{\dlin}{\linclass{\mathsf{d}}}
\newcommand{\awslin}{\dlin}
\newcommand{\relcomp}[2]{\impl_{#1}(#2)}
\newcommand{\Reg}{\mathsf{Reg}}
\newcommand{\strongreg}{\ensuremath{\mathtt{ATR}}\xspace}
\newcommand{\wsreg}{\ensuremath{\mathtt{WSR}}\xspace}
\newcommand{\dreg}{\ensuremath{\mathtt{DR}}\xspace}
\newcommand{\awsreg}{\dreg}
\newcommand{\ghost}[1]{{{#1}^{\mathsf{ex}}}}
\newcommand{\ltsg}{\ghost\lts}
\newcommand{\Pc}{\mathit{PC}}
\newcommand{\valset}{\mathcal{V}}
\newcommand{\valsetprev}{\mathcal{V}_\textup{prev}}
\newcommand{\reg}{X}
\newcommand{\tmp}{\mathit{tmp}}
\newcommand{\start}{s}
\newcommand{\retval}{\mathit{out}}
\newcommand{\replies}{\mathit{Replies}}
\newcommand{\broadcasts}{\mathit{Broadcasts}}
\newcommand{\tm}{t}
\newcommand{\Tmap}{\mathit{St}}
\newcommand{\Tmapid}{ID}
\newcommand{\Tmapval}{\mathit{V}} %
\newcommand{\tidset}{T}
\newcommand{\writeopset}{W}
\newcommand{\writetidset}{T_W}
\newcommand{\readopset}{R}
\newcommand{\readtidset}{T_R} %
\newcommand{\Tmaparg}{\mathit{A}}
\newcommand{\Tmappc}{\mathit{PC}}
\newcommand{\Tmapvalset}{\mathit{VS}}
\newcommand{\Tmapstart}{\mathit{S}}
\newcommand{\Fver}{f}
\newcommand{\tshist}{H}
\newcommand{\mbegin}[2]{{\mathtt{b}({#1},{#2})}}
\newcommand{\mpend}[1]{{\mathtt{p}(#1)}}
\newcommand{\mend}[2]{{\mathtt{e}({#1},{#2})}}
\newcommand{\main}{\mathtt{main}}
\newcommand{\wbegin}{\mathtt{w_b}}
\newcommand{\wwait}{\mathtt{w_w}}
\newcommand{\wrollback}{\mathtt{w_r}}
\newcommand{\wend}{\mathtt{w_e}}
\newcommand{\rbegin}{\mathtt{r_b}}
\newcommand{\rend}{\mathtt{r_e}}
\newcommand{\rlisten}{\mathtt{r_l}}
\newcommand{\rquery}{\mathtt{r_q}}
\newcommand{\rupdate}{\mathtt{r_u}}
\newcommand{\wquery}{\mathtt{w_q}}
\newcommand{\wupdate}{\mathtt{w_u}}
\newcommand{\ABD}{\ensuremath{\mathtt{ABD}}\xspace}
\newcommand{\TS}{\mathsf{TS}}
\newcommand{\Msg}{\mathsf{Msg}}
\newcommand{\msgmap}{M}
\newcommand{\msg}{m}
\newcommand{\Acks}{\mathsf{Acks}}
\newcommand{\ver}{\mathit{Ver}}
\newcommand{\ts}{\mathit{ts}}
\newcommand{\cnt}{C}
\newcommand{\queueturn}{t}
\newcommand{\quorum}{Q}
\newcommand{\lock}{L}
\newcommand{\lockLTS}{\ell}
\newcommand{\tsmax}{\ts_{\mathit{m}}}
\newcommand{\valmax}{\val_{\mathit{m}}}
\newcommand{\valin}{\val_\text{in}}
\newcommand{\valout}{\val_\text{out}}
\newcommand{\query}{\mathtt{query}}
\newcommand{\update}{\mathtt{update}}
\newcommand{\transrule}[1]{\textsc{#1}}
\newcommand{\internalleft}{\transrule{internal 1}}
\newcommand{\internalright}{\transrule{internal 2}}
\newcommand{\interface}{\transrule{interface}}
\newcommand{\invrule}{\transrule{inv}}
\newcommand{\loginv}{\transrule{log-inv}}
\newcommand{\resrule}{\transrule{res}}
\newcommand{\logres}{\transrule{log-res}}
\newcommand{\execset}{E}
\newcommand{\readinv}{\transrule{read-inv}}
\newcommand{\readinit}{\transrule{read-init}}
\newcommand{\readloop}{\transrule{read-loop}}
\newcommand{\readpick}{\transrule{read-pick}}
\newcommand{\readquery}{\transrule{read-query}}
\newcommand{\readupdate}{\transrule{read-update}}
\newcommand{\readend}{\transrule{read-end}}
\newcommand{\readres}{\transrule{read-res}}
\newcommand{\writeinv}{\transrule{write-inv}}
\newcommand{\writeint}{\transrule{write-int}}
\newcommand{\writeinit}{\transrule{write-init}}
\newcommand{\writetop}{\transrule{write-top}}
\newcommand{\writerollback}{\transrule{write-roll-back}}
\newcommand{\writerollforward}{\transrule{write-roll-forward}}
\newcommand{\writequery}{\transrule{write-query}}
\newcommand{\writeupdate}{\transrule{write-update}}
\newcommand{\writeend}{\transrule{write-end}}
\newcommand{\writeres}{\transrule{write-res}}
\newcommand{\serverqueryack}{\transrule{server-query-ack}}
\newcommand{\serverupdateack}{\transrule{server-update-ack}}
\tikzset{execution/.style={y=0.5cm,x=1.5cm,very thick,font=\sffamily}}
\tikzset{ltsstyle/.style = {scale = 1.2,draw, circle, initial text=}}
\tikzset{abovetext/.style={above=4pt,anchor=base}}
\tikzset{belowtext/.style={below=10pt,anchor=base}}
\newcommand*\circled[1]{\tikz[baseline=(char.base)]{
            \node[shape=circle,draw,inner sep=0.7pt] (char) {\smaller \text{#1}};}}
\newcommand{\dlreg}{\ensuremath{\mathtt{DLR}}\xspace}
\newcommand{\tnsreg}{\ensuremath{\mathtt{TNSR}}\xspace}
\newcommand{\lreg}{a}
\newcommand{\lrega}{b}
\newcommand{\lregb}{c}
\newcommand{\exset}{T}
\newcommand{\exsetwr}{\exset_1}
\newcommand{\exsetww}{\exset_2}
\newcommand{\exsetwwb}{\exset_3}
\newcommand{\exsetwwr}{\exset_4}
\newcommand{\progwr}{\prog_1}
\newcommand{\progww}{\prog_2}
\newcommand{\progwwb}{\prog_3}
\newcommand{\progwwr}{\prog_4}
\newcommand{\winv}[1]{\mathtt{\stackrel{w\text{#1}}{}}}
\newcommand{\rinv}{{\mathtt{\stackrel{r}{}}}}
\newcommand{\rval}[1]{{\mathtt{\stackrel{ \text{#1}}{}}}}
\newcommand{\wint}[2]{\;|\!$\parbox[t][][t]{#2}{$\,\winv{#1}$\vspace*{-1pt}{\color{black}\hrule height 1pt}}$\!|\;}
\newcommand{\rint}[2]{\;|\!$\parbox[t][][t]{#2}{$\,\rinv\hfill\rval{#1}\,$\vspace*{-1pt}{\color{black}\hrule height 1pt}}$\!|\;}
\newcommand{\wintp}[2]{\;|\!$\parbox[t][][t]{#2}{$\,\winv{#1}$\vspace*{-1pt}{\color{black}\hrule height 1pt}}$\;}
\newcommand{\rintp}[2]{\;|\!$\parbox[t][][t]{#2}{$\,\rinv\,$\vspace*{-1pt}{\color{black}\hrule height 1pt}}$\;}
\newcommand{\sizeABD}{N}
\begin{document}
\iflong
\hideLIPIcs
\fi

\maketitle

\begin{abstract}
Reasoning about \emph{hyperproperties} of concurrent implementations,
such as the guarantees these implementations provide to randomized client programs,
has been a long-standing challenge.
Standard linearizability enables the use of \emph{atomic specifications} for reasoning about standard properties,
but not about hyperproperties.
A stronger correctness criterion, called \emph{strong linearizability}, enables such reasoning, but is rarely achievable,
leaving various useful implementations with no means for reasoning about their hyperproperties.
In this paper, we focus on registers and devise \emph{non-atomic specifications} that capture
a wide-range of well-studied register implementations and enable reasoning about their hyperproperties.
First, we consider the class of \emph{write strong-linearizable} implementations,
a recently proposed useful weakening of strong linearizability, which allows more implementations,
such as the well-studied single-writer \ABD distributed implementation.
We introduce a simple shared-memory register specification that
can be used for reasoning about hyperproperties
of programs that use write strongly-linearizable implementations.
Second, we introduce a new linearizability class, which we call \emph{decisive linearizability},
that is weaker than write strong-linearizability and includes multi-writer \ABD,
and develop a second shared-memory register specification
for reasoning about hyperproperties of programs that use register implementations of this class.
These results shed light on the hyperproperties guaranteed when
simulating shared memory in a crash-resilient message-passing system.

\end{abstract}

\section{Introduction}
\label{sec:intro}

Linearizability~\cite{HW:1990} is a widely accepted correctness criterion for concurrent and distributed implementations of objects,
allowing clients of an object to pretend that they use an atomic abstraction thereof,
whose behaviors are much easier to understand~\cite{Filipovic10}.
The observational refinement between a linearizable implementation and its atomic specification is, however, restricted to reasoning about reachability of `bad' states.
Dealing with more intricate properties, such as the ability of an adversary
to control the probability distribution of the results of an object's methods,
reveals that a linearizable implementation may manifest behaviors exceeding those permissible by the atomic specification~\cite{Golab11}.
In the terminology of~\cite{AE19}, linearizability ensures preservation of safety properties but fails to maintain \emph{hyperproperties},
which are properties of \emph{sets} of executions, rather than individual executions.
These properties allow one to express security guarantees, such as noninterference,
as well as probability distributions on program outcomes~\cite{ClarksonS10}.

The preservation of hyperproperties of concurrent implementations, \aka \emph{strong} observational refinement, necessitates a more strict connection between the implementation and its atomic specification,
known as \emph{strong linearizability}~\cite{Golab11},
 which is equivalent to (a certain form of) forward simulation between the implementation and the atomic specification~\cite{AE19,DongolSW22}.
Many implementations are, however, known to be \emph{non}-strongly linearizable,
leaving us with no means to reason about hyperproperties of programs that use
these implementations by assuming simpler abstractions.
In particular, the well-studied \ABD implementation,
which shows how shared memory can be simulated in a crash-tolerant message-passing system~\cite{AttiyaBD95},
 is not strongly linearizable, and, in fact,
 a strongly linearizable implementation with similar guarantees does not exist~\cite{AttiyaEW21}.

Focusing on registers and observing that strong linearizability is hardly achievable,
Hadzilacos \etal~\cite{HadzilacosHT21} recently proposed a weakening of strong linearizability,
called \emph{write strong-linearizability}, which captures more implementations.
This includes single-writer \ABD, and, in fact, as shown in \cite{HadzilacosHT21}, every linearizable implementation of a \emph{single-writer} register.
We are left, however, with substantial gaps: how should one reason about hyperproperties of programs that use write strongly-linearizable register implementations?
and what can be said about existing non-strongly-linearizable implementations of \emph{multi-writer} registers?

The current work aims to address these gaps.
Inspired by Attiya and Enea~\cite{AE19}, who propose to reason about hyperproperties of programs that use
non-strongly-linearizable implementations
by using simpler (albeit non-atomic) implementations related to them by strong observational refinement,
we present a simple (but necessarily not atomic) specification of a shared multi-writer register, which we call $\wsreg$
(for ``Write Strong Register'')
 that can be used for reasoning about hyperproperties of programs that use any write strongly-linearizable implementation
(including single-writer \ABD).
To do so, we prove that every write strongly-linearizable implementation has a forward simulation to $\wsreg$, and utilize the correspondence between forward simulation and strong observational refinement (preservation of hyperproperties).
Moreover, since write strong-linearizability is downward closed \wrt forward simulation and $\wsreg$ is write strongly-linearizable,
one can also prove write strong-linearizability for a given implementation by establishing a forward simulation to $\wsreg$, 
which may be more amenable to automatic/machine-assisted proofs than a direct proof.
Drawing an analogy to complexity theory, we refer to $\wsreg$ as a \emph{complete} implementation for the class of write strongly-linearizable register implementations:
$\wsreg$ is write strongly-linearizable and every write strongly-linearizable implementation has a forward simulation to $\wsreg$.

As for multi-writer registers, we present a second specification of a shared register
that is `complete' for a family of implementations that admit a weakening of write strong-linearizability, which we call \emph{decisive linearizability}.
In particular, we show that multi-writer \ABD~\cite{LynchS97} belongs to this family.
Thus, the complete implementation, which we call $\dreg$ (for ``Decisive Register''),
enables reasoning about hyperproperties of programs that use \ABD
via a simpler \emph{shared-memory specification}.
(We also use $\dreg$ to demonstrate that multi-writer \ABD is decisively linearizable by showing a forward simulation to $\dreg$.)
Intuitively speaking, unlike strong linearizability and write strong-linearizability,
decisive linearizability gradually commits on the relative order of operations in the sequential history,
rather than on the exact position of operations in that history.

\smallskip
\noindent\textbf{Outline.}
The rest of this paper is structured as follows.
In \cref{sec:overview} we make the introductory discussion more concrete by outlining several examples.
In \cref{sec:pre} we provide the necessary preliminaries for our formal development.
In \cref{sec:comp} we introduce and study the notion of a complete implementation for a given linearizability class.
In \cref{sec:ws} we present the complete implementation for write strong-linearizability.
In \cref{sec:decisive} we define decisive linearizability and present a complete implementation for this class.
We discuss related work and conclude in \cref{sec:related}.
In Appendix~\ref{sec:sketches} we provide proof sketches for several lemmas and theorems.
\iflong
In Appendices~\ref{app:pre}--\ref{app:abd} we provide full proofs. 
(See Appendix~\ref{app:pre} for more details on the structure of the appendix.)
\else
The full version of this paper \todo{cite arxiv} provides full proofs.
\fi

\newcommand{\figclients}{
\begin{figure}
\small
\captionsetup[subfigure]{justification=centering}
\captionsetup[subfigure]{labelformat=empty}
	\begin{subfigure}{0.33\textwidth}
		\begin{equation*}
			\inarrII{\Write{$1$};\\ \Write{$2$};\\ \lreg \gets\Coin{};}{\lrega \gets \Read{};}
		\end{equation*}
		\caption{Program $\progwr$\label{fig:client1}}
	\end{subfigure}
	\vline
	\hfill
	\begin{subfigure}{0.34\textwidth}
		\begin{equation*}
			\inarrII{\Write{$1$};\\ \lreg \gets\Coin{};\\ \Barrier{};}{\Write{$2$};\\ \Barrier{}; \\ {\lrega \gets \Read{};}}
		\end{equation*}
		\caption{Program $\progww$\label{fig:client2}}
	\end{subfigure}
	\vline
	\hfill
	\begin{subfigure}{0.3\textwidth}
	\begin{equation*}
			\inarrII{\Write{$1$};\\ \Barrier{};\\ \lreg \gets\Coin{};}{$\Write{$2$}$;\\ \Barrier{};\\ \lrega \gets \Read{}; }
	\end{equation*}
		\caption{Program $\progwwb$\label{fig:client3}}
	\end{subfigure}
	\hrule
	\smaller
	\begin{subfigure}{0.33\textwidth}
		\begin{equation*}
			\exsetwr\!=\!\left\{\inarr{\wint{1}{10pt} \wint{2}{10pt} \circled{1} \\ \; \rint{1}{40pt}}
			\;,
			\inarr{\wint{1}{10pt} \wint{2}{10pt} \circled{2} \\ \; \rint{2}{40pt}}\right\}
		\end{equation*}
	\end{subfigure}
	\vline
	\hfill
	\begin{subfigure}{0.34\textwidth}
		\begin{equation*}
			\exsetww\!=\!\left\{
			\inarr{\;\wint{1}{10pt} \circled{1} \\  \wint{2}{30pt} \!\! \rint{1}{10pt}}
			\;,
			\inarr{\;\wint{1}{10pt} \circled{2} \\  \wint{2}{30pt} \!\! \rint{2}{10pt}}
			\right\}
		\end{equation*}
	\end{subfigure}
	\vline
	\hfill
	\begin{subfigure}{0.3\textwidth}
		\begin{equation*}
			\exsetwwb\!=\!\left\{ \inarr{\; \wint{1}{10pt} \circled{1}  \\ \wint{2}{10pt} \quad\rint{1}{10pt}}
			\;,
			\inarr{\; \wint{1}{10pt} \circled{2}  \\ \wint{2}{10pt} \quad\rint{2}{10pt}}\right\}
		\end{equation*}

	\end{subfigure}
	\caption{Client programs (upper part) and corresponding trace sets (lower part) that,	
	if an adversary can generate them, violate the hyperproperty ``$\lreg=\lrega$ with probability $\frac{1}{2}$'' \label{fig:clients}}
\end{figure}}

\newcommand{\figregs}{
\begin{figure*}[t]
\footnotesize
\begin{minipage}{0.255\textwidth}
\begin{algorithm}[H]
	\NoCaptionOfAlgo
		\METHOD{\Read{}}{
			$\retval\gets\reg$\;
			\Return{$\retval$}\;
		}
	\vspace*{40.5pt}	
		\METHOD{\Write{$\val$}}{
			$\reg\gets\val$\;
			\Return\;
		}
	\caption*{Atomic ($\strongreg$)}
\end{algorithm}
\end{minipage}
\hfill
\begin{minipage}{0.335\textwidth}
\begin{algorithm}[H]
	\NoCaptionOfAlgo
		\METHOD{\Read{}}{
			$\retval_1\gets\reg$\;
			$\retval_2\gets\reg$\;
				\lIf{\normalfont{\texttt{*}}}{\Return{$\retval_1$}}\lElse{\Return{$\retval_2$}}
		}
	\vspace*{22pt}
		\METHOD{\Write{$\val$}}{
			$\reg\gets\val$\;
			\Return\;
		}
	\caption{Double load ($\dlreg$)}
\end{algorithm}
\end{minipage}
\hfill
\begin{minipage}{0.385\textwidth}
\begin{algorithm}[H]
	\NoCaptionOfAlgo
		\METHOD{\Read{}}{
			$\retval\gets\reg$\;
			\Return{$\retval$}\;
		}
	\smallskip	
		\METHOD{\Write{$\val$}}{
			$\lreg_1\gets\reg$\;
			$\lreg_2\gets\reg$\;
			\If{\normalfont{\texttt{*}}}{\lIf{$\lreg_1\neq\lreg_2$}{\Return{}}}
			$\reg\gets\val$\;
			\Return{}\;
		}
	\caption{Try-not-to-store ($\tnsreg$)}
\end{algorithm}
\end{minipage}

\begin{minipage}{1\textwidth}
\begin{algorithm}[H]
	\NoCaptionOfAlgo
	\textbf{Shared Variables:}
	A set $\broadcasts$ of query/update messages and a mapping $\replies$ from messages to their replies.
	
	\textbf{Local Variables:}
	Process $\tid$ stores the most recent value it observed, $\val_\tid$, and its timestamp, $\ts_\tid$.

    Timestamps are pairs $\ts=\tup{\tm,\tid}$ with $t\in\N$ ordered lexicographically (assuming an arbitrary order on process id's).
$\max \set{\tup{\val_1,\ts_1} \til \tup{\val_n,\ts_n}}$ retrieves the timestamped value $\tup{\val_i,\ts_i}$ with the maximum timestamp.

	\begin{multicols}{2}
		\METHOD{\Read{}}{
			$\tup{\val,\ts}\gets$\Query{}\;
			\Update{$\val,\ts$}\;
			\Return{$\val$}\;
		}
		
		\BlankLine

		\METHOD{\Write{$\val$}}{
			$\tup{\_,\tup{\tm,\_}}\gets$\Query{}\;
			\Update{$\val,\tup{\tm+1,\normalfont{\texttt{my\_process\_id}()}}$}\;
			\Return{}\;
		}
		\BlankLine		
		\FUNCTION{\Update{$\val,\ts$}}{
			\Broadcast{$\msg=\update(\val,\ts)$}\;
			\Wait{$\size{\replies(\msg)}>{\sizeABD/2}$}\;
			\Return{}\;
		}

		\BlankLine

		\FUNCTION{\Query{}}{
			\Broadcast{$\msg=\query$}\;
			\Wait{$\size{\replies(\msg)}>{\sizeABD/2}$}\;
			$\quorum\gets$\Pick{$\quorum\subseteq \replies(\msg) \;\text{s.t.}\; \size{\quorum}>{\sizeABD/2}$}\;
			\Return{$\max\quorum$}\;
		}
		
		\BlankLine
		
		Background activity by process $\tid$: \\
		\When{$\msg\in\broadcasts$}{
		\If{$\msg=\query$}{\lReply{$\tup{\val_\tid,\ts_\tid}$}{$\msg$}}
		\If{$\msg=\update(\val,\ts)$}{
			$\tup{\val_\tid,\ts_\tid}\gets\max\set{\tup{\val,\ts},\tup{\val_\tid,\ts_\tid}}$\;
			\lReply{``ack''}{$\msg$}
			}
		}
	\end{multicols}
		    \vspace{.4\baselineskip}%
	\caption{$\ABD$ implementation for $\sizeABD$ processes ($\ABD_\sizeABD$)}
\end{algorithm}
\end{minipage}

\caption{Four register implementations}
\label{fig:regs}
\end{figure*}}

\newcommand{\sumtable}{
\begin{figure}
\begin{minipage}{0.5\textwidth}
	\begin{tabular}{|c||*{4}{c|}}\hline
		&\makebox[3em]{$\strongreg$}&\makebox[3em]{$\dlreg$}&\makebox[3em]{$\tnsreg$}
		&\makebox[3em]{$\ABD_{\geq 3}$}\\\hline\hline
		$\progwr$ &$\cmark$&$\xmark$&$\cmark$&$\xmark$\\\hline
		$\progww$ &$\cmark$&$\cmark$&$\xmark$&$\xmark$\\\hline
		$\progwwb$ &$\cmark$&$\cmark$&$\cmark$&$\cmark$\\\hline
	\end{tabular}
\end{minipage}
\hfill
\begin{minipage}{0.47\textwidth}
\small
	For each program and register implementation, $\cmark$ indicates that the hyperproperty ``$\lreg=\lrega$ with probability $\frac{1}{2}$'' holds under any adversary, and $\xmark$ indicates the hyperproperty is refuted by some adversary.
\end{minipage}
		\caption{Summary of examples\label{fig:client-impl}}
\end{figure}}

\section{Motivating Examples: A Tale of Four Registers}
\label{sec:overview}

\figclients

This section demonstrates certain intricacies arising when examining hyperproperties
of client programs using (linearizable) implementations of concurrent registers.
The specifications we develop in the next sections are based on the observations
arising from these examples.
We keep the discussion informal, deferring the formal treatment to the next sections.

\Cref{fig:clients} (upper part) presents three programs,
in which two threads read and write from a shared register,
and invoke a method $\Coin{}$ that returns 1 or 2 uniformly at random.
Programs $\progww$ and $\progwwb$ also employ a synchronization method $\Barrier{}$
that ensures the threads wait for each other before executing the rest of the code.
Given that the underlying register implementation is linearizable, one can
analyze standard properties of a single (finite) trace
(\eg the final values of the variables) of these programs by considering
an \emph{atomic} register ($\strongreg$ in \cref{fig:regs})~\cite{Filipovic10}.
In technical terms, one says that every linearizable implementation \emph{observationally refines} the atomic register,
and that the atomic register provides a \emph{specification} (\aka \emph{reference implementation})
for any linearizable implementation.

However, as observed by Golab \etal~\cite{Golab11}, the atomic register
cannot be used for analyzing properties of \emph{sets} of program traces, \aka \emph{hyperproperties},
which cannot be deduced from a single program trace.
For investigating hyperproperties, one
considers the sets of program traces that can be generated by
an adversary that controls the scheduling and the steps of the implementation.
(By \emph{program trace} we mean the sequence of actions performed by the
client where the object's implementation internal actions are invisible.)
Specifically, we consider the standard \emph{strong} adversary
that sees the whole execution so far
and makes choices that depend on previous coin-toss results.

\figregs

For instance, for the programs above, we may aim to verify that
\emph{under any adversarial scheduling the probability that $\lreg=\lrega$
at the end of execution is exactly $\frac{1}{2}$,}
which indicates that the adversary cannot leak the coin-toss result from one thread to another.\footnote{
By adding conditional loops in the programs, one can correlate the probability that
$\lreg=\lrega$ with the probability that the program diverges, and thus concentrate on asking
whether an adversary can force non-termination, as considered in some previous work~\cite{HadzilacosHT21,AttiyaEW22}.}
With an atomic register, this property holds in all three programs.
For instance, in programs  $\progwr$ and $\progwwb$, 
if the adversary performs the atomic $\Read{}$ before the coin is tossed,
it cannot force a correlation between the coin and the read value; and
by the time the coin is tossed, there is only one possible value that can be read.

Next, we demonstrate that this does not mean that other
linearizable implementations guarantee this hyperproperty.
To this end, we depict below each program in \cref{fig:clients}
a set of traces that forces $\lreg=\lrega$ with probability $1$,
and to show that the hyperproperty of a program is violated for certain implementations,
we describe an adversary that generates this set.

We consider three linearizable register implementations,
in addition to the atomic register ($\strongreg$) discussed above,
presented in \Cref{fig:regs}: %
a ``double load'' implementation ($\dlreg$), a ``try-not-to-store'' implementation ($\tnsreg$),
and the well-studied \ABD implementation.
Like $\strongreg$, $\dlreg$ and $\tnsreg$ are shared-memory implementations,
using a single primitive (atomic) shared memory cell $\reg$ initialized to $0$
(all other variables are local).
We refer to the accesses to $\reg$ as loads/stores,
and to the methods of the register as reads/writes.
In contrast, \ABD is a register implementation in a crash-resilient message passing system,
originally proposed to demonstrate that such a system can emulate a shared memory~\cite{AttiyaBD95}.
We present the multi-writer version of \ABD from~\cite{LynchS97}.

\para{$\dlreg$}
This implementation loads twice and non-deterministically picks which value to return (using $\KwSty{if}~\normalfont{\texttt{*}}$).
Using $\dlreg$, in $\progwr$ the adversary can generate $\exsetwr$ by ensuring this particular interleaving of the two
threads, and moreover: execute the first load in the read method after $1$ is stored to $\reg$, so that $\retval_1=\text{1}$;
execute the second load after 2 is stored to $\reg$, so that $\retval_2=\text{2}$;
and resolve the non-deterministic choice only after the coin is tossed ensuring that
$\retval_1$ is returned if the coin result is 1, and $\retval_2$ is returned if the coin result is 2.
(Recall that the adversary controls object-implementation-internal steps, including non-deterministic choices.)
However, it is easy to see that for programs $\progww$ and $\progwwb$, the hyperproperty holds when $\dlreg$ is used.
Indeed, without a read concurrently executed with a write, $\dlreg$ behaves just like $\strongreg$.

\para{$\tnsreg$}
This implementation tries to avoid some stores by recognizing
that if the value is concurrently altered during a write operation,
then that operation does not have to actually store
as it may pretend it was overrun by the concurrent write.
With this implementation, the hyperproperty holds for $\progwr$.
Indeed, without two concurrently executed writes, $\tnsreg$ behaves just like $\strongreg$.
However, using $\tnsreg$, in $\progww$ the adversary can generate $\exsetww$ by ensuring
that the first load in $\Write{$2$}$ reads 0 (the initial value),
then execute $\Write{$1$}$ atomically and have the second load in $\Write{$2$}$ read 1.
Then, if the coin result is 1, the adversary makes $\Write{$2$}$ skip writing its value
(it can do so since the two loaded values are not equal).
Otherwise, if the coin result is 2, $\Write{$2$}$ stores its value.
Finally, it is easy to see that for $\progwwb$ the hyperproperty holds with $\tnsreg$.
Indeed, after both threads reach the barrier, only one value can be returned by the read method,
since at this point in the execution, both write methods are completed.

\para{$\ABD$}
This implementation uses \emph{timestamps} to order the written values
(breaking ties using some predetermined order on the process identifiers).
Each process maintains the most recent timestamped value it observed.
For reading, a process broadcasts a query,
waits for replies from a quorum (majority) of processes,
and returns the value with the largest timestamp,
but only after broadcasting this timestamped value
and receiving acknowledgments from a quorum of processes.
In turn, for writing value $v$ a process broadcasts a query,
waits for replies from a quorum of processes,
broadcasts $v$ with timestamp larger than all replies,
and waits for a quorum of acknowledgments.
Note that in \ABD, processes are also constantly active as ``servers'':
$(i)$ replying to queries with their current timestamped values,
and $(ii)$ acknowledging new written values after (possibly) updating their current timestamped values.

Using $\ABD$, the hyperproperty is violated for $\progwr$ and $\progww$.
For the violation we need to have at least three processes,
two of them running the code of the program, and the others
are used as servers that reply to messages and participate in quorums.
($\ABD_2$ is degenerate since a quorum must consist of all processes.)
Essentially,
$\ABD_{\geq 3}$ allows both the behaviors exposed by $\dlreg$
and the behaviors exposed by $\tnsreg$.
However, the actual adversaries for $\ABD_{\geq 3}$ are more complicated than the ones for $\dlreg$ and $\tnsreg$
due to the absence of a global centralized memory cell that values are stored in and loaded from.

We describe adversaries that generate $\exsetwr$ for $\progwr$  and $\exsetww$ for $\progww$:
\begin{itemize}
\item For $\progwr$, the adversary lets the reader invoke a query and lets the writer complete the execution of $\Write{$1$}$ by choosing a quorum of processes that acknowledge the new value.
Next, the adversary lets all the processes in the quorum %
reply to the query of the reader reporting value 1. Then, the adversary lets the writer execute $\Write{$2$}$, again obtaining a quorum of processes that are aware of the new value, where this time the adversary picks a quorum that includes at least one process that is not part of the previous quorum, and therefore has not yet replied to the reader's query (this is where at least three processes are needed). This process also replies to the reader's query, but with value 2. At this time, the query message of the reader has pending replies from a quorum in which all replies include value 1, and from one additional process that is already aware of the more recent value 2. However, the adversary postpones the delivery of the replies until after the coin toss, at which time it picks the replies to match the coin value:
if the coin value is 1, the replies from the homogeneous quorum where all replies include value 1 are delivered; otherwise the replies from the all but of one of the processes in the aforementioned quorum are delivered together with the reply of the additional process that includes value 2, thus forming a (heterogeneous) quorum whose most recent value is 2. Accordingly, the reader returns a value that is equal to the coin value.
\item %
For $\progww$, the adversary starts by invoking a query during $\Write{$2$}$ and making a quorum of processes send replies to the query (with the initial value) before $\Write{$1$}$ is initiated in the left process. The adversary then lets the left process execute up to the barrier, at which point at least one reply with the value 1 is sent to the right process's query by a process that is aware of the left process's update.
The adversary then performs the delivery of the replies to the query in $\Write{$2$}$ according to the coin value. %
If the coin result is 2, the adversary delivers a quorum of replies that includes the reply sent when the left process reached the barrier, causing the right process to be aware of the most recent timestamp of the left process, such that the right process updates the value 2 with a larger timestamp. On the other hand, if the coin result is 1, the adversary delivers only the replies sent before $\Write{$1$}$, whose timestamp is outdated, causing the right process to choose a timestamp for the new value 2 that is at a tie with the timestamp attached by the left process to the value 1. Assuming the id of the left process has precedence, the tie is resolved to its timestamp, making 1 appear to be the most recent value.
This determines the result of the subsequent read to be equal to the coin result.

\end{itemize}

Finally, the hyperproperty holds when $\ABD$ is used in $\progwwb$.
To see this, suppose, \wlg~that
the timestamp assigned to $2$ is larger than the one of $1$.
Then, after the two writes complete,
in every quorum there is at least one process that knows about the value $2$, and
a reader that queries after this point can only read $2$.

\sumtable

\medskip
\Cref{fig:client-impl} summarizes the above observations.
In particular, the hyperproperty holds in $\progwwb$ for all four implementations.
Nevertheless, as we show later in \cref{ex:cl3_lin}, it can be still violated by some linearizable implementations.

To capture differences between linearizable implementations,
such as the ones shown in the above examples,
\cite{AE19} introduced \emph{strong observational refinement} as a refinement relation
between an implementation and a specification
that preserves hyperproperties.
Then, while $\strongreg$, $\dlreg$, $\tnsreg$, and \ABD can be shown to be observationally equivalent
(\ie observationally refine each other), as we demonstrated above, they are not \emph{strongly} observational equivalent.
In particular, none of the relatively simple shared-memory implementations in \cref{fig:regs} can be used as a specification of \ABD when hyperproperties are considered,
as \ABD is not a strong observational refinement of any of them.
(This is unfortunate, since, as we have seen, reasoning about the sets of program traces generated when \ABD is used is much more involved than with the other implementations.)
We also note that 
each of the implementations admits a different linearizability criterion:
$\strongreg$ is strongly linearizable~\cite{Golab11}, $\dlreg$ is write strongly-linearizable~\cite{HadzilacosHT21},
while $\tnsreg$ and \ABD are neither.

In the rest of the paper we propose hyperproperty-preserving specifications
for classes of linearizable register implementations, including \ABD.
Such specifications can drastically simplify verification of hyperproperties of client programs using these implementations,
a task which is typically challenging, especially when complex implementations are considered,
since it requires reasoning about \emph{all} possible adversaries.

\section{Preliminaries}
\label{sec:pre}

We start with general notations,
continue to our modeling of objects, implementations, and programs (\cref{sec:objects}),
and finally recap the formal notions of preservation of hyperproperties via
strong observational refinement (\cref{sec:hyp}).

\begin{description}[leftmargin=0pt,itemsep=2pt]
\item[Sequences.]
For a finite alphabet $\Sigma$, we denote by $\Sigma^*$ the set of all (finite) sequences over $\Sigma$.
The length of a sequence $s$ is denoted by $\size{s}$.
We write $s[k]$ for the symbol at position $1 \leq k \leq \size{s}$ in $s$.
We write $\sigma\in s$ if $s[k]=\sigma$ for some $1 \leq k \leq \size{s}$.
We use ``$\cdot$'' for the concatenation of sequences.
We often identify symbols with sequences of length $1$ or their singletons
(\eg in expressions like $s \cdot \sigma$).
The \emph{restriction} of a sequence $s$ \wrt a set $\Gamma$, denoted by $s\rst{\Gamma}$,
is the longest subsequence of $s$ that consists only of symbols in $\Gamma$.
This notation is extended to sets by $S\rst{\Gamma} \defeq \set{s\rst{\Gamma} \st s\in S}$.
We write $s_1 \subseq s_2$ when $s_1$ is a subsequence of $s_2$,
and $s_1 \pref s_2$ when $s_1$ is a prefix of $s_2$.

\item[Labeled Transition Systems.]
A \emph{labeled transition system} (LTS, for short) is a tuple $\lts = \tup{Q,\Sigma,q_0,T}$,
where $Q$ is a set of \emph{states},
$\Sigma$ is a (possibly infinite) alphabet (whose elements are called \emph{transition labels}),
$q_0\in Q$ is an \emph{initial state},
and $T\suq Q\times \Sigma \times Q$ is a set of \emph{transitions}.
We denote by $\lts.\lQ$, $\lts.\lSigma$, $\lts.\linit$, and $\lts.\lT$ the components of an LTS $\lts$.
We write $\asteplab{\sigma}{\lts}$ for the relation
$\set{\tup{q,q'} \st \tup{q,\sigma,q'}\in \lts.\lT}$.
An \emph{execution} $\ex$ of $\lts$ is a (possibly empty) finite sequence of transitions in $\lts.\lT$
such that the first transition starts in $q_0$ and each other transition continues
from the target of the previous transition.
An execution $\ex$ induces a \emph{trace} $\tr\in \lts.\lSigma^*$,
where $\tr[i]$ is given by the label of $\ex[i]$ for every $1\leq i \leq \size{\ex}$.
We denote by $\executions{\lts}$ and  $\traces{\lts}$ the set of all executions of $\lts$
and  the set of all traces induced by executions of $\lts$ (respectively).
Note that we only consider finite executions and traces.

\item[Forward Simulations.]
	Given LTSs $\lts$ and $\abs\lts$ and
	a set $\Gamma\subseteq\lts.\lSigma$,

	a relation $\fsim\subseteq \lts.\lQ \times \abs\lts.\lQ$ is a
	\emph{$\Gamma$-forward simulation} from $\lts$ to $\abs\lts$ if
	\begin{enumerate*}
		\item[(i)]
		$\tup{\lts.\linit,\abs\lts.\linit}\in \fsim$; and
		\item[(ii)]
		if $q \asteplab{\sigma}{\lts} q'$ and $\tup{q,\abs q}\in \fsim$,
		then there exist ${\abs q}' \in \abs\lts.\lQ$ and $\tr\in\abs\lts.\lSigma^*$
		such that ${\abs q \asteplab{\tr[1]}{{\abs \lts}}  \ldots  \asteplab{\tr[\size{\tr}]}{{\abs \lts}} {\abs q}'}$,
		$\tr\rst{\Gamma}=\sigma\rst{\Gamma}$,
		and $\tup{q',{\abs q}'}\in \fsim$.
	\end{enumerate*}
We write $\lts\leqsim[\Gamma]\abs\lts$ when such relation exists.

\end{description}

\subsection{Objects, Implementations, and  Programs}
\label{sec:objects}

We review standard notions that are needed for our formal results.
We assume a set $\Tid$ of thread identifiers and an infinite set $\Id$ of action identifiers.

\begin{description}[leftmargin=0pt,itemsep=2pt]
\item[Objects.]
An \emph{object} is a pair $\obj=\tup{\Method,\Val}$, where
$\Method$ is a set of method names and
$\Val$ is a set of values.
An object $\obj$ is associated with actions divided into \emph{invocations} $\inv=\invact\tup{\method,\val,\tid,\id} \in \Inv(\obj)$
and \emph{responses} $\res=\resact\tup{\method,\val,\tid,\id} \in \Res(\obj)$,
where $\method\in\Method$, $\val\in\Val \cup \set{\bot}$, $\tid\in\Tid$, and $\id\in\Id$.
We let $\Inv\Res(\obj) \defeq \Inv(\obj) \cup \Res(\obj)$.

\item[Histories.]
A \emph{history} $\h$ of an object $\obj$ is a finite sequence over $\Inv\Res(\obj)$. %
A history $\h$ is \emph{sequential} if it alternates between invocations and responses 
(starting with an invocation), %
such that every consecutive $\inv$, $\res$ in $\h$ have the same method and thread identifiers,
and a unique action identifier across $\h$.
A history $\h$ is \emph{well-formed} if its restriction to actions of each $\tid\in\Tid$, denoted by $\h\rst{\tid}$, is sequential.
An invocation $\inv\in\h$ is \emph{pending} if there is no response in $\h$ with the same thread and action identifiers.
Otherwise, $\inv$ is \emph{complete}.
These notions are also applied on \emph{operations} $\op$,
which are either single invocations $\op=\inv$ or pairs of matching invocation and response $\op=\tup{\inv,\res}$.
We let $\completed(\h)$ denote the subsequence of $\h$ consisting of actions that are a part of completed operations.

\item[Real-time Order.]
The \emph{real time order} induced by
a well-formed history $\h$, denoted by $<_\h$,
is the partial order on operations defined by $\op_1 <_\h \op_2$ iff
$\op_1$'s response appears in $\h$ before $\op_2$'s invocation.

\item[Specifications.] 
A \emph{specification} of %
$\obj$ is a prefix-closed set of sequential histories of $\obj$.

\item[Registers.]
A register object is given by $\Reg=\tup{\set{\regread,\regwrite},\N}$.
Its specification, denoted by $\regspec$, is defined as usual, assuming that $0$ is the initial register value.

\item[Object Implementations.]
We assume a set $\Objint$ of labels for implementation internal actions and define an
\emph{implementation} $\impl$ of an object $\obj$ to be an LTS
 over the alphabet $\Inv\Res(\obj) \cup \Objint$.
We assume that the history induced by every execution $\ex$ of $\impl$, denoted by $\hs{\ex}$, is a well-formed history.
The pseudo-code presented in specific implementations in the paper is easily translatable
to formal LTSs, whose executions represent executions generated
by the methods' code when they are repeatedly and concurrently invoked with arbitrary arguments.

\item[Client Programs.]
We assume a set $\Progint$ of labels for client internal actions (disjoint from $\Objint$) and define a
client program $\prog$ for an object $\obj$ as an LTS over the alphabet $\Inv\Res(\obj) \cup \Progint$.
A program $\prog$ and implementation $\impl$ are \emph{linked} by taking ``interface parallel composition'', denoted by $\prog[\impl]$.
The resulting LTS interleaves the steps of $\prog$ and $\impl$ while forcing the two LTSs to synchronize
on labels from $\Inv\Res(\obj)$. The defining property of $\prog[\impl]$ is given by:

\begin{proposition}
$\tr\in\traces{\prog[\impl]}$ iff $\tr\rst{\impl.\lSigma}\in\traces{\impl}$ and $\tr\rst{\prog.\lSigma}\in\traces{\prog}$.
\end{proposition}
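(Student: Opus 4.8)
The plan is to first turn the informal description of $\prog[\impl]$ into a concrete LTS and then prove the two directions separately, reading left-to-right as a \emph{projection} claim and right-to-left as an \emph{interleaving} claim, each by a short induction on trace length. Formally I would set $\prog[\impl].\lQ \defeq \prog.\lQ \times \impl.\lQ$, $\prog[\impl].\linit \defeq \tup{\prog.\linit,\impl.\linit}$, and $\prog[\impl].\lSigma \defeq \prog.\lSigma \cup \impl.\lSigma = \Inv\Res(\obj)\cup\Progint\cup\Objint$, with three transition rules (relying on the standing assumption that $\Inv\Res(\obj)$, $\Progint$, and $\Objint$ are pairwise disjoint): for $\sigma\in\Inv\Res(\obj)$ a \emph{synchronizing} step $\tup{q_\prog,q_\impl} \asteplab{\sigma}{\prog[\impl]} \tup{q_\prog',q_\impl'}$ whenever $q_\prog \asteplab{\sigma}{\prog} q_\prog'$ and $q_\impl \asteplab{\sigma}{\impl} q_\impl'$; for $\sigma\in\Progint$ a step that advances only $\prog$ (leaving the $\impl$-component fixed); and for $\sigma\in\Objint$ a step that advances only $\impl$. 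The one structural fact I will use repeatedly is that, by disjointness, each symbol of a trace $\tr\in\prog[\impl].\lSigma^*$ lies in exactly one class, so $\tr\rst{\impl.\lSigma}$ deletes precisely the $\Progint$-labels and $\tr\rst{\prog.\lSigma}$ deletes precisely the $\Objint$-labels.

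For the forward direction I would take $\ex\in\executions{\prog[\impl]}$ with $\traceof{\ex}=\tr$ and project. Retaining only the transitions of $\ex$ whose label lies in $\impl.\lSigma$ and keeping their $\impl$-components yields a sequence of $\impl$-transitions that is a genuine execution of $\impl$: between two retained transitions only $\Progint$-steps occur, and these leave the $\impl$-component unchanged, so the retained transitions are consecutive and the first departs from $\impl.\linit$; moreover their labels spell out exactly $\tr\rst{\impl.\lSigma}$. Hence $\tr\rst{\impl.\lSigma}\in\traces{\impl}$, and the argument for $\prog$ is symmetric.

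For the backward direction I would reconstruct an interleaving guided by $\tr$. Fix $\ex_\impl\in\executions{\impl}$ and $\ex_\prog\in\executions{\prog}$ inducing $\tr\rst{\impl.\lSigma}$ and $\tr\rst{\prog.\lSigma}$, and build a composite execution by induction over the prefixes $\tr'\pref\tr$, maintaining the invariant that after processing $\tr'$ the composite execution reaches $\tup{q_\prog,q_\impl}$, where $q_\prog$ is the state of $\ex_\prog$ after its prefix inducing $\tr'\rst{\prog.\lSigma}$ and $q_\impl$ the analogous state of $\ex_\impl$. For the step from $\tr'$ to $\tr'\cdot\sigma\pref\tr$: if $\sigma\in\Inv\Res(\obj)$ then appending $\sigma$ extends \emph{both} projected prefixes by $\sigma$, so $\sigma$ is the next label of each guide and the synchronizing rule applies; if $\sigma\in\Progint$ only $\ex_\prog$ advances while $q_\impl$ stays put, matching the program rule; the case $\sigma\in\Objint$ is symmetric. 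Taking $\tr'=\tr$ yields a composite execution inducing $\tr$, so $\tr\in\traces{\prog[\impl]}$.

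The construction is largely bookkeeping, and I expect the only genuinely delicate point to be the coupling invariant in the backward direction: one must track that $\tr'\rst{\prog.\lSigma}$ and $\tr'\rst{\impl.\lSigma}$ remain prefixes of the two guiding traces throughout, so that in the synchronizing case the ``next symbol'' read off each guide is forced to coincide with $\sigma$. This is exactly where pairwise disjointness of the three alphabets is essential, as it guarantees that a shared symbol feeds both projections while an internal symbol feeds exactly one.
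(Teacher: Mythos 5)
Your proof is correct, and your product construction (pairs of states, with a synchronizing rule for $\Inv\Res(\obj)$ and two one-sided rules for $\Progint$ and $\Objint$) coincides exactly with the paper's formal definition of interface parallel composition in the appendix, where this proposition is stated as the defining property and left unproved (cf.\ the execution-level variant $\ex\in\exec{\intcomp{\lts_1}{\lts_2}}$ iff $\ex\rst{\lts_1}\in\exec{\lts_1}$ and $\ex\rst{\lts_2}\in\exec{\lts_2}$, also given without proof). Your two inductions, and in particular the coupling invariant for the interleaving direction, supply precisely the routine argument the paper treats as immediate, so nothing further is needed.
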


\end{description}

\subsection{Hyperproperties Preservation via Strong Observational Refinement}
\label{sec:hyp}

A \emph{hyperproperty} $\phi$ of a program $\prog$ is a set of sets of the program's traces
(\ie $\phi \subseteq \powerset{\traces{\prog}}$).
Such sets can capture probabilistic requirements,
such as the one informally described in \cref{sec:overview}, via suitable encodings of traces~\cite{ClarksonS10}.

The hyperproperties that are satisfied by an object implementation,
and accordingly, strong observational refinement between implementations, are defined using deterministic schedulers, which formalize the notion of a strong adversary~\cite{AE19}.

\begin{description}[leftmargin=0pt,itemsep=2pt]
\item[\em Schedulers.]
Given a program $\prog$ and an implementation $\impl$,
a \emph{scheduler} is a function $\scheduler : \executions{\prog[\impl]} \to \powerset{\prog[\impl].\lT}$.
An execution $\ex\in\executions{\prog[\impl]}$ is \emph{consistent with $\scheduler$} if
$\ex[j] \in \scheduler(\ex[1]\cdots\ex[j-1])$ for every $1 \leq j \leq \size{\ex}$.
We denote by $\executions{\prog[\impl],\scheduler}$ the set of executions of $\prog[\impl]$
that are consistent with $\scheduler$,
and by $\traces{\prog[\impl],\scheduler}$ the traces of executions in $\executions{\prog[\impl],\scheduler}$.
A scheduler is \emph{deterministic} if for every $\ex\in \executions{\prog[\impl]}$,
either $\size{\scheduler(\ex)}\leq 1$ or all transitions in $\scheduler(\ex)$ are labeled by actions in $\Progint$.

\begin{remark}
\label{rem:step_det}
Attiya and Enea~\cite{AE19} restricted their attention to \emph{step-deterministic} implementations in which a trace uniquely determines an execution
(which includes the intermediate states along the trace).
We avoid this technical restriction, and thus use executions instead of traces
in the definitions of schedulers, as well as of linearizability criteria below.
In particular, we define schedulers as functions from executions to sets of transitions
instead of functions from traces to sets of labels.
For step-deterministic implementations our definitions coincide with those of~\cite{AE19}.
\end{remark}

\item[\em Hyperproperty Satisfaction.]
An implementation $\impl$ \emph{satisfies a hyperproperty} $\phi$ of $\prog$,
denoted by $\impl \models_\prog \phi$,
if $\traces{\prog[\impl],\scheduler}\rst{\prog.\lSigma}\in\phi$ for every deterministic scheduler $\scheduler$.

\begin{example}
    For the client program $\progww$ (represented as an LTS) and the set of traces $\exsetww$ from \cref{fig:clients},
    we have that $\dlreg\models_{\progww}\powerset{\traces{\progww}}\setminus\set{\exsetww}$.
    This is because, as discussed in \cref{sec:overview}, there exists no scheduler $\scheduler$
    such that $\traces{\progww[\dlreg],\scheduler}\rst{\prog.\lSigma}=\exsetww$.
\lipicsEnd
\end{example}

\item[\em Strong Observational Refinement.]
An implementation $\impl$ \emph{strongly observationally refines} an implementation $\abs\impl$,
denoted by $\impl \srefine \abs\impl$,
if $\abs\impl\models_\prog \phi\implies\impl\models_\prog \phi$
for every program $\prog$ and hyperproperty $\phi$ of $\prog$.
The following alternative characterization follows from the definition.

\begin{lemma}
\label{lem:strong_refinement}
$\impl \srefine \abs\impl$ iff for every program $\prog$ and deterministic scheduler $\scheduler$,
there exists a deterministic scheduler $\abs\scheduler$ such that
$\traces{\prog[\impl],\scheduler}\rst{\prog.\lSigma} = \traces{\prog[\abs\impl],\abs\scheduler}\rst{\prog.\lSigma}$.
\end{lemma}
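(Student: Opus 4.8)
The plan is to unfold the definition of hyperproperty satisfaction into a single subset relation, after which both directions reduce to a short order-theoretic observation. Fixing a program $\prog$, I would write $\mathcal{A}_\impl \defeq \set{\traces{\prog[\impl],\scheduler}\rst{\prog.\lSigma} \st \scheduler \text{ is a deterministic scheduler}}$ for the collection of all program-trace sets realizable under $\impl$. Since $\traces{\prog[\impl],\scheduler}\subseteq\traces{\prog[\impl]}$, the interface-composition proposition guarantees that every such projected trace lies in $\traces{\prog}$, so each element of $\mathcal{A}_\impl$ is a subset of $\traces{\prog}$ and hence $\mathcal{A}_\impl \subseteq \powerset{\traces{\prog}}$ is itself a hyperproperty of $\prog$. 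With this notation the definition of satisfaction reads exactly $\impl \models_\prog \phi$ iff $\mathcal{A}_\impl \subseteq \phi$, and the right-hand side of the lemma reads $\mathcal{A}_\impl \subseteq \mathcal{A}_{\abs\impl}$ for every $\prog$.

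For the ($\Leftarrow$) direction I would assume $\mathcal{A}_\impl \subseteq \mathcal{A}_{\abs\impl}$ for every program, fix any $\prog$ and hyperproperty $\phi$ with $\abs\impl \models_\prog \phi$ (that is, $\mathcal{A}_{\abs\impl} \subseteq \phi$), and chain the inclusions $\mathcal{A}_\impl \subseteq \mathcal{A}_{\abs\impl} \subseteq \phi$ to conclude $\impl \models_\prog \phi$. As $\prog$ and $\phi$ are arbitrary, this yields $\impl \srefine \abs\impl$.

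The ($\Rightarrow$) direction is the crux, and the main step is to instantiate strong observational refinement with the strongest hyperproperty that $\abs\impl$ still satisfies. Assuming $\impl \srefine \abs\impl$ and fixing a program $\prog$, I would take $\phi \defeq \mathcal{A}_{\abs\impl}$, which is a legitimate hyperproperty by the first paragraph. By construction $\mathcal{A}_{\abs\impl} \subseteq \phi$, so $\abs\impl \models_\prog \phi$; strong observational refinement then gives $\impl \models_\prog \phi$, i.e.\ $\mathcal{A}_\impl \subseteq \phi = \mathcal{A}_{\abs\impl}$, which is precisely the right-hand side for this $\prog$.

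The only subtlety I anticipate is the one place the interface-composition proposition is actually used: verifying that the witness $\phi = \mathcal{A}_{\abs\impl}$ is a genuine hyperproperty (a subset of $\powerset{\traces{\prog}}$) rather than an arbitrary family of trace sets. Once that is secured, the statement collapses to the elementary equivalence between ``$\mathcal{A}_{\abs\impl} \subseteq \phi \Rightarrow \mathcal{A}_\impl \subseteq \phi$ for all $\phi$'' and ``$\mathcal{A}_\impl \subseteq \mathcal{A}_{\abs\impl}$'', the forward direction being witnessed by the smallest admissible $\phi$. I therefore expect no deeper obstacle, since both $\models_\prog$ and $\srefine$ are defined directly in terms of these trace-set collections.
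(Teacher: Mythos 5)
Your proposal is correct and follows essentially the same route as the paper: the forward direction instantiates strong observational refinement with the hyperproperty $\phi=\set{\traces{\prog[\abs\impl],\abs\scheduler}\rst{\prog.\lSigma}\st \abs\scheduler \text{ deterministic}}$ (your $\mathcal{A}_{\abs\impl}$), and the converse is a direct chaining of inclusions. Your extra care in checking that this $\phi$ is a legitimate subset of $\powerset{\traces{\prog}}$ is exactly the step the paper dismisses as ``straightforward to verify.''
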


Attiya and Enea~\cite[Theorem 8]{AE19} show that $\Inv\Res(\obj)$-forward simulation between implementations is equivalent to strong observational refinement. (Their result applies to finite traces as we consider here; see \cite{DongolSW22} for a discussion on infinite traces.)
We adapt this result to our setting.
In the sequel, for implementations $\impl$ and $\abs\impl$ of an object $\obj$, we write $\impl\leqsim\abs\impl$ for $\impl\leqsim[\Inv\Res(\obj)]\abs\impl$.

\begin{theorem}
\label{thm:sref_iff_sim}
$\impl \srefine \abs\impl$ iff $\impl\leqsim\abs\impl$.
\end{theorem}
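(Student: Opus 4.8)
The plan is to prove the two implications separately, using \cref{lem:strong_refinement} as the concrete handle on $\srefine$ and following the structure of Attiya and Enea's argument~\cite{AE19}, but carried out over \emph{executions} and \emph{transitions} rather than traces and labels, so as to accommodate implementations that are not step-deterministic (cf.\ \cref{rem:step_det}).

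For the soundness direction, $\impl\leqsim\abs\impl \implies \impl\srefine\abs\impl$, I would fix a forward simulation $\fsim$ from $\impl$ to $\abs\impl$ and first \emph{lift} it to the composed systems by setting $\fsim' \defeq \set{\pair{\tup{\progstate,q}}{\tup{\progstate,\abs q}} \st \pair{q}{\abs q}\in\fsim}$, i.e.\ relating composed states that share the same program component $\progstate$ and whose implementation components are $\fsim$-related. Using the defining property of interface composition (the proposition accompanying its definition), I would check that $\fsim'$ is a forward simulation preserving all labels outside $\Objint$: program-only steps (labeled in $\Progint$) are matched identically; implementation-internal steps (labeled in $\Objint$) are matched by the internal abstract steps that $\fsim$ supplies (whose $\Inv\Res(\obj)$-restriction is $\emptyword$); and a synchronized invocation/response step labeled $a\in\Inv\Res(\obj)$ is matched by the abstract trace $\tr$ that $\fsim$ provides with $\tr\rst{\Inv\Res(\obj)}=a$ --- here the non-$a$ steps of $\tr$ leave the program component untouched, so the single $a$-transition can be taken jointly by program and implementation, keeping them synchronized. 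Given $\fsim'$ and a deterministic scheduler $\scheduler$ for $\prog[\impl]$, I would then define an abstract scheduler $\abs\scheduler$ that maintains, along every abstract execution it allows, a $\fsim'$-related concrete execution consistent with $\scheduler$: $\abs\scheduler$ first drives the internal abstract ``catch-up'' steps dictated by $\fsim'$ (one at a time, hence deterministically) and then mimics the choice of $\scheduler$ on the corresponding concrete execution. Since $\scheduler$ selects either a single transition or a bundle of $\Progint$-transitions, and $\fsim'$ is the identity on the program component, $\abs\scheduler$ can be made deterministic; the inclusion $\subseteq$ of $\traces{\prog[\impl],\scheduler}\rst{\prog.\lSigma}=\traces{\prog[\abs\impl],\abs\scheduler}\rst{\prog.\lSigma}$ follows from the simulation matching, and $\supseteq$ because $\abs\scheduler$ was constructed to follow only transitions prescribed by the correspondence.

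For the completeness direction, $\impl\srefine\abs\impl \implies \impl\leqsim\abs\impl$, I would build a forward simulation out of the refinement by instantiating \cref{lem:strong_refinement} on a \emph{most general client} $\MGCn$ that may invoke any method with any argument at any time, thereby exposing the full interface behavior of an implementation. I would define $\fsim$ to relate $q$ and $\abs q$ whenever some execution of $\MGC{\impl}$ reaching $q$ is matched --- via the scheduler characterization of \cref{lem:strong_refinement} --- by an execution of $\MGC{\abs\impl}$ reaching $\abs q$ with the same induced history, and then verify the two forward-simulation conditions. The initial-state condition is immediate from the empty execution. For the step condition, given $\pair{q}{\abs q}\in\fsim$ and a transition $q\asteplab{\sigma}{\impl}q'$, I would exhibit a deterministic scheduler for $\MGC{\impl}$ that drives the concrete side precisely through the witnessing execution and then this one step; \cref{lem:strong_refinement} then yields a matching abstract scheduler whose abstract execution supplies internal steps followed by $\sigma$ leading from $\abs q$ to some $\abs q'$ with $\pair{q'}{\abs q'}\in\fsim$.

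The main obstacle, as usual, is this completeness direction: the refinement hypothesis is \emph{global} (about whole sets of client traces under all schedulers), whereas a forward simulation is a \emph{local}, step-wise condition that must commit to a matching abstract move without lookahead. The crux is to argue that a single deterministic scheduler over $\MGCn$ can both pin the concrete execution down to one chosen path \emph{and} force the abstract side to commit to matching it step by step --- this is exactly what separates forward simulation from mere trace inclusion, and it is where the determinism of schedulers, together with the branching exposed by $\MGCn$, does the essential work. Two secondary points need care: the stuttering inherent in $\Inv\Res(\obj)$-forward simulation (several abstract internal steps per concrete step) must be threaded through both scheduler constructions and the definition of $\fsim$; and, per \cref{rem:step_det}, every argument must be phrased in terms of executions and transitions rather than traces and labels, since distinct executions may share a trace.
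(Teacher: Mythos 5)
Your soundness direction ($\impl\leqsim\abs\impl\implies\impl\srefine\abs\impl$) follows the paper's route: the paper likewise lifts the simulation to the composed systems (its \cref{thm:int_par} packages exactly your $\fsim'$, together with an induced prefix-preserving execution mapping $\pi$) and then defines $\abs\scheduler$ to allow precisely those transitions that extend the current abstract execution towards $\pi$ of some $\scheduler$-consistent concrete execution. The one place you are too quick is the determinism of $\abs\scheduler$: two distinct $\scheduler$-allowed concrete continuations may map to abstract continuations that diverge, and showing this can only happen when both are $\Progint$-labeled requires the case analysis the paper carries out at the end of its proof (using that $\pi$ is $\Progint$-copying and respects $\prog$). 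This is a presentational rather than conceptual gap.

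The completeness direction has a genuine gap. Your most general client only issues invocations, so its internal alphabet contributes nothing and $\traces{\prog[\impl],\scheduler}\rst{\prog.\lSigma}$ records only histories; \cref{lem:strong_refinement} then forces the abstract scheduler to match histories, not executions, and your relation ``$q$ and $\abs q$ are reached by executions with the same induced history'' is too coarse to be a simulation (two concrete executions with the same history but different future branching are conflated). Worse, your step condition invokes \cref{lem:strong_refinement} once per pair $(q,\sigma)$, producing a fresh abstract scheduler each time; nothing guarantees that the abstract execution obtained for $\ex\cdot\trans$ extends the one obtained for $\ex$, which is exactly the commitment a forward simulation requires. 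The paper's resolution --- which you gesture at as ``the branching exposed by $\MGCn$'' but do not actually supply --- is to enrich the most general client with program-internal actions $\guess(\trans)$, one for each implementation transition $\trans$. A deterministic scheduler may offer all guesses simultaneously (they are all $\Progint$-labeled) and then deterministically fire the guessed transition, so the program-trace set of this \emph{single} scheduler encodes the entire execution tree of $\impl$. Applying \cref{lem:strong_refinement} once to this scheduler yields one abstract deterministic scheduler, and the bijection between scheduler-consistent executions and their program restrictions then induces a single prefix-preserving mapping $\exec{\impl}\to\exec{\abs\impl}$ respecting $\Inv\Res(\obj)$, which is equivalent to $\impl\leqsim\abs\impl$. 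Without the guesses, or some equivalent device for surfacing the implementation's internal nondeterminism into $\prog.\lSigma$, the argument does not go through.
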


\begin{example}
It is easy to show that $\strongreg\leqsim\dlreg$,
and we obtain that $\strongreg \srefine \dlreg$.
Thus,  $\strongreg\models_{\progww}\powerset{\traces{\progww}}\setminus\set{\exsetww}$
follows from $\dlreg\models_{\progww}\powerset{\traces{\progww}}\setminus\set{\exsetww}$.
In addition, since $\dlreg \not\srefine \strongreg$ (see \cref{sec:overview}),
we have $\dlreg\not\leqsim\strongreg$.
Indeed, if a concurrent write is about to change the value of $\reg$ after a read of $\dlreg$ performs its first load,
$\strongreg$ has no matching action: if it performs its (single) load it will not be able
to return the right value in case $\dlreg$ returns the value read in the second load;
and similarly, if it waits, it will fail to return the same value if
$\dlreg$ returns the value of the first load.
\lipicsEnd
\end{example}

\end{description}

\section{Complete Implementations for Linearizability Classes}
\label{sec:comp}

Knowing that a given implementation is a member of a certain linearizability class
is only useful if it enables reasoning about programs that use that implementation without understanding the implementation itself.
For hyperproperties, such reasoning is made possible
if the implementation is known to strongly observationally refine a simpler implementation,
in which case the latter can be used instead of the actual implementation in the analysis.
To standardize the relation between linearizability classes and strong observational refinement,
we propose a definition of \emph{hard} and \emph{complete} implementations
in analogy to hardness and completeness \wrt complexity classes,
where instead of reductions, we use simulations, which ensure strong observational refinement:

\begin{definition2}
\label{def:comp}
Let $\class$ be a class of implementations of an object $\obj$
that is downward closed \wrt forward simulation
(\ie $\impl\in\class$ whenever $\impl'\in\class$ and $\impl\leqsim\impl'$).
An implementation $\abs\impl$ of $\obj$ is \emph{$\class$-hard} if $\impl\leqsim\abs\impl$ for every $\impl\in\class$.
It is \emph{$\class$-complete} (or \emph{complete for $\class$}) if we also have $\abs\impl\in\class$.
\end{definition2}

In addition to allowing reasoning about hyperproperties of implementations in $\class$,
an $\class$-complete implementation $\abs\impl$ also provides a sound and complete method
to establish the membership of an implementation $\impl$ in $\class$ by showing that $\impl\leqsim\abs\impl$.

In the following we %
take $\class$ to be the set of implementations of some object %
that satisfy certain linearizability criteria.

\smallskip
\noindent\textbf{Linearizability.}
Consider first standard linearizability~\cite{HW:1990,Sela21}:

\begin{definition2}
A history $\seqh$ of an object $\obj$ is a \emph{linearization} of a history $\h$ of $\obj$, denoted by $\h\linleq\seqh$,
if there exists a sequence of responses $\ressq$ %
for some of the pending invocations in $\h$ such that
the following hold for $\h'=\completed(\h\cdot\ressq)$:
\begin{enumerate*}[label=(\roman*)]
\item $\h'\rst{\tid}=\seqh\rst{\tid}$ for every $\tid\in\Tid$; and
\item $<_{\h'} \;\subseteq\; <_\seqh$.
\end{enumerate*}
A history $\h$ of $\obj$ is \emph{linearizable} \wrt a specification $\seqspec$ of $\obj$ if it has a linearization $\seqh\in\seqspec$.
An implementation $\impl$ of $\obj$ is \emph{linearizable} \wrt $\seqspec$ if $\hs{\ex}$ is linearizable \wrt $\seqspec$ for every $\ex\in\exec{\impl}$.
\end{definition2}

\begin{proposition}
\label{prop:comp-lin}
The class of linearizable implementations of an abject $\obj$ \wrt a specification $\seqspec$ %
is downward closed \wrt forward simulation,
and there exists a complete implementation for it.
\end{proposition}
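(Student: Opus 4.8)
The plan is to prove the two assertions separately, both flowing from a single fact: an $\Inv\Res(\obj)$-forward simulation preserves histories, since the history of an execution is exactly the projection of its trace onto $\Inv\Res(\obj)$. For the downward-closure part I would transfer linearizability along the simulation; for the completeness part I would build the canonical linearizable implementation and simulate into it.

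For downward closure, suppose $\impl\leqsim\impl'$ with $\impl'$ linearizable \wrt $\seqspec$, and let $\ex\in\exec{\impl}$. First I would establish the standard transfer lemma: by induction on $\size{\ex}$, repeatedly applying clause~(ii) of the forward-simulation definition, there is an execution $\hat\ex\in\exec{\impl'}$ whose trace agrees with that of $\ex$ on $\Inv\Res(\obj)$, that is $\traceof{\hat\ex}\rst{\Inv\Res(\obj)}=\traceof{\ex}\rst{\Inv\Res(\obj)}$. Projecting onto $\Inv\Res(\obj)$ yields $\hs{\ex}=\hs{\hat\ex}$. Since $\impl'$ is linearizable, $\hs{\hat\ex}$, and hence $\hs{\ex}$, is linearizable \wrt $\seqspec$; as $\ex$ was arbitrary, $\impl$ is linearizable.

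For the existence of a complete implementation I would construct the canonical implementation $\abs\impl$ whose states are the well-formed histories of $\obj$ that are linearizable \wrt $\seqspec$, with initial state the empty history $\emptyword$ and a transition $\h\asteplab{\sigma}{\abs\impl}\h\cdot\sigma$ for each $\sigma\in\Inv\Res(\obj)$ such that $\h\cdot\sigma$ is again a well-formed linearizable history (and no internal transitions). Membership $\abs\impl\in\class$ is immediate: every execution of $\abs\impl$ ends in a state $\h$ that is by construction a linearizable history, and $\hs{\ex}=\h$. For $\class$-hardness I would take any linearizable $\impl$ and exhibit the forward simulation $\fsim=\set{\tup{q,\h}\st \h=\hs{\ex} \text{ for some } \ex\in\exec{\impl} \text{ with final state } q}$. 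Clause~(i) holds since the empty execution ends in $\impl.\linit$ with history $\emptyword$. For clause~(ii), given $q\asteplab{\sigma}{\impl}q'$ and $\tup{q,\h}\in\fsim$ witnessed by $\ex$, I extend $\ex$ by this step to $\ex'$: if $\sigma\in\Objint$ then $\hs{\ex'}=\h$ and the empty abstract move suffices, keeping $\tup{q',\h}\in\fsim$; if $\sigma\in\Inv\Res(\obj)$ then $\hs{\ex'}=\h\cdot\sigma$, which is well-formed and, crucially, linearizable because $\impl$ is linearizable, so the single transition $\h\asteplab{\sigma}{\abs\impl}\h\cdot\sigma$ is available and lands in $\fsim$ together with $q'$.

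The step I expect to be the crux is precisely this online matching in the hardness argument: a forward (as opposed to backward) simulation must commit to every response at the instant the concrete implementation emits it, so the matching abstract transition must already be present. This is where the definition of a linearizable implementation pays off, since it requires every execution's history, not merely maximal or quiescent ones, to be linearizable; hence the history $\h\cdot\sigma$ reached after any concrete step is itself linearizable and so is a legal state of $\abs\impl$. The same observation shows the set of linearizable histories is prefix-closed, so in fact $\traces{\abs\impl}\rst{\Inv\Res(\obj)}$ is exactly the set of all well-formed linearizable histories, although completeness needs only hardness together with membership. Everything else, namely well-formedness of the appended histories and the routine induction in the transfer lemma, follows directly from the stated assumptions.
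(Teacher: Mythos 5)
Your proof is correct and follows essentially the same route as the paper's: downward closure via the fact that a forward simulation preserves the set of induced histories, and completeness via the canonical history-tracking implementation together with the obvious simulation from any linearizable implementation into it. The only cosmetic difference is that you gate invocations as well as responses on linearizability of the extended history, whereas the paper enables invocations unconditionally; these coincide because appending a pending invocation always preserves linearizability.
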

\begin{proof}[Proof (sketch)]
Downward closedness follows from the fact that $\impl\leqsim\impl'$ implies that $\set{\hs{\ex} \st \ex\in\exec{\impl}} \suq \set{\hs{\ex} \st \ex\in\exec{\impl'}}$.
A complete implementation is the implementation that tracks in its internal state the history $\h$ generated so far.
When executing an invocation or response, the action is added in the end of the current history.
But, while invocations are always enabled, a response $\res$ is only enabled when $\h \cdot \res$ is linearizable \wrt $\seqspec$.
\end{proof}

The (theoretical)
construction in the above proof provides us with a complete implementation,
which may help in streamlining and mechanizing linearizability arguments as forward simulations
(\eg \cite{Jayanti24} utilized such implementation).
However, since it directly encodes the definition of the class,
it is unhelpful for reasoning about hyperproperties of implementations.
Thus, for the stronger classes considered below we are interested in identifying simple complete implementations that are not based on history tracking.

\smallskip
\noindent\textbf{Strong linearizability.}
Golab \etal~\cite{Golab11} proposed a strengthening of linearizability, called \emph{strong linearizability},
and showed that it is necessary and sufficient for reasoning on
probability distributions of outcomes that a strong adversary can generate.
Roughly speaking, while linearizability allows one to choose the linearization order ``after the fact''
in view of the whole execution,
strong linearizability requires the linearization of implementation histories into specification histories
to be done online in a prefix-preserving manner, that is, by continuously adding operations at the end of the linearized history.

\begin{definition2}
\label{def:slin}
A \emph{linearization mapping} for an implementation $\impl$ of an object $\obj$ \wrt a specification $\seqspec$ of $\obj$
is a function $\linmap: \exec{\impl}\to\seqspec$ such that $\hs{\ex}\linleq\linmap(\ex)$ for every $\ex\in\exec{\impl}$.
An implementation $\impl$ of $\obj$ is \emph{strongly linearizable} \wrt a specification $\seqspec$ of $\obj$
if there is a linearization mapping $\linmap$ %
for $\impl$ \wrt $\seqspec$
such that $\linmap(\ex_1)\pref\linmap(\ex_2)$ whenever $\ex_1\pref\ex_2$.
\end{definition2}

Since we aim to also capture non-deterministic implementations (and do not assume step-determinism),
our linearizations apply on \emph{executions} rather than traces (see also \cref{rem:step_det}).

\begin{example}
\label{ex:cl1_lin}
From the register implementations presented in \cref{sec:overview}, only $\strongreg$ is strongly linearizable.
We use the histories associated with the set $\exsetwr$ from \cref{fig:clients} to show that $\dlreg$ and $\ABD$ are not strongly linearizable.
Consider the following history $\h$,
its two possible extensions $\h_1$ and $\h_2$,
and its possible linearizations $\seqh_1,\seqh_2,\seqh_3$:
\begin{equation*}
	\h = \inarr{\wint{1}{15pt} \wint{2}{15pt} \\ \; \rintp{1}{50pt}}
	\quad\vrule\quad
	\inarr{
		\h_1 = \inarr{\wint{1}{15pt} \wint{2}{15pt} \\ \; \rint{1}{50pt}} \quad
		\h_2 = \inarr{\wint{1}{15pt} \wint{2}{15pt} \\ \; \rint{2}{50pt}}}
	\quad\vrule\quad
	\inarr{\seqh_1 = \text{\wint{1}{15pt} \rint{1}{15pt} \wint{2}{15pt}} \\
		\seqh_2 = \text{\wint{1}{15pt} \wint{2}{15pt} \rint{2}{15pt}} \\
		\seqh_3 = \text{\wint{1}{15pt} \wint{2}{15pt}}}\quad
\end{equation*}
Unlike $\strongreg$ (and $\tnsreg$), both $\dlreg$ and $\ABD$ have a \emph{single} execution $\ex$ that induces $\h$
and can be extended into two alternative executions that induce $\h_1$ and $\h_2$.
Then, $\linmap(\ex)$ can be $\seqh_1$, $\seqh_2$, or $\seqh_3$,
but any choice at this stage is doomed to fail:
\begin{enumerate*}[label=(\roman*)]
\item $\seqh_1$ fails if the execution continues to generate $\h_2$;
\item $\seqh_2$ fails if the execution continues to generate $\h_1$;
and \item $\seqh_3$ fails if the execution continues to generate $\h_1$
since we are only allowed to extend the current linearization by adding operations at its end.
The history of the common prefix of the traces in $\exsetww$
from \cref{fig:clients} %
can be similarly used to show that $\tnsreg$ is not strongly linearizable.
\end{enumerate*}
\lipicsEnd
\end{example}

Attiya and Enea~\cite{AE19} show that %
the class of strongly linearizable implementations %
is downward closed \wrt forward simulation,
and that every strongly linearizable implementation strongly observationally refines the atomic implementation (\eg $\strongreg$ for registers).
Together with \cref{thm:sref_iff_sim}, this result is restated as follows:

\begin{theorem}
\label{thm:slin_comp}
The atomic implementation for specification $\seqspec$ of an object $\obj$ is complete for
the class of strongly linearizable implementations of $\obj$ \wrt $\seqspec$.
\end{theorem}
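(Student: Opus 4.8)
The plan is to verify the two conditions of \cref{def:comp} for $\class$ the class of strongly linearizable implementations of $\obj$ \wrt $\seqspec$ and $\abs\impl$ the atomic implementation: that $\abs\impl$ is $\class$-hard, and that $\abs\impl\in\class$. The downward-closedness precondition required by \cref{def:comp} for ``complete'' to be well-defined is exactly the result of Attiya and Enea quoted just above the statement, so I would cite it and move on.

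For hardness, I would invoke that cited result directly: every strongly linearizable $\impl$ strongly observationally refines the atomic implementation, \ie $\impl\srefine\abs\impl$ for every $\impl\in\class$. By \cref{thm:sref_iff_sim}, $\impl\srefine\abs\impl$ is equivalent to $\impl\leqsim\abs\impl$, so $\impl\leqsim\abs\impl$ holds for every $\impl\in\class$, which is precisely $\class$-hardness of $\abs\impl$.

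It then remains to establish membership $\abs\impl\in\class$, \ie that the atomic implementation is itself strongly linearizable \wrt $\seqspec$. I would exhibit the natural linearization mapping $\linmap$. In the atomic implementation every operation takes effect at a single internal ``commit'' step lying between its invocation and its response; for an execution $\ex$, let $\linmap(\ex)$ list the operations whose commit step has already fired in $\ex$, supplying a response to those that are still pending, ordered by their commit steps and labeled with the input/output values recorded there. One checks that $\hs{\ex}\linleq\linmap(\ex)$ (the per-thread projections agree, and the real-time order is respected because each commit step lies inside its own invocation--response interval) and that $\linmap(\ex)\in\seqspec$ (the commit-ordered sequence is exactly a run of the sequential specification). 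Prefix-preservation is the key point: extending $\ex$ only appends new commit steps after the existing ones and never relocates an earlier one, so the relative order of already-committed operations is fixed, whence $\ex_1\pref\ex_2$ implies $\linmap(\ex_1)\pref\linmap(\ex_2)$.

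The main obstacle I anticipate lies entirely in this membership step, and specifically in the bookkeeping around pending operations: the definition of $\linleq$ allows adding responses for \emph{some} pending invocations, and I must commit exactly those pending operations whose atomic step has already fired, so that $\hs{\ex}\linleq\linmap(\ex)$ and prefix-preservation hold simultaneously without the linearization of a prefix committing an operation that the extension must later reorder. Hardness, by contrast, is immediate once \cref{thm:sref_iff_sim} bridges strong observational refinement and forward simulation.
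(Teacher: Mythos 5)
Your proposal is correct and follows essentially the same route as the paper: the paper presents this theorem as a restatement of the Attiya--Enea result (downward closure plus strong observational refinement of the atomic implementation by every strongly linearizable one) combined with \cref{thm:sref_iff_sim} to turn refinement into forward simulation, and gives no further proof. Your additional elaboration of the membership step via the commit-step linearization mapping is the standard argument and is sound, including your handling of which pending operations to complete.
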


\smallskip
\noindent\textbf{Additional linearizability classes.}
We observe that downward-closedness \wrt simulation, as well as the existence of a complete implementation,
generalize to a range of linearizability classes beyond linearizability and strong linearizability mentioned above.
These linearizability classes are parameterized by a preorder that must hold between the linearizations of an execution and its extensions.
Formally, given a preorder $\rel$ (\ie reflexive and transitive relation) on sequences, %
the class $\classrel{\rel}{\obj,\seqspec}$ consists of all implementations $\impl$ of $\obj$
for which there exists a linearization mapping
$\linmap:\exec{\impl}\to\seqspec$ such that $\tup{\linmap(\ex_1),\linmap(\ex_2)}\in\rel$ whenever $\ex_1\pref\ex_2$.
The class of all linearizable implementations of $\obj$ \wrt $\seqspec$ is obtained by taking $\rel = \seqspec \times \seqspec$,
whereas for all strongly linearizable implementations we take $\rel = \pref$.
Other classes defined in the rest of this paper are also instances of this definition.

\begin{lemma}
\label{lem:closure_rel}
For every preorder $\rel$ on sequences, %
the class $\classrel{\rel}{\obj,\seqspec}$ is downward closed \wrt forward simulation, and there exists a complete implementation for it.
\end{lemma}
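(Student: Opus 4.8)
The plan is to prove both claims by generalizing the arguments behind \cref{prop:comp-lin} and \cref{thm:slin_comp}, while tracking the extra preorder constraint relating the linearizations of an execution and its prefixes.

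\textbf{Downward closedness.} Suppose $\impl'\in\classrel{\rel}{\obj,\seqspec}$ via a linearization mapping $\linmap'$, and let $\impl\leqsim\impl'$ through a forward simulation $\fsim$ (with $\Gamma=\Inv\Res(\obj)$). The first step is to upgrade the step-wise relation $\fsim$ into a \emph{monotone} map $\alpha:\exec{\impl}\to\exec{\impl'}$ with $\hs{\alpha(\ex)}=\hs{\ex}$ and $\ex_1\pref\ex_2\Rightarrow\alpha(\ex_1)\pref\alpha(\ex_2)$. I would define $\alpha$ by induction on $\size{\ex}$: put $\alpha(\emptyword)=\emptyword$ (valid since $\tup{\impl.\linit,\impl'.\linit}\in\fsim$), and for $\ex\cdot\trans$ with $\trans=\tup{q,\sigma,q'}$, invoke clause (ii) of the simulation at the $\fsim$-related endpoint $\abs q$ of $\alpha(\ex)$ to obtain an $\impl'$-fragment whose trace $\tr$ satisfies $\tr\rst\Gamma=\sigma\rst\Gamma$ and whose endpoint is again $\fsim$-related to $q'$, then append it. The crucial point is that this choice must be made \emph{deterministically}, as a fixed canonical function of the pair $\tup{\abs q,\trans}$, so that the fragment selected while extending $\ex_1$ is reproduced verbatim when extending the longer $\ex_2$; this gives monotonicity, and preservation of the $\Gamma$-restriction at each step yields $\hs{\alpha(\ex)}=\traceof{\alpha(\ex)}\rst\Gamma=\traceof{\ex}\rst\Gamma=\hs{\ex}$. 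Setting $\linmap\defeq\linmap'\circ\alpha$, the identity $\hs{\ex}=\hs{\alpha(\ex)}\linleq\linmap'(\alpha(\ex))$ makes $\linmap$ a linearization mapping for $\impl$, and monotonicity of $\alpha$ with $\impl'\in\classrel{\rel}{\obj,\seqspec}$ gives $\tup{\linmap(\ex_1),\linmap(\ex_2)}\in\rel$ whenever $\ex_1\pref\ex_2$; hence $\impl\in\classrel{\rel}{\obj,\seqspec}$.

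\textbf{A complete implementation.} I would take $\abs\impl$ to be the LTS over $\Inv\Res(\obj)\cup\Objint$ whose states are pairs $\tup{\h,\seqh}$ with $\seqh\in\seqspec$ and $\h\linleq\seqh$, whose initial state is $\tup{\emptyword,\emptyword}$ (legal since $\emptyword\in\seqspec$ by prefix-closedness and $\emptyword\linleq\emptyword$), and with three kinds of transitions, each conditioned on $\seqh'\in\seqspec$ and $\tup{\seqh,\seqh'}\in\rel$: invocation steps $\tup{\h,\seqh}\asteplab{\inv}{}\tup{\h\cdot\inv,\seqh'}$ and response steps $\tup{\h,\seqh}\asteplab{\res}{}\tup{\h\cdot\res,\seqh'}$ (enabled when $\h\cdot\inv$, resp. $\h\cdot\res$, is well-formed and linearizes to $\seqh'$), and internal steps $\tup{\h,\seqh}\asteplab{\ell}{}\tup{\h,\seqh'}$ (enabled when $\h\linleq\seqh'$). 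The well-formedness guards ensure every execution of $\abs\impl$ induces a well-formed history, so $\abs\impl$ is a legal implementation.

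\textbf{Membership and hardness of $\abs\impl$.} For $\abs\impl\in\classrel{\rel}{\obj,\seqspec}$, I read the linearization mapping off the state: $\linmap(\ex)$ is the second component of the final state of $\ex$. An invariant argument shows this component equals $\hs{\ex}$ in the first slot with $\h\linleq\seqh$, so $\linmap$ is a linearization mapping; since each transition relates its $\seqh$ to $\seqh'$ by $\rel$, transitivity (with reflexivity for unchanged steps) gives $\tup{\linmap(\ex_1),\linmap(\ex_2)}\in\rel$ for $\ex_1\pref\ex_2$. For $\class$-hardness, given $\impl\in\classrel{\rel}{\obj,\seqspec}$ with mapping $\linmap$, I would use $\fsim=\set{\tup{q,\tup{\h,\seqh}} \st \text{some } \ex\in\exec{\impl} \text{ ends in } q \text{ with } \hs{\ex}=\h,\ \linmap(\ex)=\seqh}$. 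The initial pair is related, and every step of $\impl$ from such an $\ex$ — invocation, response, or internal — is matched by the single corresponding transition of $\abs\impl$ into $\tup{\hs{\ex\cdot\trans},\linmap(\ex\cdot\trans)}$: its guards hold precisely because $\linmap$ is a linearization mapping satisfying the $\rel$-constraint for $\ex\pref\ex\cdot\trans$, and the $\Gamma$-restrictions agree. Hence $\impl\leqsim\abs\impl$, and $\abs\impl$ is complete.

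\textbf{Main obstacle.} The delicate point is the first step of downward closedness: converting the existential, step-wise simulation $\fsim$ into a genuinely monotone map $\alpha$ on whole executions, which forces a commitment to a fixed canonical witness per matched step so that prefixes map to prefixes — precisely the ``online'' discipline that plain linearizability lacks. Beyond that, the design of $\abs\impl$ hinges on including internal transitions that advance only the linearization $\seqh$ along $\rel$: these are exactly what allow $\abs\impl$ to mirror the internal steps of an arbitrary member of $\classrel{\rel}{\obj,\seqspec}$, under which the linearization may progress without any visible action.
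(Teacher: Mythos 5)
Your proof is correct and follows essentially the same route as the paper: downward closedness via a prefix-preserving execution mapping extracted from the forward simulation (the paper's \cref{lem:sim_pp}) composed with the given linearization mapping, and completeness via an LTS whose states track the pair $\tup{\h,\seqh}$ with every step guarded by $\tup{\seqh,\seqh'}\in\rel$, proved complete by the same state-to-$(\text{history},\text{linearization})$ simulation. The only substantive difference is cosmetic: you add internal transitions that advance only $\seqh$ so as to mirror internal steps of the simulated implementation, whereas the paper instead normalizes the linearization mapping to change only at invocation/response transitions and splits each such action into an observable step plus an internal ``log'' step; both devices resolve the same issue.
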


The complete implementation for $\classrel{\rel}{\obj,\seqspec}$ is constructed similarly to the one in the proof of \cref{prop:comp-lin} (which is a special case),
except that here the state also tracks a linearization of the history so far, and ensures in each transition that the linearizations in the pre-state and post-state are related by $\rel$.

Similarly to the construction in \cref{prop:comp-lin}, the generic construction in \cref{lem:closure_rel} is not helpful for reasoning about hyperproperties.
In contrast, \cref{thm:slin_comp} proposes a simple and useful complete implementation for strong linearizability.
In the remainder of the paper we seek useful complete implementations for other linearizability classes of interest.

\section{Complete Implementation for Write Strong Linearizability}
\label{sec:ws}

Focusing on registers and identifying that useful register implementations are not strongly linearizable,
Hadzilacos \etal~\cite{HadzilacosHT21} have recently proposed a weakening of strong linearizability, called \emph{write strong-linearizability},
and showed that every linearizable \emph{single writer} register implementation,
including single-writer $\ABD$, is write strongly-linearizable.
However, they do not provide
a specification for write strong-linearizability that plays the role that the atomic register implementation plays for strong linearizability.

Write strong-linearizability weakens the prefix-preservation requirement of strong linearizability by applying it only
to writes, thus allowing reads to be linearized offline, and freely ``move around'' when more operations are added.
For the formal definition, we let $\seqh\rst{\regwrite}$ denote the restriction of $\seqh \in \regspec$ to write operations.

\begin{definition2}
\label{def:wslin}
	Let $\impl$ be a register implementation. 
	A linearization mapping $\linmap:\exec{\impl}\to \regspec$ is \emph{write strong}
	if $\linmap(\ex_1)\rst{\regwrite}\pref\linmap(\ex_2)\rst{\regwrite}$ whenever $\ex_1\pref\ex_2$.
We say that $\impl$ is \emph{write strongly-linearizable} %
if there exists a write strong linearization mapping $\linmap:\exec{\impl}\to \regspec$.
\end{definition2}

\begin{example}
\label{ex:cl2_lin}
From the implementations in \cref{sec:overview},
$\strongreg$ and $\dlreg$ are write strongly-linearizable.
(For $\dlreg$, for $\h$ from \cref{ex:cl1_lin}, we can pick $\seqh_3$, and later on, when the read returns, pick either $\seqh_1$, by adding a read in the middle, or $\seqh_2$ according to the returned value.)
We use the histories associated with the set $\exsetww$ from \cref{fig:clients} to show that $\tnsreg$ and $\ABD$ are not write strongly-linearizable.
(For $\ABD$ this also follows from the general result in~\cite{YuHHT21}.)
Consider the following history $\h$,
its two possible extensions $\h_1$ and $\h_2$,
and its possible linearizations $\seqh_1,\seqh_2,\seqh_3$:
\begin{equation*}
	\h = \inarr{\wint{1}{15pt}\\ \; \wintp{2}{20pt}}
	\qquad\vrule\qquad
	\inarr{
		\h_1 = \inarr{\;\wint{1}{15pt} \\ \wint{2}{30pt} \!\! \rint{2}{15pt}}\qquad
		\h_2 = \inarr{\;\wint{1}{15pt}\\ \wint{2}{30pt} \!\! \rint{1}{15pt}}}
	\qquad\vrule\qquad
	\inarr{
		\seqh_1 = \text{\wint{1}{15pt} \wint{2}{15pt}} \\
		\seqh_2 = \text{\wint{2}{15pt} \wint{1}{15pt}} \\
		\seqh_3 = \text{\wint{1}{15pt} }}
\end{equation*}
Unlike $\strongreg$ and $\dlreg$, both $\tnsreg$ and $\ABD$ have a \emph{single} execution $\ex$ that induces $\h$
and can be extended into two alternative executions that induce $\h_1$ or $\h_2$.
Then, $\linmap(\ex)$ can be $\seqh_1$, $\seqh_2$, or $\seqh_3$,
but any choice at this stage is doomed to fail:
\begin{enumerate*}[label=(\roman*)]
\item $\seqh_1$ fails if the execution continues to generate $\h_2$
since no extension of $\seqh_1$ linearizes $\h_2$;
\item $\seqh_2$ fails if the execution continues to generate $\h_1$
since no extension of $\seqh_2$ linearizes $\h_1$;
and \item $\seqh_3$ fails if the execution continues to generate $\h_2$
since no extension of $\seqh_3$, where write operations are only added after
the write operation in $\seqh_3$, linearizes $\h_1$.
\end{enumerate*}
\lipicsEnd
\end{example}

We denote by $\wslin$ the class of write strongly-linearizable register implementations.
By \cref{lem:closure_rel}
(with $\rel$ ordering histories using the prefix relation on the restriction to writes),
$\wslin$ is downward-closed \wrt simulation, and the notion of a complete implementation is well-defined.
\Cref{wsreg} presents our proposed complete implementation for this class.
\iflong
Appendix~\ref{subsec:wslts} formalizes the pseudo-code as an LTS.
\fi
Its construction is inspired by a specification given by~Attiya and Enea~\cite[§6]{AE19} for capturing the hyperproperties
of a specific snapshot implementation~\cite{AfekADGMS93}.
It is a generalization of $\dlreg$ from \cref{sec:overview}, where instead of loading twice, the reader
repeatedly loads from $\reg$ as long as new values are observed,
and non-deterministically decides which value to return.

\begin{remark}
One can define a sequence $\{\impl_k\}_{k=1}^\infty$ of implementations, all with atomic write,
and read that non-deterministically picks between $k$-loads (so $\strongreg=\impl_1$ and $\dlreg=\impl_2$).
It can be shown that all of these implementations are write strongly-linearizable,
but for every $k$, $\impl_{k+1}$ does not strongly observationally refine $\impl_k$.
The $\wsreg$ implementation is what one gets ``at the limit'' of this sequence,
 and every $\impl_k$ trivially
strongly observationally refines $\wsreg$.
\end{remark}

\setcounter{algocf}{0}
\begin{algorithm}[t]
\footnotesize
\caption{$\wsreg$: A complete implementation for write strongly-linearizable registers}
	\label{wsreg}
	\textbf{Shared Variables:}
	the current value $\reg$.
	
	Multi-assignments are executed atomically.
	
	\begin{multicols}{2}
		\METHOD{\Read{}}{
			$\valset\gets\set{\reg}$\;
			\Do{$\valset\neq\valsetprev$}
			{
				$\tup{\valsetprev,\valset}\gets\tup{\valset,\valset \cup \set{\reg}}$\;
			}
			$\retval\gets$ \Pick{$\val\in\valset$}\;
			\Return{$\retval$}\;
		}
		
		\METHOD{\Write{$\val$}}{
			$\reg\gets\val$\;
			\Return{}\;
		}
	\end{multicols}
	\vspace*{3pt}
\end{algorithm}

\begin{theorem}\label{thm:ws-complete}
$\wsreg$ is
complete for the class of
write strongly-linearizable register implementations.
\end{theorem}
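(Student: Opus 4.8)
The plan is to discharge the two obligations in \cref{def:comp}: membership, $\wsreg\in\wslin$, and hardness, $\impl\leqsim\wsreg$ for every $\impl\in\wslin$. Since \cref{lem:closure_rel} already gives that $\wslin$ is downward closed \wrt forward simulation, these two facts are exactly what completeness requires. For membership I would exhibit a write strong linearization mapping $\linmap$ for $\wsreg$. Because each write consists of a single atomic store to $\reg$, I order the write operations by the order of their stores; this order is total and only grows by appending as an execution is extended, so $\linmap(\ex_1)\rst{\regwrite}\pref\linmap(\ex_2)\rst{\regwrite}$ whenever $\ex_1\pref\ex_2$. For reads, the returned value $\retval\in\valset$ was loaded from $\reg$ at some point inside the read's method execution, hence equals the value written by the last store preceding that load (or the initial $0$); I linearize the read immediately after that store (at the very beginning if the value is $0$). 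The resulting sequential history lies in $\regspec$ (every read returns the most recent write), and it is a linearization of $\hs{\ex}$ since every chosen point lies inside its own operation's interval, so both per-thread order and real-time order are respected.

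For hardness, by the definition of $\leqsim$ it suffices to construct an $\Inv\Res(\Reg)$-forward simulation from $\impl$ to $\wsreg$, which I would drive using a write strong linearization mapping $\linmap$ of $\impl$. The simulation matches invocations and responses of $\impl$ immediately with the corresponding invocations and responses of $\wsreg$ (with identical value and identifiers), and matches each internal step of $\impl$ with a sequence of internal steps of $\wsreg$ (stores inside write methods and loop iterations inside read methods), so the external trace is preserved. The value of $\wsreg$'s $\reg$ is kept equal to the value of the last write of $\linmap(\ex)\rst{\regwrite}$: because $\linmap$ is write strong this write sequence only grows by appending, so $\wsreg$ performs its stores in a fixed, prefix-stable order. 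Each in-progress $\wsreg$ read accumulates into $\valset$ the values that $\reg$ takes during its interval, and when the matching $\impl$ read returns $v$ it exits the loop (forcing a repeated load) and picks $v$.

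The heart of the proof is the \emph{capture} property: whenever $\impl$'s read $r$ returns $v$, the simulating $\wsreg$ read can be arranged so that $v\in\valset$. In $\linmap(\ex)$ the write $w$ immediately preceding $r$ carries value $v$ (or $v=0$ with no preceding write). Assigning the writes linearization points consistent with the order of $\linmap(\ex)$ and lying inside their intervals, $r$'s own position falls within $r$'s interval at a moment where the last committed write is $w$, i.e.\ $\reg=v$; scheduling the store of $w$ and a load of $r$ at that moment puts $v$ into $\valset$. The subtlety, and the main obstacle, is that a write $w'$ linearized \emph{after} $r$ may already occur in $\linmap(\ex')$ for a prefix $\ex'$ preceding $r$'s invocation, so committing stores eagerly would overwrite $\reg$ before $r$ ever observes $v$. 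Write strong-linearizability is precisely what rescues the construction: since $r$ precedes $w'$ in the linearization, real-time order forbids $w'$ from responding before $r$ is invoked, so $\wsreg$ may legally \emph{delay} the store of $w'$ (which need only precede $w'$'s response) until after $r$ has loaded $v$.

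The delicate part is therefore to show that such a delaying schedule of stores exists simultaneously for all concurrent reads and writes, and that it is compatible with extending the execution step by step (as a forward simulation demands). I would argue this from the prefix-stability of the write order, which freezes the relative order and hence the admissible placement windows of all previously seen writes, together with the interval-overlap constraints imposed by real-time order, so that the partial schedule built for $\ex$ always extends to one for any one-step extension. Packaging these ingredients, the simulation relation records each $\impl$-execution together with a consistent such schedule of $\wsreg$ stores and loads, the capture property guarantees that every $\impl$ step (in particular every read response) can be matched, and we conclude $\impl\leqsim\wsreg$. Combined with membership, this yields that $\wsreg$ is complete for $\wslin$.
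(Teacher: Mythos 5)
Your membership argument matches the paper's (writes linearized at their stores, reads at the load of the value they return), and you correctly identify the central obstacle in the hardness direction: if the simulation performs the store in $\wsreg$ as soon as a write enters $\linmap(\ex)$, a write that is linearized ``too early'' (before its effect is final) overwrites $\reg$ and prevents later-invoked reads from ever loading the older value. This is exactly the failure mode the paper illustrates with a ticket-protocol register. However, your proposed repair --- keeping the arbitrary write strong mapping $\linmap$ and maintaining a ``delaying schedule'' of stores inside the simulation relation --- leaves the actual theorem unproved. First, your stated invariant (``$\reg$ equals the value of the last write of $\linmap(\ex)\rst{\regwrite}$'') is incompatible with delaying that very write's store, so the invariant you would induct on is never pinned down. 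Second, and more seriously, the schedule you describe is prophetic: how long the store of $w'$ may be delayed depends on which future reads will be invoked and which values they will return (it must precede the response of the first read returning $w'$'s value, yet follow the invocation of any read that still returns the older value). A forward simulation must commit to a $\wsreg$ state after each transition using only the past, so ``the partial schedule built for $\ex$ always extends to one for any one-step extension'' is not a routine consequence of prefix-stability and real-time constraints --- it is essentially the whole difficulty, and your appeal to choosing linearization points inside operation intervals only shows a schedule exists in hindsight for each fixed execution, not that one online strategy serves all extensions.

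The paper closes this gap by first replacing $\linmap$ with a \emph{lazy} write strong-linearization mapping (\cref{lem:wsnice}): one that is decisive, changes only on response transitions, contains no pending reads, and never ends in a pending operation. Constructing it is the bulk of the work (a minimality argument over linearization mappings via Zorn's lemma), and it is precisely what makes the store schedule prophecy-free --- writes are stored exactly when the lazy linearization is forced to admit them, which happens only at responses. The capture property you name is then proved as \cref{thm:lazyisgood}: a read's return value is observable between the lazy linearizations of the execution up to its invocation and up to its response, so the simulating read has already loaded it. Your proposal names the right target but asserts, rather than supplies, this enabling lemma; without it (or an equivalent de-prophetization of your schedule) the simulation cannot be completed.
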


As a consequence of \cref{thm:ws-complete}, we obtain that
single-writer \ABD strongly observationally refines $\wsreg$,
and so we can use $\wsreg$ to argue about the hyperproperties of client programs that use single-writer \ABD.

\section{Complete Implementation for Decisive Linearizability}
\label{sec:decisive}

In this section we identify a novel linearizability criterion, which we call \emph{decisive linearizability}.
Then, we present a complete implementation for the corresponding class of register implementations, which can serve as a hyperproperty-preserving specification for any implementation in the class.
Using this implementation, we show that multi-writer \ABD is decisively linearizable,
and that decisive linearizability (for registers) is weaker than write strong-linearizability.

\begin{definition2}
	\label{def:dlin}
	Let $\impl$ be an implementation of an object $\obj$
	and $\seqspec$ be a specification of $\obj$.
	A linearization mapping $\linmap:\exec{\impl}\to \seqspec$ is \emph{decisive}
	if $\linmap(\ex_1)\subseq\linmap(\ex_2)$ whenever $\ex_1\pref\ex_2$.
We say that $\impl$ is \emph{decisively linearizable} \wrt $\seqspec$
if there there exists a decisive linearization mapping $\linmap:\exec{\impl}\to \seqspec$.
\end{definition2}

Decisive linearizability, like strong and write strong-linearizability, requires the linearization process to be ``online''.
Nevertheless, unlike strong and write strong-linearizability, it does not require that the sequences of linearizations
produced in this process are increasing ``at the end'',
thus allowing operations to be added to the linearized history possibly
before operations that are already included in the linearized history.
The only requirement of decisive linearizability is that this process maintains
the relative order of already linearized operations: once the order between $\op_1$ and $\op_2$ has been decided,
it cannot be reverted.

\newcommand{\wlabel}[1]{\mathtt{w\text{#1}}}

\begin{example}
\label{ex:cl3_lin}
All implementations in \cref{sec:overview} are decisively linearizable:
$\strongreg$ and $\dlreg$ are already write strongly-linearizable
(which is a stronger condition, as we show below)
and for
$\tnsreg$ and \ABD, which are not write strongly-linearizable, this will be proven
later in the section.
To illustrate how a suitable linearization mapping can be obtained for these implementations,
we revisit the histories $\h$ and its extensions $\h_1$ and $\h_2$ from \cref{ex:cl2_lin}.
To linearize $\h$, we can pick $\seqh_3$; later on, if the execution continues according to $\h_1$,
we append $\wlabel{2}$ to the linearization, and if the execution continues to $\h_2$,
we add $\wlabel{2}$ to the linearization before $\wlabel{1}$---note that decisive linearizability allows this;
finally, when the read returns we add it immediately after the corresponding write.

For a ``non-example'',
we use the histories associated with the set $\exsetwwb$ from \cref{fig:clients} to show that the complete implementation for the class of
linearizable registers (see \cref{prop:comp-lin}) is not decisively linearizable.
Consider the following history $\h$,
its two possible extensions $\h_1$ and $\h_2$,
and its possible linearizations $\seqh_1$ and $\seqh_2$:
\begin{equation*}
	\h = \inarr{\; \wint{1}{15pt} \\ \wint{2}{15pt}}
	\qquad\qquad\vrule\qquad
	\inarr{
		\h_1 = \inarr{\; \wint{1}{15pt} \rint{1}{15pt} \\ \wint{2}{15pt}} \qquad
		\h_2 = \inarr{\; \wint{1}{15pt} \rint{2}{15pt} \\ \wint{2}{15pt}}}
	\qquad\qquad\vrule\qquad
	\inarr{
		\seqh_1 = \text{\wint{1}{15pt} \wint{2}{15pt} } \\
		\seqh_2 = \text{\wint{2}{15pt} \wint{1}{15pt} } }
\end{equation*}
Recall that in the complete implementation for standard linearizability,
an execution $\ex$ that induces $\h$ can be extended both to an execution $\ex_1$ that
induces $\h_1$ and to an execution $\ex_2$ that induces $\h_2$.
(In particular, this means that an adversary for $\progwwb$ from  \cref{fig:clients} can decide between these options after the coin toss, refuting the hyperproperty discussed in \cref{sec:overview}, which is satisfied when each of $\strongreg$, $\dlreg$, $\tnsreg$, \ABD and in fact any decisively linearizable implementation is used.)
If $\linmap(\ex) = \seqh_1$ then the
linearization of $\ex_1$ must reorder the writes in $\seqh_1$, violating decisiveness.
Similarly, if $\linmap(\ex) = \seqh_2$, then the linearization of $\ex_2$ must reorder the writes in $\seqh_2$, violating decisiveness.
Thus, no decisive linearization mapping exists.
\lipicsEnd
\end{example}

By \cref{lem:closure_rel} (with $\rel$ being the subsequence relation),
the class of
decisively linearizable implementations is downward-closed \wrt simulation and
a complete implementation exists for it, for any object.
Next, we present a complete implementation for the class
of decisively linearizable register implementations.
We note that while \cref{def:dlin} is not specific to registers (unlike \cref{def:wslin})
and \cref{lem:closure_rel} applies to any object,
the complete implementation we present is only for register implementations.
We denote by $\dlin$ the class of all decisively linearizable register implementations.
The complete implementation, $\dreg$, is presented in \cref{alg:dlin}.
\iflong
Appendix~\ref{subsec:dlts} formalizes the pseudo-code as an LTS.
\fi

\begin{algorithm}[t]
\footnotesize
	\caption{$\dreg$: A complete implementation for decisively linearizable registers
}
	\label{alg:dlin}

	\textbf{Shared Variables:}
	the current value $\reg$, the current version number $\ver$, and a lock flag $\lock$.
	
	\KwSty{await}~B~\KwSty{do}~C blocks until the condition $B$ is met,
	at which point the evaluation of $B$ and the body $C$ are atomically executed.
	Multi-assignments and \KwSty{atomic} blocks are executed atomically.
	\setlength{\columnsep}{-1cm}
	\begin{multicols}{2}
		\METHOD{\Read{}}{
			\lAwait{$\lock=0$}{$\tup{\start,\valset}\gets\tup{\ver,\set{\reg}}$}
			\Do{$\valset\neq\valsetprev$}
			{
				\Atomic{}{
					$\valsetprev\gets\valset$\;
					\lIf{$\ver\geq\start$}{$\valset\gets\valset\cup \set{\reg}$}
				}
			}
			$\retval \gets$ \Pick{$\val\in\valset$}\;
			\Return{$\retval$}\;
		}
		
		\METHOD{\Write{$\val$}}{
			\lAwait{$\lock=0$}{$\start\gets\ver$}
			
			\If{\normalfont{\texttt{*}}}{\lAwait{$\lock=0$}{$\tup{\reg,\ver}\gets\tup{\val,\ver+1}$}}
			\Else{
					\Await{$\lock=0\land\ver>\start$}{$\tup{\lock,\tmp,\reg,\ver}\gets\tup{1,\reg,\val,\ver-1}$\;}
					$\tup{\lock,\reg,\ver}\gets\tup{0,\tmp,\ver+1}$\;
			}
			\Return{}\;
		}
	\end{multicols}
	\vspace*{3pt}

\end{algorithm}

$\dreg$ stores the current value in $\reg$ and a corresponding version number in $\ver$.
Reads use repeated loads similarly to $\wsreg$, but add loaded values to $\valset$
only when their version number is not older than the version number when the read started
(stored in $\start$).
The return value is picked non-deterministically from $\valset$.

Writes are based on the idea used in $\tnsreg$, allowing stores to non-deterministically choose
to be overwritten by a concurrent write,
with two important differences. First, new stores by concurrent writes are identified based on version number
($\ver > \start$) rather than values (to avoid data dependencies).
Second, even if a write
chooses to be overwritten, the store to $\reg$ is not skipped but momentarily executed with a lower version number,
to allow concurrent reads to observe it.
This is done by a step that temporarily decreases $\ver$ and stores the input value to $\reg$,
followed by a step that restores $\ver$ and $\reg$
to their newer values.
The two steps are not executed atomically, letting concurrent reads to load the
intermediate value.
Importantly, a lock $\lock$ is used to prevent concurrent methods from setting their start version number ($\start$)
to a temporary version number,
and from updating $\ver$ based on a temporary version number.

To simplify the presentation, the pseudo-code is written such that a write makes the non-deterministic choice
whether to be overwritten or not before it determines that it can indeed be overwritten.
As a result, the execution may get stuck.
This does not affect linearizability, and
this behavior is impossible in our formulation of $\dreg$ as an LTS.

\begin{figure}
	\begin{minipage}{0.5\textwidth}
		\begin{equation*}
			\inarrIII{\Write{$1$};\\ \lreg \gets\Coin{};\\ \Barrier{};}
			{\Write{$2$};\\ \Barrier{}; \\ {\lrega \gets \Read{};}}
			{{\lregb \gets \Read{};}\\ \Barrier{};\\}
		\end{equation*}
	\end{minipage}
	\hfill
	\begin{minipage}{0.55\textwidth}
		$$\exsetwwr = \left\{
		\inarr{\; \;\wint{1}{10pt} \!\circled{2}\! \\ \; \wint{2}{30pt} \! \rint{2}{10pt}
			\\  \rint{2}{35pt} }
		\;,\;
		\inarr{\; \;\wint{1}{10pt} \!\circled{1}\!  \\ \; \wint{2}{30pt} \! \rint{1}{10pt}
			\\  \rint{2}{35pt} }
		\right\}$$
	\end{minipage}
	\caption{A program $\progwwr$ and a set $\exsetwwr$ of traces of the program\label{fig:client4}}
\end{figure}

\begin{example}
\label{ex:cl3}
Allowing concurrent reads to observe ``overwritten'' writes
is crucial for capturing all behaviors of decisively linearizable implementations such as multi-writer \ABD.
Consider the program $\progwwr$ and set of traces $\exsetwwr$ in \cref{fig:client4}.
The program $\progwwr$ extends $\progww$ from  \cref{fig:clients} with another thread,
and $\exsetwwr$ is similar to $\exsetww$
except that the additional thread observes the value 2 written by the middle thread,
even when this value ends up being overwritten.
Recall that $\exsetww$ can be generated by an adversary for both $\tnsreg$ and $\ABD$.
For $\tnsreg$, this leverages the ability of the adversary to postpone the decision
whether to store 2 or not until after the coin toss.
In contrast, $\exsetwwr$ is not possible for $\tnsreg$, since in the trace where the middle thread reads 1,
it must be the case that $\tnsreg$ chose to overwrite 2 and as a result has never stored 2 to $\reg$,
preventing concurrent threads from loading the value before it is overwritten.
($\dreg$ does perform a store in such a case, allowing $\exsetwwr$.)
Unlike $\tnsreg$, \ABD allows this behavior:
The adversary acts on the left and middle processes similarly to the adversary for $\progww$ that generates $\exsetww$ described in \cref{sec:overview},
with the added right process sending an additional query when $\Write{$2$}$ does so,
immediately receiving replies with the initial value from a set of process that excludes the middle process and is one-short from a quorum.
Then, when $\Write{$2$}$ sends its update, it also replies to the right process with the timestamp it chose.
Regardless of the chosen timestamp, it is larger than the initial timestamp,
causing the right process to return the value 2.
\lipicsEnd

\end{example}

\begin{theorem}
\label{thm:dlin_complete}
$\dreg$ is
complete for the class of
decisively linearizable register implementations.
\end{theorem}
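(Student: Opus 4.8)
By \cref{def:comp}, to show that $\dreg$ is complete for $\dlin$ I must establish two facts: (a) \emph{membership}, $\dreg \in \dlin$, i.e.\ $\dreg$ is itself decisively linearizable; and (b) \emph{hardness}, $\impl \leqsim \dreg$ for every $\impl \in \dlin$. Downward-closedness of $\dlin$ \wrt forward simulation is already provided by \cref{lem:closure_rel} (with $\rel$ the subsequence relation $\subseq$), so these two parts suffice. The plan is to treat membership first, since it is the more self-contained of the two, and then build the hardness simulation, which is where the real work lies.

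\textbf{Part (a): membership.} I would exhibit an explicit decisive linearization mapping $\linmap : \exec{\dreg} \to \regspec$. The key idea is that the version numbers in $\dreg$ induce the linearization order on writes: a write that commits its value under version $v$ is ordered by $v$ (with ties, arising from the ``overwrite'' branch that stores with a lowered $\ver$, broken so that the later full store wins). Reads are inserted according to the value they return, placed immediately after the write that installed that value in $\valset$. I would argue that as an execution is extended by one transition, this mapping only ever \emph{appends} a newly completed operation or inserts it into a position consistent with already-fixed relative orders, never swapping two operations already present---this is exactly decisiveness ($\linmap(\ex_1)\subseq\linmap(\ex_2)$). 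The lock $\lock$ is what guarantees that no read ever latches onto a temporary (lowered) version as its $\start$, and that no write bases its new version on a temporary one; I would use this invariant to show the induced write order is a well-defined total order extended monotonically.

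\textbf{Part (b): hardness.} Given an arbitrary $\impl \in \dlin$ with decisive linearization mapping $\linmap$, I would construct a forward simulation $\fsim$ from $\impl$ to $\dreg$ relating each execution $\ex$ of $\impl$ to a $\dreg$-state that faithfully encodes the current linearization $\linmap(\ex)$: the value $\reg$ and version $\ver$ must reflect the most recently linearized write, and for each pending read of $\impl$ whose return value is not yet fixed, the corresponding $\dreg$ read must have accumulated a $\valset$ containing every value that $\linmap$ might still assign to it. The crux is matching moves: when $\impl$ takes a step that causes $\linmap$ to insert a write \emph{before} an already-linearized write (the phenomenon decisiveness permits but write strong-linearizability forbids), I must show $\dreg$ can mimic this via its overwrite branch---momentarily storing the incoming value with a decremented version so that concurrent reads can still observe it, then restoring. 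This is precisely the behavior \cref{ex:cl3} was designed to motivate, so the example is really a witness that the construction is necessary.

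\textbf{Main obstacle.} The hard part will be the simulation in Part~(b), specifically proving condition~(ii) of the forward-simulation definition in the overwrite case: I must guarantee that $\dreg$'s non-atomic two-step store (temporarily lowering $\ver$, then restoring) can be scheduled so that every concurrent $\impl$-read whose future linearized value could be the overwritten value is matched by a $\dreg$-read that loads that intermediate value into its $\valset$ at exactly the right moment, while the lock invariant is maintained. Getting the timing right---that $\dreg$ has enough stuttering freedom to replay an arbitrary decisive reordering without ever being forced to commit a read's return value prematurely---is the delicate step. I expect to need a carefully stated invariant tying each open $\dreg$-read's $\valset$ to the set of values that remain linearization-consistent for the matching $\impl$-read, and to verify this invariant is preserved under every transition class of $\impl$.
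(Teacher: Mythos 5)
Your decomposition into membership and hardness is the paper's, and your Part (a) follows essentially the paper's argument: linearize only the writes that have already performed their store (via $\writetop$ or $\writerollback$) and the reads that have already picked a return value, order operations by the version number attached to the relevant transition, and break ties by the order of those transitions; decisiveness follows because newly added operations never alter the relative order of previously linearized ones. (Your phrasing that a read is placed \emph{immediately} after the write it reads from is slightly imprecise---other operations with the same version number may intervene---but the construction is the same.)

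Part (b), however, has a genuine gap. You propose to build the forward simulation directly on an \emph{arbitrary} decisive linearization mapping $\linmap$, with the invariant that $\reg$ and $\ver$ reflect the most recently linearized write and that each pending read's $\valset$ contains every value $\linmap$ might still assign to it. This invariant cannot be maintained for a general $\linmap$. The paper's ticket-protocol example (\cref{subsec:hardness_motivation}) is the witness: a decisive (indeed write strong) mapping may place a write into $\linmap(\ex)$ long before that write has any effect in $\impl$; if the simulation then stores its value and advances $\ver$ in $\dreg$, a read of $\impl$ invoked afterwards can still legitimately return an older value (the write is still pending, so no real-time edge forces the read after it, and decisiveness permits inserting the read earlier in the linearization), while the matching $\dreg$ read, having initialized $\valset=\set{\reg}$ with $\start=\ver$ at the new version, can never acquire that older value. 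The missing idea is the reduction to a \emph{lazy} linearization mapping---one that is decisive, response-activated, last-completed and pending-read-free---whose existence for every $\impl\in\dlin$ is a separate, nontrivial theorem (\cref{lem:awsnice}, proved via a minimality argument over linearization mappings using Zorn's lemma). Only for such a mapping does your invariant become provable (this is the content of \cref{thm:lazyisgood} and the simulation of \cref{thm:awshard}); your proposal neither identifies this reduction nor offers an alternative mechanism for deferring the $\dreg$ store until the write's position in the linearization is actually forced.
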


While not immediate from the definitions, a corollary of \cref{thm:dlin_complete,thm:ws-complete},
together with the observation that $\wsreg\leqsim\dreg$,
is that every write strongly-linearizable implementation is also decisively linearizable.
That is, decisive linearizability is indeed weaker than write strong-linearizability.

Having constructed a complete implementation, we now leverage
it to show that other implementations are decisively linearizable:
all we need to do is prove that they are simulated by $\dreg$.
For example, $\tnsreg$ is trivially simulated by $\dreg$, and is therefore decisively linearizable.
We show that the same holds for multi-writer $\ABD$.

\begin{theorem}
\label{thm:abd-sim-dreg}
	$\ABD\leqsim\dreg$.
\end{theorem}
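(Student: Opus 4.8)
The plan is to exhibit an explicit $\Inv\Res(\Reg)$-forward simulation $\fsim$ from $\ABD$ to $\dreg$. By \cref{thm:sref_iff_sim} this yields $\ABD\srefine\dreg$, and since $\dlin$ is downward closed \wrt forward simulation (\cref{lem:closure_rel}) and $\dreg\in\dlin$ (\cref{thm:dlin_complete}), it additionally establishes that $\ABD$ is decisively linearizable. The backbone of the relation is a linear ordering of the timestamps used by $\ABD$. Since timestamps $\ts=\tup{\tm,\tid}$ are totally ordered lexicographically, at any $\ABD$ configuration we may enumerate the timestamps that have already been \emph{committed} (adopted by a majority of processes) in increasing order and assign them consecutive ranks in $\N$. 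The relation $\fsim$ then requires that $\dreg$'s version $\ver$ equals the rank of the current maximum committed timestamp and that $\reg$ holds its associated value, so that the abstract register always reflects $\ABD$'s latest committed write.

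On top of this, $\fsim$ relates the per-method local states. For an $\ABD$ write, its query phase corresponds to the first \KwSty{await} of the $\dreg$ write (setting $\start\gets\ver$); once the write has fixed its timestamp $\ts$ and is broadcasting its update, the nondeterministic branch of the $\dreg$ write is resolved according to whether $\ts$ will become the new maximum (the \KwSty{if} branch, incrementing $\ver$) or whether a larger timestamp is already committed (the \KwSty{else} branch, momentarily storing the value at version $\ver-1$ and then restoring it). The flag $\lock$ is held exactly while some such else-branch write occupies its critical section. For an $\ABD$ read, $\fsim$ relates $\start$ to the rank of the maximum committed timestamp at the read's query time, and maintains the invariant that the $\dreg$ read's $\valset$ equals the set of values the $\ABD$ read may still return given the replies it has gathered so far.

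The two forward-simulation conditions are then checked by induction on $\ABD$ executions. The initial states correspond trivially (value $0$, version $0$, $\lock=0$, no pending methods). For the inductive step I would match each $\ABD$ transition---read/write invocation and response, query and update broadcasts, reply and acknowledgment deliveries, and the background server steps---by a (possibly empty) sequence of $\dreg$ steps with the same restriction to $\Inv\Res(\Reg)$: interface actions are matched one-to-one, while internal $\ABD$ steps are matched by internal $\dreg$ steps. Exploiting that one $\ABD$ step may be simulated by several $\dreg$ steps, I would insert the requisite read-loop iterations and momentary stores at the appropriate moments.

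The hard part will be the $\valset$ correspondence from the previous paragraph, i.e., aligning $\ABD$'s quorum-based observability with $\dreg$'s version-windowed observability. Two facts must be reconciled. First, a quorum-intersection argument bounds $\valset$ from above: once a timestamp reaches a majority, every later read-quorum intersects that majority and hence cannot return any smaller-timestamped value, matching the fact that in $\dreg$ an overwritten value is visible only at version $\ver-1$ and disappears as soon as the next write commits. Second, and more delicate, whenever an $\ABD$ read may legitimately return an ``overwritten'' value $\val$ (by forming a quorum whose maximum timestamp is $\val$'s), we must have already scheduled the corresponding else-branch write's momentary store within the lifetime of its still-pending $\dreg$ write method and within a loop iteration of the $\dreg$ read, so that $\val\in\valset$. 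Coordinating these momentary stores---respecting the discipline enforced by $\lock$ and the constraint that a $\dreg$ write can store only while pending---against all concurrently executing reads is where the bulk of the invariant bookkeeping lies, and I expect it to be the principal obstacle.
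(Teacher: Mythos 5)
Your overall strategy coincides with the paper's: both construct an explicit $\Inv\Res(\Reg)$-forward simulation in which $\reg$ holds the value of the maximal timestamp that has reached a majority, the nondeterministic branch of the $\dreg$ write is resolved at commit time according to whether the write's timestamp is the new maximum, $\lock$ is held exactly during the rollback critical section, and the read's $\valset$ is tied to a window of committed timestamps between the read's start version and the current version. However, your central invariant on $\ver$ is wrong as stated. You require $\ver$ to equal the rank of the current maximum committed timestamp in the increasing enumeration of \emph{all} committed timestamps. Consider timestamps $\ts_1<\ts_2$ where $\ts_2$ reaches a majority first and $\ts_1$ reaches a majority afterwards --- exactly the situation the else-branch of $\dreg$ exists to handle. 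When $\ts_1$ commits, the rank of the maximum $\ts_2$ increases by one, but the only $\dreg$ steps available to you are the pair \writerollback\ and \writerollforward, whose net effect on $\ver$ is zero (the alternative, \writetop, would both increment $\ver$ and overwrite $\reg$ with the stale value, breaking your other invariant). So your relation fails to be preserved in precisely the key case. The paper instead carries a weakly monotone map $\Fver:\set{0,\ldots,\ver}\to\Val\times\TS$ with $\Fver(\ver)=\max\tshist(\msgmap)$: version numbers enumerate the successive increases of the committed \emph{maximum}, not the committed timestamps themselves, and a dominated timestamp that commits late never receives a version of its own.

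A second, smaller problem: you state the read invariant as ``$\valset$ equals the set of values the $\ABD$ read may still return given the replies gathered so far.'' That set shrinks over time (as replies accumulate and larger timestamps commit), whereas $\dreg$'s $\valset$ only grows, so this equality cannot be an invariant. The workable condition, which the paper uses, is that $\valset$ equals the set of committed values whose timestamps lie between $\Fver(\start)$ and $\Fver(\ver)$ --- a monotonically growing set --- combined with a separate quorum-intersection condition on the read's pending query guaranteeing that whatever value the $\ABD$ read eventually adopts has timestamp at least $\Fver(\start)$, hence lands in this window by the time it commits. With these two corrections your plan essentially becomes the paper's proof.
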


\begin{corollary}
	$\ABD$ is decisively linearizable.
\end{corollary}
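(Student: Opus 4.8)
The corollary is immediate once \cref{thm:abd-sim-dreg} is in hand: $\dreg$ is decisively linearizable (it is complete for $\dlin$ by \cref{thm:dlin_complete}, hence a member of $\dlin$), and $\dlin$ is downward closed with respect to forward simulation (\cref{lem:closure_rel}, instantiated with $\rel$ the subsequence relation $\subseq$ of \cref{def:dlin}). Thus the simulation $\ABD\leqsim\dreg$ places $\ABD$ in $\dlin$, i.e., makes it decisively linearizable. The real work is therefore \cref{thm:abd-sim-dreg} itself, so I sketch how I would establish $\ABD\leqsim\dreg$. The plan is to exhibit an explicit $\Inv\Res(\Reg)$-forward simulation $\fsim$ and verify the two clauses of the forward-simulation definition directly.

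I would let $\fsim$ relate an $\ABD$ configuration to a $\dreg$ configuration through the following gluing data. Fix the lexicographic order on the timestamps $\ts=\tup{\tm,\tid}$ occurring in the $\ABD$ state, and call a timestamped value \emph{committed} once a majority quorum has acknowledged it (so every future query quorum, being a majority, must observe it). The $\dreg$ shared state is then pinned down by: $\reg$ holds the value of the maximal committed timestamp, and $\ver$ its rank under a monotone encoding of the committed timestamps into $\N$ (monotone in the lexicographic order, so larger timestamps get larger versions). Each in-progress $\ABD$ $\regwrite$ is matched to a $\dreg$ $\regwrite$ of the same value at the control point fixed by how far it has progressed (before its query quorum $\leftrightarrow$ before the first $\lock=0$ await; after choosing its timestamp $\leftrightarrow$ after setting $\start$; after its update quorum $\leftrightarrow$ after the store step), and each in-progress $\ABD$ $\regread$ is matched to a $\dreg$ $\regread$. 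Most importantly, for every pending $\ABD$ read I require that the $\dreg$ read's $\valset$ equals \emph{exactly} the set of values the $\ABD$ read could still return — the values that maximize some majority quorum the read can still complete with, consistent with replies already received — with $\start$ set to the version current when the read began, so the guard $\ver\geq\start$ in the read loop mirrors the timestamps the read is still entitled to see. Finally, $\lock=1$ precisely while some ``overwritten'' write sits between its two store steps.

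With $\fsim$ fixed I would proceed by a case analysis over the $\ABD$ transition kinds. Message broadcasts, reply/ack deliveries, and the background server updates of local timestamped values are matched either by stuttering in $\dreg$ or by read-loop iterations and version checks that keep each $\valset$ current; these carry no $\Inv\Res(\Reg)$ label, so only the relation must be re-established. A read invocation/response is matched by the $\dreg$ read invocation and, at the end, by the nondeterministic choice of a value from $\valset$ and the return; correctness here is exactly the invariant that the returned $\ABD$ value lies in $\valset$. The decisive step is an $\ABD$ write whose update reaches a quorum, i.e.\ commits: using the present $\ABD$ state I classify it. If its timestamp is now the maximal committed one I match it with the \texttt{*}-true branch, $\tup{\reg,\ver}\gets\tup{\val,\ver+1}$; otherwise a strictly larger timestamp is already committed, the write is \emph{overwritten}, and I match it with the else branch's two-step lock dance, which momentarily exposes its value at the lower version $\ver-1$ and then restores the dominant value and version. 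The simplification noted after \cref{alg:dlin} — that in the LTS the else branch is taken only when $\ver>\start$ actually holds — is what lets this classification be made online from the current state.

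The crux, and where I expect most of the effort, is proving that the read-tracking invariant is genuinely preserved: that $\dreg$'s $\valset$ together with the $\ver\geq\start$ guard captures the $\ABD$-returnable values under \emph{every} adversarial delivery schedule, neither more nor fewer. This needs the quorum-intersection argument in both directions — that once a timestamp is committed no read can return anything strictly smaller (so nothing extra leaks into $\valset$), and that every value some completable quorum of the read can still maximize is exposed in $\dreg$ at a version $\geq\start$ at the right moment (so nothing is missing). Tightly coupled to this is showing that the momentary exposure at version $\ver-1$ during the overwritten-write dance reaches precisely the concurrent reads that $\ABD$ permits to observe that value, and that $\lock$ blocks other methods from latching onto the temporary version exactly when $\ABD$'s quorum structure forbids it. Discharging these invariants across all message-level interleavings — rather than merely the two store-level branches — is the technically demanding part of the proof.
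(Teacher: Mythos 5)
Your derivation of the corollary itself is exactly the paper's: it is immediate from \cref{thm:abd-sim-dreg} together with $\dreg\in\dlin$ (\cref{thm:dlin_complete}) and downward closure of $\dlin$ under forward simulation (\cref{lem:closure_rel}); nothing more is needed there.

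Since you present \cref{thm:abd-sim-dreg} as the real content, two remarks on how your sketch of it compares with the paper's simulation proof in \cref{app:abd}. First, the $\dreg$ store must be fired on the \emph{server-acknowledgement} transition at which some timestamped value first accumulates a majority of acks, not on a transition of the writing process: once a majority holds the value, a freshly arriving read can already be forced to return it, so $\reg$ must be updated before the writer ever observes its quorum. Moreover, the ack count that crosses the majority may belong to a \emph{reader's} write-back message rather than to the writer's own update; the paper's relation attributes the resulting store to the still-pending writer via an invariant guaranteeing that every uncommitted timestamped value in flight corresponds to some writer currently in its update phase. Your classification ``a write whose update reaches a quorum'' glosses over both points. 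Second, your read invariant --- that $\valset$ equals \emph{exactly} the set of values the $\ABD$ read could still return --- is both more than a forward simulation needs and likely unmaintainable: $\dreg$'s $\valset$ only grows, the ``still returnable'' set is not monotone, and since $\dreg$'s pick is nondeterministic it is harmless for $\valset$ to contain values the $\ABD$ read will never return; only the containment ``returned value $\in\valset$ at decision time'' matters. The paper instead keeps $\valset$ equal to the set of \emph{committed} values whose version lies between $\start$ and $\ver$, and defers returnability of the value actually picked to separate quorum-intersection conditions maintained for reads still in their query phase and discharged at the decision transition. With those corrections your outline lines up with the paper's argument.
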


Thus, $\dreg$ provides a shared-memory specification for multi-writer \ABD that enables reasoning about its hyperproperties.

\section{Related and Future Work}
\label{sec:related}

Since the observation that linearizability does not suffice for reasoning about randomized client programs
and the introduction of strong linearizability~\cite{Golab11}, many works have studied (im)possibility of implementing 
strongly linearizable objects under different progress conditions.
Helmi \etal~\cite{HelmiHW12} showed that lock-free strongly linearizable multi-writer registers, max registers, snapshots, and counters cannot be constructed from a single-writer registers.
Attiya \etal~\cite{AttiyaEW21} and Chan \etal~\cite{YuHHT21} adapted and extended these results for a fault-tolerant message passing setting.

Attiya \etal~\cite{AttiyaEW22} developed a methodology of making existing implementations probabilistically close to strongly linearizable ones
by repeating an effect-free preamble of every method and picking uniformly at random which outcome to continue with.
They introduced a correctness condition called \emph{tail strong linearizability} that ensures the effectiveness of this construction.
This criterion depends on the choice of the preamble and is thus not comparable to decisive linearizability.
Interestingly, the construction in \cite{AttiyaEW22} is not effective for our complete implementations ($\wsreg$ and $\dreg$).

The work of Hadzilacos \etal~\cite{HadzilacosHT21} is closer to our work in its aim to give up strong linearizability,
and study what existing implementations do provide.
In addition to what we have already discussed, \cite[Algorithm 4]{HadzilacosHT21} demonstrated a 
multi-writer register implementation that is not write strongly-linearizable.
This implementation is essentially a simplified version of \ABD,
and using forward simulation to \ABD, one can conclude that it is decisively linearizable.

As discussed in length, our work is heavily inspired by~\cite{AE19}
that uncovered the correspondence between strong observational refinement and simulation,
and suggested the use of non-atomic specifications for reasoning about non-strongly-linearizable implementations.
Derrick \etal~\cite{DerrickDDSW21} and Dongol \etal~\cite{DongolSW22} identified a gap in the way~\cite{AE19} handle infinite traces,
and show that in that case, while simulation is still necessary for strong observational refinement,
only a stronger relation, called \emph{(weak) progressive forward simulation} is sufficient.
We focus solely on finite traces, leaving infinite traces to future work.

Bouajjani \etal~\cite{BouajjaniEEM17} used forward simulations to non-atomic reference implementations as means to establish linearizability.
In particular, they developed abstract stack and queue specifications such that forward simulations to these specifications
is necessary and sufficient for establishing linearizability. In our terms, this gets close to complete implementations
for the class of linearizable stacks and queues, but, their results are, however, limited to implementations that have explicit
marking of linearization points (or so-called ``commit points'') in some of the methods. 
Their implementations are
highly beneficial in simplifying (and possibly automating) complex linearizability arguments, as the ones needed for
Herlihy\&Wing Queue~\cite{HW:1990} and the Time-Stamped Stack~\cite{DoddsHK15}.

Finally, we note that although we focused on registers, decisive linearizability is a general correctness criterion.
Investigating its applicability beyond registers is left for future work.
We believe that various implementations that are not strongly linearizable are still decisively linearizable
(but, there are known implementations that are not even decisively linearizable, such as the Time-Stamped Stack~\cite{DoddsHK15}, which allows concurrent complete push operations to remain unordered until a later
pop determines their order).
Identifying complete implementations for the class of decisively linearizable implementations of other objects
is an important (and challenging!) avenue for future work.
It would also be interesting to study (im)possibility for decisively linearizable implementations
with different progress guarantees.

\bibliography{biblio}

\appendix

\section{Proof Sketches}
\label{sec:sketches}

\begin{proof}[Proof (sketch) of \Cref{lem:closure_rel}]
	For downward closure, we observe that $\impl\leqsim\impl'$ implies that
	there exists a mapping $\pi: \exec{\impl} \to \exec{\impl'}$
	such that $\hs{\ex}= \hs{\pi(\ex)}$ for every $\ex\in\exec{\impl}$,
	and $\pi(\ex_1)\pref\pi(\ex_2)$ whenever $\ex_1\pref\ex_2$.
	Then, given a suitable linearization mapping $\linmap$ for $\impl'$,
	the composition $\linmap\circ\pi$ is a suitable linearization mapping for $\impl$.
	
	A complete implementation for $\classrel{\rel}{\obj,\seqspec}$ 
	is similar to the complete implementation for the class of all linearizable implementations presented in the proof of \cref{prop:comp-lin}, except that in addition to tracking in its internal state the history $\h$ it has observed so far, it also tracks a linearization $\seqh$ of $\h$.
	When executing an invocation or response $\alpha$,
	the linearization is non-deterministically updated to a linearization
	$\seqh'$ of $\h \cdot \alpha$ such that $\tup{\seqh,\seqh'} \in \rel$.
	If such $\seqh'$ does not exist, the $\alpha$ step is not enabled.
	This construction generalizes the implementation in \cite[Appendix C]{Attiya19arXiv}
	which uses the same set of states but only allows to append actions to the linearization.
\end{proof}

\begin{proof}[Proof (sketch) of \Cref{thm:ws-complete}]
	To show that $\wsreg\in\wslin$,
	we construct a linearization mapping $\linmap: \exec{\wsreg} \to \regspec$
	by assigning ``linearization points'':
	Write operations that have already stored their values in $\reg$ are linearized to the transition where they stored this value,
	and reads that picked a value to return are linearized to the transition where they first loaded that value.
	Other pending operations are not included in the linearization.
	
	Hardness is much more challenging.
	Given a write strongly-linearizable implementation $\impl$,
	we begin by instrumenting the state of $\impl$ with a ghost variable that tracks the full execution performed so far.
	Then, given an execution $\ex$ and a transition $\trans$,
	we compare $\linmap(\ex)$ and $\linmap(\ex\cdot\trans)$, where $\linmap$ is the given write strong-linearization mapping for $\impl$.
	A naive attempt to show $\impl \leqsim \wsreg$
	would execute a store in $\wsreg$ at the time the corresponding write operation $w$
	is added to the linearization.
	This fails since $w$ might be added to linearization immediately after all previous writes have been linearized,
	which can be before $w$ takes effect, and
	performing the store of $\wsreg$ at that step will not allow later reads (that are concurrent with $w$) to load earlier values.
	
	To overcome this, we prove the existence of a so-called \emph{lazy} linearization mapping.
	Informally, this mapping adds operations to the linearization only when it must,
	\eg when an operation completes, or when a write is needed to justify a completed read.
	More concretely, assuming arbitrary write strong-linearization mapping $\linmap$,
	we prove the existence of a write strong-linearization mapping $\linmap^*: \exec{\impl} \to \regspec$
	with the following additional properties:
	\begin{enumerate}
		\item $\linmap^*(\ex)=\linmap^*(\ex\cdot \trans)$ for every $\ex\cdot\trans\in\exec{\impl}$ such that the transition
		$\trans$ is not labeled with a response.
		
		\item For every $\ex\in\exec{\impl}$ and operation $\op$ in $\linmap^*(\ex)$,
		if $\op$ is not completed in $\hs{\ex}$, then it is a write operation and it is not last in $\linmap^*(\ex)$.
		
		\item $\linmap^*$ is decisive.

	\end{enumerate}
	Using $\linmap^*$, the simulation works.
	Invocation and response transitions are simulated by an identical invocation or response,
	where invocations of read operations also load the stored value once.
	The stores in write operations are executed when the write operations are added to the lazy linearization,
	after which all pending reads that did not already load the stored value load it.
	
	The crux of the proof is to justify that when a completed read is added to the linearization, the matching read in $\wsreg$ has already loaded the value it needs to pick to match the return value of that read.
	For this, it suffices to show that the value the read returns was written by a write that either was rightmost in the linearizaton of the prefix of the execution up to the transition that invoked the read, or was added to the linearization later (as these are exactly the writes whose stores we load as described above).
	The properties of the lazy linearization are used to establish this fact.
\end{proof}

\begin{proof}[Proof (sketch) of \Cref{thm:dlin_complete}]
	For inclusion, $\dreg\in\dlin$, %
	given an execution of $\dreg$ we construct a linearization that only includes writes that already stored their value in $\reg$ and reads that picked a value to return.  
	This is similar to the linearization in the proof that $\wsreg \in \wslin$. 
	However, the order in which the operations are linearized is more involved.
	We begin by assigning to each operation we intend to include in the linearization a version number:
	for a write operation it is the version number it wrote in the transition where it stored its value into $\reg$,
	and for a read operation it is the version number in the pre-state of the transition where it first loaded the value it later picked to return.
	Operations are ordered based on version number, with ties broken based on the ordering induced by the aforementioned transitions.
	The ordering between existing operations according to these rules does not change when new operations are added, and so the mapping we get is decisive.
	We can use the conditions guarding loads and \quotes{roll backs} to earlier version numbers to show the ordering according to the above rules respects real time order.
	
	For hardness, the proof closely follows the proof that $\wsreg$ is $\wslin$-hard.
	The main difference is that when a write appears for the first time in a linearization, it might not appear to the right of writes which already appeared earlier.
	We use the roll back mechanism to simulate these writes, thus maintaining an invariant that $\reg$ contains the value of the rightmost write in the linearization.
\end{proof}

\begin{proof}[Proof (sketch) of \Cref{thm:abd-sim-dreg}]
	The simulation keeps track of when a pair of value $\val$ and timestamp $\ts$ reaches a majority of other processes for the first time.
	This can happen due to either a write distributing its newly written value
	or a read distributing its decided read value.
	When this happens, we check whether $\ts$ is larger than the current maximal timestamp that reached a majority of processes.
	If so, we perform in $\dreg$ a store of $\val$, attached to a new, larger version number,
	and then load this value with all threads that are active in a read method.
	Otherwise, we use the ``overwritten value'' path of $\dreg$:
	temporarily store $\val$ with a lower version number, collect this value by concurrent readers that can see it,
	and finally restore $\reg$ to its latest value.
\end{proof}

\clearpage

\section{Appendix Overview, Definitions and Notations}
\label{app:pre}

The rest of the appendix is structured as follows.
In this section, we present definitions used in proofs in the appendix,
(extending \cref{sec:pre} but also repeating it to assist the reader).
In \cref{app:hyp} we prove the results presented in \cref{sec:hyp} about the correspondence between simulation and strong observational refinement.
In \cref{app:comp} we formally construct the complete implementation for the class $\classrel{\rel}{\obj,\seqspec}$ and prove \cref{lem:closure_rel}.
In \cref{app:ws_d_comp} we present the formal definitions of the complete implementations for write strong-linearizability ($\wsreg$) and decisive linearizability ($\dreg$), and prove their inclusion in the
respective classes.
In \cref{sec:hard} we discuss our general technique for proving the hardness of $\wsreg$ and $\dreg$ for their respective classes, using what we call ``lazy linearizations'',
and then provide formal proofs of hardness employing these linearizations.
In \cref{sec:lazy} we formally prove the existence of said ``lazy linearizations''.
In \cref{app:abd} we formally define the \ABD implementation,
and prove the existence of a forward simulation from \ABD to $\dreg$.

\subsection{Sequences and Relations}

\smallskip
\noindent\textbf{Sequences.}
For a finite alphabet $\Sigma$, we denote by $\Sigma^*$ the set of all (finite) sequences over $\Sigma$,
and by $\Sigma^+$ the set of all non-empty (finite) sequences over $\Sigma$.
We use $\emptyword$ to denote the empty sequence.
The length of a (finite) sequence $s$ is denoted by $\size{s}$ (in particular $\size{\emptyword} = 0$).
We write $s[k]$ for the symbol at position $1 \leq k \leq \size{s}$ in $s$,
$s\range{k}{j}$ for the subsequence $s[k]\til s[j]$ ($\emptyword$ if $k>j$),
and $s[\last]$ for the last element of $s$.
We write $\sigma\in s$ if $s[k]=\sigma$ for some $1 \leq k \leq \size{s}$.
We denote $a<_s b$ ($a\leq_s b$) if there exists $j<k$ ($j\leq k$) such that $a=s[j]$ and $b=s[k]$.
We use ``$\cdot$'' for the concatenation of sequences,
and lift this notation to sets of sequences pointwise
($S_1 \cdot S_2 \defeq \set{s_1\cdot s_2 \st s_1\in S_1, s_2\in S_2}$).
We often identify symbols with sequences of length $1$ or their singletons (\eg in expressions like $\sigma \cdot S$).
The \emph{restriction} of a sequence $s$ \wrt a set $\Gamma$, denoted by $s\rst{\Gamma}$,
is the longest subsequence of $s$ that consists only of symbols in $\Gamma$.
This notation is extended pointwise to sets of sequences
($S\rst{\Gamma} \defeq \set{s\rst{\Gamma} \st s\in S}$).
We write $s_1 \subseq s_2$ when $s_1$ is a subsequence of $s_2$,
and $s_1 \pref s_2$ when $s_1$ is a prefix of $s_2$.
When $s_1\pref s_2$, we denote by $\suf{s_1}{s_2}$ the suffix $s$ of $s_2$ such that $s_2=s_1\cdot s$.
Given a sequences $s_1$ and a subsequence $s_2=s_1\range{i}{j}$ we denote by $\prefof{s_1}{s_2}$ the maximal prefix of $s_1$ such that $\prefof{s_1}{s_2}\cdot s_2\pref s_1$.
Note that $\prefof{s_1}{\emptyword}=s_1$.

\smallskip
\noindent\textbf{Relations.}
Given two relations $S,R$, we write $S\seq R$ to denote their composition.
Given a set $A$, we write $\idrel{A}$ to denote the identity relation on $A$.
We also write $\idrel{a_1,\ldots,a_n}$ as a short for $\idrel{\set{a_1,\ldots,a_n}}$.

\subsection{Labeled Transition Systems and Forward Simulations}
\label{app:lts_defs}

\smallskip
\noindent\textbf{Labeled transition systems.}
A \emph{labeled transition system} (LTS, for short) is a tuple $\lts = \tup{Q,\Sigma,q_0,T}$,
where $Q$ is a set of \emph{states},
$\Sigma$ is a (possibly infinite) alphabet (whose elements are called \emph{transition labels}),
$q_0\in Q$ is an \emph{initial state},
and $T\suq Q\times \Sigma \times Q$ is a set of \emph{transitions}.
We denote by $\lts.\lQ$, $\lts.\lSigma$, $\lts.\linit$, and $\lts.\lT$ the components of an LTS $\lts$.
Given a transition $\trans=\tup{q,\sigma,q'}\in \lts.\lT$, we let
$\prestate(\trans)\defeq q$, $\lab(\trans)\defeq\sigma$, and $\poststate(\trans)\defeq q'$.
We write $\asteplab{\sigma}{\lts}$ for the relation
$\set{\tup{q,q'} \st \tup{q,\sigma,q'}\in \lts.\lT}$,
and $\astep{\lts}$ for $\bigcup_{\sigma\in\Sigma} \asteplab{\sigma}{\lts}$.
For a sequence $\tr \in \lts.\lSigma^*$,
we write $\bsteplab{\tr}{\lts}$ for the composition $\asteplab{\tr[1]}{\lts} \seq \ldots \seq \asteplab{\tr[\last]}{\lts}$.
A sequence $\tr \in \lts.\lSigma^*$ such that $\lts.\linit \bsteplab{\tr}{\lts} q$
for some $q\in\lts.\lQ$ is a \emph{trace} of $\lts$ (or an \emph{$\lts$-trace}).
We denote by $\traces{\lts}$ the set of all traces of $\lts$.
A state $q\in \lts.\lQ$ is \emph{reachable} in $\lts$ if
$\lts.\linit \bsteplab{\tr}{\lts} q$ for some $\tr \in \traces{\lts}$.
A transition label $\sigma\in\lts.\lSigma$ is \emph{enabled} in a state $q\in \lts.\lQ$
if $q \asteplab{\sigma}{\lts} q'$ for some $q'\in\lts.\lQ$.
A sequence $\ex\in \lts.\lT^*$ such that $\ex=\emptyword$, or $\prestate(\ex[1])=q_0$ and
$\poststate(\ex[i])=\prestate(\ex[i+1])$ for all $1\leq i<\size{\ex}$
is an \emph{execution} of $\lts$.
We denote by $\executions{\lts}$ the set of all executions of $\lts$.
An \emph{execution fragment}  $\transsq\in \lts.\lT^*$ is a sequence of transitions such that $\transsq=\emptyword$ or $\transsq=\ex\range{i}{j}$ for some $i\leq j$ and $\ex\in\exec{\lts}$.
We also define the pre-state and post-state of an execution fragment $\transsq\neq\emptyword$ by $\prestate(\transsq)=\prestate(\transsq[1])$ and $\poststate(\transsq)=\poststate(\transsq[\last])$.
The trace induced by an execution fragment $\transsq$ of $\lts$
is the trace $\traceof{\transsq}$ given by $(\traceof{\transsq})[i]=\lab(\transsq[i])$ for every $1\leq i \leq \size{\transsq}$.
Note that we only consider finite executions and traces.

\smallskip
\noindent\textbf{Forward simulations.}
Given two LTSs $\lts$ and $\abs\lts$ and
a set $\Gamma\subseteq\lts.\lSigma\cap\abs\lts.\lSigma$ of ``observable'' transition labels,
a relation $\fsim\subseteq \lts.\lQ \times \abs\lts.\lQ$ is a
\emph{$\Gamma$-forward simulation} from $\lts$ to $\abs\lts$ if:
\begin{enumerate*}[label=(\roman*)]
	\item $\tup{\lts.\linit,\abs\lts.\linit}\in \fsim$; and
	\item If $q \asteplab{\sigma}{\lts} q'$ and $\tup{q,\abs q}\in \fsim$,
	then there exist ${\abs q}' \in \abs\lts.\lQ$ and $\tr\in\abs\lts.\lSigma^*$
	such that $\abs q\bsteplab{\tr}{\abs \lts}{\abs q}'$,
	$\tr\rst{\Gamma}=\sigma\rst{\Gamma}$,
	and $\tup{q',{\abs q}'}\in \fsim$.
\end{enumerate*}
We write $\lts\leqsim[\Gamma]\abs\lts$ when such $\fsim$ exists.

\begin{definition2}[Execution Recorders]\label{exrecorder}
The \emph{execution recorder} of an LTS $\lts$, denoted by $\ltsg$, is the LTS given by:
\begin{itemize}
	\item $\ltsg.\lQ= \executions{\lts}$
	\item $\ltsg.\lSigma= \lts.\lSigma$
	\item $\ltsg.\linit=\emptyword$
	\item $\ltsg.\lT = \set{\tup{\ex,\lab(\trans),\ex\cdot\trans}\st\trans\in\lts.\lT\land \ex\cdot\trans\in\executions{\lts}}$
\end{itemize}
\end{definition2}

The LTS $\ltsg$ maintains the execution generated so far in each of its states.
The execution instrumentation does not affect the traces of $\ltsg$ and behaves similarly to \quotes{ghost variables}.
Thus, a simulation between $\lts$ and $\ltsg$ (and vice-versa) can be easily established:

\begin{proposition}
	\label{prop:ghost}
	For every LTS $\lts$,
	we have $\lts\leqsim[\lts.\lSigma]\ltsg$ and $\ltsg \leqsim[\lts.\lSigma] \lts$.
\end{proposition}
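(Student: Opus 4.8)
The plan is to exhibit explicit forward simulations for the two directions using essentially one relation and its inverse. The crucial simplification here is that $\Gamma = \lts.\lSigma = \ltsg.\lSigma$, so the matching condition $\tr\rst{\Gamma} = \sigma\rst{\Gamma}$ degenerates to $\tr = \sigma$: each single step must be answered by exactly one identically-labeled step. This removes the usual stuttering freedom, but that is exactly what we want, since $\ltsg$ is built to mirror $\lts$ step-for-step. The relation I would use throughout pairs a state of $\lts$ with an execution that ``ends at'' it: formally, I define $\fsim \subseteq \lts.\lQ \times \executions{\lts}$ by $\tup{q,\ex} \in \fsim$ iff either $\ex = \emptyword$ and $q = \lts.\linit$, or $\ex \neq \emptyword$ and $\poststate(\ex) = q$.

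For the first direction $\lts\leqsim[\lts.\lSigma]\ltsg$, I would use $\fsim$ directly. Condition (i) holds because $\tup{\lts.\linit, \emptyword} \in \fsim$ by the empty-execution clause. For condition (ii), suppose $q \asteplab{\sigma}{\lts} q'$, witnessed by a transition $\trans = \tup{q,\sigma,q'} \in \lts.\lT$, and $\tup{q,\ex} \in \fsim$. Since $\ex$ ends at $q = \prestate(\trans)$, the sequence $\ex\cdot\trans$ is a legal execution of $\lts$, so by definition of $\ltsg.\lT$ there is a single matching step $\ex \asteplab{\sigma}{\ltsg} \ex\cdot\trans$ labeled $\lab(\trans) = \sigma$. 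Taking $\tr = \sigma$ gives $\tr\rst{\Gamma} = \sigma\rst{\Gamma}$, and since $\poststate(\ex\cdot\trans) = \poststate(\trans) = q'$ we get $\tup{q', \ex\cdot\trans} \in \fsim$, closing the diagram.

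For the second direction $\ltsg \leqsim[\lts.\lSigma] \lts$, I would take the inverse relation $\fsim^{-1} \subseteq \executions{\lts} \times \lts.\lQ$. Condition (i) again holds for $\tup{\emptyword, \lts.\linit}$. For condition (ii), any $\ltsg$-step out of a state $\ex$ has, by the definition of $\ltsg.\lT$, the form $\ex \asteplab{\lab(\trans)}{\ltsg} \ex\cdot\trans$ with $\trans \in \lts.\lT$ and $\ex\cdot\trans$ a valid execution; if $\tup{\ex,q} \in \fsim^{-1}$ then $q = \poststate(\ex) = \prestate(\trans)$, so $\trans$ itself supplies a matching step $q \asteplab{\lab(\trans)}{\lts} \poststate(\trans)$ in $\lts$, landing in a state related to $\ex\cdot\trans$.

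I do not expect a genuine obstacle: both verifications are short and mechanical, which is the point of isolating $\ltsg$ as a ``ghost-variable'' recorder. The only place requiring care is the degenerate case $\ex = \emptyword$, which forces the explicit initial-state clause in the definition of ``ends at $q$''; handling it cleanly is what makes condition (i) and the first step of any execution go through uniformly.
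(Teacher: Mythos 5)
Your proof is correct and follows essentially the same route as the paper: the paper's proof also takes the relation pairing $\lts.\linit$ with $\emptyword$ and each state $q$ with the executions ending at $q$, and observes that it and its inverse are $\lts.\lSigma$-forward simulations. Your explicit verification of the two simulation conditions matches what the paper leaves implicit.
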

\begin{proof}
	The relation $\fsim \defeq \set{\tup{\lts.\linit,\emptyword}}\cup\set{\tup{q,\ex}\st \ex\in \executions{\lts}\land \poststate(\ex[\last])=q}$
	is an $\lts.\lSigma$-forward simulation from $\lts$ to $\ltsg$,
	and its inverse is an $\lts.\lSigma$-forward simulation from $\ltsg$ to $\lts$.
\end{proof}

Forward simulations are equivalent to mappings between executions with certain useful properties.
Given two LTSs $\lts,\abs\lts$, we say that a mapping $\pi: \exec{\lts} \to \exec{\abs\lts}$:
\begin{itemize}
	\item is \emph{prefix-preserving} if $\pi(\emptyword)=\emptyword$ and $\pi(\ex_1)\pref\pi(\ex_2)$ for every $\ex_1,\ex_2\in\exec{\lts}$ such that $\ex_1\pref\ex_2$.
	
	\item \emph{respects} a set $\Gamma \subseteq\lts.\lSigma\cap\abs\lts.\lSigma$ if $\traceof{\ex}\rst{\Gamma} = \traceof{\pi(\ex)}\rst{\Gamma}$ for every $\ex\in\exec{\lts}$.
\end{itemize}

\begin{lemma} \label{lem:sim_pp}
	$\lts\leqsim[\Gamma]\abs\lts$ iff there exists a prefix preserving mapping $\pi: \exec{\lts} \to \exec{\abs\lts}$
	that respects $\Gamma$.
\end{lemma}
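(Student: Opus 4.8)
The plan is to prove the two directions separately, in each case building the required object ($\pi$ or $\fsim$) by recursion along the prefix order on executions. This order forms a tree in which every non-empty execution $\ex$ has a unique immediate predecessor, obtained by deleting its last transition, so recursion on execution length is well-founded. Throughout I adopt the convention $\poststate(\emptyword)\defeq\linit$ in both $\lts$ and $\abs\lts$, which lets the empty execution be treated uniformly with the rest.

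For the direction $\lts\leqsim[\Gamma]\abs\lts \Rightarrow \exists\pi$, I would fix a $\Gamma$-forward simulation $\fsim$ and define $\pi$ by recursion, maintaining the invariant $\tup{\poststate(\ex),\poststate(\pi(\ex))}\in\fsim$. Set $\pi(\emptyword)\defeq\emptyword$; the invariant then holds by clause (i). For $\ex\cdot\trans$ with $\trans=\tup{q,\sigma,q'}$, the invariant for $\ex$ gives $\tup{q,\poststate(\pi(\ex))}\in\fsim$, so clause (ii) yields a run $\poststate(\pi(\ex))\bsteplab{\tr}{\abs\lts}\abs q'$ with $\tr\rst\Gamma=\sigma\rst\Gamma$ and $\tup{q',\abs q'}\in\fsim$; I would pick concrete $\abs\lts$-transitions realizing this run and let $\pi(\ex\cdot\trans)$ be $\pi(\ex)$ followed by those transitions, which restores the invariant. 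Prefix-preservation is immediate since $\pi(\ex)\pref\pi(\ex\cdot\trans)$ by construction, and $\Gamma$-respect follows by induction from $\traceof{\ex\cdot\trans}\rst\Gamma=\traceof{\ex}\rst\Gamma\cdot\sigma\rst\Gamma$ together with the analogous identity for $\pi$ supplied by $\tr\rst\Gamma=\sigma\rst\Gamma$.

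For the converse, given such $\pi$ I would define $\fsim\defeq\set{\tup{\poststate(\ex),\poststate(\pi(\ex))}\st \ex\in\exec\lts}$. Clause (i) holds because $\pi(\emptyword)=\emptyword$. For clause (ii), given $q\asteplab\sigma\lts q'$ and $\tup{q,\abs q}\in\fsim$, choose a witnessing $\ex$ with $\poststate(\ex)=q$ and $\poststate(\pi(\ex))=\abs q$, and note $\ex\cdot\trans\in\exec\lts$ for $\trans\defeq\tup{q,\sigma,q'}$. Prefix-preservation gives $\pi(\ex\cdot\trans)=\pi(\ex)\cdot s$ for a suffix fragment $s$; setting $\tr\defeq\traceof{s}$ yields $\abs q\bsteplab\tr{\abs\lts}\poststate(\pi(\ex\cdot\trans))$, with $\tup{q',\poststate(\pi(\ex\cdot\trans))}\in\fsim$ by definition of $\fsim$. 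The required $\tr\rst\Gamma=\sigma\rst\Gamma$ falls out of $\Gamma$-respect: both $\traceof{\pi(\ex)}\rst\Gamma\cdot\tr\rst\Gamma$ and $\traceof{\pi(\ex)}\rst\Gamma\cdot\sigma\rst\Gamma$ equal $\traceof{\pi(\ex\cdot\trans)}\rst\Gamma=\traceof{\ex\cdot\trans}\rst\Gamma$, so cancelling the common prefix $\traceof{\pi(\ex)}\rst\Gamma$ leaves the claim.

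The main obstacle is the recursive construction in the forward direction. Clause (ii) only guarantees the \emph{relation} $\bsteplab\tr{\abs\lts}$, so realizing it as a concrete sequence of transitions and making these choices coherently across the entire (generally infinite, infinitely-branching) tree of executions requires an appeal to choice combined with the well-founded recursion on execution length. Everything else is routine bookkeeping with trace restrictions and prefix cancellation; the only delicate point beyond this is the empty-execution base case, which the convention $\poststate(\emptyword)=\linit$ handles uniformly.
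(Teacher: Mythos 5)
Your proof is correct, and the forward direction coincides with the paper's: the paper packages your recursive choice of concrete realizing transitions into a choice function $f_\fsim : (\lts.\lT \times \abs\lts.\lQ) \to (\abs\lts.\lT)^*$ and concatenates its outputs along the execution while threading the abstract state, which is exactly your invariant $\tup{\poststate(\ex),\poststate(\pi(\ex))}\in\fsim$ in different clothing; the appeal to choice you flag is present there too, implicitly in the existence of $f_\fsim$. The converse direction is where you genuinely diverge. The paper does not build a simulation on the original state spaces at all: it invokes \cref{prop:ghost} to reduce the claim to $\ltsg\leqsim[\Gamma]\ghost{\abs\lts}$ between the execution-recorder LTSs, and then observes that $\pi$ itself, read as the relation $\set{\tup{\ex,\pi(\ex)}\st\ex\in\exec{\lts}}$, is already a $\Gamma$-forward simulation there, since states of $\ltsg$ \emph{are} executions. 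You instead project down and take $\fsim=\set{\tup{\poststate(\ex),\poststate(\pi(\ex))}\st\ex\in\exec{\lts}}$ directly on $\lts.\lQ\times\abs\lts.\lQ$, choosing a witnessing execution for each related pair when discharging clause (ii); your prefix-cancellation argument for $\tr\rst{\Gamma}=\sigma\rst{\Gamma}$ is sound because the restriction to $\Gamma$ distributes over concatenation. Your route is more elementary and self-contained (no execution-recorder machinery needed), at the cost of a per-pair witness selection; the paper's route buys a one-line verification of the simulation conditions by reusing infrastructure it needs elsewhere anyway. Both are valid, and the relations are related by the ghost projection.
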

\begin{proof}
	Given a $\Gamma$-forward simulation $\fsim$ from $\lts$ to $\abs\lts$,
	we obtain that there exists a choice function
	$f_\fsim : (\lts.\lT \times \abs\lts.\lQ) \to (\abs\lts.\lT)^*$
	such that $\tup{\prestate(\trans),\abs q}\in \fsim$
	implies that $\traceof{f_\fsim(\trans,\abs q)}\rst{\Gamma}=\lab(\trans)\rst{\Gamma}$ and
	if $f_\fsim(\trans, \abs q)=\emptyword$, then $\tup{\poststate(\trans),{\abs q}}\in \fsim$,
	and otherwise
	$\abs q = \prestate(f_\fsim(\trans, \abs q))$
	and $\tup{\poststate(\trans),\poststate(f_\fsim(\trans,\abs q))}\in \fsim$.
	Then, for $\ex=\trans_1 \cdots \trans_n\in\exec{\lts}$, we define $\pi(\ex) \defeq f_\fsim(\trans_1, \abs q_0)\cdot f_\fsim(\trans_2,\abs q_1) \cdots f_\fsim(\trans_n,\abs q_{n-1})$
	where $\abs q_0 \til \abs q_{n-1}$ are defined by
	$\abs q_0 \defeq \abs\lts.\linit$ and
	$\abs q_{i+1} \defeq\abs q_i$ if $f_\fsim(\trans_{i+1},\abs q_i)=\emptyword$ and otherwise $\abs q_{i+1} \defeq\poststate(f_\fsim(\trans_{i+1},\abs q_i)) $.
	It is straightforward to verify that $\pi$ satisfies the required properties.

	For the converse, by \cref{prop:ghost} it suffices to show $\ltsg\leqsim[\Gamma]\ghost{\abs\lts}$.
	Indeed, $\pi$ itself (taken as a relation, \ie $\fsim \defeq \set{\tup{\ex,\pi(\ex)}\st \ex\in\exec{\lts}}$)
	is a $\Gamma$-forward simulation from $\ltsg$ to $\ghost{\abs\lts}$.
\end{proof}

\smallskip
\noindent\textbf{Synchronizing LTSs.}
The \emph{interface parallel composition} of two LTSs $\lts_1$ and $\lts_2$,
denoted by $\intcomp{\lts_1}{\lts_2}$, is the LTS given by:
\begin{itemize}
	\item $\intcomp{\lts_1}{\lts_2}.\lQ = \lts_1.\lQ \times \lts_2.\lQ$.
	\item $\intcomp{\lts_1}{\lts_2}.\lSigma = \lts_1.\lSigma\cup\lts_2.\lSigma$.
	\item $\intcomp{\lts_1}{\lts_2}.\linit = \tup{\lts_1.\linit,\lts_2.\linit}$.
	\item $\intcomp{\lts_1}{\lts_2}.\lT$ is defined by:
	\begin{mathpar}
		\inferrule[\internalleft]{\sigma \in \lts_1.\lSigma\setminus\lts_2.\lSigma \\\\ q_1 \asteplab{\sigma}{} q'_1}
		{\tup{q_1,q_2} \asteplab{\sigma}{} \tup{q'_1,q_2}}
		\and
		\inferrule[\internalright]{\sigma \in \lts_2.\lSigma\setminus\lts_1.\lSigma \\\\ q_2 \asteplab{\sigma}{} q'_2}
		{\tup{q_1,q_2} \asteplab{\sigma}{} \tup{q_1,q'_2}}
		\and
		\inferrule[\interface]{\sigma \in \lts_1.\lSigma\cap\lts_2.\lSigma \\\\ q_1 \asteplab{\sigma}{} q'_1 \\ q_2 \asteplab{\sigma}{} q'_2}
		{\tup{q_1,q_2} \asteplab{\sigma}{} \tup{q'_1,q'_2}}
	\end{mathpar}
\end{itemize}
The labels in $\lts_1.\lSigma\cap\lts_2.\lSigma$ are called \emph{synchronization labels}.
Given an execution fragment $\transsq = \tup{\tup{q_1^1,q_2^1},\sigma_1,\tup{q_1^{2},q_2^{2}}} \til \tup{\tup{q_1^n,q_2^n},\sigma_n,\tup{q_1^{n+1},q_2^{n+1}}} \in(\intcomp{\lts_1}{\lts_2}).\lT^*$, we write $\transsq\rst{\lts_j}$ for the induced execution fragment of $\lts_j$,
that is
$\tup{q_j^{i_1},\sigma_{i_1},q_j^{{i_1}+1}} \til \tup{q_j^{i_{n_j}},\sigma_{i_{n_j}},q_j^{{i_{n_j}}+1}}$
where $i_1 \til i_{n_j}$ is an enumeration of all indices $1\leq i< n$ such that $\sigma_i\in\lts_j.\lSigma$.
This notation is lifted to sets in the obvious way.

\begin{proposition}
	\label{prop:linking}
	$\ex\in\exec{\intcomp{\lts_1}{\lts_2}}$ iff $\ex\rst{\lts_1}\in\exec{\lts_1}$ and $\ex\rst{\lts_2}\in\exec{\lts_2}$.
\end{proposition}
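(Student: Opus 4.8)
The plan is to prove both implications by induction on $\size{\ex}$, using a single state-tracking invariant as the backbone. The starting observation is a case analysis on one composite transition: by inspection of the three rules \internalleft, \internalright, and \interface defining $(\intcomp{\lts_1}{\lts_2}).\lT$, a transition $\trans=\tup{\tup{q_1,q_2},\sigma,\tup{q_1',q_2'}}$ satisfies, for each $j\in\set{1,2}$, that $q_j\asteplab{\sigma}{\lts_j} q_j'$ is an $\lts_j$-transition whenever $\sigma\in\lts_j.\lSigma$, and $q_j=q_j'$ otherwise. Thus every composite transition decomposes into exactly the component step(s) dictated by whether $\sigma$ is shared, and the projection $\ex\rst{\lts_j}$ simply collects these component steps in order.

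For the forward direction I would strengthen the induction hypothesis to the invariant that, for $\ex\in\exec{\intcomp{\lts_1}{\lts_2}}$, both $\ex\rst{\lts_1}\in\exec{\lts_1}$ and $\ex\rst{\lts_2}\in\exec{\lts_2}$, and moreover the $j$-th component of $\poststate(\ex)$ (read as $\lts_j.\linit$ when $\ex=\emptyword$) coincides with the post-state of $\ex\rst{\lts_j}$ (read as $\lts_j.\linit$ when that projection is empty). The base case $\ex=\emptyword$ is immediate since $\poststate(\emptyword)=\tup{\lts_1.\linit,\lts_2.\linit}$. For $\ex=\ex'\cdot\trans$, the hypothesis applies to the prefix $\ex'$, and the decomposition above shows that the appended component step(s) start exactly from the post-state(s) recorded by the invariant for $\ex'$; hence each extended projection is still a valid execution and the invariant is re-established. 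The three cases ($\sigma$ in $\lts_1.\lSigma$ only, in $\lts_2.\lSigma$ only, or shared) are handled uniformly, and in the idle case the unchanged projection trivially keeps matching the unchanged component.

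The converse is the delicate direction. Given that $\ex\rst{\lts_1}$ and $\ex\rst{\lts_2}$ are executions, I would again induct on $\size{\ex}$: a prefix of an execution is an execution and a prefix of $\ex$ projects to prefixes of $\ex\rst{\lts_j}$, so the hypothesis yields $\ex'\in\exec{\intcomp{\lts_1}{\lts_2}}$ for $\ex=\ex'\cdot\trans$; it then remains to check that $\trans$ continues $\ex'$ from $\poststate(\ex')$ and, when $\ex'=\emptyword$, that $\prestate(\trans)=(\intcomp{\lts_1}{\lts_2}).\linit$. This is where I expect the main obstacle: a transition whose label is not shared advances only one component while the other \emph{idles}, and the projection of the idling side records nothing about its state across that step. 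The argument must therefore carry the invariant of the forward direction in the reverse direction — using that the idling component's value is pinned as its initial value (while its projection is still empty) or as the post-state of its projection so far — to guarantee that the two component states glue into the correct composite state, and in particular that the leading transition departs from $\tup{\lts_1.\linit,\lts_2.\linit}$. Establishing and maintaining this gluing invariant for idle steps is the crux; once it is in place, both the initial-state condition and the consecutive-state matching for $\ex$ follow, and the induction closes.
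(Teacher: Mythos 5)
The paper states \cref{prop:linking} without proof, so there is no authorial argument to compare against; judged on its own, your forward direction is correct and is the standard one. The strengthened induction hypothesis --- that the $j$-th component of $\poststate(\ex)$ coincides with the post-state of $\ex\rst{\lts_j}$, read as $\lts_j.\linit$ when the projection is empty --- is exactly what is needed, and the case split over \internalleft, \internalright\ and \interface\ closes that induction cleanly.

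The converse, however, contains a genuine gap, and you have in fact put your finger on it without resolving it. You write that the idling component's value ``is pinned as its initial value \ldots or as the post-state of its projection so far,'' but nothing in the hypotheses pins it: the assumptions $\ex\rst{\lts_1}\in\exec{\lts_1}$ and $\ex\rst{\lts_2}\in\exec{\lts_2}$ say nothing at all about the state of the component whose alphabet does not contain the label of a given transition. Concretely, suppose $\lts_2$ has a state $b\neq\lts_2.\linit$, and $\sigma\in\lts_1.\lSigma\setminus\lts_2.\lSigma$ with $\lts_1.\linit\asteplab{\sigma}{\lts_1}a'$. Then the one-transition sequence $\ex=\tup{\tup{\lts_1.\linit,b},\sigma,\tup{a',b}}$ consists of a legitimate composite transition (by \internalleft), and $\ex\rst{\lts_1}=\tup{\lts_1.\linit,\sigma,a'}\in\exec{\lts_1}$ while $\ex\rst{\lts_2}=\emptyword\in\exec{\lts_2}$; yet $\ex\notin\exec{\intcomp{\lts_1}{\lts_2}}$, since its first pre-state is not $\tup{\lts_1.\linit,\lts_2.\linit}$. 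A two-transition variant likewise defeats the consecutive-state matching. So the ``gluing invariant'' you plan to carry cannot be derived from the hypotheses, and no induction will close this direction: read literally over sequences of composite transitions, the right-to-left implication is false. The biconditional is only correct at the level of traces (the form stated in \cref{sec:objects}), where the intermediate states are existentially quantified rather than fixed; for executions the honest content is the left-to-right projection direction alone, which is also the only direction the paper ever invokes (e.g., in the proof of \cref{thm:sref_iff_sim}). Your write-up should either prove only that direction, or reformulate the converse existentially (given compatible component executions, \emph{some} composite execution projects onto them).
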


Equality of induced executions of a composed LTS implies equality of traces restricted to the alphabet of said LTS:

\begin{lemma}
	\label{lem:executions_to_traces}
	Let $\execset\subseteq\exec{\intcomp{\lts}{\lts'}}$ and $\abs\execset\subseteq\exec{\intcomp{\abs\lts}{\lts'}}$ be sets of executions such that $\execset\rst{\lts'}=\abs\execset\rst{\lts'}$.
	Then $\traces{\execset}\rst{\lts'.\lSigma}=\traces{\abs\execset}\rst{\lts'.\lSigma}$.
\end{lemma}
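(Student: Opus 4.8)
The plan is to reduce the desired equality of trace sets to the assumed equality of induced $\lts'$-executions, via a single elementary observation: for any execution of a composition, restricting its induced trace to the alphabet of a component yields exactly the trace of the execution induced on that component. Once that is in hand, the statement follows by purely set-theoretic bookkeeping.

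First I would establish the key identity
\[
\traceof{\ex}\rst{\lts'.\lSigma} \;=\; \traceof{\ex\rst{\lts'}}
\qquad\text{for every } \ex\in\exec{\intcomp{\lts}{\lts'}}.
\]
Both sides unfold, by the definitions of $\traceof{\cdot}$ and of the induced execution $\ex\rst{\lts'}$, to the same object: the subsequence of $\lab(\ex[1]),\ldots,\lab(\ex[\size{\ex}])$ consisting of precisely those labels lying in $\lts'.\lSigma$, listed in order. Indeed, $\ex\rst{\lts'}$ is by definition the subsequence of transitions of $\ex$ whose label belongs to $\lts'.\lSigma$ (and it is a genuine execution of $\lts'$ by \cref{prop:linking}), so its trace is exactly that sequence of labels; and restricting $\traceof{\ex}$ to $\lts'.\lSigma$ deletes exactly the labels outside $\lts'.\lSigma$. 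This step is routine, the only care being to match the indexing convention in the definition of $\ex\rst{\lts'}$.

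Next I would lift this identity to the level of sets. Recalling that $\traces{\execset}=\set{\traceof{\ex}\st\ex\in\execset}$ and $\execset\rst{\lts'}=\set{\ex\rst{\lts'}\st\ex\in\execset}$, the identity gives
\[
\begin{aligned}
\traces{\execset}\rst{\lts'.\lSigma}
&= \set{\traceof{\ex}\rst{\lts'.\lSigma}\st \ex\in\execset}\\
&= \set{\traceof{\ex\rst{\lts'}}\st \ex\in\execset}
= \set{\traceof{e'}\st e'\in\execset\rst{\lts'}},
\end{aligned}
\]
where the last equality uses that the image of $\execset$ under $\ex\mapsto\traceof{\ex\rst{\lts'}}$ coincides with the image of the set $\execset\rst{\lts'}$ under $\traceof{\cdot}$. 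The identical computation applied to $\intcomp{\abs\lts}{\lts'}$ yields $\traces{\abs\execset}\rst{\lts'.\lSigma}=\set{\traceof{e'}\st e'\in\abs\execset\rst{\lts'}}$.

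Finally, the hypothesis $\execset\rst{\lts'}=\abs\execset\rst{\lts'}$ makes the two right-hand sides literally the same set, so $\traces{\execset}\rst{\lts'.\lSigma}=\traces{\abs\execset}\rst{\lts'.\lSigma}$, as required. The only genuine content is the first identity; everything after it is set-image manipulation. I do not expect a real obstacle, since the whole argument is definitional — the single point deserving attention is confirming that ``restrict the trace, then keep $\lts'$-labels'' and ``induce the $\lts'$-execution, then take its trace'' denote the same subsequence, i.e.\ that the two restriction operations delete precisely the same positions.
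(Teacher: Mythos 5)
Your proposal is correct and matches the paper's proof in essence: both hinge on the single identity $\traceof{\ex}\rst{\lts'.\lSigma}=\traceof{\ex\rst{\lts'}}$ and then transfer the hypothesis $\execset\rst{\lts'}=\abs\execset\rst{\lts'}$ to the trace sets. The paper phrases this as a two-directional element chase while you phrase it as one set-image computation, but the content is identical.
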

\begin{proof}
	Given $\tr \in \traces{\execset}\rst{\lts'.\lSigma}$,
	there exists an execution
	$\ex\in\execset$ such that
	$\tr=\traceof{\ex}\rst{\lts'.\lSigma}$.
	Since $\traceof{\ex\rst{\lts'}}=\traceof{\ex}\rst{\lts'.\lSigma}$, we obtain that $\tr=\traceof{\ex\rst{\lts'}}$.
	As $\ex\in\execset$ we obtain that
	$\ex\rst{\lts'}\in\execset\rst{\lts'}= \abs\execset\rst{\lts'}$.
	That is, there exists an execution $\abs\ex\in\abs\execset$ such that
	${\abs\ex}\rst{\lts'}=\ex\rst{\lts'}$ and thus
	$\tr=\traceof{\ex\rst{\lts'}}=\traceof{\abs\ex\rst{\lts'}}$.
	Since $\traceof{\abs\ex\rst{\lts'}}=\traceof{\abs\ex}\rst{\lts'.\lSigma}$, we obtain that
	$\tr=\traceof{\abs\ex}\rst{\lts'.\lSigma}\in\traces{\abs\execset}\rst{\lts'.\lSigma}$.
	The converse inclusion is proven symmetrically,
	and we obtain that $\traces{\execset}\rst{\lts'.\lSigma}=\traces{\abs\execset}\rst{\lts'.\lSigma}$.
\end{proof}

In the following we show that a simulation $\lts\leqsim[\Gamma]\abs\lts$ induces a simulation $\intcomp{\lts}{\lts'}\leqsim[\lts'.\lSigma]\intcomp{\abs\lts}{\lts'}$
for every LTS $\lts'$ such that $\lts'.\lSigma \cap\lts.\lSigma=\lts'.\lSigma \cap\abs\lts.\lSigma=\Gamma$.
Similarly to \cref{lem:sim_pp}, this in turn induces a mapping $\pi:\exec{\intcomp{\lts}{\lts'}}\to\exec{\intcomp{\abs\lts}{\lts'}}$,
with several useful properties in addition to those guaranteed by \cref{lem:sim_pp}:

\begin{theorem}
	\label{thm:int_par}
	Suppose that $\lts\leqsim[\Gamma]\abs\lts$
	and let $\lts'$ be an LTS such that $\lts'.\lSigma \cap\lts.\lSigma=\lts'.\lSigma \cap\abs\lts.\lSigma=\Gamma$.
	Then, there exists a prefix preserving mapping $\pi:\exec{\intcomp{\lts}{\lts'}}\to\exec{\intcomp{\abs\lts}{\lts'}}$
	with the following additional properties:
	\begin{itemize}
		\item $\ex\rst{\lts'} = \pi(\ex)\rst{\lts'}$ for every $\ex\in\exec{\lts}$.
		
		\item
		$\lab(\pi(\ex)[\last])\in\lts'.\lSigma$ or $\pi(\ex)=\emptyword$ for every $\ex\in\exec{\intcomp{\lts}{\lts'}}$.
		
		\item $\pi(\ex\cdot\trans)=\pi(\ex)$ for every $\ex\in\exec{\intcomp{\lts}{\lts'}}$ and $\trans\in(\intcomp{\lts}{\lts'}).\lT$ such that $\ex\cdot\trans\in\exec{\intcomp{\lts}{\lts'}}$ and $\lab(\trans)\notin\lts'.\lSigma$.
		
		\item
		For every $\ex\in\exec{\intcomp{\lts}{\lts'}}$ and $\trans\in(\intcomp{\lts}{\lts'}).\lT$ such that $\ex\cdot\trans\in\exec{\intcomp{\lts}{\lts'}}$ and $\lab(\trans)\in\lts'.\lSigma\setminus\Gamma$,
		there exists $\abs\trans\in(\intcomp{\abs\lts}{\lts'}).\lT$ such that $\lab(\abs\trans)=\lab(\trans)$ and  $\pi(\ex\cdot\trans)=\pi(\ex)\cdot\abs\trans$.
	\end{itemize}
\end{theorem}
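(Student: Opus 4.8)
The plan is to build $\pi$ from the given $\Gamma$-forward simulation $\lts\leqsim[\Gamma]\abs\lts$ by interleaving a simulating $\abs\lts$-execution with the $\lts'$-component of the input execution, gluing them along their shared $\Gamma$-labels. First I would apply \cref{lem:sim_pp} to turn the simulation into a prefix-preserving map $\varphi\colon\exec{\lts}\to\exec{\abs\lts}$ that respects $\Gamma$, so $\traceof{\varphi(\ex_\lts)}\rst{\Gamma}=\traceof{\ex_\lts}\rst{\Gamma}$ for every $\ex_\lts$. For $\ex\in\exec{\intcomp{\lts}{\lts'}}$, \cref{prop:linking} gives $\ex\rst{\lts}\in\exec{\lts}$ and $\ex\rst{\lts'}\in\exec{\lts'}$, and the intended $\pi(\ex)$ is the execution of $\intcomp{\abs\lts}{\lts'}$ whose $\abs\lts$-projection is a prefix of $\varphi(\ex\rst{\lts})$ and whose $\lts'$-projection equals $\ex\rst{\lts'}$, synchronized at the $\Gamma$-steps the two share. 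The two hypotheses $\lts'.\lSigma\cap\lts.\lSigma=\lts'.\lSigma\cap\abs\lts.\lSigma=\Gamma$ are exactly what makes this coherent: every internal $\abs\lts$-step produced by $\varphi$ has a label outside $\lts'.\lSigma$, hence is a pure-$\abs\lts$ step (rule \internalleft) that commutes with the pure-$\lts'$ steps, whereas each $\Gamma$-label is an \interface step synchronizing the two components.

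Concretely, I would define $\pi$ by induction on $\ex$, maintaining the invariant that $\pi(\ex)\rst{\lts'}=\ex\rst{\lts'}$ and that $\pi(\ex)\rst{\abs\lts}$ is the shortest prefix of $\varphi(\ex\rst{\lts})$ containing exactly as many $\Gamma$-steps as $\ex$ and ending in a $\Gamma$-step (or $\emptyword$ when $\ex$ has none). For a transition $\trans$ appended to $\ex$: if $\lab(\trans)\notin\lts'.\lSigma$ then $\trans$ is a pure-$\lts$ step and I set $\pi(\ex\cdot\trans)=\pi(\ex)$, yielding the third property; if $\lab(\trans)\in\lts'.\lSigma\setminus\Gamma$ then $\trans$ is a pure-$\lts'$ step and I append the single same-labelled pure-$\lts'$ step $\abs\trans$ of $\intcomp{\abs\lts}{\lts'}$, valid by \internalright since $\lab(\trans)\notin\abs\lts.\lSigma$, yielding the fourth property; and if $\lab(\trans)\in\Gamma$ I append the contiguous block of new pure-$\abs\lts$ steps of $\varphi$ up to the next $\Gamma$-step, followed by that $\Gamma$-step realized as an \interface step. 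The crucial design choice is \emph{laziness}: any trailing pure-$\abs\lts$ steps that $\varphi$ emits after the last $\Gamma$-step are deferred to the next synchronization, so that $\pi(\ex)$ always ends in an $\lts'$-step, which secures the second property.

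Validity of $\pi(\ex)$ as an execution follows because its $\abs\lts$-projection is a genuine prefix of $\varphi(\ex\rst{\lts})\in\exec{\abs\lts}$ (using prefix-preservation of $\varphi$, so the previously emitted $\abs\lts$-prefix is still a prefix of $\varphi((\ex\cdot\trans)\rst{\lts})$) and its $\lts'$-projection equals $\ex\rst{\lts'}\in\exec{\lts'}$, whence \cref{prop:linking} applies. The appended $\Gamma$-steps can be taken as \interface steps because $\varphi$ respects $\Gamma$: since the $\Gamma$-labels of $\ex\rst{\lts}$ coincide with those of $\ex$ and of $\ex\rst{\lts'}$, the $i$-th $\Gamma$-step of $\varphi(\ex\rst{\lts})$ carries the same label as the $i$-th $\Gamma$-step of $\ex\rst{\lts'}$, and the deferred pure-$\abs\lts$ steps never alter the $\lts'$-state. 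The first property is part of the invariant, and prefix-preservation holds because computing $\pi(\ex_2)$ for $\ex_1\pref\ex_2$ reproduces the computation of $\pi(\ex_1)$ on the common prefix.

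I expect the main obstacle to be precisely this laziness bookkeeping: reconciling the second property (ending on an $\lts'$-step) with prefix-preservation forces me to track exactly which $\abs\lts$-steps of $\varphi$ have already been emitted and to show that deferring the trailing ones never blocks a later $\Gamma$-synchronization—that whatever $\varphi$ must insert before the next $\Gamma$-step is still available and forms a valid consecutive segment when that step is reached. Proving that the $i$-th $\Gamma$-step of $\varphi(\ex\rst{\lts})$ is stable as $\ex$ grows, which is a consequence of prefix-preservation together with the $\Gamma$-respecting property of $\varphi$, is the technical heart of the argument.
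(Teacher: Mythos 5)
Your proposal is correct and matches the paper's proof in its essential idea: the paper also handles the ``laziness'' you identify by letting the abstract component lag behind, defining a simulation relation $\fsim'$ on the composed systems in which the $\abs\lts$-state is only required to reach an $\fsim$-related state via a sequence of labels in $\abs\lts.\lSigma\setminus\Gamma$ (exactly your deferred internal steps), and then extracting a choice function that emits nothing on non-$\lts'$ labels, a single matching step on $\lts'.\lSigma\setminus\Gamma$ labels, and the owed internal block followed by an \interface\ step on $\Gamma$ labels. Your explicit inductive bookkeeping with the prefix of $\varphi(\ex\rst{\lts})$ up to the last synchronized $\Gamma$-step is an equivalent, execution-level rendition of the same construction.
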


\begin{proof}
	Let $\fsim$ be a $\Gamma$-forward simulation from $\lts$ to $\abs\lts$.
	We define a relation $\fsim'$ by:
	$$\fsim'=\set{\tup{\tup{q,q'},\tup{\abs q,q'}}\st  q'\in\lts'.\lQ\land \exists {\abs q}'\in\abs\lts.\lQ, \abs\tr\in(\abs\lts.\lSigma\setminus\Gamma)^*\ldotp \tup{q,{\abs q}'}\in\fsim \wedge \abs q \bsteplab{\abs\tr}{\abs\lts} {\abs q}'}$$
	It is straightforward to verify that $\fsim'$ is an $\lts'.\lSigma$-forward simulation from $\intcomp{\lts}{\lts'}$ to $\intcomp{\abs\lts}{\lts'}$.
	Moreover, similarly to the proof of \cref{lem:sim_pp} we can use $\fsim'$ to obtain a choice function 
	$f_{\fsim'} : ((\intcomp{\lts}{\lts'}).\lT \times (\intcomp{\abs\lts}{\lts'}).\lQ) \to (\intcomp{\abs\lts}{\lts'}).\lT^*$
	with the following properties (stronger than those obtained when constructing a choice function from an arbitrary forward simulation):
	\begin{itemize}
		\item $\tup{\prestate(\trans),\tup{\abs q,q'}}\in \fsim'$
		implies that $f_{\fsim'}(\trans,\tup{\abs q,q'})\rst{\lts'}=\trans\rst{\lts'}$ and
		if $f_{\fsim'}(\trans, \tup{\abs q,q'})=\emptyword$, then $\tup{\poststate(\trans),\tup{\abs q,q'}}\in \fsim'$,
		and otherwise
		$\tup{\abs q,q'} = \prestate(f_{\fsim'}(\trans, \tup{\abs q,q'}))$
		and\\ $\tup{\poststate(\trans),\poststate(f_{\fsim'}(\trans,\tup{\abs q,q'}))}\in \fsim'$.
		
		\item $(f_{\fsim'}(\trans,\tup{\abs q,q'})[\last])\rst{\lts'}=\trans\rst{\lts'}$ if $\lab(\trans)\in\lts'.\lSigma$ and $f_{\fsim'}(\trans,\tup{\abs q,q'})=\emptyword$ otherwise.
		
		\item $f_{\fsim'}(\trans,\tup{\abs q,q'})\rst{\lts'}=\trans\rst{\lts'}$ if $\lab(\trans)\in\lts'.\lSigma\setminus\Gamma$.
	\end{itemize}
	
	Then, we proceed to construct a mapping $\pi$ from $f_{\fsim'}$ as done in the proof of \cref{lem:sim_pp}.
	It is straightforward to verify $\pi$ has all required properties.
\end{proof}

\subsection{Objects and Programs}
\label{app:objects}

We assume a set $\Tid$ of thread identifiers and an infinite set $\Id$ of action identifiers.

\begin{description}[leftmargin=0pt,itemsep=2pt]
	\item[\em Objects.]
	An \emph{object} is a pair $\obj=\tup{\Method,\Val}$, where
	$\Method$ is a set of method names and
	$\Val$ is a set of values.
	An object $\obj$ is associated with actions divided to \emph{invocations} $\inv=\invact\tup{\method,\val,\tid,\id} \in \Inv(\obj)$
	and \emph{responses} $\res=\resact\tup{\method,\val,\tid,\id} \in \Res(\obj)$,
	where $\method\in\Method$, $\val\in\Val \cup \set{\bot}$, $\tid\in\Tid$, and $\id\in\Id$.
	We use the functions $\methodf$, $\valf$, $\tidf$, and $\idf$,
	to retrieve the method name ($\method$),
	value ($\val$),
	thread identifier ($\tid$),
	and identifier ($\id$)
	of a given action $\act\in\Act$.
	We use $\bot$ as argument/return value for methods without input/output.
	We let $\Inv\Res(\obj) \defeq \Inv(\obj) \cup \Res(\obj)$.
	We sometimes lift the notation defined for actions to LTS transitions whose labels are these actions, e.g. $\idf(\trans)=\idf(\lab(\trans))$.
	
	\item[\em Histories.]
	A \emph{history} $\h$ of an object $\obj$ is a finite sequence of invocations and responses of $\obj$.
	A history $\h$ is \emph{sequential} if it is alternating between invocations and responses (starting with an invocation and ending either with an invocation or a response),
	such that every consecutive $\inv$, $\res$ in $\h$ has the same method and thread identifiers,
	and a unique action identifier across $\h$.
	A history $\h$ is \emph{well-formed} if for every $\tid\in\Tid$, the restriction of $\h$ to actions of $\tid$, denoted by $\h\rst{\tid}$, is sequential.
	An invocation $\inv\in\h$ is \emph{pending} if there is no response in $\h$ with the same thread and action identifiers.
	Otherwise, $\inv$ is \emph{complete}.
	We use $\h\rst{\method}$ for the restriction of $\h$ to actions with method $\method$.
	
	\item[\em Operations.]
	We apply the above notions also on \emph{operations} $\op$,
	which are either single invocations $\op=\inv$ (called a \emph{pending} operation) or pairs of matching invocation and response $\op=\tup{\inv,\res}$ (called a \emph{completed} operation).
	We let $\completed(\h)$ denote the subsequence of $\h$ consisting of actions that are a part of completed operations.
	The functions $\methodf$, $\tidf$, $\idf$ are lifted to operations by defining:
	$\methodf(\op)\defeq\methodf(\op[1])$, $\tidf(\op)\defeq\tidf(\op[1])$, and $\idf(\op)\defeq\idf(\op[1])$.
	An \emph{operation identifier} is a pair in $\Opid \defeq \Tid\times\Id$.
	The operation identifier of an operation $\op$ is given by $\opidf(\op) \defeq \tup{\tidf(\op),\idf(\op)}$.
	An \emph{operation's input} is defined by $\inputf(\op)\defeq \valf(\op[1])$,
	and an \emph{operation's output} is defined by $\outputf(\op)\defeq\valf(\op[\last])$ (when $\op$ is completed).
	Given an operation $\op$ and a history $\h$, we sometimes write $\op\in\h$ instead of $\op\subseq\h$.
	We let $\setof{\h}$ denote the set of all operations that occur in $\h$.
	Given a method name $\method\in\Method$, we write $\idrel{\method}$ as a short for $\idrel{\set{\op\st\methodf(\op)=\method}}$- the identity relation over operations of method $\method$.
	Given two sequential histories $\seqh_1,\seqh_2$ we define their disjoint concatenation $\seqh_1\cdotnew \seqh_2$ as the sequence with $\seqh_1$ followed by the subsequence of $\seqh_2$ composed of operations not present in $\seqh_1$. Formally, $\seqh_1\cdotnew\seqh_2=\seqh_1\cdot(\seqh_2\rst{\setof{\seqh_2}\setminus\setof{\seqh_1}})$.
	
	\item[\em Real-time Order.]
	The \emph{real time order} induced by
	a well-formed history $\h$, denoted by $<_\h$,
	is the partial order on operations defined by $\op_1 <_\h \op_2$ iff
	the response of $\op_1$ appears in $\h$ before the invocation of $\op_2$.
	The notation $\op_1 <_\h \op_2$ is also used when $\op_2$ is a completed operation but only its invocation appears in $\h$.
	We also define $\op_1 \leq_\h \op_2\defiff\op_1 <_\h \op_2\lor \op_1=\op_2$

	\item[\em Specifications.]
	A \emph{specification} of an object $\obj$ is a prefix closed set of sequential histories of $\obj$.

	\item[\em Object Implementations.]
	We assume a set $\Objint$ of labels for implementation internal actions and define an
	\emph{implementation} $\impl$ of an object $\obj$ to be an LTS over the alphabet $\Inv\Res(\obj) \cup \Objint$.
	For all implementations presented in the paper we will use
	$\Objint=\set{\objact\tup{\tid}\st \tid\in\Tid}$
	We assume that the history induced by every execution $\ex$ of $\impl$, denoted by $\hs{\ex}$, is a well-formed history.
	Intuitively, executions of $\impl$ represent executions generated
	by the methods' implementations when they repeatedly and concurrently invoked with arbitrary arguments.
	
	\item[\em Client Programs.]
	We assume a set $\Progint$ of labels for client internal actions (disjoint from $\Objint$) and define a
	client program $\prog$ for an object $\obj$ as an LTS over the alphabet $\prog.\lSigma$.	

	\item[\em Linking Client Programs and Object Implementations.]
	The \emph{linking} of a program $\prog$ and implementation $\impl$, denoted by $\prog[\impl]$,
	is the interface parallel composition of $\prog$ and $\impl$,
	\ie $\prog[\impl]=\intcomp{\impl}{\prog}$.

\end{description}

\subsection{Linearizability}\label{subsec:lin}

\begin{definition2}
	\label{def:lin}
	A sequential history $\seqh\in\seqspec$ is a \emph{linearization} of a history $\h$, denoted $\h\linleq\seqh$,
	if there exist a sequence of responses $\ressq\in\Res^*$ for pending invocations in $\h$ such that the following hold for $\h'=\completed(\h\cdot\ressq)$:
	\begin{itemize}
		\item $\h'\rst{\tid}=\seqh\rst{\tid}$ for every $\tid\in\Tid$.
		\item $<_{\h'} \;\subseteq\; <_\seqh$.
	\end{itemize}
	A history $\h$ is called \emph{linearizable} \wrt $\seqspec$ if it has a linearization $s\in\seqspec$.
\end{definition2}

The first condition on $\h'$ and $\seqh$ can be simplified, as follows:

\begin{lemma}
	\label{obs:lin}
	$\h\linleq\seqh$ iff there exists a sequence $\ressq\in\Res^*$ of responses for pending invocations in $\h$ such that the following hold for $\h'=\completed(\h\cdot\ressq)$:
	\begin{itemize}
		\item $\setof{\h'}=\setof{\seqh}$.
		\item $<_{\h'} \;\subseteq\; <_\seqh$.
	\end{itemize}
\end{lemma}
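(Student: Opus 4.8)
The plan is to exploit the fact that both the definition of $\h\linleq\seqh$ (\cref{def:lin}) and the condition in the lemma share exactly the same existential prefix (``there is $\ressq\in\Res^*$ of responses for pending invocations in $\h$'') and exactly the same real-time-order requirement $<_{\h'}\;\suq\;<_\seqh$ for $\h'=\completed(\h\cdot\ressq)$; the two differ only in their first bullet. It therefore suffices to fix one such witness $\ressq$ for which $<_{\h'}\;\suq\;<_\seqh$ holds and to prove, under this order containment, that the per-thread equality ``$\h'\rst{\tid}=\seqh\rst{\tid}$ for every $\tid\in\Tid$'' is equivalent to the set equality ``$\setof{\h'}=\setof{\seqh}$''; the two existentials then share the same witness. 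The forward implication needs no order assumption: every operation carries a unique thread identifier, so $\setof{\h'}=\bigcup_{\tid}\setof{\h'\rst{\tid}}$ and likewise for $\seqh$, whence equality of all the per-thread restrictions immediately yields equality of the operation sets.

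The reverse implication is the heart of the lemma. First I would observe that, since $\h'=\completed(\h\cdot\ressq)$ contains only completed operations (invocation--response pairs), the hypothesis $\setof{\h'}=\setof{\seqh}$ forces $\seqh$ to contain no pending operation either, because a pending operation is a single invocation and can never equal a completed one; consequently $\setof{\h'\rst{\tid}}=\setof{\seqh\rst{\tid}}$ for each $\tid$. Now fix $\tid$. Both $\h'\rst{\tid}$ and $\seqh\rst{\tid}$ are sequential histories over this common finite set of completed operations, so reading the operations off each sequence in order of appearance yields a \emph{strict total} order on that set, namely $<_{\h'}$ and $<_\seqh$ restricted to the operations of $\tid$. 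From $<_{\h'}\;\suq\;<_\seqh$ the first of these total orders is contained in the second, and a strict total order contained in another strict total order on the same finite set must coincide with it (both comprise the same number of ordered pairs, $\binom{n}{2}$, where $n$ is the number of operations of $\tid$). Hence the two sequences list the same operations in the same order, so $\h'\rst{\tid}=\seqh\rst{\tid}$; as $\tid$ was arbitrary, the first bullet of \cref{def:lin} holds.

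The main obstacle is entirely in this reverse direction, and within it the one genuinely non-syntactic step is establishing that $<_{\h'}$ (and $<_\seqh$) totally orders the operations of a single thread. This rests on well-formedness: because $\h'\rst{\tid}$ is sequential, any two distinct operations of $\tid$ do not overlap, so the response of one precedes the invocation of the other and the two are $<_{\h'}$-comparable, while irreflexivity and transitivity are inherited from $<_{\h'}$ being a partial order. Once this totality is secured, the elementary cardinality argument for total orders closes the gap, and reassembling the two implications under the common witness $\ressq$ delivers the stated equivalence.
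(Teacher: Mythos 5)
Your proposal is correct and follows essentially the same route as the paper's proof: the forward direction uses that operations are partitioned by thread, and the reverse direction restricts the set equality per thread, notes that sequentiality makes the two real-time orders total on the same finite set, and concludes from $<_{\h'}\;\suq\;<_\seqh$ that the restrictions coincide. The extra details you supply (the absence of pending operations in $\seqh$, and the cardinality argument for why a total order contained in another must equal it) are sound elaborations of steps the paper leaves implicit.
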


\begin{proof}
	In one direction, since every operation belongs to some thread, we have that $\forall\tid\in\Tid\ldotp\h'\rst{\tid}=\seqh\rst{\tid}\implies \setof{\h'}=\setof{\seqh}$.
	For the converse, we have $\setof{\h'}=\setof{\seqh}\implies\setof{\h'\rst{\tid}}=\setof{\seqh\rst{\tid}}$ for every $\tid\in\Tid$.
	As $\h'\rst{\tid}$ is sequential, we obtain that both $<_{\h'\rst{\tid}}, <_{\seqh\rst{\tid}}$ are total orders over the same set of operations.
	Using $<_{\h'} \;\subseteq\; <_\seqh$ we get that $<_{\h'\rst{\tid}} \;\subseteq\; <_{\seqh\rst{\tid}}$ and so overall $\h'\rst{\tid}=\seqh\rst{\tid}$.
\end{proof}

\begin{definition2}
	\label{def:linmap}
	Let $\impl$ be an implementation and $\seqspec$ a specification.
	A function $\linmap: \exec{\impl}\to\seqspec$ is a \emph{linearization mapping}
	if $\hs{\ex}\linleq\linmap(\ex)$ for every $\ex\in\exec{\impl}$.
	We say $\impl$ is \emph{linearizable} \wrt $\seqspec$
	if there exists a linearization mapping $\linmap:\exec{\impl}\to \seqspec$.
\end{definition2}

Next, we prove several lemmas that simplify the construction of linearization mappings from existing ones.
The first is used to show that a sequential history linearizes some history,
given that some other sequential history linearizes that history.

\begin{lemma}\label{lem:lin}
	If $\seqh,\seqh'\in\seqspec$, $\setof{\seqh}=\setof{\seqh'}$, and for some history $\h$ we have $\h\linleq\seqh$ and $\op<_{\h}\op'$ implies $\op<_{\seqh'}\op'$ for all $\op,\op'\in\setof{\seqh'}$,
	then $\h\linleq\seqh'$.
\end{lemma}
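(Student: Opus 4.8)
The plan is to reuse the very witness provided by $\h\linleq\seqh$ and to verify the linearization conditions for $\seqh'$ through the simplified characterization of \cref{obs:lin}. First I would unfold $\h\linleq\seqh$ via \cref{obs:lin}: there is a sequence $\ressq$ of responses for pending invocations of $\h$ such that, writing $\h'=\completed(\h\cdot\ressq)$, we have $\setof{\h'}=\setof{\seqh}$ and ${<_{\h'}}\subseteq{<_\seqh}$. Since the pending invocations of $\h$ do not depend on $\seqh$ or $\seqh'$, the same $\ressq$ is a legal completion sequence for the goal $\h\linleq\seqh'$. Thus it suffices to check the two conditions of \cref{obs:lin} for the pair $\h'$ and $\seqh'$ (using the same $\h'$), after which membership $\seqh'\in\seqspec$ finishes the argument.

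The set condition is immediate: $\setof{\h'}=\setof{\seqh}=\setof{\seqh'}$, where the first equality is the one just obtained and the second is a hypothesis. The order condition ${<_{\h'}}\subseteq{<_{\seqh'}}$ is where the lemma's real-time hypothesis must be brought to bear. The hypothesis speaks about ${<_\h}$, not ${<_{\h'}}$, so the key intermediate step is to show that passing from $\h$ to $\h'$ does not introduce any new real-time precedences among the operations of $\setof{\h'}$, i.e. that ${<_{\h'}}\subseteq{<_\h}$ on $\setof{\h'}$. Granting this, for any $\op,\op'\in\setof{\h'}=\setof{\seqh'}$ with $\op<_{\h'}\op'$ I get $\op<_\h\op'$, and the hypothesis then yields $\op<_{\seqh'}\op'$, establishing the order condition.

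It remains to argue the containment ${<_{\h'}}\subseteq{<_\h}$, which I expect to be the main (though routine) obstacle, since it is the only place the structure of $\h'=\completed(\h\cdot\ressq)$ is used. Here I would note that $\ressq$ consists solely of responses appended at the end, and that taking $\completed$ only deletes actions without reordering the survivors; hence the relative order of any two actions already present in $\h$ is preserved in $\h'$. Every operation in $\setof{\h'}$ has its \emph{invocation} in $\h$ (no invocations live in $\ressq$), so if $\op<_{\h'}\op'$ then the invocation of $\op'$ lies in the $\h$-prefix of $\h\cdot\ressq$; the response of $\op$, which precedes it in $\h'$, must therefore also lie in that prefix (it cannot come from the appended $\ressq$, which sits entirely after every invocation). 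Consequently the response of $\op$ precedes the invocation of $\op'$ already in $\h$, giving $\op<_\h\op'$ under the extended real-time notation of \cref{app:objects}. This completes the verification of both conditions of \cref{obs:lin}, and since $\seqh'\in\seqspec$ we conclude $\h\linleq\seqh'$.
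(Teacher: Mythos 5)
Your proposal is correct and follows essentially the same route as the paper's proof: reuse the witness $\ressq$ from $\h\linleq\seqh$, verify the two conditions of \cref{obs:lin} for $\h'$ and $\seqh'$, and reduce the order condition to the containment ${<_{\h'}}\subseteq{<_\h}$, which the paper dispatches with the one-line observation that $<_{\h}=<_{\h\cdot\ressq}\supseteq<_{\h'}$. Your more detailed justification of that containment (responses appended at the end cannot precede any invocation) is just a spelled-out version of the same step.
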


\begin{proof}
	Let $\ressq$ be the sequence of responses such that the conditions in \cref{obs:lin} hold for $\h'=\completed(\h\cdot\ressq)$ and $\seqh$.
	That is, $\setof{\h'}=\setof{\seqh}$ and  $<_{\h'}\subseteq<_{\seqh}$.
	We claim the same conditions hold for $\h'$ and $\seqh'$.
	Indeed,
	by definition of linearizability, $\setof{\seqh}=\setof{\h'}$, and as $\setof{\seqh}=\setof{\seqh'}$, we get that $\setof{\h'}=\setof{\seqh'}$.
	It remains to verify that $<_{\h'}\subseteq<_{\seqh'}$.
	Indeed, assume $\op<_{\h'}\op'$. Using the lemma's assumptions it is sufficient to verify that $\op,\op'\in\setof{\seqh'}$ and $\op<_{\h}\op'$.
	The former follows from $\setof{\h'}=\setof{\seqh'}$, and the latter holds since real time order is not effected by the addition of a suffix of responses, which means $<_{\h}=<_{\h\cdot\ressq}\supseteq<_{\h'}$.
\end{proof}

The next lemma also enables generating a linearization from an existing linearization
by removing operations that are not completed in the execution inducing the linearized history.

\begin{lemma}\label{lem:remove_pending}
	If $\seqh,\hat{\seqh}\in\seqspec$, $\hat{\seqh}\subseq\seqh$, $\h\linleq \seqh$ for some history $\h$ and $\setof{\completed(\h)}\subseteq\setof{\hat{\seqh}}$ then $\h\linleq \hat{\seqh}$.
\end{lemma}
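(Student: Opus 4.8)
The plan is to apply the simplified linearization criterion of \cref{obs:lin}: from $\h\linleq\seqh$ I obtain a witness sequence of responses $\ressq\in\Res^*$ for pending invocations of $\h$ such that, writing $\h'=\completed(\h\cdot\ressq)$, we have $\setof{\h'}=\setof{\seqh}$ and $<_{\h'}\;\subseteq\;<_\seqh$. I then construct the analogous witness for $\hat{\seqh}$ by thinning out $\ressq$. A preliminary observation I would record is that any $\seqh$ with $\h\linleq\seqh$ is a \emph{complete} sequential history (since $\setof{\h'}=\setof{\seqh}$ and $\h'=\completed(\cdot)$ contains only completed operations), so $\hat{\seqh}$, obtained by deleting whole operations, is complete as well and operations are retained as matching invocation/response pairs.

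First I would define $\ressq'$ to be the subsequence of $\ressq$ consisting of exactly those responses whose operation lies in $\setof{\hat{\seqh}}$. This is well defined: every operation of $\hat{\seqh}$ lies in $\setof{\seqh}=\setof{\h'}=\setof{\completed(\h\cdot\ressq)}$, and since $\ressq$ adds no invocations, each such operation is either already completed in $\h$ or has its matching response in $\ressq$; I keep precisely the latter responses, i.e.\ those for operations of $\hat{\seqh}$ that are pending in $\h$. Writing $\hat{\h}'=\completed(\h\cdot\ressq')$, the completed operations of $\h\cdot\ressq'$ are exactly $\setof{\completed(\h)}$ together with the pending-in-$\h$ operations of $\hat{\seqh}$; using the hypothesis $\setof{\completed(\h)}\subseteq\setof{\hat{\seqh}}$ and the fact that every operation of $\seqh$ (hence of $\hat{\seqh}$) is either completed or pending in $\h$, this set equals $\setof{\hat{\seqh}}$, which establishes the first condition of \cref{obs:lin}.

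The second condition, $<_{\hat{\h}'}\;\subseteq\;<_{\hat{\seqh}}$, is the main point. I would derive it from the chain $<_{\hat{\h}'}\;\subseteq\;<_{\h'}\;\subseteq\;<_\seqh$, followed by the observation that restricting $<_\seqh$ to $\setof{\hat{\seqh}}\times\setof{\hat{\seqh}}$ yields $<_{\hat{\seqh}}$. The middle inclusion is the hypothesis from \cref{obs:lin}; the final restriction equality holds because $\hat{\seqh}\subseq\seqh$ are complete sequential histories, so both orders are the total orders by position and a subsequence preserves relative order. For the first inclusion, take $\op_1,\op_2\in\setof{\hat{\seqh}}$ with $\op_1<_{\hat{\h}'}\op_2$. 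Since $\ressq'$ (like $\ressq$) appends only responses after all of $\h$, and every invocation of $\h\cdot\ressq'$ already lies in $\h$, the response of $\op_1$ — appearing before $\op_2$'s invocation — must lie inside $\h$ rather than in the appended block; the positions of $\op_1$'s response and $\op_2$'s invocation inside $\h$ are then identical in $\h\cdot\ressq$, and discarding pending invocations via $\completed$ does not disturb their relative order, so $\op_1<_{\h'}\op_2$.

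With both conditions of \cref{obs:lin} verified for $\ressq'$, I conclude $\h\linleq\hat{\seqh}$. I expect the real-time-order inclusion $<_{\hat{\h}'}\;\subseteq\;<_{\h'}$ to be the step demanding the most care, since it rests on the structural observation that a response appended after the whole of $\h$ can never precede any invocation, so thinning $\ressq$ down to $\ressq'$ can only delete order relations and never create new ones.
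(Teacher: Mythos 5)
Your proof is correct and follows essentially the same route as the paper's: extract the witness $\ressq$ from $\h\linleq\seqh$, thin it to $\ressq'$ by discarding the responses of operations outside $\setof{\hat{\seqh}}$, and verify the two conditions of \cref{obs:lin} for $\completed(\h\cdot\ressq')$ and $\hat{\seqh}$. Your treatment of the real-time-order inclusion is somewhat more explicit than the paper's (which handles that step tersely), but the decomposition and key observations are the same.
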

\begin{proof}
	From $\h\linleq\seqh$ there exist a sequence of responses $\ressq\in\Res^*$ for pending invocations in $\h$ such that the conditions in \cref{obs:lin} hold for $\h'=\completed(\h\cdot\ressq)$ and $\seqh$.
	That is, $\setof{\h'}=\setof{\seqh}$ and  $<_{\h'}\subseteq<_{\seqh}$.
	For every $\op\in\setof{\seqh}\setminus\setof{\hat{\seqh}}$ it must be the case that $\op\notin\setof{\completed(\h)}$ and so for $\inv\in\Inv$, $\res\in\Res$ the inv-res pair such that $\op=\inv\cdot\res$
	we have $\inv\in\h$, $\res\in\ressq$.
	If we take $\ressq'\subseq\ressq$ to be the subsequence with each such $\res$ removed, then for $\h''\defeq\completed(\h\cdot\ressq')$
	we have $\setof{\h'}\setminus\setof{\h''}=\setof{\seqh}\setminus\setof{\hat{\seqh}}$.
	As $\setof{\h'}=\setof{\seqh}$, we can deduce that $\setof{\h''}=\setof{\hat{\seqh}}$.
	Moreover, $<_{\h''}=<_{}$
	
	the same operations are removed from $\h''\defeq\completed(\h\cdot\ressq')$ compared to $\h'$ and from $\hat{\seqh}$ compared to $\seqh$, which means the conditions in \cref{obs:lin} hold for $\h''$ and $\hat{\seqh}$, and we get that $\h\linleq \hat{\seqh}$.
\end{proof}

Finally, if a sequential history linearizes the history of some execution, and we append a non-response transition,
then the same sequential history linearizes the extended execution:

\begin{lemma}\label{lem:non_response}
	If $\impl$ is an implementation,
	$\ex\in\exec{\impl}$ is an execution,
	$\trans\in\impl.\lT$ is a transition such that $\ex\cdot\trans\in\exec{\impl}$ and $\trans\notin\Res$,
	$\seqh\in\seqspec$,
	and $\hs{\ex}\linleq\seqh$,
	then $\hs{\ex\cdot\trans}\linleq\seqh$.
\end{lemma}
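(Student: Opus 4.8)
The plan is to case-split on the label $\lab(\trans)$. Since $\trans \notin \Res$ and every transition of $\impl$ carries a label in $\Inv\Res(\obj) \cup \Objint$, the only possibilities are that $\lab(\trans)$ is an internal action in $\Objint$ or an invocation in $\Inv(\obj)$. The internal case is immediate: internal actions contribute nothing to the induced history, so $\hs{\ex \cdot \trans} = \hs{\ex}$, and the claim follows directly from the hypothesis $\hs{\ex} \linleq \seqh$.

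The substantive case is $\lab(\trans) = \inv$ for some invocation $\inv \in \Inv(\obj)$, where $\hs{\ex \cdot \trans} = \hs{\ex} \cdot \inv$. My plan is to reuse the very same witnesses supplied by the hypothesis. Concretely, let $\ressq$ be the sequence of responses for pending invocations of $\hs{\ex}$ such that, writing $\h' = \completed(\hs{\ex} \cdot \ressq)$, we have $\setof{\h'} = \setof{\seqh}$ and $<_{\h'} \;\subseteq\; <_\seqh$ (the simplified characterization of \cref{obs:lin}). Since appending an invocation leaves every previously pending invocation pending and makes $\inv$ itself a new pending invocation, $\ressq$ is still a legitimate sequence of responses for pending invocations of $\hs{\ex} \cdot \inv$, so it is an admissible witness for $\hs{\ex \cdot \trans} \linleq \seqh$.

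The key step is to show that $\inv$ is simply discarded by the $\completed$ operator, so that $\completed(\hs{\ex} \cdot \inv \cdot \ressq) = \completed(\hs{\ex} \cdot \ressq) = \h'$ as sequences. This holds because, by well-formedness, action identifiers are unique, and the responses in $\ressq$ match invocations already present in $\hs{\ex}$; hence none of them matches the fresh invocation $\inv$, which therefore remains pending in $\hs{\ex} \cdot \inv \cdot \ressq$ and is dropped by $\completed$. Given this identity of sequences, the two linearization conditions that hold for $\h'$ and $\seqh$ transfer verbatim to $\completed(\hs{\ex \cdot \trans} \cdot \ressq)$ and $\seqh$, yielding $\hs{\ex \cdot \trans} \linleq \seqh$.

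The only real (and minor) obstacle is exactly this bookkeeping point: verifying that tacking on the response suffix $\ressq$ cannot inadvertently complete the newly added invocation $\inv$. Once that is secured from the uniqueness of identifiers, the rest is a direct substitution into the definition of linearizability, with $\seqh \in \seqspec$ left unchanged throughout.
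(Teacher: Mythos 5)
Your proof is correct and follows essentially the same route as the paper's: the same case split on whether $\lab(\trans)$ is internal or an invocation, reuse of the same witness $\ressq$, and the same key observation that $\ressq$ only responds to invocations pending in $\hs{\ex}$, so the fresh invocation stays pending and is removed by $\completed$, leaving $\h'$ unchanged.
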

\begin{proof}
	From $\hs{\ex}\linleq\seqh$,
	there exists a sequence $\ressq\in\Res^*$ such that the conditions in \cref{def:lin} hold for $\h'=\completed(\hs{\ex}\cdot \ressq)$ and $\seqh$.
	It is sufficient to show that $\h'=\completed(\hs{\ex\cdot\trans} \cdot\ressq)$, and thus $\hs{\ex\cdot\trans}\linleq\seqh$ follows using the same sequence $\ressq$.
	
	Indeed, denoting $\act=\lab(\trans)$ we have $\act\in\Objint$ or $\act\in\Inv$.
	If $\act\in\Objint$ then $\hs{\ex\cdot\trans}=\hs{\ex}$ and we are done.
	Otherwise, $\act\in\Inv$ and we have $\hs{\ex\cdot\trans}=\hs{\ex}\cdot\act$.
	We have that $\act$ does not have a matching response in $\ressq$ since $\ressq$ only contains responses for pending invocations in $\hs{\ex}$.
	Hence,
	$\completed(\hs{\ex\cdot\trans}\cdot \ressq)=\completed(\hs{\ex}\cdot\act\cdot \ressq)=\completed(\hs{\ex}\cdot \ressq)$.
\end{proof}

\subsection{Registers}\label{subsec:regspec}

A register object is given by $\Reg=\tup{\set{\regread,\regwrite},\N}$.
Its specification, denoted by $\regspec$, assumes that $0$ is the initial register value,
and is formally defined as follows.

\begin{definition2}
\emph{The value observed after a sequential history $\seqh$ of $\Reg$},
denoted by $\obs(\seqh)\in\Val$,
is defined as follows:
	\begin{equation*}
		\obs(\seqh)\defeq\begin{cases}
			0 & \seqh\rst{\regwrite}=\emptyword \\
			\inputf(\lastop(\seqh\rst{\regwrite})) & \seqh\rst{\regwrite}\neq\emptyword\\
		\end{cases}
	\end{equation*}
\end{definition2}

\begin{definition2}
	\label{def:regspec}
	The set $\regspec$ consists of all sequential histories $\seqh$ such that for all operations $\op\in\seqh$:
	\begin{itemize}
		\item If $\methodf(\op)=\regread$, then $\inputf(\op)=\bot$ and $\outputf(\op)=\obs(\prefof{\seqh}{\op})$.
		\item If $\methodf(\op)=\regwrite$, then $\inputf(\op)\in\Val$ and $\outputf(\op)=\bot$.
	\end{itemize}
\end{definition2}

Note that $\seqh\cdot\invact\tup{\regread,\bot,\tid,\id}\cdot \resact\tup{\regread,\obs(\seqh),\tid,\id}\in\regspec$ for any operation identifier $\tup{\tid,\id}\in\Opid$ that is not the identifier of some operation in $\seqh$.

The following lemmas are useful for constructing register linearizations from existing linearizations.
First, register sequential histories remain legal when removing read operations from them.

\begin{lemma}\label{lem:remove_read}
	If $\seqh\in\regspec$, then $\hat{\seqh}\in\regspec$ for every sequential history $\hat{\seqh}$ such that $\seqh\rst{\regwrite}\subseq\hat{\seqh}\subseq\seqh$.
\end{lemma}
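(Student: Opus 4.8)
The plan is to verify directly that $\hat\seqh$ satisfies the two per-operation conditions of \cref{def:regspec}, exploiting the fact that the hypothesis $\seqh\rst{\regwrite}\subseq\hat{\seqh}\subseq\seqh$ forces $\hat\seqh$ to be obtained from $\seqh$ by deleting only read operations, while every write is retained in its original position. Since $\hat\seqh$ is assumed to be a sequential history, I do not need to re-establish well-formedness or alternation; it remains only to check the value constraints. A useful preliminary observation is that $\hat\seqh\rst{\regwrite}=\seqh\rst{\regwrite}$: from $\hat\seqh\subseq\seqh$ we get $\hat\seqh\rst{\regwrite}\subseq\seqh\rst{\regwrite}$, and from $\seqh\rst{\regwrite}\subseq\hat\seqh$, since the left side consists solely of writes, we get $\seqh\rst{\regwrite}\subseq\hat\seqh\rst{\regwrite}$; mutual subsequence of finite sequences yields equality.

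For write operations the conditions transfer immediately. If $\methodf(\op)=\regwrite$ and $\op\in\hat\seqh$, then $\op\in\seqh$, so $\seqh\in\regspec$ already gives $\inputf(\op)\in\Val$ and $\outputf(\op)=\bot$; these predicates mention only $\op$ itself and are insensitive to the surrounding context, so they hold in $\hat\seqh$ as well. The read input constraint $\inputf(\op)=\bot$ transfers for the same reason.

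The crux is the read output constraint: for every read $\op\in\hat\seqh$ I must show $\outputf(\op)=\obs(\prefof{\hat\seqh}{\op})$. Since $\op\in\seqh$ and $\seqh\in\regspec$ give $\outputf(\op)=\obs(\prefof{\seqh}{\op})$, it suffices to prove
\[
\obs(\prefof{\hat\seqh}{\op})=\obs(\prefof{\seqh}{\op}).
\]
By the definition of $\obs$, this value depends only on the write-restriction of its argument (it is $0$ when that restriction is empty and otherwise $\inputf(\lastop(\cdot\rst{\regwrite}))$), so I reduce the goal to the claim $\prefof{\hat\seqh}{\op}\rst{\regwrite}=\prefof{\seqh}{\op}\rst{\regwrite}$. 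This claim holds because $\hat\seqh$ is a subsequence of $\seqh$ retaining all writes: the relative order between any write and the read $\op$ is preserved under passage to a subsequence, so a write precedes $\op$ in $\hat\seqh$ exactly when it precedes $\op$ in $\seqh$, and both restrictions enumerate the same writes in the same order; in particular their $\lastop$ (when non-empty) agree, giving equal observed values.

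The main obstacle here is purely a matter of careful bookkeeping around the $\prefof{\cdot}{\cdot}$ notation rather than any genuine difficulty: I have to argue that deleting reads moves no write across the position of $\op$, and that $\obs$ factors through $\rst{\regwrite}$. Once the reduction to the write-restriction equality is in place, the remaining step is the elementary order-preservation property of subsequences. This lemma is then available to later constructions that produce register linearizations by discarding redundant or pending read operations, where it guarantees that dropping reads never invalidates the sequential history.
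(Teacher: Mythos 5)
Your proof is correct and follows essentially the same route as the paper's: the paper's (one-line) argument is precisely that $\obs$ depends only on write operations and that the sequence of writes preceding each read is unchanged when passing from $\seqh$ to $\hat{\seqh}$. Your write-up merely spells out the bookkeeping (the equality $\hat{\seqh}\rst{\regwrite}=\seqh\rst{\regwrite}$ and the reduction of $\obs(\prefof{\hat{\seqh}}{\op})=\obs(\prefof{\seqh}{\op})$ to equality of write-restrictions of prefixes) that the paper leaves implicit.
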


\begin{proof}
	Follows from the fact that $\obs$ only depends on write operations and the sequence of write operations before every read is the same in $\hat{\seqh}$ and $\seqh$.
\end{proof}

The next lemma is used to show that a sequence is a valid register sequential history, using other valid histories.
This is done by showing each read comes after a sequence of writes which it also comes after in some other valid sequential history.

\begin{lemma}\label{lem:regspec}
	If $\seqh$ is a sequential history of $\Reg$ such that $\setof{\seqh}\subseteq\setof{\seqh'}$ for some $\seqh'\in\regspec$ and for each read operation $\rop\in\seqh\rst{\regread}$ we have $\prefof{\seqh}{\rop}\rst{\regwrite}=\prefof{\seqh_\rop}{\rop}\rst{\regwrite}$ for some $\seqh_\rop\in\regspec$ such that $\rop\in\setof{\seqh_\rop}$,
	then $\seqh\in\regspec$.
\end{lemma}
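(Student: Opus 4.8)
The plan is to verify directly that $\seqh$ meets the two value conditions of \cref{def:regspec}, operation by operation. Since $\seqh$ is already assumed to be a sequential history of $\Reg$, all that remains is to check the input/output constraints on its read and write operations. First I would dispose of the write operations: for any write operation $\op\in\seqh$ (i.e.\ $\methodf(\op)=\regwrite$), the assumption $\setof{\seqh}\subseteq\setof{\seqh'}$ places $\op$ in $\seqh'$, and since $\seqh'\in\regspec$ the write condition $\inputf(\op)\in\Val$ and $\outputf(\op)=\bot$ holds there. Because an operation is identified with its matching invocation/response pair, its input and output values do not depend on which history we view it in, so the condition transfers to $\seqh$.

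Next I would treat the read operations, which is where the hypotheses about $\seqh_\rop$ come in. Fix a read $\rop\in\seqh\rst{\regread}$. The input constraint $\inputf(\rop)=\bot$ follows from $\rop\in\setof{\seqh_\rop}$ together with $\seqh_\rop\in\regspec$. For the output constraint I must show $\outputf(\rop)=\obs(\prefof{\seqh}{\rop})$. Since $\seqh_\rop\in\regspec$ and $\rop\in\setof{\seqh_\rop}$, the read condition applied inside $\seqh_\rop$ gives $\outputf(\rop)=\obs(\prefof{\seqh_\rop}{\rop})$, so it suffices to prove $\obs(\prefof{\seqh}{\rop})=\obs(\prefof{\seqh_\rop}{\rop})$; this is precisely where the hypothesis $\prefof{\seqh}{\rop}\rst{\regwrite}=\prefof{\seqh_\rop}{\rop}\rst{\regwrite}$ enters.

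The one step deserving an explicit argument is that $\obs$ depends only on the write-restriction of its argument: unwinding its definition, $\obs$ yields $0$ when the write-restriction of its argument is empty and the input value of that argument's last write otherwise; hence $\obs(\seqh_1)=\obs(\seqh_2)$ whenever $\seqh_1\rst{\regwrite}=\seqh_2\rst{\regwrite}$. Applying this with $\seqh_1=\prefof{\seqh}{\rop}$ and $\seqh_2=\prefof{\seqh_\rop}{\rop}$, whose write-restrictions coincide by hypothesis, gives $\obs(\prefof{\seqh}{\rop})=\obs(\prefof{\seqh_\rop}{\rop})=\outputf(\rop)$, settling the read case. With both conditions verified for every operation, I conclude $\seqh\in\regspec$. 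I do not anticipate a genuine obstacle here: the argument is essentially bookkeeping, and the only points requiring care are the $\obs$-depends-only-on-writes observation above and the fact that operations carry their input/output values unchanged across the three histories $\seqh$, $\seqh'$, and $\seqh_\rop$, which holds because operations are identified with invocation/response pairs.
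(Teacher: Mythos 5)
Your proposal is correct and follows essentially the same route as the paper's proof: the write conditions and read inputs transfer from $\seqh'$ (or $\seqh_\rop$), and the read outputs follow because $\obs$ depends only on the write-restriction, which coincides for $\prefof{\seqh}{\rop}$ and $\prefof{\seqh_\rop}{\rop}$ by hypothesis. Your explicit justification that $\obs$ is determined by the write-restriction is the same observation the paper leaves implicit.
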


\begin{proof}
	First, the condition $\setof{\seqh}\subseteq\setof{\seqh'}$ and $\seqh'\in\regspec$ ensure that the inputs of reads and outputs of writes are $\bot$, and inputs of writes are in $\Val$.
	
	Moreover, for each read operation $\rop\in\seqh\rst{\regread}$ we have $\prefof{\seqh}{\rop}\rst{\regwrite}=\prefof{\seqh_\rop}{\rop}\rst{\regwrite}$ which implies
	$\obs(\prefof{\seqh}{\rop})=\obs(\prefof{\seqh_\rop}{\rop})$.
	As $\seqh_\rop\in\regspec$ we have that $\outputf(\rop)=\obs(\prefof{\seqh_\rop}{\rop})$ and thus overall $\outputf(\rop)=\obs(\prefof{\seqh}{\rop})$ as required.
\end{proof}

\subsection{Hyperproperties, schedulers and strong observational refinement}
\begin{description}[leftmargin=0pt,itemsep=2pt]
	\item[\em Schedulers.]
	Given a program $\prog$ and an implementation $\impl$,
	a \emph{scheduler} is a function $\scheduler : \executions{\prog[\impl]} \to \powerset{\prog[\impl].\lT}$.
	An execution $\ex\in\executions{\prog[\impl]}$ is \emph{consistent with $\scheduler$} if
	$\ex[j] \in \scheduler(\ex[1]\cdots\ex[j-1])$ for every $1 \leq j \leq \size{\ex}$.
	We denote by $\executions{\prog[\impl],\scheduler}$ the set of executions of $\prog[\impl]$
	that are consistent with $\scheduler$,
	and by $\traces{\prog[\impl],\scheduler}$ the traces of executions in $\executions{\prog[\impl],\scheduler}$.
	A scheduler is \emph{deterministic} if for every $\ex\in \executions{\prog[\impl]}$,
	either $\size{\scheduler(\ex)}\leq 1$ or all transitions in $\scheduler(\ex)$ are labeled by actions in $\Progint$.
\end{description}

Deterministic schedulers have the following useful property: 
if two executions generated by a deterministic scheduler induce the same program execution,
then one of the executions must be a prefix of the other:

\begin{lemma}
	\label{lem:det_rst}
	Let $\impl$ be an implementation of an object $\obj$ and $\prog$ a client program for $\obj$,
	and let $\scheduler:\exec{\prog[\impl]}\to\powerset{\prog[\impl].\lT}$ be a deterministic scheduler.
	Then $\ex_1\pref\ex_2$ or $\ex_2\pref\ex_1$
	for every $\ex_1,\ex_2\in\executions{\prog[\impl],\scheduler}$
	such that $\ex_1\rst{\prog}=\ex_2\rst{\prog}$.
\end{lemma}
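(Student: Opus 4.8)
The plan is to proceed by induction on $\size{\ex_1} + \size{\ex_2}$, showing that two $\scheduler$-consistent executions agreeing on their program-restriction must be prefix-comparable. First I would dispose of the base case: if either execution is empty, say $\ex_1 = \emptyword$, then trivially $\ex_1 \pref \ex_2$, so assume both are non-empty. Write $\ex_1 = \ex_1' \cdot \trans_1$ and $\ex_2 = \ex_2' \cdot \trans_2$ with $\trans_1, \trans_2 \in \prog[\impl].\lT$. The key structural observation is that the restriction to the program, $\cdot\rst{\prog}$, only records transitions whose labels lie in $\prog.\lSigma$ (equivalently, in $\Inv\Res(\obj) \cup \Progint$, the synchronization labels plus client-internal labels), and drops purely object-internal transitions labeled in $\Objint$.

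The heart of the argument is a case analysis on the labels of the two final transitions, combined with the definition of a \emph{deterministic} scheduler. Since $\ex_1, \ex_2 \in \executions{\prog[\impl],\scheduler}$, consistency gives $\trans_1 \in \scheduler(\ex_1')$ and $\trans_2 \in \scheduler(\ex_2')$. The crucial sub-case is when the two prefixes coincide, $\ex_1' = \ex_2'$: then both $\trans_1$ and $\trans_2$ lie in the same set $\scheduler(\ex_1')$, and determinism forces either $\size{\scheduler(\ex_1')} \leq 1$—so $\trans_1 = \trans_2$ and $\ex_1 = \ex_2$—or else all transitions in $\scheduler(\ex_1')$ are labeled by actions in $\Progint$. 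In the latter situation I must still conclude prefix-comparability; here I would lean on the hypothesis $\ex_1\rst{\prog} = \ex_2\rst{\prog}$, which pins down the program-visible behavior and, together with the fact that $\Progint$-transitions are exactly the program-internal ones that \emph{are} recorded by $\cdot\rst{\prog}$, forces $\trans_1 = \trans_2$ as well. To reduce the general case to the coinciding-prefix case, I would first argue that $\ex_1'\rst{\prog}$ and $\ex_2'\rst{\prog}$ are themselves prefix-comparable (peeling off at most the last recorded program action from each side, according to whether $\lab(\trans_i) \in \prog.\lSigma$ or $\lab(\trans_i) \in \Objint$), apply the induction hypothesis to obtain $\ex_1' \pref \ex_2'$ (w.l.o.g.), and then replay determinism at the branch point.

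The main obstacle I anticipate is bookkeeping the interaction between the restriction operator and the two flavors of transitions: an $\Objint$-labeled step advances $\ex_i$ without changing $\ex_i\rst{\prog}$, so equal program-restrictions do \emph{not} immediately yield equal executions, and the induction must carefully track how many object-internal steps each side has taken. The clean way to manage this is to strengthen the induction to compare executions up to their object-internal padding—i.e., to prove that whenever $\ex_1' \pref \ex_2'$ and both are $\scheduler$-consistent with equal program-restrictions, the continuation is determined. I expect the determinism clause to do all the real work precisely at the points where the scheduler offers a genuine choice, and the condition $\ex_1\rst{\prog}=\ex_2\rst{\prog}$ to rule out the only way two consistent executions could diverge, namely by selecting distinct program-visible actions at a common state. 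Closing this case carefully, and handling the asymmetry where one side takes an extra $\Objint$ step the other has not yet taken, is where the proof requires the most attention.
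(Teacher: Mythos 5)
Your core insight is the right one and is exactly what the paper's proof rests on: a deterministic scheduler can only offer a genuine choice among $\Progint$-labeled transitions, and any such choice is visible in the $\prog$-restriction, so two consistent executions with equal restrictions cannot diverge. The paper, however, packages this as a direct two-line argument with no induction: assume neither execution is a prefix of the other, let $j$ be the \emph{first} index at which they differ, observe that $\ex_1[j],\ex_2[j]\in\scheduler(\ex_1\range{1}{j-1})$ are distinct, so by determinism both are labeled in $\Progint$, and conclude that $\ex_1\rst{\prog}$ and $\ex_2\rst{\prog}$ already differ at the corresponding position --- contradicting the hypothesis. This ``first point of divergence'' formulation makes all the bookkeeping you worry about (object-internal padding, peeling off final transitions, asymmetric $\Objint$ steps) disappear.

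Your inductive version has a concrete gap at the reduction step: after peeling off the last transitions you want to apply the induction hypothesis to $\ex_1'$ and $\ex_2'$, but the hypothesis requires $\ex_1'\rst{\prog}=\ex_2'\rst{\prog}$, whereas the peeled prefixes only satisfy prefix-\emph{comparability} of their restrictions (each of $\ex_i'\rst{\prog}$ is either $\ex_1\rst{\prog}$ itself or that sequence with its last element removed, depending on whether $\lab(\trans_i)\in\Objint$ or $\lab(\trans_i)\in\prog.\lSigma$). You flag this and propose ``strengthening the induction,'' but never state the strengthened invariant; the one that works is to prove the lemma under the weaker hypothesis that $\ex_1\rst{\prog}$ and $\ex_2\rst{\prog}$ are merely prefix-comparable --- and that statement is itself most easily proved by the first-divergence argument, at which point the induction is doing no work. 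A second, smaller point: in the coinciding-prefix case you need that two \emph{distinct} $\Progint$-labeled transitions enabled from the same composed state have distinct projections onto $\prog$; this holds because a program-internal step leaves the $\impl$-component of the state untouched, so the transitions can only differ in their $\prog$-visible part. The paper elides this too, but your argument leans on it explicitly, so you should spell it out.
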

\begin{proof}
	Assume $\ex_1,\ex_2$ are not prefixes of each other.
	Let $j$ be the minimal index such that $\ex_1[j]\neq\ex_2[j]$.
	Then $\ex_1[j],\ex_2[j]\in\scheduler(\ex_1\range{1}{j-1})$
	and thus $\size{\scheduler(\ex_1\range{1}{j-1})}>1$.
	From $\scheduler$ being deterministic,
	it must be the case that $\lab(\ex_{1}[j]),\lab(\ex_{2}[j])\in\Progint$,
	contradicting $\ex_1\rst{\prog}=\ex_2\rst{\prog}$.
\end{proof}

Next, we show that
the %
mapping of executions generated by a deterministic scheduler to their induced program executions is a bijection
when considering only %
executions ending in a program transition (\ie an invocation, a response, or a program internal transition).
Moreover, both the mapping and its inverse are prefix preserving:

\begin{lemma}
	\label{thm::det_rst}
	Let $\impl$ be an implementation of an object $\obj$ and $\prog$ a client program for $\obj$,
	and let $\scheduler:\exec{\prog[\impl]}\to\powerset{\prog[\impl].\lT}$ be a deterministic scheduler.
	Then the function 
	$\progrst:(\set{\emptyword}\cup\set{\ex\in\exec{\prog[\impl],\scheduler}\st\lab(\ex[\last])\in\prog.\lSigma})\to \exec{\prog[\impl],\scheduler}\rst{\prog}$
	defined by $\progrst(\ex)=\ex\rst{\prog}$ is a bijection
	and both $\progrst$ and $\progrst^{-1}$ are prefix preserving.
\end{lemma}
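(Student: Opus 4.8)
The plan is to verify the four required properties — well-definedness, injectivity, surjectivity, and prefix-preservation of both directions — where the only non-routine ingredient is the determinism of $\scheduler$ as captured by \cref{lem:det_rst}. Throughout I would use two elementary facts: first, that every prefix of an execution consistent with $\scheduler$ is itself consistent with $\scheduler$ (immediate from the definition of consistency, which constrains each transition only in terms of the prefix preceding it); and second, that restriction distributes over concatenation, i.e. $(\ex\cdot\transsq)\rst{\prog} = \ex\rst{\prog}\cdot\transsq\rst{\prog}$. Well-definedness is then immediate, since for $\ex$ in the domain $\ex\rst{\prog}$ lies in $\exec{\prog[\impl],\scheduler}\rst{\prog}$ by the very definition of the codomain.

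For injectivity, I would take $\ex_1,\ex_2$ in the domain with $\ex_1\rst{\prog}=\ex_2\rst{\prog}$ and invoke \cref{lem:det_rst} to conclude they are comparable, say $\ex_1\pref\ex_2$. If the inclusion were strict, writing $\ex_2=\ex_1\cdot\transsq$ with $\transsq\neq\emptyword$, the concatenation identity forces $\transsq\rst{\prog}=\emptyword$, so no transition of $\transsq$ is labeled in $\prog.\lSigma$. But $\ex_2$ is then nonempty and in the domain, so its last transition — which is the last transition of $\transsq$ — is labeled in $\prog.\lSigma$, a contradiction. Hence $\ex_1=\ex_2$; the empty-execution case is subsumed, as a nonempty domain element cannot restrict to $\emptyword$. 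This appeal to \cref{lem:det_rst} is exactly the step that fails for nondeterministic schedulers, and I expect it to be the main conceptual point of the proof — everything else is bookkeeping about restrictions and prefixes.

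For surjectivity, given $\sigma\in\exec{\prog[\impl],\scheduler}\rst{\prog}$ I would pick any $\ex\in\exec{\prog[\impl],\scheduler}$ with $\ex\rst{\prog}=\sigma$ and let $\ex'$ be the shortest prefix of $\ex$ with $\ex'\rst{\prog}=\sigma$ (namely $\ex$ truncated just after its $\size{\sigma}$-th program transition). Then $\ex'$ is either empty or ends in a program transition — otherwise deleting its last transition would preserve the $\prog$-restriction, contradicting minimality — and $\ex'$ is consistent with $\scheduler$ as a prefix of $\ex$. Thus $\ex'$ is a domain element with $\progrst(\ex')=\sigma$. Prefix-preservation of $\progrst$ itself is then just the concatenation identity: $\ex_1\pref\ex_2$ yields $\ex_1\rst{\prog}\pref\ex_2\rst{\prog}$.

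Finally, prefix-preservation of $\progrst^{-1}$ reuses the truncation idea together with the bijectivity just established. Given $\sigma_1\pref\sigma_2$ in the codomain, set $\ex_2=\progrst^{-1}(\sigma_2)$, so $\ex_2\rst{\prog}=\sigma_2$, and take the shortest prefix $\ex_2'$ of $\ex_2$ with $\ex_2'\rst{\prog}=\sigma_1$, which exists because $\sigma_1\pref\sigma_2$. By the surjectivity argument $\ex_2'$ is a domain element with $\progrst(\ex_2')=\sigma_1$, so $\ex_2'=\progrst^{-1}(\sigma_1)$ by injectivity. Since $\ex_2'\pref\ex_2$ by construction, this gives $\progrst^{-1}(\sigma_1)\pref\progrst^{-1}(\sigma_2)$, completing the proof.
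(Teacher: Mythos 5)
Your proposal is correct and follows essentially the same route as the paper's proof: determinism enters only through \cref{lem:det_rst} in the injectivity step, surjectivity is obtained by truncating a witness execution at its last program transition, and prefix-preservation of $\progrst^{-1}$ is derived from the same truncation device combined with injectivity. The minor presentational differences (arguing via the suffix $\transsq$ restricting to $\emptyword$ rather than directly comparing restricted prefixes) do not change the substance.
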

\begin{proof}
	For injectivity, let $\ex_1,\ex_2 \in\set{\emptyword}\cup\set{\ex\in\exec{\prog[\impl],\scheduler}\st\lab(\ex[\last])\in\prog.\lSigma}$ 
	be two executions such that $\ex_1\rst{\prog}=\ex_2\rst{\prog}$.
	By \cref{lem:det_rst} we obtain \wlg that $\ex_1\pref\ex_2$.
	Assume for contradiction that $\ex_1\prefneq\ex_2$.
	Then $\ex_2\neq\emptyword$ and in particular $\ex_2[\last]$ is a program transition (\ie $\lab(\ex_2[\last])\in\prog.\lSigma$).
	We deduce that $\ex_1\rst{\prog}\prefneq\ex_2\rst{\prog}$,
	a contradiction.
	Therefore, $\ex_1=\ex_2$.
	
	For surjectivity, let $\progex\in\exec{\prog[\impl],\scheduler}\rst{\prog}$.
	Then by definition there exists some execution $\ex\in\exec{\prog[\impl],\scheduler}$ such that $\progex=\ex\rst{\prog}$.
	Let $\ex'$ be the maximal prefix of $\ex$ such that $\lab(\ex'[\last])\in\prog.\lSigma$, or $\ex'=\emptyword$ if no such prefix exists.
	Then we get that $\progrst(\ex')=\ex'\rst{\prog}=\ex\rst{\prog}=\progex$.
	
	The fact that $\progrst$ is prefix preserving is immediate.
	To show that $\progrst^{-1}$ is also prefix preserving, due to bijectivity of $\progrst$ it suffices to prove:
	\begin{itemize}
		\item $\progrst(\emptyword)=\emptyword$:
		
		Indeed $\progrst(\emptyword)=\emptyword\rst{\prog}=\emptyword$.
		
		\item  $\progrst(\ex_1)\pref \progrst(\ex_2)\implies \ex_1\pref\ex_2$ for all $\ex_1,\ex_2\in\set{\emptyword}\cup\set{\ex\in\exec{\prog[\impl],\scheduler}\st\lab(\ex[\last])\in\prog.\lSigma}$:
		
		Indeed, $\progrst(\ex_1)\pref \progrst(\ex_2)$ means $\ex_1\rst\prog\pref\ex_2\rst{\prog}$.
		Taking $\ex'_2\pref\ex_2$ the minimal prefix of $\ex_2$ such that $\ex_1\rst\prog\pref\ex'_2\rst{\prog}$, we obtain that $\ex_1\rst\prog=\ex'_2\rst{\prog}$ %
		and either $\ex'_2=\emptyword$ or $\lab(\ex'_2[\last])\in\prog.\lSigma$.
		$\ex'_2$ is consistent with $\scheduler$ since $\ex_2$ is consistent with $\scheduler$, and so $\ex'_2$ is in the domain of $\progrst$ and $\progrst(\ex'_2) = \ex'_2\rst\prog$. Thus, from $\ex_1\rst\prog=\ex'_2\rst{\prog}$, we deduce that $\progrst(\ex_1)=\progrst(\ex'_2)$. Combined with the injectivity of $\progrst$ proven above we get that $\ex_1=\ex'_2\pref\ex_2$ as required.\qedhere
	\end{itemize}
\end{proof}

\begin{description}[leftmargin=0pt,itemsep=2pt]
	\item[\em Hyperproperties.] A \emph{hyperproperty} $\phi$ of a program $\prog$ is a set of sets of the program's traces
	(\ie $\phi \subseteq \powerset{\traces{\prog}}$).
	An implementation $\impl$ \emph{satisfies a hyperproperty} $\phi$ of $\prog$,
	denoted by $\impl \models_\prog \phi$,
	if $\traces{\prog[\impl],\scheduler}\rst{\prog.\lSigma}\in\phi$ for every deterministic scheduler $\scheduler$.
	
	\item[\em Strong Observational Refinement.]
	An implementation $\impl$ \emph{strongly observationally refines} an implementation $\abs\impl$,
	denoted by $\impl \srefine \abs\impl$,
	if $\abs\impl\models_\prog \phi\implies\impl\models_\prog \phi$
	for every program $\prog$ and hyperproperty $\phi$ of $\prog$.
	The following alternative characterization follows from the definition.
	
	\begin{lemma}
		$\impl \srefine \abs\impl$ iff for every program $\prog$ and deterministic scheduler $\scheduler$,
		there exists a deterministic scheduler $\abs\scheduler$ such that
		$\traces{\prog[\impl],\scheduler}\rst{\prog.\lSigma} = \traces{\prog[\abs\impl],\abs\scheduler}\rst{\prog.\lSigma}$.
	\end{lemma}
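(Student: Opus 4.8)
The plan is to prove both directions by directly unfolding the definitions of $\models_\prog$ and $\srefine$; the only ingredient that requires a small idea is the choice of a separating hyperproperty in the forward direction. I expect the whole argument to be short, since the statement is essentially a reformulation of the definitions.

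For the ``if'' direction, I would assume the scheduler-matching condition and establish $\impl \srefine \abs\impl$ by fixing an arbitrary program $\prog$ and hyperproperty $\phi$ with $\abs\impl \models_\prog \phi$, and showing $\impl \models_\prog \phi$. Taking any deterministic scheduler $\scheduler$ of $\prog[\impl]$, the hypothesis supplies a deterministic scheduler $\abs\scheduler$ with $\traces{\prog[\impl],\scheduler}\rst{\prog.\lSigma} = \traces{\prog[\abs\impl],\abs\scheduler}\rst{\prog.\lSigma}$. Since $\abs\impl \models_\prog \phi$ gives that the right-hand side lies in $\phi$, so does the left-hand side; as $\scheduler$ was arbitrary, this yields $\impl \models_\prog \phi$, as required.

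For the ``only if'' direction, I would assume $\impl \srefine \abs\impl$ and fix a program $\prog$ and a deterministic scheduler $\scheduler$ of $\prog[\impl]$, abbreviating $T = \traces{\prog[\impl],\scheduler}\rst{\prog.\lSigma}$. The key step is to instantiate strong observational refinement with the ``separating'' hyperproperty $\phi = \powerset{\traces{\prog}} \setminus \set{T}$ and argue by contradiction: if no deterministic scheduler $\abs\scheduler$ of $\prog[\abs\impl]$ realized $T$, then every such scheduler would produce a program-visible trace set different from $T$, i.e. belonging to $\phi$, so $\abs\impl \models_\prog \phi$. Strong observational refinement would then force $\impl \models_\prog \phi$, whence $T \in \phi$ --- contradicting $T \notin \phi$. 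Hence a matching $\abs\scheduler$ must exist.

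I do not anticipate a genuine obstacle: the content lies entirely in recognizing that $\powerset{\traces{\prog}} \setminus \set{T}$ precisely encodes the statement ``no scheduler realizes $T$''. The one sanity check I would make is that $T$ is a legitimate element of $\powerset{\traces{\prog}}$ --- so that removing it is meaningful --- which follows from $\traces{\prog[\impl],\scheduler}\rst{\prog.\lSigma} \subseteq \traces{\prog}$ via the defining property of the interface parallel composition $\prog[\impl]$.
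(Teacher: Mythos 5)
Your proof is correct and takes essentially the same approach as the paper's: the ``if'' direction is the same unfolding of definitions, and the ``only if'' direction likewise instantiates strong observational refinement with a hyperproperty encoding realizability by $\abs\impl$. The only (cosmetic) difference is that the paper uses the positive hyperproperty $\set{\traces{\prog[\abs\impl],\abs\scheduler}\rst{\prog.\lSigma} \st \abs\scheduler \text{ deterministic}}$ directly, whereas you use the complement of the singleton $\set{T}$ and argue by contradiction --- a dual phrasing of the same idea.
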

	
	\begin{proof}
		Assume $\impl \srefine \abs\impl$.
		Let $\prog$ be a program and consider the hyperproperty $$\phi=\set{\traces{\prog[\abs\impl],\abs\scheduler}\rst{\prog.\lSigma}\st \abs\scheduler \text{ is a deterministic scheduler}}$$
		It is straightforward to verify that $\phi \subseteq \powerset{\traces{\prog}}$ and $\abs\impl \models_\prog \phi$.
		Therefore, $\impl \models_\prog \phi$.
		This implies that for every deterministic scheduler $\scheduler$,
		$\traces{\prog[\impl],\scheduler}\rst{\prog.\lSigma}\in\phi$ and thus there exists a deterministic scheduler $\abs\scheduler$ such that $\traces{\prog[\impl],\scheduler}\rst{\prog.\lSigma} = \traces{\prog[\abs\impl],\abs\scheduler}\rst{\prog.\lSigma}$.
		
		The converse is immediate from the definitions.
	\end{proof}
	
\end{description}

\section{Proof of \cref{thm:sref_iff_sim} ($\impl \srefine \abs\impl\iff\impl\leqsim\abs\impl$)}
\label{app:hyp}

In this appendix we prove \cref{thm:sref_iff_sim}.
A similar theorem is proved in \cite[Theorem 8]{AE19}
for slightly different definitions of linearization mappings and schedulers, operating on traces instead of executions,
and for implementations that are step-deterministic LTSs.
For the sake of completeness, we reprove the result for our setting,
adapting in some cases (and somewhat simplifying in others) the proofs from \cite{AE19}.

First, we prove that forward simulation is sufficient for strong observational refinement.

\begin{theorem}
	Let $\impl,\abs\impl$ be implementations of an object $\obj$.
	If $\impl\leqsim\abs\impl$, then $\impl \srefine \abs\impl$.
\end{theorem}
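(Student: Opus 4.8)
Show that $\impl \leqsim \abs\impl$ (forward simulation with $\Gamma = \Inv\Res(\obj)$) implies $\impl \srefine \abs\impl$ (strong observational refinement).

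**What I have available:**
- Lemma (strong refinement characterization): $\impl \srefine \abs\impl$ iff for every program $\prog$ and deterministic scheduler $S$, there exists a deterministic scheduler $\abs{S}$ such that $\traces{\prog[\impl],S}\rst{\prog.\lSigma} = \traces{\prog[\abs\impl],\abs{S}}\rst{\prog.\lSigma}$.
- Theorem (int_par): If $\lts \leqsim[\Gamma] \abs\lts$ and $\lts'$ has $\lts'.\lSigma \cap \lts.\lSigma = \lts'.\lSigma \cap \abs\lts.\lSigma = \Gamma$, then there's a prefix-preserving mapping $\pi: \exec{\lts \parallel \lts'} \to \exec{\abs\lts \parallel \lts'}$ with nice properties (preserves $\lts'$-restriction, lands on program transitions, skips non-$\lts'$ transitions, maps program-observable transitions one-to-one).
- Lemmas det_rst and the bijection lemma (thm::det_rst): executions generated by a deterministic scheduler, ending in program transitions, biject with their program-restrictions, both directions prefix-preserving.
- Linking: $\prog[\impl] = \impl \parallel \prog$.

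**The plan.**

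Let me think about the structure. I need to produce, from a deterministic scheduler $S$ for $\prog[\impl]$, a deterministic scheduler $\abs{S}$ for $\prog[\abs\impl]$ that yields the same program traces.

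The key object is the mapping $\pi: \exec{\prog[\impl]} \to \exec{\prog[\abs\impl]}$ from the int_par theorem. Here $\lts = \impl$, $\abs\lts = \abs\impl$, $\lts' = \prog$, and $\Gamma = \Inv\Res(\obj)$. The side condition holds because $\prog.\lSigma \cap \impl.\lSigma = \Inv\Res(\obj)$ (programs and implementations share exactly the invocation/response alphabet; internal labels $\Progint$ and $\Objint$ are disjoint). So $\pi$ exists, is prefix-preserving, respects $\prog$ (i.e. $\ex\rst{\prog} = \pi(\ex)\rst{\prog}$), lands on program transitions, and relates single program-observable steps one-to-one.

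Here's the picture I'd draw. Given the deterministic scheduler $S$ on the $\impl$-side, consider the set $\execset = \exec{\prog[\impl], S}$ of $S$-consistent executions. Apply $\pi$ to get a set $\pi(\execset) \subseteq \exec{\prog[\abs\impl]}$ on the $\abs\impl$-side. I want to define $\abs{S}$ so that $\exec{\prog[\abs\impl], \abs{S}}$ is (up to trailing internal steps) exactly this image, and then use Lemma executions_to_traces together with the program-restriction equality $\execset\rst{\prog} = \pi(\execset)\rst{\prog}$ to conclude equality of program traces.

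**How to build $\abs{S}$ and prove determinism (the main work).**

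The natural definition: for an $\abs\impl$-side execution $\abs\ex$, let $\abs{S}(\abs\ex)$ be the set of transitions $\abs\trans$ such that $\abs\ex \cdot \abs\trans$ is a prefix of $\pi(\ex)$ for some $S$-consistent $\ex$. I would use the bijection from thm::det_rst on both sides, keyed on the shared program-restriction. Concretely: the program-restrictions $\exec{\prog[\impl],S}\rst{\prog}$ form a prefix-closed set of program executions; on the $\impl$-side, by thm::det_rst, executions ending in a program transition biject prefix-preservingly with these program-restrictions; and $\pi$ preserves program-restrictions, so $\pi$ composed with these bijections transports the scheduler structure across. The determinism of $\abs{S}$ should follow because $\pi$ maps each program-observable step to exactly one program-observable step (the fourth bullet of int_par), so any genuine branching in $\abs{S}$ can only occur on $\Progint$-labeled (program internal) transitions—exactly what the determinism definition permits. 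The one-to-one correspondence on $\prog.\lSigma\setminus\Gamma$ transitions is what guarantees that non-program-internal choices are forced.

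**The main obstacle.** I expect the hard part to be verifying that $\abs{S}$ is well-defined and deterministic, and that $\exec{\prog[\abs\impl],\abs{S}}\rst{\prog} = \pi(\execset)\rst{\prog} = \execset\rst{\prog}$ exactly—neither more nor fewer program traces. The subtlety: $\pi$ may insert trailing or intermediate $\abs\impl$-internal steps (from $\Objint$, which are not in $\prog.\lSigma$) that $S$ did not schedule, and I must ensure $\abs{S}$ schedules precisely these bridging internal steps without introducing spurious branching or getting stuck, while keeping the program-observable skeleton identical. Here the ``lands on program transitions'' and ``$\pi(\ex\cdot\trans)=\pi(\ex)$ for non-$\prog$ labels'' properties of $\pi$ are essential: they align the $\abs\impl$-side executions with program-transition boundaries so that the bijection of thm::det_rst applies and the program-restrictions match on the nose. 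Once well-definedness and determinism are in hand, the trace equality is a direct application of Lemma executions_to_traces to $\execset$ and $\exec{\prog[\abs\impl],\abs{S}}$, followed by the strong-refinement characterization lemma.
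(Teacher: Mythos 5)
Your proposal follows essentially the same route as the paper's proof: the same definition of $\abs\scheduler(\abs\ex)$ as the set of transitions extending $\abs\ex$ toward $\pi(\ex)$ for some $\scheduler$-consistent $\ex$, the same reliance on the properties of $\pi$ from the interface-parallel-composition theorem, the same use of the two prefix-preserving bijections keyed on program restrictions to establish $\exec{\prog[\impl],\scheduler}\rst{\prog}=\exec{\prog[\abs\impl],\abs\scheduler}\rst{\prog}$, and the same appeal to the $\Progint$-copying property to confine genuine branching of $\abs\scheduler$ to program-internal labels. The obstacles you flag (well-definedness, determinism, exact matching of the program-restriction sets) are precisely the ones the paper's auxiliary claims resolve, so the plan is sound.
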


\begin{proof}
	Let $\prog$ be a client program and $\scheduler:\exec{\prog[\impl]}\to\powerset{\prog[\impl].\lT}$ be a deterministic scheduler.
	From \cref{thm:int_par}, there exists a prefix preserving mapping  $\pi: \exec{\prog[\impl]} \to  \exec{\prog[\abs\impl]}$ such that:
	\begin{itemize}
		\item (respects $\prog$) $\ex\rst{\prog} = \pi(\ex)\rst{\prog}$ for every $\ex\in\exec{\prog}$.
		
		\item ($\prog$-ending) $\lab(\pi(\ex)[\last])\in\prog.\lSigma$ or $\pi(\ex)=\emptyword$ for every $\ex\in\exec{\prog[\impl]}$.
		
		\item ($\prog$-activated) $\pi(\ex\cdot\trans)=\pi(\ex)$ for every $\ex\in\exec{\prog[\impl]}$ and $\trans\in\prog[\impl].\lT$ such that $\ex\cdot\trans\in\exec{\prog[\impl]}$ and $\lab(\trans)\notin\prog.\lSigma$.
		
		\item ($\Progint$-copying)
		For every $\ex\in\exec{\prog[\impl]}$ and $\trans\in\prog[\impl].\lT$ such that $\ex\cdot\trans\in\exec{\prog[\impl]}$ and $\lab(\trans)\in\Progint$,
		there exists $\abs\trans\in\prog[\abs\impl].\lT$ such that $\abs\trans\rst{\prog}=\trans\rst{\prog}$ and  $\pi(\ex\cdot\trans)=\pi(\ex)\cdot\abs\trans$.
	\end{itemize}
	
	We define a scheduler $\abs\scheduler:\exec{\prog[\abs\impl]}\to\powerset{\prog[\abs\impl].\lT}$ as follows:
	$$\abs\scheduler(\abs\ex) \defeq\set{\abs\trans\in\abs\lts.\lT\st \exists \ex\in\exec{\prog[\impl],\scheduler}\ldotp \abs\ex\cdot\abs\trans\pref \pi(\ex)}$$
	The scheduler $\abs\scheduler$ allows transitions when they extend the current execution towards some execution obtained by applying $\pi$
	on some execution allowed by $\scheduler$.

	The following claim states the main useful property of $\abs\scheduler$:
	
	\begin{claim2}
		\label{claim:inverse_ex}
		For all $\abs\trans\in\abs\scheduler(\abs\ex)$, there exists an execution $\ex=\ex'\cdot\trans\in\exec{\prog[\impl],\scheduler}\setminus\set{\emptyword}$ such that the following hold:
		\begin{itemize}
			\item $\abs\ex\cdot\abs\trans\pref\pi(\ex)$
			
			\item $\lab(\trans)\in\prog.\lSigma$.
			
			\item $\abs\ex\rst{\prog}=\ex'\rst{\prog}$.
			
			\item If $\lab(\abs\trans)\in\prog.\lSigma$, then $\abs\ex\cdot\abs\trans=\pi(\ex)$.
		\end{itemize}
	\end{claim2}
	\begin{claimproof}
		By definition of $\abs\scheduler$ there exists an execution $\ex$ satisfying $\abs\ex\cdot\abs\trans\pref\pi(\ex)$.
		Let $\exmin$ be the minimal execution (\wrt $\pref$) satisfying $\abs\ex\cdot\abs\trans\pref\pi(\exmin)$.
		Since $\pi$ is $\prog$-activated,
		it must be the case that $\lab(\exmin[\last])\in\prog.\lSigma$.
		Denote $\exmin=\ex'\cdot\trans$.
		Let $\ex''\pref\ex'$ be the largest prefix of $\ex'$ (\wrt $\pref$) such that $\lab(\ex''[\last])\in\prog.\lSigma$.
		Then from $\pi$ being $\prog$-activated we have $\pi(\ex')=\pi(\ex'')$ and from minimality of $\exmin$ we have that
		$\pi(\ex'')\prefneq\abs\ex\cdot\abs\trans\pref\pi(\exmin)$.
		Removing the transition $\abs\trans$ we obtain that $\pi(\ex')=\pi(\ex'')\pref\abs\ex\prefneq\pi(\exmin)$.
		As $\lab(\pi(\exmin)[\last])\in\prog.\lSigma$,
		restricting to program transitions we deduce
		$\pi(\ex')\rst{\prog}\pref\abs\ex\rst{\prog}\prefneq\pi(\exmin)\rst{\prog}$.
		Using the fact that $\pi$ respects $\prog$, we obtain:
		$\ex'\rst{\prog}\pref\abs\ex\rst{\prog}\prefneq\exmin\rst{\prog}$.
		As $\exmin\rst{\prog}=\ex'\rst{\prog}\cdot\trans$ and $\lab(\trans)\in\prog.\lSigma$
		we get overall that $\abs\ex\rst{\prog}=\ex'\rst{\prog}$.
		
		Assume now additionally that $\abs\trans\in\prog.\lSigma$ and
		assume for contradiction $\abs\ex\cdot\abs\trans\neq\pi(\ex)$.
		Returning to the inequality
		$\pi(\ex')\prefneq\abs\ex\cdot\abs\trans\pref\pi(\exmin)$
		we deduce that
		$\pi(\ex')\prefneq\abs\ex\cdot\abs\trans\prefneq\pi(\exmin)$.
		All executions here end in a program transitions and thus $\prefneq$ is maintained when restricting to these transitions and similarly to above we can deduce
		$\ex'\rst{\prog}\prefneq\abs\ex\rst{\prog}\cdot\abs\trans\prefneq\exmin\rst{\prog}$,
		contradicting the fact that $\exmin\rst{\prog}=\ex'\rst{\prog}\cdot\trans$.
	\end{claimproof}
	
	When restricting $\pi$'s domain and codomain to executions ending in program transitions, $\pi$ is a bijection:
	
	\begin{claim2}
		\label{claim:pi_biject}
		$\pi$ is a bijection from $\set{\emptyword}\cup\set{\ex\in\exec{\prog[\impl],\scheduler}\st\lab(\ex[\last])\in\prog.\lSigma}$ to $\set{\emptyword}\cup\set{\abs\ex\in\exec{\prog[\abs\impl],\abs\scheduler}\st\lab(\abs\ex[\last])\in\prog.\lSigma}$.
	\end{claim2}
	\begin{claimproof}
		As $\pi(\emptyword)=\emptyword$,
		it suffices to show $\pi$ is a bijection from $\set{\ex\in\exec{\prog[\impl],\scheduler}\st\lab(\ex[\last])\in\prog.\lSigma}$ to $\set{\abs\ex\in\exec{\prog[\abs\impl],\abs\scheduler}\st\lab(\abs\ex[\last])\in\prog.\lSigma}$.
		Indeed, we have:
		\begin{itemize}
			\item The codomain is valid:
			
			Let $\ex\in\set{\ex\in\exec{\prog[\impl],\scheduler}\st\lab(\ex[\last])\in\prog.\lSigma}$.
			Since $\pi$ is $(\prog.\lSigma)$-ending,
			$\lab(\pi(\ex)[\last])\in\prog.\lSigma$,
			and it remains to show $\pi(\ex)$ is consistent with $\abs\scheduler$.
			Indeed, let $1\leq j\leq\size{\pi(\ex)}$.
			We have $\pi(\ex)\range{1}{j-1} \cdot \pi(\ex)[j] \pref\pi(\ex)$,
			and so $\pi(\ex)[j] \in \abs\scheduler(\pi(\ex)\range{1}{j-1})$.
			
			\item $\pi$ is injective:
			
			Let $\ex_1,\ex_2\in \set{\ex\in\exec{\prog[\impl],\scheduler}\st\lab(\ex[\last])\in\prog.\lSigma}$
			be two traces such that $\pi(\ex_1)=\pi(\ex_2)$.
			Then $\pi(\ex_1)\rst{\prog}=\pi(\ex_2)\rst{\prog}$.
			Since $\pi$ respects $\prog$
			we have that $\ex_j\rst{\prog}=\pi(\ex_j)\rst{\prog}$.
			Overall we obtain that	$\progrst(\ex_1)=\ex_1\rst{\prog}=\pi(\ex_1)\rst{\prog}=\pi(\ex_2)\rst{\prog}=\ex_2\rst{\prog}=\progrst(\ex_2)$,
			and thus $\ex_1=\ex_2$ follows from the injectivity of $\progrst$, proven in \cref{thm::det_rst}.
			
			\item $\pi$ is surjective:
			
			Let $\abs\ex\in\set{\abs\ex\in\exec{\prog[\abs\impl],\abs\scheduler}\st\lab(\abs\ex[\last])\in\prog.\lSigma}$.
			Then $\abs\ex[\last]\in\abs\scheduler(\abs\ex\range{1}{\size{\abs\ex}-1})$,
			and applying \cref{claim:inverse_ex} (using $\lab(\abs\ex[\last])\in\prog.\lSigma$) we obtain an execution $\ex\in\set{\ex\in\exec{\prog[\impl],\scheduler}\st\lab(\ex[\last])\in\prog.\lSigma}$ such that $\pi(\ex)=\abs\ex$.\qedhere
		\end{itemize}
	\end{claimproof}

	Next, we show $\exec{\prog[\impl],\scheduler}\rst{\prog}= \exec{\prog[\abs\impl],\abs\scheduler}\rst{\prog}$.
	This is done by establishing a bijection $f:\exec{\prog[\impl],\scheduler}\rst{\prog}\to \exec{\prog[\abs\impl],\abs\scheduler}\rst{\prog}$,
	and then showing it is the identity function.
	
	For this purpose, using \cref{thm::det_rst} we have that the functions
	\begin{align*}
		\progrst:&\set{\emptyword}\cup\set{\ex\in\exec{\prog[\impl],\scheduler}\st\lab(\ex[\last])\in\prog.\lSigma}\to \exec{\prog[\impl],\scheduler}\rst{\prog} \\
		\abs\progrst:&\set{\emptyword}\cup\set{\abs\ex\in\exec{\prog[\abs\impl],\abs\scheduler}\st\lab(\abs\ex[\last])\in\prog.\lSigma}\to \exec{\prog[\abs\impl],\abs\scheduler}\rst{\prog}
	\end{align*}
	defined by $p(\ex)=\ex\rst{\prog}$ and $\abs p(\abs\ex)=\abs\ex\rst{\prog}$
	are bijections.
	Applying \ref{claim:pi_biject} we obtain that the function
	$f:\exec{\prog[\impl],\scheduler}\rst{\prog}\to\exec{\prog[\abs\impl],\abs\scheduler}\rst{\prog}$
	defined by $f(\progex)=\abs{\progrst}(\pi(\progrst^{-1}(\progex)))$
	is a bijection.
	Moreover, using the fact that $\pi$ respects $\prog$ we obtain for every $\progex\in\exec{\prog[\impl],\scheduler}\rst{\prog}$:
	\begin{equation*}
		{f(\progex)}
		={\abs{\progrst}(\pi(\progrst^{-1}(\progex)))}
		={\pi(\progrst^{-1}(\progex))\rst{\prog}}
		={\progrst^{-1}(\progex)\rst{\prog}}
		= {\progrst(\progrst^{-1}(\progex))}={\progex}
	\end{equation*}
	
	We deduce $\exec{\prog[\impl],\scheduler}\rst{\prog}= \exec{\prog[\abs\impl],\abs\scheduler}\rst{\prog}$.
	Applying \cref{lem:executions_to_traces} this implies $\traces{\prog[\impl],\scheduler}\rst{\prog.\lSigma} = \traces{\prog[\abs\impl],\abs\scheduler}\rst{\prog.\lSigma}$.
	
	Now, we show that $\abs\scheduler$ is deterministic.
	It suffices to prove that
	for every $\abs\trans_1,\abs\trans_2\in\abs\scheduler(\abs\ex)$,
	either $\abs\trans_1=\abs\trans_2$ or $\lab(\abs\trans_1),\lab(\abs\trans_2)\in\Progint$.
	
	Applying \cref{claim:inverse_ex} to both $\abs\trans_1$ and $\abs\trans_2$ we obtain executions $\ex_1=\ex_1'\cdot\trans_1$,$\ex_2=\ex_2'\cdot\trans_2$ such that
	$\ex_1,\ex_2\in\exec{\prog[\impl],\scheduler}\setminus\set{\emptyword}$,
	$\abs\ex\cdot\abs\trans_1\pref\pi(\ex_1)$, $\abs\ex\cdot\abs\trans_2\pref\pi(\ex_2)$, $\lab(\trans_1),\lab(\trans_2)\in\prog.\lSigma$,
	and $\abs\ex\rst{\prog}=\ex'_1\rst{\prog}=\ex'_2\rst{\prog}$.
	We show that $\ex'_1=\ex'_2$.
	From $\ex'_1\rst{\prog}=\ex'_2\rst{\prog}$ we obtain
	using \cref{lem:det_rst} that \wlg $\ex'_1\pref\ex'_2$.
	Assume for contradiction $\ex'_1\prefneq\ex'_2$, and let $\trans'_2=\ex'_2[\size{\ex'_1}+1]$.
	From $\ex'_1\rst{\prog}=\ex'_2\rst{\prog}$ we get that $\lab(\trans'_2)\notin\prog.\lSigma$,
	and thus $\trans_1\neq\trans'_2$.
	Therefore from $\trans_1,\trans'_2\in\scheduler(\ex'_1)$ and from determinism of $\scheduler$
	we get that $\lab(\trans'_2)\in\Progint$, a contradiction.
	Hence it must be the case that $\ex'_1=\ex'_2$.	
	Therefore, $\trans_1,\trans_2\in\scheduler(\ex'_1)=\scheduler(\ex'_2)$.
	From determinism of $\scheduler$,
	this means either $\trans_1=\trans_2$ or $\trans_1,\trans_2\in\Progint$.
	In the first case, we obtain that $\ex_1=\ex_2$ and from $\abs\ex\cdot\abs\trans_1\pref\pi(\ex_1)$ and $\abs\ex\cdot\abs\trans_2\pref\pi(\ex_2)$
	we get that $\abs\trans_1=\abs\trans_2$.
	In the second case, from $\pi$ being $\Progint$-copying we obtain that
	$\pi(\ex_1)=\pi(\ex'_1)\cdot{\abs\trans_1}'$ and $\pi(\ex_2)=\pi(\ex'_2)\cdot{\abs\trans_2}'$ for transitions ${\abs\trans_1}',{\abs\trans_2}'$ such that $\lab({\abs\trans_1}')=\lab(\trans_1)$ and $\lab({\abs\trans_2}')=\lab(\trans_2)$.
	As $\ex'_1=\ex'_2$, the executions $\pi(\ex_1)$ and $\pi(\ex_2)$ only differ in the last transition.
	This is only possible if either $\abs\trans_1=\abs\trans_2$,
	or $\abs\ex\cdot\abs\trans_1=\pi(\ex_1)$ and $\abs\ex\cdot\abs\trans_2=\pi(\ex_2)$.
	The latter option implies $\abs\trans_1={\abs\trans_1}'$ and $\abs\trans_2={\abs\trans_2}'$,
	which implies that $\lab(\abs\trans_1)=\lab({\abs\trans_1}')=\lab(\trans_1)\in\Progint$ and $\lab(\abs\trans_2)=\lab({\abs\trans_2}')=\lab(\trans_2)\in\Progint$,
	as required.
\end{proof}

We proceed now with the converse:
assuming strong observational refinement, we construct a forward simulation.
This is done by first applying strong observational refinement to a ``most-general client'' program \mgc.
In addition to enabling arbitrary invocations of object methods, as one would expect from a most general client, the client program \mgc\ also tosses coins (called ``guesses''), that are marked by arbitrary transitions of the implementation.
Intuitively, this lets us capture any non-deterministic behavior of the implementation as a hyperproperty of \mgc.

Formally, given an implementation $\impl$ of an object $\obj$ we define
$\mgc(\impl,\obj)$ by:
\begin{itemize}
	\item $\mgc(\impl,\obj).\lQ = \set{\mgcinit}$  for an arbitrary state $\mgcinit$.
	\item $\mgc(\impl,\obj).\linit=\mgcinit$.
	\item $\mgc(\impl,\obj).\lSigma = \Inv\Res(\obj)\cup \set{\guess(\act)\st\act\in\impl.\lT}$. (\ie $\Progint=\set{\guess(\act)\st\act\in\impl.\lT}$)
	\item $\mgc(\impl,\obj).\lT=\set{\tup{\mgcinit,\sigma,\mgcinit}\st\sigma\in\mgc(\impl,\obj).\lSigma}$.
\end{itemize}

\begin{theorem}
	\label{thm:abs_comp}
	Let $\impl,\abs\impl$ be implementations of an object $\obj$.
	If $\impl \srefine \abs\impl$, then $\impl \leqsim \abs\impl$.
\end{theorem}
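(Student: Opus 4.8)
The plan is to exploit the most-general client $\mgc(\impl,\obj)$, whose $\guess$ labels expose the internal nondeterministic choices of $\impl$ as \emph{program-observable} (program-internal) actions, and then to apply strong observational refinement to this single client to extract a forward simulation. Abbreviate $P = \mgc(\impl,\obj)$ and $\Gamma = \Inv\Res(\obj)$. Since $P.\lSigma = \Inv\Res(\obj)\cup\set{\guess(\trans)\st\trans\in\impl.\lT}$ and the guess labels are program-internal, they are not synchronization labels of $P[\impl]$: a guess step is taken by $P$ alone, independently of $\impl$, and a guess $\guess(\trans)$ records the exact implementation transition $\trans$ about to be executed. By \cref{lem:sim_pp} it suffices to build a prefix-preserving mapping $\pi:\exec{\impl}\to\exec{\abs\impl}$ that respects $\Gamma$, since this is equivalent to $\impl\leqsim[\Gamma]\abs\impl$, i.e. $\impl\leqsim\abs\impl$.

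First I would construct a deterministic scheduler $\scheduler$ on $P[\impl]$ whose consistent executions are exactly the executions of $\impl$, each annotated so that every $\impl$-transition $\trans$ is immediately preceded by the guess $\guess(\trans)$. Concretely, at the start and after each announced transition completes, $\scheduler$ offers all guesses $\set{\guess(\trans)\st\trans\text{ enabled from the current }\impl\text{-state}}$; these are all program-internal, so offering several at once is permitted for a deterministic scheduler. After a guess $\guess(\trans)$ has been chosen, $\scheduler$ permits only the single transition $\trans$ to be executed next (for interface transitions this synchronizes with $P$, which always enables the matching invocation/response). Hence every non-guess choice set of $\scheduler$ is a singleton, so $\scheduler$ is deterministic, and the guesses appearing in a consistent execution encode $\ex\rst{\impl}$ verbatim. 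Because $\guess(\trans)$ carries $\trans$ and each $\impl$-transition has its own preceding guess, restriction to $P.\lSigma$ gives a prefix-preserving bijection between $\exec{\impl}$ and $\traces{P[\impl],\scheduler}\rst{P.\lSigma}$ (using \cref{lem:det_rst,thm::det_rst} to pass between consistent executions and their program restrictions).

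Next I would apply $\impl\srefine\abs\impl$ to the program $P$ via the characterization in \cref{lem:strong_refinement}: there is a deterministic scheduler $\abs\scheduler$ on $P[\abs\impl]$ with $\traces{P[\impl],\scheduler}\rst{P.\lSigma}=\traces{P[\abs\impl],\abs\scheduler}\rst{P.\lSigma}$. Since both $\Gamma$ and the guess labels lie in $P.\lSigma$, each matched program trace records \emph{both} the interface actions and the full implementation execution through its guesses. Thus for every $\ex\in\exec{\impl}$ the trace encoding $\ex$ is realized by some $\abs\ex\in\exec{P[\abs\impl],\abs\scheduler}$ with the same restricted program trace; by determinism of $\abs\scheduler$ and \cref{thm::det_rst} such $\abs\ex$ is unique up to prefix, so taking its canonical representative ending in a program transition and setting $\pi(\ex)\defeq\abs\ex\rst{\abs\impl}\in\exec{\abs\impl}$ is well defined. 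Prefix preservation follows because the encoding and all the bijections involved preserve prefixes, and $\pi$ respects $\Gamma$ because the interface actions are recorded identically in the two matched traces. Then \cref{lem:sim_pp} yields $\impl\leqsim\abs\impl$.

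The main obstacle is the joint construction of $\scheduler$ and the claim that its guesses faithfully and \emph{injectively} encode the implementation execution inside the program trace while keeping $\scheduler$ deterministic: one must check that the only branching offered by $\scheduler$ is over program-internal guesses (so the determinism clause is satisfied) and that, after transferring to $\abs\scheduler$ through strong observational refinement, the guess sequence recovered from a matched program trace pins down a unique $\ex\rst{\impl}$ together with a single $\abs\impl$-execution (up to prefix). Establishing that this correspondence is \emph{functional and prefix-preserving}---rather than merely an equality of trace sets---is precisely the crux, and is exactly where the guess mechanism, instead of plain trace inclusion, is what upgrades hyperproperty-preserving refinement into a genuine forward simulation.
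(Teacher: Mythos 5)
Your proposal is correct and follows essentially the same route as the paper's proof: the same most-general client with transition-level guesses, the same alternating deterministic scheduler, the transfer to $\abs\scheduler$ via the trace-set characterization of strong observational refinement, the use of \cref{thm::det_rst} to invert the program-restriction bijection, and the definition $\pi(\ex)={\abs\progrst}^{-1}(\guess(\ex)\rst{\prog})\rst{\abs\impl}$ fed into \cref{lem:sim_pp}. No substantive differences.
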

\begin{proof}[Proof Sketch.]
	Assume $\impl \srefine \abs\impl$ and let $\prog=\mgc(\impl,\obj)$.
	Consider the following scheduler $\scheduler$ for $\prog[\impl]$:
	\begin{equation*}
		\scheduler(\ex)=\begin{cases}
			\set{\tup{\tup{\prestate(\trans),\mgcinit},\guess(\trans),\tup{\prestate(\trans),\mgcinit}}\st \trans\in\impl.\lT} & \ex=\emptyword \lor \lab(\ex[\last])\in\impl.\lSigma\\
			\set{\tup{\tup{\prestate(\trans),\mgcinit},\lab(\trans),\tup{\poststate(\trans),\mgcinit}}} & \lab(\ex[\last]) = \guess(\trans)
		\end{cases}
	\end{equation*}
	This scheduler is clearly deterministic.
	Using this scheduler,
	we get traces of $\prog[\impl]$ where each implementation transition (including internal implementation transitions)
	is preceded by an internal program transition ``guessing'' it.
	Formally, for an execution $\ex\in\exec{\impl}$ let $\guess(\ex)$ be an execution where each transition $\ex[i]$ is replaced with a pair of transitions $\tup{\tup{\prestate(\ex[i]),\mgcinit},\guess(\ex[i]),\tup{\prestate(\ex[i]),\mgcinit}},\tup{\tup{\prestate(\ex[i]),\mgcinit},\lab(\ex[i]),\tup{\poststate(\ex[i]),\mgcinit}}$.
	It is straightforward to verify that $\guess(\ex)\in \exec{\prog[\impl],\scheduler}$.
	Now, by our assumption, there exists a deterministic scheduler $\abs\scheduler$
	such that $\traces{\prog[\impl],\scheduler}\rst{\prog.\lSigma} = \traces{\prog[\abs\impl],\abs\scheduler}\rst{\prog.\lSigma}$.
	As every trace of $\prog$ induces a unique execution of $\prog$ by adding pre-state and post-state $\mgcinit$ to each transition label, we deduce that $\exec{\prog[\impl],\scheduler}\rst{\prog}=\exec{\prog[\abs\impl],\abs\scheduler}\rst{\prog}$.

	Using \cref{thm::det_rst}, there exists a bijection
	\[\abs\progrst:\set{\emptyword}\cup\set{\abs\ex\in\exec{\prog[\abs\impl],\abs\scheduler}\st\lab(\abs\ex[\last])\in\prog.\lSigma}\to \exec{\prog[\abs\impl],\abs\scheduler}\rst{\prog}\]
	defined by $\abs\progrst(\abs\ex)=\abs\ex\rst{\prog}$, such that ${\abs\progrst}^{-1}$ is prefix preserving.
	
	We define a mapping $\pi:\exec{\impl}\to\exec{\abs\impl}$ by
	$\pi(\ex) = {\abs\progrst}^{-1}(\guess(\ex)\rst{\prog})\rst{\abs\impl}$.
	First, observe that indeed $\pi(\ex) \in \exec{\abs\impl}$ for every $\ex\in\exec{\impl}$:
	\begin{align*}
		\ex\in\exec{\impl} &\implies
		\guess(\ex)\in \exec{\prog[\impl],\scheduler} \\
		&\implies\guess(\ex)\rst{\prog}\in \exec{\prog[\impl],\scheduler}\rst{\prog}=\exec{\prog[\abs\impl],\abs\scheduler}\rst{\prog}\\
		&\implies {\abs\progrst}^{-1}(\guess(\ex)\rst{\prog})\in\exec{\prog[\abs\impl],\abs\scheduler} \subseteq \exec{\prog[\abs\impl]} \\
		&\implies {\abs \progrst}^{-1}(\guess(\ex)\rst{\prog})\rst{\abs\impl}\in\exec{\abs\impl}
	\end{align*}
	The last implication is due to \cref{prop:linking}.
	
	Moreover, $\pi$ is prefix preserving since $\guess$ is prefix preserving,
	${\abs\progrst}^{-1}$ is prefix preserving,
	and restriction to an induced execution is prefix preserving.
	Finally, we note that $\traceof{\ex}\rst{\Inv\Res(\obj)}=\traceof{\pi(\ex)}\rst{\Inv\Res(\obj)}$ (\ie $\hs{\ex}=\hs{\pi(\ex)}$) for every $\ex\in\exec{\impl}$ as $\guess$ and restrictions to LTSs with an alphabet that includes $\Inv\Res(\obj)$ are all operations that respect the restriction to $\Inv\Res(\obj)$.
	Overall we obtain a prefix preserving mapping $\pi$ that respects $\Inv\Res(\obj)$ and due to \cref{lem:sim_pp} we have $\impl \leqsim \abs\impl$.
\end{proof}

\section{Complete Implementation for the Class $\classrel{\rel}{\obj,\seqspec}$}
\label{app:comp}

In this section we prove \cref{lem:closure_rel}.
Let $\obj=(\Method,\Val)$ be an object, $\seqspec$ a specification, and $\rel$ a preorder over sequences.
First we prove downward closure:

\begin{lemma}
	$\classrel{\rel}{\obj,\seqspec}$ is downward closed \wrt forward simulation.
\end{lemma}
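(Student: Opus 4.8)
The plan is to reduce downward closure to a one-line composition, by converting the forward simulation into a prefix-preserving, history-preserving map on executions and then precomposing the witnessing linearization mapping of the larger implementation with it. Concretely, suppose $\impl'\in\classrel{\rel}{\obj,\seqspec}$ and $\impl\leqsim\impl'$, that is, $\impl\leqsim[\Inv\Res(\obj)]\impl'$. Since $\impl'\in\classrel{\rel}{\obj,\seqspec}$, fix a witnessing linearization mapping $\linmap':\exec{\impl'}\to\seqspec$, so that $\hs{\ex'}\linleq\linmap'(\ex')$ for every $\ex'\in\exec{\impl'}$ and $\tup{\linmap'(\ex'_1),\linmap'(\ex'_2)}\in\rel$ whenever $\ex'_1\pref\ex'_2$. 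The goal is to construct a mapping for $\impl$ enjoying the same two properties.

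First I would invoke \cref{lem:sim_pp} on $\impl\leqsim[\Inv\Res(\obj)]\impl'$ to obtain a prefix-preserving mapping $\pi:\exec{\impl}\to\exec{\impl'}$ that respects $\Inv\Res(\obj)$. Respecting $\Inv\Res(\obj)$ means $\traceof{\ex}\rst{\Inv\Res(\obj)}=\traceof{\pi(\ex)}\rst{\Inv\Res(\obj)}$ for every $\ex\in\exec{\impl}$, which is precisely $\hs{\ex}=\hs{\pi(\ex)}$, since the induced history is the trace restricted to $\Inv\Res(\obj)$. Prefix preservation gives $\pi(\ex_1)\pref\pi(\ex_2)$ whenever $\ex_1\pref\ex_2$. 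I would then set $\linmap\defeq\linmap'\circ\pi:\exec{\impl}\to\seqspec$ and verify the two clauses. For the linearization clause, fix $\ex\in\exec{\impl}$; from $\hs{\ex}=\hs{\pi(\ex)}$ and the fact that $\linmap'$ is a linearization mapping for $\impl'$ we get $\hs{\ex}=\hs{\pi(\ex)}\linleq\linmap'(\pi(\ex))=\linmap(\ex)$, so $\linmap$ is a linearization mapping for $\impl$. For the $\rel$-monotonicity clause, if $\ex_1\pref\ex_2$ then prefix preservation yields $\pi(\ex_1)\pref\pi(\ex_2)$, and the property of $\linmap'$ gives $\tup{\linmap'(\pi(\ex_1)),\linmap'(\pi(\ex_2))}\in\rel$, i.e.\ $\tup{\linmap(\ex_1),\linmap(\ex_2)}\in\rel$. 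Hence $\linmap$ witnesses $\impl\in\classrel{\rel}{\obj,\seqspec}$.

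There is no genuine obstacle here: the argument is essentially the composition displayed above, once \cref{lem:sim_pp} supplies $\pi$. The only point needing a sentence of care is the identification of ``respects $\Inv\Res(\obj)$'' with history preservation $\hs{\ex}=\hs{\pi(\ex)}$, for which I would simply cite that the implementation's history is the trace restricted to $\Inv\Res(\obj)$ (as already used in the treatment of the most-general client). I would also note that this direction uses nothing about $\rel$ beyond it being a relation; reflexivity and transitivity of the preorder become relevant only for the other half of \cref{lem:closure_rel}, namely the construction of the complete implementation for $\classrel{\rel}{\obj,\seqspec}$.
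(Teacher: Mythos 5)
Your proof is correct and follows essentially the same route as the paper: both invoke \cref{lem:sim_pp} to extract a prefix-preserving, history-preserving mapping $\pi$ on executions from the forward simulation, and then precompose the given linearization mapping with $\pi$, checking the linearization and $\rel$-monotonicity clauses exactly as you do. The only difference is cosmetic naming of which implementation is primed.
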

\begin{proof}
	Let $\impl\in\classrel{\rel}{\obj,\seqspec}$ be an implementation and let $\impl'$ be another implementation such that $\impl'\leqsim\impl$.
	From $\impl\in\classrel{\rel}{\obj,\seqspec}$ there exists a linearization mapping $\linmap:\exec{\impl}\to\seqspec$ such that $\tup{\linmap(\ex_1),\linmap(\ex_2)}\in\rel$ whenever $\ex_1\pref\ex_2$.
	From $\impl'\leqsim\impl$ and \cref{lem:sim_pp} there exists a mapping $\pi:\exec{\impl'}\to\exec{\impl}$ such that $\hs{\ex}= \hs{\pi(\ex)}$ for every $\ex\in\exec{\impl'}$
	and $\pi(\ex_1)\pref\pi(\ex_2)$ for all $\ex_1,\ex_2\in\exec{\impl'}$ such that $\ex_1\pref\ex_2$.
	
	Define a mapping $\linmap':\exec{\impl'}\to\seqspec$ by $\linmap'=\linmap\circ\pi$.
	We obtain that $\linmap'$ is a linearization mapping since $\hs{\ex}=\hs{\pi(\ex)}\linleq\linmap(\pi(\ex))=\linmap'(\ex)$ for all $\ex\in\exec{\impl'}$,
	and $\ex_1\pref\ex_2\implies\pi(\ex_1)\pref\pi(\ex_2)\implies\tup{\linmap(\pi(\ex_1)),\linmap(\pi(\ex_1)}\in\rel\implies \tup{\linmap'(\ex_1),\linmap'(\ex_2)}\in\rel$ for all $\ex_1,\ex_2\in\exec{\impl'}$ such that $\ex_1\pref\ex_2$.
\end{proof}

Next, we define an $\classrel{\rel}{\obj,\seqspec}$-complete object $\relcomp{\rel}{\obj,\seqspec}$,
as follows:

	\begin{itemize}
		\item $\relcomp{\rel}{\obj,\seqspec}.\lQ = \Inv\Res(\obj)^*\times\seqspec \times (\Tid \to \Id^*) \\ \text{\qquad \qquad \qquad \qquad } \times (\Tid\to \set{\main,\mbegin{\method}{\val},\mpend{\method},\mend{\method}{\val}\st\method\in\Method,\val\in\Val})$
		
		(An history, a valid sequential history, used operation IDs for each thread, and a local state of each thread: $\main$ denotes no pending method, $\mbegin{\method}{\val}$ denotes that a method of type $\method$ with argument $\val$ was invoked with the invocation not yet ``logged'' in the history, $\mpend{\method}$ denotes that a method of type $\method$ is pending, and  $\mend{\method}{\val}$ denotes that a method of type $\method$ with return value $\val$ had its response ``logged'' in the history and has not responded yet.)
		
		\item $\relcomp{\rel}{\obj,\seqspec}.\lSigma = \Inv\Res(\obj)\cup\set{\objact\tup{\tid}\st\tid\in\Tid}$

		\item $\relcomp{\rel}{\obj,\seqspec}.\linit = \tup{\emptyword,\emptyword,\lambda \tid \ldotp \emptyword,\lambda \tid \ldotp \main}$
		\item The transitions are as follows:
		\begin{mathparpagebreakable}
			\inferrule[\invrule]
			{\inv = \invact\tup{\method,\val,\tid,\id} \\\\
				\Tmap(\tid)=\main \\ k\notin \Tmapid(p) \\\\
				\Tmapid'=\Tmapid[\tid \mapsto \Tmapid(\tid)\cdot\id]\\\\
				\Tmap'=\Tmap[\tid \mapsto \mbegin{\method}{\val}]}
			{\tup{\h,\seqh,\Tmapid,\Tmap} \asteplab{\inv}{}
				\tup{\h,\seqh,\Tmapid', \Tmap'}}
			\and
			\inferrule[\loginv]{\inv = \invact\tup{\method,\val,\tid,\id} \\\\
				\Tmap(\tid)=\mbegin{\method}{\val} \\ \id = \Tmapid(\tid)[\last] \\\\
				\h\cdot\inv \linleq \seqh' \\ \tup{\seqh,\seqh'}\in\rel \\\\
				\h' = \h\cdot\inv \\ \Tmap'=\Tmap[\tid \mapsto \mpend{\method}]}
			{\tup{\h,\seqh,\Tmapid,\Tmap} \asteplab{\objact\tup{\tid}}{}
				\tup{\h',\seqh',\Tmapid,\Tmap'}}
			\and
			\inferrule[\logres]{
				\res = \resact\tup{\method,\val,\tid,\id} \\\\	
				\Tmap(\tid)=\mpend{\method} \\ \id = \Tmapid(\tid)[\last]  \\\\
				\h\cdot\res \linleq \seqh' \\ \tup{\seqh,\seqh'}\in\rel \\\\
				\h'=\h\cdot\res \\ \Tmap'=\Tmap[\tid \mapsto \mend{\method}{\val}]}
			{\tup{\h,\seqh,\Tmapid,\Tmap} \asteplab{\objact\tup{\tid}}{}
				\tup{\h',\seqh',\Tmapid,\Tmap'}}
			\and
			\inferrule[\resrule]{\res = \resact\tup{\method,\val,\tid,\id} \\\\
				\Tmap(\tid)=\mend{\method}{\val} \\ \id = \Tmapid(\tid)[\last] \\\\
				\Tmap'=\Tmap[\tid \mapsto \main]}
			{\tup{\h,\seqh,\Tmapid,\Tmap} \asteplab{\res}{}
				\tup{\h,\seqh,\Tmapid,\Tmap'}}
		\end{mathparpagebreakable}
	\end{itemize}

\begin{theorem}
	$\relcomp{\rel}{\obj,\seqspec}$ is $\classrel{\rel}{\obj,\seqspec}$-complete
\end{theorem}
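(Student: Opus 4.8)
The plan is to verify the two conditions of \cref{def:comp} for the downward-closed class $\classrel{\rel}{\obj,\seqspec}$ (downward-closedness was established just above): that $\relcomp{\rel}{\obj,\seqspec}$ lies in $\classrel{\rel}{\obj,\seqspec}$ (\emph{inclusion}) and that it is $\classrel{\rel}{\obj,\seqspec}$-hard, i.e.\ $\impl\leqsim\relcomp{\rel}{\obj,\seqspec}$ for every $\impl\in\classrel{\rel}{\obj,\seqspec}$. Before either direction I would record two invariants of reachable states $\tup{\h,\seqh,\Tmapid,\Tmap}$, each by a routine induction on the execution $\ex$ reaching them. The first is $\h\linleq\seqh$ (with $\seqh\in\seqspec$ built into the state space): it holds initially and is re-established by \loginv\ and \logres, which explicitly require $\h\cdot\inv\linleq\seqh'$ resp.\ $\h\cdot\res\linleq\seqh'$, and is untouched by \invrule\ and \resrule. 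The second is \emph{structural}: the logged history $\h$ and the induced history $\hs{\ex}$ coincide on every operation except the latest of each thread, whose status is dictated by $\Tmap(\tid)$. Concretely, a thread in state $\mbegin{\method}{\val}$ contributes a pending invocation present in $\hs{\ex}$ but absent from $\h$; a thread in $\mend{\method}{\val}$ contributes an operation completed (with the recorded value $\val$) in $\h$ but only invoked, hence pending, in $\hs{\ex}$; and threads in $\mpend{\method}$ or $\main$ agree. Thus $\hs{\ex}$ is obtained from $\h$ by \emph{adding} some pending invocations and by \emph{deleting} the responses of certain completed operations.

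For inclusion I would take $\linmap(\ex)$ to be the $\seqh$-component of the final state of $\ex$; it lands in $\seqspec$ by construction. The $\rel$-monotonicity is immediate: along $\ex_1\pref\ex_2$ the component $\seqh$ changes only at \loginv/\logres\ steps, each guarded by $\tup{\seqh,\seqh'}\in\rel$, while \invrule/\resrule\ leave it fixed, so reflexivity and transitivity of the preorder $\rel$ give $\tup{\linmap(\ex_1),\linmap(\ex_2)}\in\rel$. It remains to upgrade the clean invariant $\h\linleq\seqh$ to $\hs{\ex}\linleq\seqh$. Using the structural invariant, I would start from a witness $\ressq$ for $\h\linleq\seqh$ and handle the two discrepancies separately. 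Deleting a response (the $\mend$ case) and re-appending it at the very end of $\ressq$ with its recorded value keeps the operation set fixed and can only remove real-time edges (the operation loses its outgoing edges), so the real-time order of the completed operations only shrinks and stays inside $<_{\seqh}$. Adding the extra pending invocations (the $\mbegin$ case) is harmless because they are discarded by $\completed$ and create no edges among completed operations, exactly the reasoning behind \cref{obs:lin} and \cref{lem:non_response}. This yields $\hs{\ex}\linleq\linmap(\ex)$, so $\linmap$ is a linearization mapping of the required form and $\relcomp{\rel}{\obj,\seqspec}\in\classrel{\rel}{\obj,\seqspec}$.

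For hardness, fix $\impl\in\classrel{\rel}{\obj,\seqspec}$ with a linearization mapping $\linmap$ satisfying $\tup{\linmap(\ex_1),\linmap(\ex_2)}\in\rel$ whenever $\ex_1\pref\ex_2$. By \cref{lem:sim_pp} it suffices to build a prefix-preserving $\pi\colon\exec{\impl}\to\exec{\relcomp{\rel}{\obj,\seqspec}}$ that respects $\Inv\Res(\obj)$. I would define $\pi$ by scanning $\ex$ left to right and translating each transition: an invocation is mapped to \invrule\ followed immediately by \loginv; a response is mapped to \logres\ followed by \resrule; an internal transition is mapped to no transition at all. The maintained invariant is that after a prefix $\ex'$ the simulating state has $\h=\hs{\ex'}$ and $\seqh=\linmap$ of the longest prefix of $\ex'$ ending in an invocation or response. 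The guards of \loginv\ and \logres\ then hold: the linearizability premise is exactly $\hs{\ex'}\linleq\linmap(\ex')$ (with $\h$ equal to the relevant induced history), and the $\rel$-premise follows from the $\rel$-condition on $\linmap$ applied to the two nested prefixes. The decisive point is that responses are logged \emph{lazily}, only at the moment $\impl$ emits them, so the value recorded in the $\mend{\method}{\val}$ state is the value $\impl$ actually returns and \resrule\ is enabled with the correct label. Prefix-preservation and respect for $\Inv\Res(\obj)$ are then immediate from the construction.

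The main obstacle is the inclusion direction. Hardness is essentially forced once the linearization mapping of $\impl$ is in hand, its only subtlety being the lazy logging of responses that keeps return values consistent. Inclusion, by contrast, requires the careful bookkeeping of the gap between $\h$ and $\hs{\ex}$: because invocations are logged \emph{after} they are emitted but responses are logged \emph{before} they are emitted, $\hs{\ex}$ simultaneously leads and lags $\h$, and turning $\h\linleq\seqh$ into $\hs{\ex}\linleq\seqh$ is precisely the real-time-order weakening argument sketched above.
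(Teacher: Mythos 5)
Your proposal is correct, and the hardness half coincides with the paper's argument almost verbatim: the paper likewise precomposes the given linearization mapping with the map sending each execution to its longest prefix ending in an invocation or response, and then simulates an invocation by \invrule\ followed by \loginv\ and a response by \logres\ followed by \resrule, so that the simulating states always have all threads in $\main$ or $\mpend{\method}$. The inclusion half is where you diverge. The paper does not reason about arbitrary reachable states at all: it restricts attention to the set $\execset$ of executions in which every thread is in $\main$ or a pending state, observes that there $\h_\ex=\hs{\ex}$ holds \emph{exactly} (so the state invariant $\h\linleq\seqh$ transfers for free), and then builds a prefix-preserving normalization $g$ that moves every \loginv\ to just after its invocation and every \logres\ to just before its response, defining $\linmap(\ex)\defeq\seqh_{g(\ex)}$ and using $\hs{\ex}=\hs{g(\ex)}$. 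You instead take $\linmap(\ex)\defeq\seqh_\ex$ directly and repair the linearization witness by hand: dropping the un-emitted responses of $\mend$-threads to the end of $\ressq$ and discarding the un-logged invocations of $\mbegin$-threads. Both routes work. The paper's normalization buys a one-line transfer of the invariant at the cost of defining and verifying $g$; your route avoids $g$ but requires the real-time-order argument you sketch, which is slightly more delicate than "the operation loses its outgoing edges" suggests, since the relative order of actions common to $\h$ and $\hs{\ex}$ can genuinely differ---what saves you is exactly the asymmetry you identify in your last paragraph (invocations appear earlier and responses later in $\hs{\ex}$ than in $\h$, both of which only shrink the real-time order of completed operations).
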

\begin{proof}
	
	For inclusion $\relcomp{\rel}{\obj,\seqspec}\in\classrel{\rel}{\obj,\seqspec}$,
	let $\tup{\h_\ex,\seqh_\ex,\Tmapid_\ex,\Tmap_\ex}$ denote the state reached by an execution $\ex\in\exec{\relcomp{\rel}{\obj,\seqspec}}$ (\ie the initial state if $\ex=\emptyword$ and otherwise $\poststate(\ex)$).
	By transitivity of $\rel$ we have that $\tup{\seqh_{\ex_1},\seqh_{\ex_2}}\in\rel$ for all $\ex_1,\ex_2\in\exec{\relcomp{\rel}{\obj,\seqspec}}$ such that $\ex_1\pref\ex_2$.
	
	Let $\execset$ be the set of executions such that every thread is either outside a method or in a ``pending state'' in the state they reach.
	Formally: $$\execset\defeq\set{\ex\in\exec{\relcomp{\rel}{\obj,\seqspec}}\st\forall\tid\in\Tid\ldotp\Tmap_\ex(\tid)\in\set{\main}\cup\set{\mpend{\method}\st\method\in\Method.}}$$
	
	It is straightforward to verify that $\h_\ex=\hs{\ex}$ and $\h_\ex\linleq\seqh_\ex$ for all $\ex\in\execset$.
	
	Next, we claim there exists a prefix preserving mapping $g:\exec{\relcomp{\rel}{\obj,\seqspec}}\to \execset$
	such that $\hs{\ex}=\hs{g(\ex)}$ for all $\ex\in \exec{\relcomp{\rel}{\obj,\seqspec}}$.
	
	Indeed, given an execution $\ex\in\exec{\relcomp{\rel}{\obj,\seqspec}}$
	and a transition $\trans$ such that $\ex\cdot\trans\in\exec{\relcomp{\rel}{\obj,\seqspec}}$ and $\lab(\trans)\in\Inv\Res(\obj)$,
	let $\logtrans$ be a transition such that:
	\begin{itemize}
		\item If $\lab(\trans)\in\Inv(\obj)$, then $\logtrans$ is the transition using the rule \loginv\ that becomes available following the transition $\trans$ in the execution $\ex\cdot\trans$.
	
		\item If $\lab(\trans)\in\Res(\obj)$, then $\logtrans$ is the transition using the rule \logres\ such that $\trans$ becomes available following the transition $\logtrans$ in the execution $\ex\cdot\logtrans$.
	\end{itemize}
	We define $g$ inductively by $g(\emptyword)=\emptyword$ and:
	\begin{equation*}
		g(\ex\cdot \trans)=\begin{cases}
			g(\ex) & \lab(\trans)=\objact\tup{\tid} \\
			g(\ex)\cdot \trans\cdot\logtrans & \lab(\trans)\in\Inv(\obj) \\
			g(\ex)\cdot \logtrans\cdot\trans & \lab(\trans)\in\Res(\obj)
		\end{cases}
	\end{equation*}
	
	It is straightforward to verify that $g$ returns executions in $\execset$ and that it satisfies all required properties.
	
	Now,
	we define a linearization mapping $\linmap:\exec{\relcomp{\rel}{\obj,\seqspec}}\to \seqspec$ by $\linmap(\ex)=\seqh_{g(\ex)}$.
	We obtain that $\hs{\ex}=\hs{g(\ex)}=\h_{g(\ex)}\linleq\seqh_{g(\ex)}=\linmap(\ex)$ for all $\ex\in\exec{\relcomp{\rel}{\obj,\seqspec}}$,
	and that $g(\ex_1)\pref g(\ex_2)$ and thus $\tup{\linmap(\ex_1),\linmap(\ex_2)}=\tup{\seqh_{g(\ex_1)},\seqh_{g(\ex_2)}}\in\rel$ for all $\ex_1,\ex_2\in\exec{\relcomp{\rel}{\obj,\seqspec}}$ such that $\ex_1\pref\ex_2$.
	
	For hardness, let $\impl\in\classrel{\rel}{\obj,\seqspec}$ be an implementation, and $\linmap:\exec{\impl}\to\seqspec$ a linearization mapping such that $\tup{\linmap(\ex_1),\linmap(\ex_2)}\in\rel$ whenever $\ex_1\pref\ex_2$.
	
	Let $d:\exec{\impl}\to\exec{\impl}$ be a mapping taking each $\ex\in\exec{\impl}$ to its maximal prefix ending in an invocation or response transition,
	and denote $\linmap'=\linmap\circ d$.
	It is straightforward to verify that $\linmap':\exec{\impl}\to\seqspec$ is also a linearization mapping such that $\tup{\linmap'(\ex_1),\linmap'(\ex_2)}\in\rel$ whenever $\ex_1\pref\ex_2$,
	with the additional property that $\linmap(\ex)=\linmap(\ex\cdot\trans)$ whenever $\lab(\trans)\notin\Inv\Res(\obj)$.
	
	By \cref{prop:ghost} we have that $\impl\leqsim\ghost\impl$, and it suffices to prove that  $\ghost\impl\leqsim\relcomp{\rel}{\obj,\seqspec}$.
	For this we use the following simulation relation:
	\begin{equation*}
		\left\{\tup{\ex,\tup{\hs{\ex},\linmap'(\ex),\Tmapid^\ex,\Tmap^\ex}}\st\ex\in\exec{\impl}\right\}
	\end{equation*}
	where for each $\tid\in\Tid$, $\Tmapid^\ex(\tid)$ is the sequence of identifiers used for operations in $\hs{\ex}\rst{\tid}$,
	and $\Tmap^\ex(\tid)$ is $\mpend{\method}$ if $\tid$ is within a pending method $\method$ at the state reached by $\ex$, and $\main$ otherwise.
	
	A simple case analysis over $\lab(\trans)$ shows that given a pair of states $\tup{\ex,\tup{\hs{\ex},\linmap'(\ex),\Tmapid^\ex,\Tmap^\ex}}$ and a transition $\ex\asteplab{\lab(\trans)}{\ghost\impl}\ex\cdot\trans$,
	there exists a sequence of transitions\\ $\tup{\hs{\ex},\linmap'(\ex),\Tmapid^\ex,\Tmap^\ex}\bsteplab{\lab(\trans)}{\relcomp{\rel}{\obj,\seqspec}} \tup{\hs{\ex\cdot\trans},\linmap'(\ex\cdot\trans),\Tmapid^{\ex\cdot\trans},\Tmap^{\ex\cdot\trans}}$.
\end{proof}

\section{Definitions of Complete Implementations}
\label{app:ws_d_comp}

We present the formal definitions of the complete implementation's LTSs.
In both implementations, each thread's state corresponds to a location in the pseudocode,
and includes the values of shared and local variables.
Transitions represent parts of the code that are executed atomically.

We also prove that each implementation is included in the appropriate linearizability class.
Their hardness is proved in \cref{sec:hard}.

\subsection{Complete Implementation for Write Strong Registers: $\wsreg$}
\label{subsec:wslts}

Definition of $\wsreg$:

\begin{itemize}
	\item $\Pc_\ws=\set{\main,\wbegin,\wend,\rbegin,\rlisten,\rend}$ (locations in the pseudocode from \cref{wsreg})
	\item $\wsreg.\lQ = \Val \times (\Tid \to \Id^*) \times  (\Tid \to \Pc_\ws) \times (\Tid \to \powerset{\Val})  \times (\Tid \to \Val)$ (register value, used operation IDs, program counter, set of values each read collected, input/output argument of the method)
	\item $\wsreg.\lSigma = \Inv\Res(\Reg)\cup
	\set{\objact\tup{\tid}\st \tid\in\Tid}$
	\item $\wsreg.\linit = \tup{0,\lambda \tid \ldotp \emptyword,\lambda \tid \ldotp \main,\lambda \tid \ldotp \emptyset,\lambda \tid \ldotp 0}$
	\item The transitions are as follows:
	\begin{mathparpagebreakable}
		\inferrule[\readinv]
		{\act = \invact\tup{\regread,\bot,\tid,\id} \\\\
		\Tmappc(\tid)=\main \\ k\notin \Tmapid(p) \\\\
		\Tmapid'=\Tmapid[\tid \mapsto \Tmapid(\tid)\cdot\id]\\\\ \Tmappc'=\Tmappc[\tid \mapsto \rbegin]}
		{\tup{\val,\Tmapid,\Tmappc,\Tmapvalset,\Tmaparg} \asteplab{\act}{} {}\\\\ \qquad \tup{\val,\Tmapid', \Tmappc',\Tmapvalset,\Tmaparg}}
		\and
		\inferrule[\readinit]{ \Tmappc(\tid)=\rbegin \\\\
		\Tmappc'=\Tmappc[\tid \mapsto \rlisten] \\\\ \Tmapvalset'=\Tmapvalset[\tid\mapsto\set{\val}]}
		{\tup{\val,\Tmapid,\Tmappc,\Tmapvalset,\Tmaparg} \asteplab{\objact\tup{\tid}}{} {}\\\\ \qquad \tup{\val,\Tmapid,\Tmappc',\Tmapvalset',\Tmaparg}}
		\and
		\inferrule[\readloop]{\Tmappc(\tid)=\rlisten \\ \val\notin\Tmapvalset(\tid) \\\\
		\Tmapvalset'=\Tmapvalset[\tid \mapsto \Tmapvalset(\tid)\uplus\set{\val}]}
		{\tup{\val,\Tmapid,\Tmappc,\Tmapvalset,\Tmaparg} \asteplab{\objact\tup{\tid}}{} {}\\\\ \qquad \tup{\val,\Tmapid,\Tmappc,\Tmapvalset',\Tmaparg}}
		\and
		\inferrule[\readpick]{\Tmappc(\tid)=\rlisten \\  \valout\in\Tmapvalset(\tid) \\\\
		\Tmappc'=\Tmappc[\tid \mapsto \rend] \\\\
		\Tmaparg'=\Tmaparg[\tid\mapsto \valout]}
		{\tup{\val,\Tmapid,\Tmappc,\Tmapvalset,\Tmaparg} \asteplab{\objact\tup{\tid}}{} {}\\\\ \qquad \tup{\val,\Tmapid,\Tmappc',\Tmapvalset,\Tmaparg'}}
		\and
		\inferrule[\readres]{\act = \resact\tup{\regread,\Tmaparg(\tid),\tid,\id} \\\\
		\Tmappc(\tid)=\rend \\ \id = \Tmapid(\tid)[\last] \\\\
		\Tmappc'=\Tmappc[\tid \mapsto \main] \\\\
		\Tmapvalset'=\Tmapvalset[\tid\mapsto\emptyset]\\\\
		\Tmaparg'=\Tmaparg[\tid\mapsto0]}
		{\tup{\val,\Tmapid,\Tmappc,\Tmapvalset,\Tmaparg} \asteplab{\act}{} {}\\\\ \qquad \tup{\val,\Tmapid,\Tmappc',\Tmapvalset',\Tmaparg'}}
		\and
		\inferrule[\writeinv]{\act = \invact\tup{\regwrite,\valin,\tid,\id} \\\\ \Tmappc(\tid)=\main \\ k\notin \Tmapid(p)\\\\
		\Tmapid'=\Tmapid[\tid \mapsto \Tmapid(\tid)\cdot\id] \\\\
		\Tmappc' = \Tmappc[\tid \mapsto\wbegin] \\\\
		\Tmaparg'=\Tmaparg[\tid\mapsto\valin]}
		{\tup{\val,\Tmapid,\Tmappc,\Tmapvalset,\Tmaparg} \asteplab{\act}{} {}\\\\ \qquad \tup{\val,\Tmapid', \Tmappc',\Tmapvalset,\Tmaparg'}}
		\and
		\inferrule[\writeint]{\Tmappc(\tid)=\wbegin \\\\
		\Tmappc'=\Tmappc[\tid \mapsto \wend]
		\\\\ \val'=\Tmaparg(\tid)}
		{\tup{\_,\Tmapid,\Tmappc,\Tmapvalset,\Tmaparg} \asteplab{\objact\tup{\tid}}{} {}\\\\ \qquad \tup{\val',\Tmapid,\Tmappc',\Tmapvalset,\Tmaparg}}
		\and
		\inferrule[\writeres]{\act = \resact\tup{\regwrite,\bot,\tid,\id}  \\\\ \Tmappc(\tid)=\wend \\ \id = \Tmapid(\tid)[\last] \\\\
		\Tmappc'=\Tmappc[\tid \mapsto \main] \\\\
		\Tmapvalset' = \Tmapvalset[\tid\mapsto\emptyset] \\\\
		\Tmaparg' = \Tmaparg[\tid\mapsto0]}
		{\tup{\val,\Tmapid,\Tmappc,\Tmapvalset,\Tmaparg} \asteplab{\act}{} {}\\\\ \qquad \tup{\val,\Tmapid,\Tmappc',\Tmapvalset',\Tmaparg'}}
	\end{mathparpagebreakable}
\end{itemize}

We show that $\wsreg$ is indeed write strongly-linearizable.
The idea is to linearize write operations that already stored their value (in a $\writeint$ transition) and read operations that already picked a value to return (in a $\readpick$ transition). 
Intuitively, these transitions are when operations ``decide'' their relative ordering w.r.t already linearized operations.
The linearized operations are ordered by marking linearization points. 

\begin{theorem}\label{thm:wsincluded}
	$\wsreg\in\wslin$.
\end{theorem}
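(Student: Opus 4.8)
The plan is to exhibit an explicit write strong linearization mapping $\linmap:\exec{\wsreg}\to\regspec$ by assigning \emph{linearization points} to operations, along the lines indicated in the proof sketch. Given an execution $\ex$, I would linearize exactly those operations that have already ``decided'': a write is decided once it has taken its $\writeint$ transition (at which it stores its input value into $\reg$), and a read is decided once it has taken its $\readpick$ transition. To such a write I assign as its linearization point the index of its $\writeint$ transition, and to such a read the index of the \emph{earliest} $\readinit$/$\readloop$ transition at which the value $\valout$ it eventually picks was first added to $\Tmapvalset$. Since each transition carries at most one such point, these indices are pairwise distinct, and I define $\linmap(\ex)$ to be the sequential history that lists the decided operations in the order of their linearization points, with each read carrying return value $\valout$ and each decided-but-pending operation completed by a synthesized response.

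Three facts then need checking. First, $\linmap(\ex)\in\regspec$: the only nontrivial requirement is that each read returns the observed value. Here I would exploit the invariant that $\reg$ changes only at $\writeint$ transitions, so that the value of $\reg$ at a read's load point equals the input of the write with the largest $\writeint$-index strictly preceding that point---which is precisely the write linearized immediately before the read (taking the initial value $0$ when none exists). This yields $\obs(\prefof{\linmap(\ex)}{\rop})=\valout=\outputf(\rop)$ for every linearized read $\rop$, which can be packaged via \cref{lem:regspec} or argued directly from the definition of $\obs$.

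Second, $\hs{\ex}\linleq\linmap(\ex)$, which by \cref{obs:lin} reduces to two conditions on $\h'=\completed(\hs{\ex}\cdot\ressq)$, where $\ressq$ supplies responses for the decided-but-pending invocations. The set condition $\setof{\h'}=\setof{\linmap(\ex)}$ holds because the decided operations are exactly those in $\linmap(\ex)$, and every operation completed in $\hs{\ex}$ is decided (a $\writeres$/$\readres$ transition becomes enabled only after the corresponding $\writeint$/$\readpick$). For the inclusion of real-time orders, the key observation is that each operation's linearization point lies strictly between its invocation and response transitions; hence if $\op_1$'s (real) response precedes $\op_2$'s invocation in $\ex$, then $\linpoint(\op_1)<\linpoint(\op_2)$, so $\op_1$ precedes $\op_2$ in $\linmap(\ex)$. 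Third, the write-strong property of \cref{def:wslin}: for $\ex_1\pref\ex_2$ the restriction $\linmap(\ex)\rst{\regwrite}$ orders decided writes purely by $\writeint$-index; a write decided in $\ex_1$ keeps its index in the extension, while any write newly decided in $\ex_2$ has $\writeint$-index exceeding $\size{\ex_1}$ and is therefore appended, giving $\linmap(\ex_1)\rst{\regwrite}\pref\linmap(\ex_2)\rst{\regwrite}$. (This argument deliberately fails for reads, whose load points may precede later writes---exactly why $\wsreg$ is write strongly-linearizable but not strongly linearizable.)

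The main obstacle I anticipate is the value-correctness step: carefully formalizing the $\reg$-invariant and verifying that the write linearized immediately before a read is indeed the one whose store the read observed, in particular handling the case where the picked value is re-loaded across several iterations of the read loop, and the corner case of the initial value $0$ (where a write storing $0$ must be treated consistently with the $\obs$ formula).
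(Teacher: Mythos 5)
Your proposal is correct and follows essentially the same route as the paper's proof: linearize exactly the writes that have performed their $\writeint$ transition and the reads that have performed their $\readpick$ transition, order them by linearization points (the $\writeint$ transition for writes, the load transition where the eventually-picked value first entered $\Tmapvalset$ for reads), and verify membership in $\regspec$, the real-time-order inclusion via the fact that each linearization point lies between the operation's invocation and response, and write-strength via the observation that newly decided writes have $\writeint$ indices beyond the end of the shorter execution. Your explicit handling of the $\readinit$ transition as a possible load point is a small precision the paper's wording glosses over, but the argument is otherwise identical.
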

\begin{proof}
	\label{thm:wsreg_inclusion}
	The write strong-linearization $\linmap$ for $\wsreg$ is constructed as follows.
	Given an execution $\ex$,
	let $\writeopset(\ex)$ be the set of all write operations that performed a $\writeint$ transition in $\ex$, where pending write operations are completed with a matching response.
	Similarly, let $\readopset(\ex)$ be the set of all read operations that performed a $\readpick$ transition, where pending read operations are completed with a matching response containing the value that was picked in said transition.
	
	Let $\opset(\ex)=\writeopset(\ex)\cup\readopset(\ex)$ and define a mapping $\linpoint: \opset(\ex)\to\set{\trans\st\trans\in\ex}$ by:
	\begin{enumerate}
		\item If $\op\in\opset(\ex)$ is a write operation, then $\linpoint(\op)$ is the operation's $\writeint$ transition.
		
		\item If $\op\in\opset(\ex)$ is a read operation, then $\linpoint(\op)$ is the $\readloop$ transition where it loads the value it later picks in its $\readpick$ transition.
	\end{enumerate}
	
	We define $\linmap(\ex)$ to be the sequence over $\opset(\ex)$ induced by $\linpoint$, \ie $\op_1<_{\linmap(\ex)}\op_2$ iff $\linpoint(\op_1)<_{\ex}\linpoint(\op_2)$.
	We prove that all required properties are satisfied:
	
	\begin{itemize}
		\item $\linmap(\ex)\in \regspec$:
		
		Follows from the fact the last write (if any) before each read in $\linmap(\ex)$ is the last write that stored in a $\writeint$ transition before the read loaded the value in a $\readloop$ transition and later picked it in a $\readpick$ transition, hence that read's output in $\linmap(\ex)$ is equal to the input of said write (or the initial value, if no such write exists).
		
		\item $\hs{\ex}\linleq\linmap(\ex)$:

		We show the conditions in \cref{obs:lin} are satisfied. 
		Let $\ressq$ be a subsequence of $\linmap(\ex)$ composed of response for operations in $\linmap(\ex)$ that are pending in $\hs{\ex}$.
		Denoting $\h'=\completed(\hs{\ex}\cdot\ressq)$, we have as required:
		\begin{itemize}
			\item $\setof{\h'}=\setof{\linmap(\ex)}$:
			
			Let $\op\in\setof{\h'}$.
			If $\op\in \completed(\hs{\ex})$, then it already performed either a  $\writeint$ transition or a $\readpick$ transition, and thus $\op\in \linmap(\ex)$.
			Otherwise, $\ressq$ contains the response of $\op$ which implies by definition that $\op\in \linmap(\ex)$.
			
			For the converse, let $\op\in \linmap(\ex)$. 
			If $\op\in \completed(\hs{\ex})$, then $\completed(\hs{\ex})\subseq\completed(\hs{\ex}\cdot\ressq)$ and thus $\op\in\setof{\h'}$.
			Otherwise, the response of $\op$ is in $\ressq$ by definition,
			and so again $\op\in\setof{\h'}$.
			
			\item $<_{\h'} \;\subseteq\; <_{\linmap(\ex)}$:
			
			Let $\op_1,\op_2\in\setof{\h'}$ be operations such that $\op_1<_{\h'}\op_2$.
			Let $\res_1$ be  $\op_1$'s response and let $\inv_2$ be $\op_2$'s invocation.
			Then we have that $\linpoint(\op_1)<_{\ex}\res_1<_{\ex}\inv_2<_{\ex}\linpoint(\op_2)$ 
			and so $\op_1<_{\linmap(\ex)}\op_2$ as required.
			
		\end{itemize}
		
		\item $\linmap$ is write strong:
		
		Let $\ex_1,\ex_2\in\exec{\impl}$ be executions such that $\ex_1\pref\ex_2$.
		Then $\linmap(\ex_1)\rst{\regwrite}\pref\linmap(\ex_2)\rst{\regwrite}$
		follows from the fact that writes in $\linmap(\ex_2)\rst{\regwrite}$ that are not in $\linmap(\ex_1)\rst{\regwrite}$
		all perform their $\writeint$ transition after all transitions in $\ex_1$ are executed. \qedhere
	\end{itemize}
\end{proof}

\subsection{Complete Implementation for Decisive Registers: $\dreg$}
\label{subsec:dlts}

Definition of $\awsreg$:

\begin{itemize}
	\item $\Pc_\aws = \set{\main,\wbegin,\wwait,\wrollback,\wend,\rbegin,\rlisten,\rend}$
	\item $\awsreg.\lQ = \Val \times\N\times \set{0,1}\times (\Tid \to \Id^*) \times (\Tid \to \Pc_\aws) \times (\Tid \to \powerset{\Val}) \times (\Tid \to \N)\times (\Tid \to \Val)$ (Register value, version, lock, used operation IDs, program counter, set of values, input/output argument of the method (also used to hold a temporary value))
	\item $\awsreg.\lSigma = \Inv\Res(\Reg)\cup
	\set{\objact\tup{\tid}\st \tid\in\Tid}$
	\item $\awsreg.\linit = \tup{0,0,0,\lambda \tid \ldotp \emptyword,\lambda \tid \ldotp \main,\lambda \tid \ldotp \emptyset,\lambda \tid \ldotp 0,\lambda \tid \ldotp 0}$
	\item The transitions are as follows:
	\begin{mathparpagebreakable}
		\inferrule[\readinv]
		{\act = \invact\tup{\regread,\bot,\tid,\id} \\\\  \Tmappc(\tid)=\main \\ \id\notin \Tmapid(\tid) \\\\
		\Tmappc'=\Tmappc[\tid \mapsto \rbegin] \\\\
		\Tmapid'=\Tmapid[\tid \mapsto \Tmapid(\tid)\cdot\id]}
		{\tup{\val,\ver,\lockLTS,\Tmapid,\Tmappc,\Tmapvalset,\Tmapstart,\Tmaparg} \asteplab{\act}{} {}\\\\ \qquad \tup{\val,\ver,\lockLTS,\Tmapid', \Tmappc',\Tmapvalset,\Tmapstart,\Tmaparg}}
		\and
		\inferrule[\readinit]{\Tmappc(\tid)=\rbegin \\ \lockLTS = 0 \\\\
		\Tmappc'=\Tmappc[\tid \mapsto \rlisten] \\\\
		\Tmapvalset'=\Tmapvalset[\tid\mapsto\set{\val}] \\\\
		\Tmapstart'=\Tmapstart[\tid\mapsto\ver]}
		{\tup{\val,\ver,\lockLTS,\Tmapid,\Tmappc,\Tmapvalset,\Tmapstart,\Tmaparg} \asteplab{\objact\tup{\tid}}{} {}\\\\ \qquad \tup{\val,\ver,\lockLTS,\Tmapid,\Tmappc',\Tmapvalset',\Tmapstart',\Tmaparg}}
		\and
		\inferrule[\readloop]{ \Tmappc(\tid)=\rlisten \\ \val\notin\Tmapvalset(\tid) \\\\
		\ver\geq\Tmapstart(\tid) \\\\
		\Tmapvalset'=\Tmapvalset[\tid \mapsto \Tmapvalset(\tid)\uplus\set{\val}]}
		{\tup{\val,\ver,\lockLTS,\Tmapid,\Tmappc,\Tmapvalset,\Tmapstart,\Tmaparg} \asteplab{\objact\tup{\tid}}{} {}\\\\ \qquad \tup{\val,\ver,\lockLTS,\Tmapid,\Tmappc,\Tmapvalset',\Tmapstart,\Tmaparg}}
		\and
		\inferrule[\readpick]{ \Tmappc(\tid)=\rlisten \\ \valout\in\Tmapvalset(\tid) \\\\
		\Tmappc'=\Tmappc[\tid \mapsto \rend] \\\\
		\Tmaparg'=\Tmaparg[\tid\mapsto \valout]}
		{\tup{\val,\ver,\lockLTS,\Tmapid,\Tmappc,\Tmapvalset,\Tmapstart,\Tmaparg} \asteplab{\objact\tup{\tid}}{} {}\\\\ \qquad \tup{\val,\ver,\lockLTS,\Tmapid,\Tmappc',\Tmapvalset,\Tmapstart,\Tmaparg'}}
		\and
		\inferrule[\readres]{\act = \resact\tup{\regread,\Tmaparg(\tid),\tid,\id} \\\\ \Tmappc(\tid)=\rend \\ \id = \Tmapid(\tid)[\last] \\\\
		\Tmappc'=\Tmappc[\tid \mapsto \main] \\\\
		\Tmapvalset'=\Tmapvalset[\tid\mapsto\emptyset] \\\\
		\Tmapstart' = \Tmapstart[\tid\mapsto 0] \\\\
		\Tmaparg'=\Tmaparg[\tid\mapsto 0]}
		{\tup{\val,\ver,\lockLTS,\Tmapid,\Tmappc,\Tmapvalset,\Tmapstart,\Tmaparg} \asteplab{\act}{} {}\\\\ \qquad \tup{\val,\ver,\lockLTS,\Tmapid,\Tmappc',\Tmapvalset',\Tmapstart',\Tmaparg'}}
		\and
		\inferrule[\writeinv]{\act = \invact\tup{\regwrite,\valin,\tid,\id} \\\\ \Tmappc(\tid)=\main \\ k\notin \Tmapid(\tid) \\\\
		\Tmapid' = \Tmapid[\tid \mapsto \Tmapid(\tid)\cdot\id] \\\\
		\Tmappc'=\Tmappc[\tid \mapsto\wbegin] \\\\
		\Tmaparg'=\Tmaparg[\tid\mapsto\valin]}
		{\tup{\val,\ver,\lockLTS,\Tmapid,\Tmappc,\Tmapvalset,\Tmapstart,\Tmaparg} \asteplab{\act}{} {}\\\\ \qquad \tup{\val,\ver,\lockLTS,\Tmapid', \Tmappc', \Tmapvalset,\Tmapstart,\Tmaparg'}}
		\and
		\inferrule[\writeinit]{ \Tmappc(\tid)=\wbegin \\ \lockLTS=0 \\\\
		\Tmappc'=\Tmappc[\tid \mapsto \wwait] \\\\
		\Tmapstart' = \Tmapstart[\tid\mapsto\ver]}
		{\tup{\val,\ver,\lockLTS,\Tmapid,\Tmappc,\Tmapvalset,\Tmapstart,\Tmaparg} \asteplab{\objact\tup{\tid}}{} \\\\ \qquad \tup{\val,\ver,\lockLTS,\Tmapid,\Tmappc',\Tmapvalset,\Tmapstart',\Tmaparg}}
		\and
		\inferrule[\writetop]{ \Tmappc(\tid)=\wwait \\ \lockLTS = 0 \\\\
		\val'=\Tmaparg(\tid)\\	\ver'=\ver+1\\\\
		\Tmappc'=\Tmappc[\tid \mapsto \wend]}
		{\tup{\_,\ver,\lockLTS,\Tmapid,\Tmappc,\Tmapvalset,\Tmapstart,\Tmaparg} \asteplab{\objact\tup{\tid}}{} \\\\ \qquad \tup{\val',\ver',\lockLTS,\Tmapid,\Tmappc',\Tmapvalset,\Tmapstart,\Tmaparg}}
		\and
		\inferrule[\writerollback]{ \Tmappc(\tid)=\wwait \\ \ver>\Tmapstart(\tid) \\ \lockLTS = 0 \\\\
		\val'=\Tmaparg(\tid)\\	\ver'=\ver-1\\\\
		\lockLTS'=1 \\ \Tmappc'=\Tmappc[\tid \mapsto \wrollback] \\\\
		\Tmaparg'=\Tmaparg[\tid\mapsto\val]}
		{\tup{\val,\ver,\lockLTS,\Tmapid,\Tmappc,\Tmapvalset,\Tmapstart,\Tmaparg} \asteplab{\objact\tup{\tid}}{} \\\\ \qquad \tup{\val',\ver',\lockLTS',\Tmapid,\Tmappc',\Tmapvalset,\Tmapstart,\Tmaparg'}}
		\and
		\inferrule[\writerollforward]{ \Tmappc(\tid)=\wrollback \\	\lockLTS = 1 \\\\
		\val'=\Tmaparg(\tid)\\	\ver'=\ver+1\\\\
		\lockLTS'=0 \\	\Tmappc'=\Tmappc[\tid \mapsto \wend]}
		{\tup{\_,\ver,\lockLTS,\Tmapid,\Tmappc,\Tmapvalset,\Tmapstart,\Tmaparg} \asteplab{\objact\tup{\tid}}{} \\\\ \qquad \tup{\val',\ver',\lockLTS',\Tmapid,\Tmappc',\Tmapvalset,\Tmapstart,\Tmaparg}}
		\and
		\inferrule[\writeres]{\act = \resact\tup{\regwrite,\bot,\tid,\id}  \\\\ \Tmappc(\tid)=\wend \\ \id = \Tmapid(\tid)[\last] \\\\
		\Tmappc'=	\Tmappc[\tid \mapsto \main] \\\\
		\Tmapvalset'=\Tmapvalset[\tid\mapsto\emptyset] \\\\
		\Tmapstart'=\Tmapstart[\tid\mapsto 0] \\\\
		\Tmaparg'=\Tmaparg[\tid\mapsto 0]}
		{\tup{\val,\ver,\lockLTS,\Tmapid,\Tmappc,\Tmapvalset,\Tmapstart,\Tmaparg} \asteplab{\act}{} \\\\ \qquad \tup{\val,\ver,\lockLTS,\Tmapid,\Tmappc',\Tmapvalset',\Tmapstart',\Tmaparg'}}
	\end{mathparpagebreakable}
\end{itemize}

We show that $\dreg$ is indeed decisively linearizable.
The set of operations to linearize is chosen similarly to the proof of \cref{thm:wsincluded}: writes that already stored their value and reads that already picked a value to return.
The ordering of these operations is more involved in this case, 
as it is less clear how to assign linearization points. 
Specifically, for writes that choose to be ``overwritten'' by a concurrent write, their linearization point should be somewhere between their invocation and the linearization point of the write that overwrote them,
but it is not immediately clear how to choose a specific transition within that interval.
Instead of resolving this question directly, we order the operations based on a version number associated with each, and use the relative order of specific transitions only to resolve version number ties.

\begin{theorem}
	$\dreg\in\dlin$.
\end{theorem}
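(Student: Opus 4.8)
The plan is to mirror the structure of the proof that $\wsreg\in\wslin$ (\cref{thm:wsincluded}), but with a more delicate ordering. First I would fix the set of operations to include in the linearization of an execution $\ex\in\exec{\dreg}$: every write that has already stored its value (i.e. executed a $\writetop$ or a $\writerollback$ transition) and every read that has already executed a $\readpick$ transition, completing pending operations among these with a matching response (using the picked value $\valout$ for reads). To each included operation I assign a \emph{version number} and a \emph{linearization-point transition}: a write gets the version it wrote together with that store transition, and a read $\rop$ gets the version in the pre-state of the $\readinit$/$\readloop$ transition in which it first loaded the value it later picks, together with that transition. I then define $\linmap(\ex)$ as the sequence of these operations ordered by version number, breaking ties by the order of the associated linearization-point transitions in $\ex$. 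I would also pass through the maximal prefix ending in an invocation/response so that internal transitions do not change the linearization, exactly as in \cref{thm:wsincluded}.

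With this definition there are three obligations: decisiveness ($\linmap(\ex_1)\subseq\linmap(\ex_2)$ when $\ex_1\pref\ex_2$), membership $\linmap(\ex)\in\regspec$, and the linearization condition $\hs{\ex}\linleq\linmap(\ex)$. Decisiveness is the clean part: once an operation is included its version number and its linearization-point transition are determined by a single, immutable event of the execution, and including new operations never removes old ones; since the order is a pure function of the (version, transition-index) keys, the relative order of two already-included operations is identical in $\linmap(\ex_1)$ and $\linmap(\ex_2)$, which is precisely decisiveness. For $\hs{\ex}\linleq\linmap(\ex)$ I would establish the two conditions of \cref{obs:lin} (same operation set, and real-time-order inclusion), taking $\ressq$ to be the responses of included-but-pending operations, just as in the $\wsreg$ argument.

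The two harder facts are $\regspec$-membership and real-time-order preservation, and both reduce to invariants on the version counter and the lock. For membership I would apply \cref{lem:regspec}: for each read $\rop$ I must show that the block of writes preceding $\rop$ in $\linmap(\ex)$ ends with a write whose input is the value $\rop$ returns. This is where the \quotes{overwritten} path matters: a write that rolls back stores its input with a temporarily decremented version (via $\writerollback$) and later restores the newer value (via $\writerollforward$), so a concurrent read may load that input at that lower version; assigning the read the pre-state version of its load places it at the same version as the rolling-back write, and the tie-break by transition order (the read loads only after the $\writerollback$ event that exposed the value) orders that write immediately before the read. For real-time order I would argue, by case analysis on the kinds of the two operations, that $\op_1<_{\hs{\ex}}\op_2$ forces $\op_1$ before $\op_2$ in the version/tie-break order; the key levers are the guard $\ver\ge\start$ in $\readloop$ (a read never loads below its start version) and the guard $\ver>\start$ in $\writerollback$, together with the fact that $\lock=0$ is required by $\writeinit$ and $\writetop$, which prevents any operation from reading a transient rolled-back version when setting its start version or committing.

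The main obstacle I anticipate is exactly this real-time-order step in the write–write and write–read cases involving the rollback mechanism, where version ties are genuinely possible: I expect to need a lock-based invariant stating that whenever $\lock=0$ the current $\ver$ is at least the assigned version of every completed write, so that a write starting after another completes reads a start version no smaller than the earlier write's version, and then to verify that the only possible equality of versions is resolved correctly by the transition-order tie-break. Formalizing and maintaining this invariant across the non-atomic $\writerollback$/$\writerollforward$ pair (the one place where $\reg$ and $\ver$ are momentarily inconsistent, but guarded by the lock) is the delicate bookkeeping that carries the whole argument.
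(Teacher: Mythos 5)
Your proposal is correct and follows essentially the same route as the paper's proof: the same choice of linearized operations (writes that executed $\writetop$/$\writerollback$, reads that executed $\readpick$), the same assignment of version numbers with ties broken by the order of the associated transitions, and the same three obligations discharged via \cref{obs:lin} and \cref{lem:regspec}, with the real-time-order argument resting on the $\ver\geq\start$, $\ver>\start$, and $\lock=0$ guards exactly as you anticipate.
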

\begin{proof}
	The decisive linearization $\linmap$ for $\wsreg$ is constructed as follows.
	Given an execution $\ex$,
	let $\writeopset(\ex)$ be the set of all write operations that performed
	either a $\writetop$ or a $\writerollback$ transition in $\ex$, where pending write operations are completed with a matching response.
	Let $\readopset(\ex)$ be the set of all read operations that performed a $\readpick$ transition, where pending read operations are completed with a matching response containing the value they picked in said transition.
	
	Denote $\opset(\ex)=\writeopset(\ex)\cup\readopset(\ex)$, and define mappings $\rellinpoint:\opset(\ex)\to\ex$, $\verf:\opset(\ex)\to\N$, by:
	\begin{enumerate}
		\item If $\op\in\opset(\ex)$ is a write operation, then $\rellinpoint(\op)$ is either the $\writetop$ or the $\writerollback$ transition performed by $\op$ in $\ex$,
		and $\verf(\op)$ is the version number stored to $\ver$ in said transition.
		
		\item If $\op\in\opset(\ex)$ is a read operation, then $\rellinpoint(\op)$ is the $\readloop$ transition where it loads the value it later picks in its $\readpick$ transition,
		and $\verf(\op)$ is the value of $\ver$ in the state when said $\readloop$ transition is performed.
	\end{enumerate}
	
	We define $\linmap(\ex)$ to be the sequence over $\opset(\ex)$ induced by $\verf$, using $\rellinpoint$ for tie-breaking.
	\ie $\op_1<_{\linmap(\ex)}\op_2$ if and only if either $\verf(\op_1)<\verf(\op_2)$, or $\verf(\op_1)=\verf(\op_2)$ and 
	$\rellinpoint(\op_1)<_{\ex}\rellinpoint(\op_2)$.
	
	Note that 
	the leftmost operation in $\linmap(\ex)$ associated with any version number greater than 0 is the write operation that stored this number in a $\writetop$ transition, and all other writes associated with the same version number are writes that stored this number later in $\ex$ in a $\writerollback$ transition.
	In particular, all write operations are mapped to a version number strictly greater than $0$.

	We prove that all required properties are satisfied:
	
	\begin{itemize}
		\item $\linmap(\ex)\in \regspec$:
		
		Let $\rop\in\linmap(\ex)$ be a read operation.
		We need to show
		$\outputf(\rop)=\obs(\prefof{\linmap(\ex)}{\op})$.
		
		Equivalently, defining $\wop_1=\lastop(\prefof{\linmap(\ex)}{\rop}\rst{\regwrite})$  (\ie the last write operation before $\rop$ in $\linmap(\ex)$, if such exists, and otherwise $\emptyword$),
		we need to show that $\outputf(\rop)=\inputf(\wop_1)$ if $\wop_1\neq\emptyword$ and $\outputf(\rop)=0$ if $\wop_1=\emptyword$.
		
		Let $\wop_2$ be a write operation such that $\rellinpoint(\wop_2)<_{\ex}\rellinpoint(\rop)$,
		$\rellinpoint(\wop_2)$ is either a $\writetop$ transition or a $\writerollback$ transition with no matching $\writerollforward$ transition before the transition $\rellinpoint(\rop)$,
		and there is no write operation $\wop'$ satisfying this condition such that $\rellinpoint(\wop_2)<_{\ex}\rellinpoint(\wop')$,
		or $\wop_2=\emptyword$ if no such write exists.
		
		Consider two cases:
		\begin{itemize}
			\item If $\wop_2\neq\emptyword$:
			
			By definition of $\wop_2$ we clearly have $\rellinpoint(\wop_2)<_{\ex}\rellinpoint(\rop)$, $\verf(\wop_2)=\verf(\rop)$, and
			$\outputf(\rop)=\inputf(\wop_2)$.
			Therefore, $\wop_2<_{\linmap(\ex)}\rop$.
			By definition of $\wop_1$ this implies that $\wop_1\neq\emptyword$ and in particular $\wop_2\leq_{\linmap(\ex)}\wop_1$.
			As $\wop_1<_{\linmap(\ex)}\rop$ and $\verf(\wop_2)=\verf(\rop)$ we obtain that $\verf(\wop_1)=\verf(\rop)$
			and so $\rellinpoint(\wop_1)<_{\ex}\rellinpoint(\rop)$.
			Consider two cases:
			\begin{itemize}
				\item If $\rellinpoint(\wop_1)$ is a $\writerollback$ transition with a matching $\writerollforward$ transition before $\rellinpoint(\rop)$ in $\ex$,
				that $\writerollforward$ transition's post-state has a version number larger than $\verf(\wop_1)=\verf(\rop)$, 
				which implies $\rop$ loads in the transition $\rellinpoint(\rop)$ a value written later by an additional $\writerollback$ transition of a write operation $\wop'$ with no matching response before $\rellinpoint(\rop)$. 
				We deduce from definition of $\wop_2$ that $\wop'=\wop_2$ and as $\rellinpoint(\wop_1)<_{\ex}\rellinpoint(\wop')$ we obtain that $\rellinpoint(\wop_1)<_{\ex}\rellinpoint(\wop_2)$.
				
				\item Otherwise, $\rellinpoint(\wop_1)$ is either a $\writetop$ transition, or a $\writerollback$ transition with no matching $\writerollforward$ transition before $\rellinpoint(\rop)$ in $\ex$.
				In both cases we deduce from the definition of $\wop_2$ that necessarily $\rellinpoint(\wop_1)\leq_{\ex}\rellinpoint(\wop_2)$.
			\end{itemize}
			
			We obtain that $\rellinpoint(\wop_1)\leq_{\ex}\rellinpoint(\wop_2)$ and $\verf(\wop_1)=\verf(\wop_2)$ and thus $\wop_1\leq_{\linmap(\ex)}\wop_2$ and overall $\wop_1=\wop_2$.
			This gives as required that $\outputf(\rop)=\inputf(\wop_2)=\inputf(\wop_1)$.
		
			\item If $\wop_2=\emptyword$, then clearly $\outputf(\rop)=0$, $\verf(\rop)=0$,
			and it suffices to show $\wop_1=\emptyword$.
			Indeed, if we assume otherwise then from $\wop_1<_{\linmap(\ex)}\rop$ we obtain that $\verf(\wop_1)=0$,
			contradicting the fact that all write operations are assigned a positive version number.

		\end{itemize}

		\item $\hs{\ex}\linleq\linmap(\ex)$:
		
		We show the conditions in \cref{obs:lin} are satisfied. 
		Let $\ressq$ be a subsequence of $\linmap(\ex)$ composed of response for operations in $\linmap(\ex)$ that are pending in $\hs{\ex}$.
		Denoting $\h'=\completed(\hs{\ex}\cdot\ressq)$, 
		the fact that $\setof{\h'}=\setof{\linmap(\ex)}$ is proved exactly like in the proof of \cref{thm:wsincluded},
		and it remains to show that $<_{\h'} \;\subseteq\; <_{\linmap(\ex)}$.
		Let $\op_1,\op_2\in\setof{\h'}$ be operations such that $\op_1<_{\h'}\op_2$.
		First, we show that $\verf(\op_1)\leq\verf(\op_2)$.
		Let $\trans_2$ be the transition where $\op_2$ performs either a \readinit\ or \writeinit\ transition.
		This is well defined as $\op_2\in\opset(\ex)$ implies that $\op_2$ already performed in $\ex$ the transition $\rellinpoint(\op_2)$,
		and we have $\trans_2<_{\ex}\rellinpoint(\op_2)$.
		Since $\op_1$ responds before $\op_2$ is invoked in $\ex$,
		we must have that $\rellinpoint(\op_1)<_{\ex}\trans_2$.
		As $\trans_2$ is performed under a precondition that $\lockLTS=0$,
		there exists a transition $\trans$ such that $\rellinpoint(\op_1)\leq_{\ex}\trans<_{\ex}\trans_2$ and
		$\lockLTS=0$ in the post-state of $\trans$.
		Let $\trans_1$ be the first (\wrt $<_{\ex}$) such transition.
		Let $\ver_i$ denote the version number in the post-state of transition $\trans_i$.
		We have:
		\begin{itemize}
			\item $\verf(\op_1)\leq\ver_1$:
			
			If $\rellinpoint(\op_1)=\trans_1$, we are done.
			Otherwise, the transition $\rellinpoint(\op_1)$ is performed after some write performed its $\writerollback$ transition (possibly $\rellinpoint(\op_1)$ is exactly this transition), but before that write performed its $\writerollforward$ transition.
			It must be the case that $\trans_1$ is said $\writerollforward$ transition, as no other operation can update the lock from $1$ to $0$.
			As no transition prior to $\trans_1$ can update the version number,
			and it increase by 1 after $\trans_1$, we obtain that $\verf(\op_1)<\ver_1$.
				
			\item $\ver_1\leq\ver_2$:
			
			Follows from the fact that $\trans_1<_{\ex}\trans_2$, the version number is monotonically non-decreasing when considering points in the execution where $\lockLTS=0$, and indeed $\lockLTS=0$ in the post-states of $\trans_1$ and $\trans_2$.

			\item $\ver_2\leq\verf(\op_2) $:
			
			Consider the following cases:
			\begin{itemize}
				\item If $\rellinpoint(\op_2)$ is a $\writetop$ transition, this follows from $\trans_2<_{\ex}\rellinpoint(\op_2)$, using again the fact that the version number is monotonically non-decreasing when considering points in the execution where $\lockLTS=0$, and indeed $\lockLTS=0$ in the post-states of $\trans_2$ and $\rellinpoint(\op_2)$.
				
				\item If $\rellinpoint(\op_2)$ is a $\writerollback$ transition,
				then $\verf(\op_2)$ is smaller by 1 then the version number $\ver$ in pre-state of $\rellinpoint(\op_2)$,
				and this transition is done under a condition that $\ver$ is strictly larger than the version number $\op_2$ read in the \writeinit\ transition, which is $\ver_2$.

				\item  If $\rellinpoint(\op_2)$ is a $\readloop$ transition, then $\verf(\op_2)$ is loaded under a condition that indeed it is larger than or equal to the version number $\op_2$ read in the \readinit\ transition, which is $\ver_2$.
			\end{itemize}
		\end{itemize}
		
		We obtain that $\verf(\op_1)\leq\verf(\op_2)$.
		Next we prove that $\op_1<_{\linmap(\ex)}\op_2$. Consider the following cases:
		\begin{itemize}
			\item If $\verf(\op_1)<\verf(\op_2)$, then by definition $\op_1<_{\linmap(\ex)}\op_2$ as required.
			
			\item If $\verf(\op_1)=\verf(\op_2)$, 
			let $\res_1$ be  $\op_1$'s response and let $\inv_2$ be $\op_2$'s invocation.
			Then we have that $\rellinpoint(\op_1)<_{\ex}\res_1<_{\ex}\inv_2<_{\ex}\rellinpoint(\op_2)$ 
			and so overall $\op_1<_{\linmap(\ex)}\op_2$ as required.
		\end{itemize}
		
		\item $\linmap$ is decisive:
		
		Let $\ex_1,\ex_2\in\exec{\impl}$ be executions such that $\ex_1\pref\ex_2$.
		We clearly have that $\opset(\ex_1)\subseteq\opset(\ex_2)$ as additional transitions can only add operations to $\opset$.
		Moreover, the ordering between operations in $\opset(\ex_1)$ is not effected by the additional operations added,
		and thus, $\linmap(\ex_1)\subseq\linmap(\ex_2)$.\qedhere
	\end{itemize}
\end{proof}

\section{Hardness of Complete Implementations}
\label{sec:hard}

In this section we prove the hardness of $\wsreg$ and $\dreg$ for their perspective linearizability classes.
We being in \cref{subsec:hardness_motivation} with a motivating discussion.
In \cref{subsec:lazy_def} we formally define ``lazy linearizations'': linearizations with additional properties that are useful for the hardness proof, presented in \cref{subsec:hard_proof}.

\subsection{Motivation and Proof Strategy}\label{subsec:hardness_motivation}

To motivate our approach, we begin by sketching a proof attempt for the $\wslin$-hardness of $\wsreg$.

Given a write strong implementation $\impl$, we wish to prove that $\impl\leqsim\wsreg$.
Let $\linmap:\exec{\impl}\to\regspec$ be a write strong-linearization for $\impl$.
We begin by attaching to the state of $\impl$ a ghost variable keeping the execution performed so far (see \cref{exrecorder} for more details).
The resulting implementation $\ghost{\impl}$ is simulation-equivalent to $\impl$ and so it is sufficient to show $\ghost{\impl}\leqsim\wsreg$.
While we have no direct information about the structure of $\impl$'s state,
given a state up to which an execution $\ex$ was performed, and a transition $\trans$, we can look at the linearizations $\linmap(\ex)$ and $\linmap(\ex\cdot\trans)$, and try to learn the \quotes{effect} of the transition $\trans$ from the difference between them.

A first attempt to simulate $\ghost{\impl}$ is to match the invocations and responses, and perform in $\wsreg$ the internal step of writes in response to a transition $\trans$ after which they first appear in the linearization $\linmap(\ex\cdot\trans)$.
For reads, we can load the currently stored value every time it is changed to a value not loaded in previous transitions, in the hope that when the reads appear in the linearization, the return value described in that linearization is included in the set of values loaded by the read method, which would allow us to simulate the response of that read.

This attempt fails. To show this, consider the following register implementation:

\begin{minipage}{\textwidth}
	\begin{algorithm}[H]
		\textbf{Shared Variables:}
		the current register value $\reg$, a counter $\cnt_1$ for tickets taken by writers, and a counter $\cnt_2$ for executed writes.
		\setlength{\columnsep}{-1cm}
		\begin{multicols}{2}
		\METHOD{\Read{}}{
			$\retval\gets\reg$\;
			\Return{$\retval$}\;
		}
		
		\METHOD{\Write{$\val$}}{
			$\tup{\queueturn,\cnt_1}\gets\tup{\cnt_1,\cnt_1+1}$\;
			\Wait{$\queueturn=\cnt_2$}\;
			$\tup{\reg,\cnt_2}\gets\tup{\val,\cnt_2+1}$\;
			\Return{}\;
		}
		\caption{\!\! Ticket Protocol}
		\end{multicols}
		\vspace*{5pt}
	\end{algorithm}
\end{minipage}

In this implementation, each write takes a ticket from an increasing shared counter $\cnt_1$, and later stores its input value when a second shared counter $\cnt_2$ reaches its ticket number (initially $\cnt_1=\cnt_2=0$).

This implementation is write strong. A possible write strong linearization mapping for it includes in the linearization all writes that \quotes{took a ticket}, ordered according to their ticket number, and linearizes reads after the write which stored the value they loaded.

For the simulation attempt described above, when a write is invoked and takes a ticket, it appears in the linearization and thus we do an internal write step in $\wsreg$.
As the write only took a ticket and did not store yet, it is then possible for a newly invoked read to load an older value, and we will not be able to simulate this with a matching newly invoked read in $\wsreg$.
The problem here is that writes were added to the linearization mapping before their effect was ``final'' in the sense of preventing reads from loading older values.

Our solution to this issue is to base the simulation on what we call a \emph{lazy} linearization mapping. Informally, this is a linearization mapping that only puts operations in the linearization of some execution when it \emph{must}, either because the operation returned, or since it is necessary for validity of the sequential history (i.e. a write before a read returning its value), or for satisfying the prefix condition of write strong linearizability w.r.t linearizations of shorter executions.

When using a lazy linearization, the sequence of linearized write operations describes more accurately what values invoked read operations might return in the future,
and the simulation sketch described above can be proven correct.

\subsection{Lazy Linearizations}\label{subsec:lazy_def}

We define the following properties:

\begin{definition2}
	Let $\impl$ be an implementation, $\seqspec$ be a specification,
	and $\linmap:\exec{\impl}\to\seqspec$ be a linearization mapping.
	$\linmap$ is called:\footnote{The last two properties
		only apply when $\impl$ is a register implementation and $\seqspec=\regspec$.}
	\begin{itemize}
		\item \emph{response-activated} if $\linmap(\ex)=\linmap(\ex\cdot \trans)$ for every $\ex\in\exec{\impl}$ and $\trans\in\impl.\lT$ such that $\ex\cdot\trans\in\exec{\impl}$ and $\trans\notin\Res$.
		(The linearization remains the same when appending a non-response transition to an execution.)
		
		\item \emph{last-completed} if $\lastop(\linmap(\ex))\in\setof{\completed(\hs{\ex})}$  for every $\ex\in\exec{\impl} \setminus \set{\emptyword}$. (The last operation in any non-empty linearization is completed in the linearized execution.)
		
		\item \emph{pending-read-free} if $\setof{\linmap(\ex)\rst{\regread}}\subseteq\setof{\completed(\hs{\ex})}$ for every $\ex\in\exec{\impl}$. (Every linearized read operation is completed in the linearized execution.)
		
			\item \emph{lazy} if $\linmap$ is decisive, response-activated, last-completed, and pending-read-free.
				
	\end{itemize}
\end{definition2}

The idea behind the last-completed and pending-read-free properties is that a linearization that is ``lazy'' will not end with a pending operation,
or contain a pending read, as a sequence where such an operation is removed is also a valid linearization of the same execution.

The idea behind the response-activated and decisive properties is that a lazy linearization does not ``switch around'' operations once they are linearized, and in particular keeps the exact same linearization when that is possible.

In \cref{sec:lazy}, we prove the following theorems:

\begin{theorem}\label{lem:wsnice}
	If $\impl\in\wslin$, then there exists a lazy write strong-linearization mapping $\linmap:\exec{\impl}\to\regspec$.
\end{theorem}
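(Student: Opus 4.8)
The plan is to start from an arbitrary write strong-linearization mapping $\linmap_0 : \exec{\impl} \to \regspec$ (which exists since $\impl\in\wslin$, by \cref{def:wslin}) and to refine it in stages, each stage establishing one clause of laziness while preserving the write strong-linearization property and the clauses already secured. First I would secure response-activation: let $d(\ex)$ be the maximal prefix of $\ex$ ending in a response ($\emptyword$ if none), and set $\linmap_1(\ex)\defeq\linmap_0(d(\ex))$. Since $\suf{d(\ex)}{\ex}$ contains only invocations and internal transitions, repeated use of \cref{lem:non_response} gives $\hs{\ex}\linleq\linmap_1(\ex)$, so $\linmap_1$ is still a linearization mapping; monotonicity of $d$ with respect to $\pref$ transfers the write-prefix inclusion, so $\linmap_1$ stays write strong; and response-activation is immediate because $d(\ex)=d(\ex\cdot\trans)$ whenever $\trans\notin\Res$.

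Next I would delete operations to obtain pending-read-free and last-completed. Removing all pending reads keeps the sequence in $\regspec$ by \cref{lem:remove_read} and keeps it a linearization of $\hs{\ex}$ by \cref{lem:remove_pending} (only non-completed operations are dropped); the write restriction is untouched so write-strength survives, and response-activation is preserved because a non-response transition completes no read, so exactly the same reads are deleted before and after appending it. Trimming trailing pending writes to get last-completed is more delicate, since a longer execution completes more operations: I would argue that the trailing block of pending writes can only grow when passing from $\ex$ to a prefix, so that deleting it never shortens the write restriction of an extension below that of $\ex$, keeping the write restrictions $\pref$-ordered.

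The main obstacle is decisiveness, because \cref{def:wslin} constrains only $\rst{\regwrite}$ and imposes no control on how reads slide relative to writes as the execution grows, so one cannot simply inherit the read order from $\linmap_0$. The idea I would pursue is to \emph{pin} each completed read: when a read $\rop$ returning $v$ first completes, place it immediately after its justifying write (the last write of value $v$ before it in $\linmap_1$, or at the front when $v=0$) and never move it again. Decisiveness then reduces to showing no operation is ever forced between $\rop$ and its surrounding writes: for later reads this is trivial, and for writes it follows from the write-strong prefix property, whereby new writes are only appended at the end of the write subsequence and hence can never be inserted before a pinned read.

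The one genuine danger is a completed write that real-time-precedes $\rop$ but is linearized by $\linmap_0$ only later: it would be forced before $\rop$ by real time yet appended after it by the prefix property, an irreconcilable conflict. I would eliminate this by including every completed write eagerly. Since a pending write has no response and so real-time-precedes nothing, and every write real-time-before $\rop$ is already completed when $\rop$ completes, eager inclusion guarantees all writes real-time-before $\rop$ are present at pin time and no later write can real-time-precede it. Validity of the pinned sequence I would re-establish read-by-read via \cref{lem:regspec} (each read follows exactly the writes preceding it in some $\linmap_0$-linearization), using \cref{lem:lin} where a reordering relative to $\linmap_0$ is needed; real-time consistency among writes holds because a newly completed write completes after all writes currently present. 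I expect reconciling eager write inclusion with the last-completed trimming, and pinning down the justifying-write argument so that $<_{\hs{\ex}}\;\subseteq\;<_{\linmap(\ex)}$ via \cref{obs:lin}, to be the most technical part of the argument.
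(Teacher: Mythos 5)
There is a genuine gap in the decisiveness stage, and it also undermines your last-completed trimming. Write strong-linearizability constrains only $\linmap_0(\ex)\rst{\regwrite}$, so an arbitrary starting mapping $\linmap_0$ is free to reposition an \emph{already completed} read relative to \emph{already linearized} writes as the execution grows, and your staged normalization gives no control over this. Concretely: let writes $\wop_a,\wop_b,\wop_c$ with inputs $1,2,1$ be linearized in that order, with $\wop_a$ completed and $\wop_b,\wop_c$ pending and concurrent with everything that follows; let a read $\rop_1$ returning $1$ complete at $\ex'$, and a read $\rop_2$ returning $2$ complete at $\ex''$ with $\ex'\pref\ex''$ and $\rop_1<_{\hs{\ex''}}\rop_2$. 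Then $\linmap_0(\ex')=\wop_a\,\wop_b\,\wop_c\,\rop_1$ and $\linmap_0(\ex'')=\wop_a\,\rop_1\,\wop_b\,\rop_2\,\wop_c$ is a perfectly legitimate write strong-linearization (the write subsequence never changes). Your construction pins $\rop_1$, at its completion time, immediately after its justifying write $\wop_c$; but $\rop_2$ can only sit between $\wop_b$ and $\wop_c$ (it returns $2$ and $\wop_b$ is the unique write of $2$), so the pinned sequence puts $\rop_2$ before $\rop_1$, violating $\rop_1<_{\hs{\ex''}}\rop_2$. Your real-time analysis covers writes forced before a pinned read and new writes appended at the end, but not a \emph{later read} whose only admissible justifying write precedes the pinned read's block. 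The same drift falsifies the claim that the trailing block of pending writes "can only grow when passing to a prefix": in the example $\wop_b,\wop_c$ are not trailing at $\ex'$ (they are followed by the completed $\rop_1$) yet $\wop_c$ is trailing at $\ex''$, so trimming breaks the prefix relation on write subsequences. Note also that "eager inclusion of completed writes" is vacuous---every completed operation already occurs in any linearization of the history---so it cannot be the repair.

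This is precisely the difficulty the paper's machinery exists to handle. The paper does not normalize an arbitrary mapping; it first obtains a $\permleq$-minimal write strong mapping via Zorn's lemma (\cref{lem:wsexistproc}) and then proves, by a minimal-counterexample argument that exploits minimality through \cref{lem:local_change_ws} (\cref{lem:ws_write_read_order,lem:write_between_ops}), that for such a mapping the relative order of a write and any other operation never flips across extensions; minimality also excludes the gratuitous linearization of $\wop_b,\wop_c$ in the scenario above. Only with that stability in hand can reads lying in the same write-free block be re-sorted by response order to obtain decisiveness (\cref{lem:wssubsq}), which is the step your pinning idea resembles. Your easier stages (composing with the longest response-ending prefix for response-activation, deleting pending reads) are fine in isolation, but the heart of the theorem is taming the read/write drift of the underlying mapping, and that is missing from the proposal.
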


\begin{theorem}\label{lem:awsnice}
	If $\impl\in\dlin$, then there exists a lazy linearization mapping $\linmap:\exec{\impl}\to\regspec$.
\end{theorem}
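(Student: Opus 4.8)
The plan is to start from an arbitrary decisive linearization mapping and reshape it, in two stages, into a lazy one, keeping decisiveness throughout. Since $\impl\in\dlin$, I first fix a decisive linearization mapping $\linmap_0:\exec{\impl}\to\regspec$, so that $\linmap_0(\ex_1)\subseq\linmap_0(\ex_2)$ whenever $\ex_1\pref\ex_2$. For the first stage (response-activation), let $d(\ex)$ be the longest prefix of $\ex$ ending in a response transition, with $d(\ex)=\emptyword$ if $\ex$ contains no response, and set $\linmap_1\defeq\linmap_0\circ d$. Everything after the last response of $\ex$ consists of non-response transitions, so repeated use of \cref{lem:non_response} gives $\hs{\ex}\linleq\linmap_0(d(\ex))=\linmap_1(\ex)$, making $\linmap_1$ a linearization mapping. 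It is decisive since $\ex_1\pref\ex_2$ implies $d(\ex_1)\pref d(\ex_2)$, and response-activated since appending a non-response transition leaves $d$ unchanged.

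For the second stage I would trim $\linmap_1$ down to the operations that are genuinely forced. Call an operation $\op$ \emph{needed at} $\ex'$ if either $\op\in\setof{\completed(\hs{\ex'})}$, or $\op$ is the last write preceding some completed read $\rop\in\setof{\completed(\hs{\ex'})}$ in $\linmap_1(\ex')$. Define $K(\ex)\defeq\set{\op\st\op\text{ is needed at some }\ex'\pref\ex}$ and let $\linmap(\ex)$ be the subsequence of $\linmap_1(\ex)$ consisting of the actions of the operations in $K(\ex)$. Every completed operation appears in any linearization of its history (since $\setof{\completed(\h)}\subseteq\setof{\seqh}$ for $\h\linleq\seqh$), and a justifying write is by definition an operation of $\linmap_1(\ex')$; hence, by decisiveness of $\linmap_1$, $K(\ex)\subseteq\setof{\linmap_1(\ex)}$ and the restriction is well-defined. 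By construction $K$ is monotone with respect to $\pref$.

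It then remains to verify the five properties. For $\hs{\ex}\linleq\linmap(\ex)$ I would invoke \cref{lem:remove_pending} with $\seqh=\linmap_1(\ex)$, using $\setof{\completed(\hs{\ex})}\subseteq K(\ex)=\setof{\linmap(\ex)}$. For $\linmap(\ex)\in\regspec$ the only delicate point is the read condition: the discarded writes are exactly those that are the last-write-before no surviving read, so for each kept read $\rop$ its justifying write $\wop$ (kept, as $\rop$ is completed hence needed) remains the last write before $\rop$, giving $\outputf(\rop)=\inputf(\wop)=\obs(\prefof{\linmap(\ex)}{\rop})$ exactly as in $\linmap_1(\ex)$. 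Decisiveness follows from that of $\linmap_1$ together with monotonicity of $K$; response-activation holds because appending a non-response transition changes neither $\linmap_1$ nor the completed operations nor their justifying writes, so $K$ is unchanged; last-completed holds since a kept pending operation can only be a justifying write, which is followed in $\linmap(\ex)$ by the completed read it justifies and is therefore not last; and pending-read-free holds since a kept read can be needed only through the ``completed'' clause.

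The main obstacle is precisely the $\regspec$-membership after trimming: deleting writes can in general alter the value observed by a later read, so I cannot appeal to \cref{lem:regspec} with the whole of $\linmap_1(\ex)$ as witness. The argument instead rests on the invariant that the trimmed writes are never the last-write-before any surviving read, which is exactly what leaves every kept read's observed value intact. (The companion \cref{lem:wsnice} is proved along the same lines, the extra difficulty there being that the write-strong prefix condition further constrains which pending writes may be trimmed.)
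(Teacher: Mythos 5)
Your proof is correct, but it takes a genuinely different route from the paper's. The paper obtains a lazy mapping non-constructively: it orders decisive linearization mappings pointwise by $\subseq$, invokes a Zorn-style argument (\cref{lem:proc}, \cref{lem:awsexistproc}) to extract a $\subseq$-minimal one, and then derives each lazy property by contradiction from minimality via \cref{lem:local_change} (\cref{lem:readcompleted_aws,lem:permeq,lem:lastcompleted}): if a minimal linearization contained a pending read, a trailing non-completed operation, or an operation introduced by a non-response step, one could delete it and obtain a strictly $\subseq$-smaller mapping still in the class. You instead build the lazy mapping explicitly in two stages: precomposition with the truncate-to-last-response map (justified by \cref{lem:non_response}) for response-activation, followed by restriction to a monotone ``needed'' set $K(\ex)$ collecting, cumulatively over all prefixes, the completed operations and the last write before each completed read. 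Your treatment of the one delicate step is sound: for every kept read the last write preceding it in $\linmap_1(\ex)$ is itself needed at $\ex$ and hence kept, so the observed value is unchanged and $\regspec$-membership survives the trimming (and \cref{lem:remove_pending} then gives the linearization property); accumulating the needed sets over all prefixes is exactly what keeps $K$ monotone and hence preserves decisiveness even when a read's justifying write is later superseded. Your route is more elementary and constructive and avoids the choice principle; what the paper's minimality framework buys is reuse in the companion \cref{lem:wsnice}, where the write-strong prefix constraint forces the considerably heavier $\permleq$-minimality and read-reordering machinery that a direct construction would have to replicate.
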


Note that a corollary of \cref{lem:wsnice} is that every write strong-linearizable implementation is also decisively linearizable.

The following theorem presents the main useful property of lazy linearizations:
for every read operation in a linearization,
its return value is the input of some write that was added to the linearization after the read was invoked, or was the latest write in the linearization at the time of invocation.
(If no such write exists, the initial value is possibly returned.)

To formalize this claim, we first define the set of values ``observable between'' two sequential histories.

\begin{definition2}
	The set of values \emph{observable between} two sequential histories $\seqh_1,\seqh_2\in\regspec$ such that $\seqh_1\subseq\seqh_2$,
	denoted by $\obsbetween(\seqh_1,\seqh_2)$, is defined as follows:
	\begin{equation*}
		\obsbetween(\seqh_1,\seqh_2)=\set{\obs(\seqh_1)}\cup
		\set{\inputf(\wop)\st \wop\in\seqh_2\rst{\regwrite}\land\forall \hat{\wop}\in\seqh_1\rst{\regwrite}\ldotp \hat{\wop}<_{\seqh_2}\wop}
	\end{equation*}
\end{definition2}

The idea is that if $\seqh_1=\linmap(\ex_1)$ is the lazy linearization of an execution $\ex_1$ up to the invocation of some read,
and $\seqh_2=\linmap(\ex_2)$ where $\ex_2$ a continuation of $\ex_1$ up to a response of that read,
then that read always returns a value in $\obsbetween(\seqh_1,\seqh_2)$.
Formally, we prove the following:

\begin{theorem}\label{thm:lazyisgood}
	If $\impl$ is a register implementation, $\linmap:\exec{\impl}\to\regspec$ is a decisive and last-completed linearization mapping,
	$\ex\in\exec{\impl}$ is an execution,
	$\rop\in\linmap(\ex)\rst{\regread}$ is a read operation,
	and $\ex'$ is the prefix of $\ex$ up to the invocation of $\rop$ (exclusive),
	then $\outputf(\rop)\in\obsbetween(\linmap(\ex'),\linmap(\ex))$.
\end{theorem}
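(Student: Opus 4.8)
The plan is to start from the validity of $\linmap(\ex)\in\regspec$, which by \cref{def:regspec} forces $\outputf(\rop)=\obs(\prefof{\linmap(\ex)}{\rop})$; writing $\wop^*$ for the $<_{\linmap(\ex)}$-last write preceding $\rop$ (and taking $\outputf(\rop)=0$ when no write precedes $\rop$), this reads $\outputf(\rop)=\inputf(\wop^*)$. By decisiveness we have $\linmap(\ex')\subseq\linmap(\ex)$, so the writes of $\linmap(\ex')$ retain their relative order inside $\linmap(\ex)$; letting $\wop'$ be the $<_{\linmap(\ex)}$-maximal write of $\linmap(\ex')$ (when one exists), we get $\obs(\linmap(\ex'))=\inputf(\wop')$. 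The goal is then to match $\inputf(\wop^*)$ against the two components of $\obsbetween(\linmap(\ex'),\linmap(\ex))$: either it equals $\obs(\linmap(\ex'))$, or $\wop^*$ lies after every write of $\linmap(\ex')$.

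The heart of the argument is the claim that $\wop'<_{\linmap(\ex)}\rop$ whenever $\linmap(\ex')$ contains a write. First I would note that, since $\ex'$ is the prefix of $\ex$ ending just before $\rop$ is invoked, every operation completed in $\hs{\ex'}$ has its response before $\rop$'s invocation, hence precedes $\rop$ in $\linmap(\ex)$ by the real-time order preservation built into linearizations. If $\wop'$ is completed in $\hs{\ex'}$, this already gives $\wop'<_{\linmap(\ex)}\rop$. The delicate subcase is when $\wop'$ is \emph{pending} in $\hs{\ex'}$: here I would invoke last-completedness, which forbids $\wop'$ from being the last operation of $\linmap(\ex')$, so the genuine last operation $\op_{\text{last}}$ of $\linmap(\ex')$ is completed in $\hs{\ex'}$ and satisfies $\wop'<_{\linmap(\ex')}\op_{\text{last}}$. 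Decisiveness transports this to $\wop'<_{\linmap(\ex)}\op_{\text{last}}$, the real-time observation gives $\op_{\text{last}}<_{\linmap(\ex)}\rop$, and chaining yields $\wop'<_{\linmap(\ex)}\rop$.

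With $\wop'<_{\linmap(\ex)}\rop$ established, $\wop'$ is a write preceding $\rop$, so maximality of $\wop^*$ gives $\wop'\leq_{\linmap(\ex)}\wop^*$, and thus every write of $\linmap(\ex')$ is $\leq_{\linmap(\ex)}\wop^*$. I would then split on whether $\wop^*\in\setof{\linmap(\ex')}$: if not, all writes of $\linmap(\ex')$ precede $\wop^*$ strictly, so $\inputf(\wop^*)$ sits in the ``new writes'' component of $\obsbetween$; if so, maximality of $\wop'$ forces $\wop^*=\wop'$, giving $\inputf(\wop^*)=\obs(\linmap(\ex'))$. The degenerate cases close directly: when $\linmap(\ex')$ has no write, $\obs(\linmap(\ex'))=0$ and the ``after all writes of $\linmap(\ex')$'' condition is vacuous, so it simultaneously covers $\outputf(\rop)=0$ and $\outputf(\rop)=\inputf(\wop^*)$. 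I expect the pending-write subcase of the key claim---reconciling a write that already linearizes inside $\linmap(\ex')$ yet has not completed by the time $\rop$ is invoked---to be the main obstacle, and the appeal to last-completedness there is precisely what makes that hypothesis indispensable.
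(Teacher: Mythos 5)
Your proposal is correct and follows essentially the same route as the paper's proof: both hinge on decisiveness transporting the order of $\linmap(\ex')$ into $\linmap(\ex)$, last-completedness placing the last operation of $\linmap(\ex')$ (and hence its maximal write) before $\rop$ via real-time order, and then comparing that maximal write with the last write preceding $\rop$ in $\linmap(\ex)$. The only difference is organizational -- you case-split directly on whether $\wop^*$ belongs to $\linmap(\ex')$, while the paper argues contrapositively from the assumption that the output is not in the ``new writes'' component -- and your completed/pending subcase split for the key claim is subsumed in the paper by the single chain $\wop'\leq_{\linmap(\ex')}\op'$ with $\op'$ completed.
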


In other words,
either the read operation $\rop$ returns the value that can be read after $\linmap(\ex')$, the linearization of the execution up to its invocation,
or it returns the input of some write in $\linmap(\ex)$ which appears after all writes which were already present in $\linmap(\ex')$.

\begin{proof}
	By \cref{def:regspec} we have $\outputf(\rop)=\obs(\prefof{\linmap(\ex)}{\rop})$.
	Thus it is sufficient to assume
	\begin{equation}\label{eq:lazyisgood_implication}
		\obs(\prefof{\linmap(\ex)}{\rop})\notin\set{\inputf(\wop)\st \wop\in\linmap(\ex)\rst{\regwrite}\land\forall \hat{\wop}\in\linmap(\ex')\rst{\regwrite}\ldotp \hat{\wop}<_{\linmap(\ex)}\wop}
	\end{equation}
	and prove that $\obs(\prefof{\linmap(\ex)}{\rop})=\obs(\linmap(\ex'))$.
	
	Denote $\wop=\lastop(\prefof{\linmap(\ex)}{\rop}\rst{\regwrite})$, $\wop'=\lastop(\linmap(\ex')\rst{\regwrite})$,
	and $\op'=\lastop(\linmap(\ex'))$.
	
	By definition of $\obs$ it is sufficient to show that $\wop=\wop'$. (Possibly both are $\emptyword$, if the relevant sequences contain no writes.)
	Consider the following cases:
	\begin{itemize}
		\item If $\wop\neq\emptyword$, then
		$\wop\in\prefof{\linmap(\ex)}{\rop}$ and we obtain that $\wop\in\linmap(\ex)$. By definition, $\obs(\prefof{\linmap(\ex)}{\rop})=\inputf(\wop)$.
		Thus, from assumption (\ref{eq:lazyisgood_implication}) we obtain that there must exist some $\hat{\wop}\in\linmap(\ex')$ such that $\wop\leq_{\linmap(\ex)}\hat{\wop}$.
		As $\hat{\wop}\in\linmap(\ex')$,
		using the definition of $\wop'$ we obtain $\hat{\wop}\leq_{\linmap(\ex')}\wop'$.
		Since $\linmap$ is decisive, $\linmap(\ex')\subseq\linmap(\ex)$
		and we obtain that $\hat{\wop}\leq_{\linmap(\ex)}\wop'$ and overall $\wop\leq_{\linmap(\ex)}\wop'$.
		We also obtain in particular that $\wop',\op'\neq\emptyword$, and so by definition we have
		$\wop'\leq_{\linmap(\ex')}\op'$, which similarly to the above implies $\wop'\leq_{\linmap(\ex)}\op'$.
		As $\op'=\lastop(\linmap(\ex'))$ and $\linmap$ is last-completed, we must have $\op'\in\setof{\completed(\hs{\ex'})}$.
		That is, $\op'$ is completed in $\ex'$ and as $\rop$ is not yet invoked in $\ex'$ we obtain that $\op'<_{\hs{\ex}}\rop$ and thus $\op'<_{\linmap(\ex)}\rop$.
		We deduce that $\wop'\leq_{\linmap(\ex)}\rop$ and so by definition of $\wop$ we obtain that $\wop'\leq_{\linmap(\ex)}\wop$.
		As we have already shown $\wop\leq_{\linmap(\ex)}\wop'$ we deduce that $\wop=\wop'$.

		\item If $\wop=\emptyword$, then to show $\wop'=\emptyword$ it is sufficient to show that $\linmap(\ex')\rst\regwrite=\emptyword$.
		Indeed, if we assume otherwise, there exists some $\hat{\wop}\in\linmap(\ex')$.
		Like we did in the previous case, we can deduce that $\hat{\wop}\leq_{\linmap(\ex)}\wop'$ and thus $\wop',\op'\neq\emptyword$.
		From that we can deduce exactly like above that $\wop'\leq_{\linmap(\ex)}\rop$, contradicting the fact that $\wop=\emptyword$ and so no write operation appears before $\rop$ in $\linmap(\ex)$.
		We obtain that $\wop'=\emptyword=\wop$, as required. \qedhere
	\end{itemize}
\end{proof}

\subsection{Hardness Proof}\label{subsec:hard_proof}

We begin by proving the $\awslin$-hardness of $\dreg$.
Following this proof, we explain how the proof is modified to demonstrate $\wslin$-hardness of $\wsreg$.

\begin{theorem}\label{thm:awshard}
	$\awsreg$ is $\awslin$-hard.
\end{theorem}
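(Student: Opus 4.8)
The plan is to prove $\impl \leqsim \dreg$ for an arbitrary $\impl \in \dlin$, which is exactly $\dlin$-hardness. First I would replace $\impl$ by its execution recorder: by \cref{prop:ghost} we have $\impl \leqsim \ghost\impl$ and $\ghost\impl \leqsim \impl$, so since forward simulation composes it suffices to establish $\ghost\impl \leqsim \dreg$. Working with $\ghost\impl$ lets me treat each reachable state as the full execution $\ex \in \exec{\impl}$ that produced it, so that I may consult its linearization directly. Since $\impl \in \dlin$, \cref{lem:awsnice} supplies a \emph{lazy} linearization mapping $\linmap : \exec{\impl} \to \regspec$ --- decisive, response-activated, last-completed, and pending-read-free. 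Response-activatedness is the workhorse: $\linmap(\ex) = \linmap(\ex\cdot\trans)$ whenever $\trans \notin \Res$, so the linearization only ever grows at response transitions, and invocation/internal transitions require no interesting work in $\dreg$.

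Next I would define a forward simulation $\fsim$ relating an execution $\ex$ to any reachable $\dreg$-state that faithfully encodes $\linmap(\ex)$. The intended invariants are: $\reg$ holds $\obs(\linmap(\ex))$, the value of the rightmost write of $\linmap(\ex)$; the version counter $\ver$ records the rank $\dreg$ assigns to that rightmost write (so larger ranks correspond to later positions in $\linmap$); each thread's program counter mirrors the progress of its operation in $\hs\ex$ (an invoked but unfinished read sits in $\rlisten$, a write sits before its store); the start version recorded by a read equals the rank present when it was invoked; and the value set $\valset$ of every listening read holds each value the simulation has exposed in $\reg$, at rank no smaller than that read's start, since the read's invocation. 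Invocation and response labels are matched identically, so $\fsim$ respects $\Inv\Res(\Reg)$ as required.

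The core of the argument is the replay performed at a response transition $\trans$, where by decisiveness $\linmap(\ex) \subseq \linmap(\ex\cdot\trans)$ may gain the responding operation together with writes forced in to justify it. I would process the newly added writes in linearization order, simulating each by a short $\dreg$ fragment: a write that becomes the new rightmost element is realized by a $\writetop$ step (incrementing $\ver$, overwriting $\reg$), while a write inserted before the current rightmost is realized by the roll-back path --- a $\writerollback$ step that momentarily stores its value at the lower version $\ver - 1$, followed by a $\writerollforward$ step restoring $\reg$ and $\ver$ --- which is exactly why $\dreg$ exposes ``overwritten'' values and thereby reestablishes $\reg = \obs(\linmap(\ex\cdot\trans))$ after the batch. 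Immediately after each store, while $\reg$ transiently carries that value, every listening read whose start version permits it fires $\readloop$ to absorb the value into its $\valset$. When the responding operation is itself a read I then fire its $\readpick$ and $\readres$; correctness of $\readpick$ follows from \cref{thm:lazyisgood}, which places the read's return value in $\obsbetween(\linmap(\ex'),\linmap(\ex\cdot\trans))$ for $\ex'$ the prefix up to its invocation --- so it is either the value present at invocation (loaded at $\readinit$) or the input of a write linearized after every write present at invocation, and in both cases it was exposed in $\reg$ at a rank $\geq$ the read's start and hence already lies in $\valset$.

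The main obstacle is making the write-insertion step rigorous. I must show that whenever the lazy linearization inserts a write into the interior of the current linearization rather than appending it, the insertion lands at a position $\dreg$'s roll-back can realize while respecting its preconditions --- in particular that the transiently exposed version is consistent with the ranks of the surrounding writes and with the recorded start versions of the reads meant to observe it --- and that the rank correspondence and the invariant $\reg = \obs(\linmap(\ex\cdot\trans))$ are restored afterwards. Crucially I must verify that the lock discipline permits exactly the desired interleaving: during the window between $\writerollback$ and $\writerollforward$, pending listening reads may still fire $\readloop$ to capture the transient value (the lock does not guard $\readloop$), whereas no new read may set a start version off the temporary rank and no other write may commit. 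This is precisely where the argument departs from the $\wsreg$-hardness proof, in which every newly linearized write is appended at the end and the roll-back machinery is never exercised; the remaining cases (invocations, internal steps, and write responses) then follow routinely from response-activatedness and the invariants.
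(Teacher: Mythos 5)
Your proposal matches the paper's proof essentially step for step: reduce to $\ghost\impl\leqsim\dreg$ via the execution recorder, invoke \cref{lem:awsnice} for a lazy linearization, maintain a simulation invariant with $\reg=\obs(\linmap(\ex))$ together with valset and start-version conditions on listening reads, replay the newly linearized writes at each response transition via $\writetop$ or the roll-back path interleaved with $\readloop$ steps, and justify $\readpick$ via \cref{thm:lazyisgood}. The only cosmetic difference is that the paper chooses between $\writetop$ and roll-back for the whole batch of new writes at once, according to whether $\lastop(\linmap(\ex\cdot\trans)\rst{\regwrite})$ changes, rather than per write, and it discharges your ``main obstacle'' with the invariant that $\Tmapstart(\tid)=\ver$ forces the rightmost write at invocation time to still be rightmost (its condition (R5)), which yields the roll-back precondition $\ver>\Tmapstart(\tid)$.
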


\begin{proof}
	Let $\impl\in\awslin$ be a decisively linearizable register.
	By \cref{lem:awsnice}, there exists a lazy linearization mapping
	$\linmap:\exec{\impl}\to\regspec$.
	Using \cref{prop:ghost}, it suffices to prove $\ghost\impl\leqsim\awsreg$.
	
	Given an execution $\ex\in\exec{\impl}$ and a thread $\tid\in\Tid$, we write $\inv^{\ex}_\tid$ to denote the single pending invocation of thread $\tid$ in $\hs{\ex}$, or $\emptyword$ if such an invocation does not exist.
	We abuse notation and write $\prefinv{\inv^\ex_\tid}$ to denote $\prefof{\ex}{\trans}$ where $\trans=\emptyword$ if $\inv^{\ex}_\tid=\emptyword$, and otherwise $\trans\in\ex\rst{\tid}$ is the transition such that $\lab(\trans)=\inv^{\ex}_\tid$.
	
	A simulation relation $\fsim$ is defined by
	$\tup{\ex,\tup{\val,\ver,\lockLTS,\Tmapid,\Tmappc,\Tmapvalset,\Tmapstart,\Tmaparg}}\in\fsim$ iff
	the following conditions hold:

	\begin{enumerate}[label=(\condletter\arabic*),ref=\arabic*,leftmargin=1cm]
		\item\label{awshard:1} $\val=\obs(\linmap(\ex))$.
		
		\item\label{awshard:2} $\lockLTS=0$.

		\item\label{awshard:3} For every $\tid\in\Tid$ and $\id\in\Id$, $\id\in\Tmapid(\tid)\iff\exists\act\in\hs{\ex}\rst{\tid}\ldotp\idf(\act)=\id$.

		\item\label{awshard:4} $0\leq\Tmapstart(\tid)\leq\ver$ for every $\tid\in\Tid$.
		
		\item\label{awshard:5} For every $\tid\in\Tid$, if $\Tmapstart(\tid)=\ver$, then $\lastop(\linmap(\ex)\rst{\regwrite})=\lastop(\linmap(\prefinv{\inv^\ex_\tid})\rst{\regwrite})$.

		\item\label{awshard:6} For every $\tid\in\Tid$, denoting $\id=\Tmapid(\tid)[\last]$, one of the following holds:
		\begin{enumerate}[label=(\condletter\theenumi\alph*),ref=\theenumi\alph*,leftmargin=1cm]
			
			\item\label{awshard:6a} $\Tmappc(\tid)=\main$ and $\inv^{\ex}_\tid=\emptyword$.
			
			\item\label{awshard:6b} $\Tmappc(\tid)=\rlisten$, $\inv^{\ex}_\tid=\invact\tup{\regread,\bot,\tid,\id}$, $\inv^{\ex}_\tid\in\hs{\ex}$, $\inv^{\ex}_\tid\notin\linmap(\ex)$, and \\
			$\obsbetween(\linmap(\prefinv{\inv^\ex_\tid}),\linmap(\ex))\subseteq\Tmapvalset(\tid)$.
			
			\item\label{awshard:6c} $\Tmappc(\tid)=\wwait$, $\inv^{\ex}_\tid=\invact\tup{\regwrite,\Tmaparg(\tid),\tid,\id}$, $\inv^{\ex}_\tid\in\hs{\ex}$, and $\inv^{\ex}_\tid\notin\linmap(\ex)$.
			
			\item\label{awshard:6d} $\Tmappc(\tid)=\wend$ and $\inv^{\ex}_\tid=\invact\tup{\regwrite,\_,\tid,\id}\in\linmap(\ex)$.
			
		\end{enumerate}
	\end{enumerate}
	
	Clearly, for the initial states we have
	$\tup{\emptyword,\awsreg.\linit}\in\fsim$.
	
	Given states $\tup{q,\abs q}=\tup{\ex,\tup{\val,\ver,\lockLTS,\Tmapid,\Tmappc,\Tmapvalset,\Tmapstart,\Tmaparg}}\in\fsim$
	and a transition $\ex \asteplab{\act}{\ghost\impl} \ex\cdot\trans=q'$ where $\act=\lab(\trans)$, we present a sequence of transitions $\abs q\bsteplab{\actsq}{\awsreg}{\abs q}'$ such that $\act\rst{\Inv\Res(\Reg)}=\actsq\rst{\Inv\Res(\Reg)}$,
	and prove that $\tup{q',{\abs q}'}\in\fsim$.
	We omit the justification of conditions for which all relevant values are unchanged between the pre-state and post-state.
	In particular:
	\begin{itemize}
		\item If $\act\notin\Res$ then from $\linmap$ being response-activated we have $\linmap(\ex\cdot\trans)=\linmap(\ex)$,
		and so conditions for which no other values are changed are omitted.
		
		\item For conditions on each $\tid\in\Tid$, we justify them only for thread ids for which the relevant values were changed.
		Specifically we note that $\inv^{\ex}_\tid=\inv^{\ex\cdot\trans}_\tid$ whenever $\trans$ is not an invocation or response of thread $\tid$, and in such cases as $\hs{\ex}\pref\hs{\ex\cdot\trans}$ we also have $\inv^{\ex}_\tid\in\hs{\ex}\implies\inv^{\ex\cdot\trans}_\tid\in\hs{\ex\cdot\trans}$
	\end{itemize}
	We divide into cases based on $\act$:
	\begin{itemize}
		\item If $\act\in\Inv$, then it is of the form $\act=\invact\tup{\method,\valin,\tid,\id}$.
		From $\ex\cdot\trans$ being a well formed execution we get that $\id$ does not appear in any action in $\hs{\ex}\rst{\tid}$ and thus from \cref{awshard:3} we have $\id\notin\Tmapid(\tid)$.
		We also obtain that necessarily $\inv^{\ex}_\tid=\emptyword$ and thus from \cref{awshard:6a} we have $\Tmappc(\tid)=\main$.
		\begin{itemize}
			\item If $\method=\regread$, then $\valin=\bot$ and a transition using \readinv\ is available:\\
			$\tup{\val,\ver,\lockLTS,\Tmapid,\Tmappc,\Tmapvalset,\Tmapstart,\Tmaparg} \asteplab{\act}{\awsreg} \tup{\val,\ver,\lockLTS,\Tmapid[\tid \mapsto \Tmapid(\tid)\cdot\id],\Tmappc[\tid \mapsto \rbegin],\Tmapvalset,\Tmapstart,\Tmaparg}$.
			
			Then, as $\lockLTS=0$, a transition using \readinit\ is available:\\
			$\tup{\val,\ver,\lockLTS,\Tmapid[\tid \mapsto \Tmapid(\tid)\cdot\id],\Tmappc[\tid \mapsto \rbegin],\Tmapvalset,\Tmapstart,\Tmaparg} \asteplab{\objact\tup{\tid}}{\awsreg}\\
			\tup{\val,\ver,\lockLTS,\Tmapid[\tid \mapsto \Tmapid(\tid)\cdot\id],\Tmappc[\tid \mapsto \rlisten],\Tmapvalset[\tid\mapsto\set{\val}],\Tmapstart[\tid\mapsto\ver],\Tmaparg}$.
			
			For \cref{awshard:3}, note that $\idf(\act)=\id\in\Tmapid[\tid \mapsto \Tmapid(\tid)\cdot\id](\tid)$ and indeed $\act\in\hs{\ex\cdot\trans}\rst{\tid}$.
			For \cref{awshard:4} we use the fact that $\Tmapstart[\tid\mapsto\ver](\tid)=\ver$.
			For \cref{awshard:5} we note that $\inv^{\ex\cdot\trans}_\tid=\act$,
			and thus
			$\prefinv{\inv^{\ex\cdot\trans}_\tid}=\prefof{\ex\cdot\trans}{\trans}=\ex$, and we obtain that
			$\lastop(\linmap(\prefinv{\inv^{\ex\cdot\trans}_\tid})\rst{\regwrite})=\lastop(\linmap(\ex)\rst{\regwrite})$, as required.
			For \cref{awshard:6}, we show \cref{awshard:6b} holds. We have $\Tmappc[\tid \mapsto \rlisten](\tid)=\rlisten$. As $\inv^{\ex\cdot\trans}_\tid=\act$, we have that $\inv^{\ex\cdot\trans}_\tid=\invact\tup{\regread,\bot,\tid,\id}$, where as required $\id=\Tmapid[\tid \mapsto \Tmapid(\tid)\cdot\id](\tid)[\last]$.
			Moreover,  $\inv^{\ex\cdot\trans}_\tid=\act\in\hs{\ex\cdot\trans}$, and from $\linmap(\ex)$ being a valid linearization $\inv^{\ex\cdot\trans}_\tid\notin\linmap(\ex)=\linmap(\ex\cdot\trans)$.
			Finally, we need to show for \cref{awshard:6b} that\\
			$\obsbetween(\linmap(\prefinv{\inv^{\ex\cdot\trans}_\tid}),\linmap({\ex\cdot\trans}))\subseteq\Tmapvalset[\tid\mapsto\set{\val}](\tid) $.
			
			As $\prefinv{\inv^{\ex\cdot\trans}_\tid}=\ex$ and $\linmap(\ex)=\linmap(\ex\cdot\trans)$
			we have that $\obsbetween(\linmap(\prefinv{\inv^{\ex\cdot\trans}_\tid}),\linmap({\ex\cdot\trans}))=\obsbetween(\linmap(\ex),\linmap(\ex))$.
			As there is no $\wop\in\linmap(\ex)\rst{\regwrite}$ such that $\wop'<_{\linmap(\ex)}\wop$ for all $\wop'\in\linmap(\ex)\rst{\regwrite}$,
			we obtain that $\obsbetween(\linmap(\ex),\linmap(\ex))=\set{\obs(\linmap(\ex))}$.
			Overall the requirement simplifies to
			$\set{\obs(\linmap(\ex))}\subseteq\set{\val}$, which follows from \cref{awshard:1}.
			
			\item If $\method=\regwrite$, then  a transition using \writeinv\ is available:\\
			$\tup{\val,\ver,\lockLTS,\Tmapid,\Tmappc,\Tmapvalset,\Tmapstart,\Tmaparg} \asteplab{\act}{\awsreg}\\ \tup{\val,\ver,\lockLTS,\Tmapid[\tid \mapsto \Tmapid(\tid)\cdot\id],\Tmappc[\tid \mapsto \wbegin],\Tmapvalset,\Tmapstart,\Tmaparg[\tid\mapsto \valin]}$.
			
			Then, as $\lockLTS=0$, a transition using \writeinit\ is available:\\
			$\tup{\val,\ver,\lockLTS,\Tmapid[\tid \mapsto \Tmapid(\tid)\cdot\id],\Tmappc[\tid \mapsto \wbegin],\Tmapvalset,\Tmapstart,\Tmaparg[\tid\mapsto \valin]} \asteplab{\objact\tup{\tid}}{\awsreg}\\
			\tup{\val,\ver,\lockLTS,\Tmapid[\tid \mapsto \Tmapid(\tid)\cdot\id],\Tmappc[\tid \mapsto \wwait],\Tmapvalset,\Tmapstart[\tid\mapsto\ver],\Tmaparg[\tid\mapsto \valin]}$.
			
			Conditions \cref{awshard:3},\cref{awshard:4}, and \cref{awshard:5} are justified exactly like in the case of $\method=\regread$.
			For \cref{awshard:6}, we show \cref{awshard:6c} holds. We have $\Tmappc[\tid \mapsto \wwait](\tid)=\wwait$. As explained in the case of $\method=\regread$, $\inv^{\ex\cdot\trans}_\tid=\act$. Here this implies that $\inv^{\ex\cdot\trans}_\tid=\invact\tup{\regwrite,\valin,\tid,\id}$, where as required $\id=\Tmapid[\tid \mapsto \Tmapid(\tid)\cdot\id](\tid)[\last]$ and $\valin=\Tmaparg[\tid\mapsto \valin](\tid)$.
			The fact that $\inv^{\ex\cdot\trans}_\tid\in\hs{\ex\cdot\trans}$ and $\inv^{\ex\cdot\trans}_\tid\notin\linmap(\ex\cdot\trans)$ is justified exactly like in the case of $\method=\regread$.
			
		\end{itemize}
		\item If $\act\in\Objint$, we perform an empty sequence of transitions.
		Note that $\hs{\ex}=\hs{\ex\cdot\trans}$,
		and more generally no values relevant for the conditions are changed, so all conditions are maintained.
		
		\item If $\act\in\Res$, it is of the form $\act=\resact\tup{\method,\valout,\tid,\id}$.
		As $\hs{\ex\cdot\trans}$ is well formed, this response corresponds to an invocation. That is, $\inv^{\ex}_\tid\neq\emptyword$ and $\op\defeq\inv^{\ex}_\tid\cdot\act$ is an inv-res pair (\ie an operation).
		Specifically $\tid=\tidf(\inv^{\ex}_\tid)$, $\id=\idf(\inv^{\ex}_\tid)$, and from \cref{awshard:6} we get $\id=\Tmapid(\tid)[\last]$.

		The sequence of transitions we perform will be composed of two parts:
		\begin{enumerate}
			\item A sequence of write and read transitions, with a structure that depends on whether a new write was linearized to the right of all existing writes (\ie $\lastop(\linmap(\ex\cdot\trans)\rst{\regwrite})\neq\lastop(\linmap(\ex)\rst{\regwrite})$).
			
			\item A sequence with a response transition matching $\act$.
		\end{enumerate}
		
		For the first part, we begin by proving two technical claims about transitions available for particular sets of thread identifiers.
		
		Denote $\writeopset=\setof{\linmap(\ex\cdot\trans)\rst{\regwrite}}\setminus\setof{\linmap(\ex)\rst{\regwrite}}$
		the set of writes which appear in the linearization of $\ex\cdot\trans$ but not in the linearization of $\ex$.
		Also denote $\writetidset=\set{\tidf(\wop)\st \wop\in\writeopset}$.
		\begin{claim2}
			\label{claim:new_write_set}
			Let $\tid'\in\writetidset$. The following holds:
			\begin{enumerate*}
				\item $\Tmappc(\tid')=\wwait$, $\inv^{\ex}_{\tid'}=\invact\tup{\regwrite,\Tmaparg(\tid'),\tid',\id'}$ where $\id'=\Tmapid(\tid')[\last]$,
				and $\Tmaparg(\tid')=\inputf(\wop')$ for a unique write operation $\wop'\in\writeopset$ such that $\tidf(\wop')=\tid'$.
				\item $\set{\Tmaparg(\tid')\st\tid'\in\writetidset}=\set{\inputf(\wop')\st \wop'\in\writeopset}$.
				\item If additionally $\lastop(\linmap(\ex\cdot\trans)\rst{\regwrite})=\lastop(\linmap(\ex)\rst{\regwrite})$, then $\Tmapstart(\tid')<\ver$.
			\end{enumerate*}
		\end{claim2}
		\newline
		
		The idea here is that writes that were just added to the linearization following the last transition are all pending writes which did not store their value yet before said transition.
		
		\begin{claimproof}
			Let $\wop'\in\writeopset$ be some write operation such that $\tidf(\wop')=\tid'$. We prove all required properties:
			\begin{enumerate}
				\item Let $\inv_{\wop'}$ be the invocation of $\wop'$.
				As $\wop'\in\setof{\linmap(\ex\cdot\trans)}$, by definition of linearizability it holds that $\inv_{\wop'}\in\hs{\ex\cdot\trans}$.
				As $\act=\lab(\trans)\in\Res$ we have that $\inv_{\wop'}\neq\act$ and so $\inv_{\wop'}\in\hs{\ex\cdot\trans}=\hs{\ex}\cdot\act$ implies that $\inv_{\wop'}\in\hs{\ex}$.
				As $\wop'\notin\setof{\linmap(\ex)\rst{\regwrite}}$
				we obtain that $\inv_{\wop'}=\inv^{\ex}_{\tid'}$ as the invocation must be pending in $\ex$ to not appear in the linearization $\linmap(\ex)$.
				This also means $\wop'$ is a unique operation in $\writeopset$ such that $\tidf(\wop')=\tid'$, as no thread has two pending operations at once.
				Overall we have $\inv^{\ex}_{\tid'}\in\hs{\ex}$, $\inv^{\ex}_{\tid'}\notin\linmap(\ex)$ and $\methodf(\inv^{\ex}_{\tid'})=\regwrite$ and thus \cref{awshard:6c} must hold and we obtain that $\Tmappc(\tid')=\wwait$ and $\inv^{\ex}_{\tid'}=\invact\tup{\regwrite,\Tmaparg(\tid'),\tid',\id'}$.
				In particular as $\inv_{\wop'}=\inv^{\ex}_{\tid'}=\invact\tup{\regwrite,\Tmaparg(\tid'),\tid',\id'}$ we obtain that $\Tmaparg(\tid')=\inputf(\wop')$.
				
				\item As every $\wop'\in\writeopset$ has a corresponding $\tidf(\wop')\in\writetidset$, and as we have shown $\Tmaparg(\tidf(\wop'))=\inputf(\wop')$,
				we overall obtain that
				$\set{\Tmaparg(\tid')\st\tid'\in\writetidset}=\set{\inputf(\wop')\st \wop'\in\writeopset}$.
				
				\item Assume additionally $\lastop(\linmap(\ex\cdot\trans)\rst{\regwrite})=\lastop(\linmap(\ex)\rst{\regwrite})$.
				Using \cref{awshard:2} we have $\Tmapstart(\tid')\leq\ver$ and thus it suffices to show that  $\Tmapstart(\tid')\neq\ver$.
				
				Assume otherwise.
				As $\wop'\in\setof{\linmap(\ex\cdot\trans)\rst{\regwrite}}$ we have in particular that $\linmap(\ex\cdot\trans)\rst{\regwrite}\neq\emptyword$ and so $\wop^\last\eqdef\lastop(\linmap(\ex\cdot\trans)\rst{\regwrite})\neq\emptyword$.
				Using \cref{awshard:5} and $\Tmapstart(\tid')=\ver$
				we obtain that
				$\lastop(\linmap(\ex)\rst{\regwrite})=\lastop(\linmap(\prefinv{\inv^\ex_{\tid'}})\rst{\regwrite})$.
				As we assume $\lastop(\linmap(\ex\cdot\trans)\rst{\regwrite})=\lastop(\linmap(\ex)\rst{\regwrite})$,
				we overall obtain that $\wop^\last=\lastop(\linmap(\prefinv{\inv^\ex_{\tid'}})\rst{\regwrite})$.
				Now, denote $\op^\last\defeq\lastop(\linmap(\prefinv{\inv^\ex_{\tid'}}))$
				we obtain that
				$\wop^\last\leq_{\linmap(\prefinv{\inv^\ex_{\tid'}})}\op^\last$.
				Using decisiveness of $\linmap$ we have that $\linmap(\prefinv{\inv^\ex_{\tid'}})\subseq\linmap(\ex\cdot\trans)$.
				We deduce that $\wop^\last\leq_{\linmap(\ex\cdot\trans)}\op^\last$.
				From $\linmap$ being last-completed we also have that $\op^\last\in\setof{\completed(\hs{\prefinv{\inv^\ex_{\tid'}}})}$.
				That is, the response of $\op^\last$ appears in $\ex$ before the invocation of $\wop'$.
				This implies $\op^\last<_{\hs{\ex\cdot\trans}}\wop'$ and thus $\op^\last<_{\linmap(\ex\cdot\trans)}\wop'$.
				Overall $\wop^\last<_{\linmap(\ex\cdot\trans)}\wop'$, contradicting the definition of $\wop^\last$.\claimqedhere
			\end{enumerate}
		\end{claimproof}
		
		Denote $\readtidset=\set{\hat{\tid}\in\Tid\st \Tmappc(\hat{\tid})=\rlisten}$ the set of thread identifiers of pending read operations currently executing their loop.
		
		\begin{claim2}
			\label{claim:listening_reads}
			$\obsbetween(\linmap(\prefinv{\inv^\ex_{\hat{\tid}}}),\linmap(\ex\cdot\trans))\subseteq\Tmapvalset(\hat{\tid})\cup \set{\Tmaparg(\tid')\st\tid'\in\writetidset} $
			for every $\hat{\tid}\in\readtidset$.
			If additionally $\Tmapstart(\hat{\tid})=\ver$ and $\lastop(\linmap(\ex\cdot\trans)\rst{\regwrite})=\lastop(\linmap(\ex)\rst{\regwrite})$, then
			$\obsbetween(\linmap(\prefinv{\inv^\ex_{\hat{\tid}}}),\linmap(\ex\cdot\trans))\subseteq\Tmapvalset(\hat{\tid})$.
		\end{claim2}
		\newline
		
		The idea here is that any potential return value for a pending read is either already present in the set of values loaded by that read, or is the input value of some write which was just added to the linearization.
		The second option is not needed if the read method started on the latest version number,
		and the added writes were inserted between existing writes, as those are essentially performed with an earlier version numbers.
		
		\begin{claimproof}
			Let $\val'\in\obsbetween(\linmap(\prefinv{\inv^\ex_{\hat{\tid}}}),\linmap(\ex\cdot\trans))$.
			Using the definition of $\obsbetween$, there are two cases to consider:
			\begin{itemize}
				\item If $\val'=\obs(\linmap(\prefinv{\inv^\ex_{\hat{\tid}}}))$:
				
				By definition $\val'\in\obsbetween(\linmap(\prefinv{\inv^\ex_{\hat{\tid}}}),\linmap(\ex))$.
				Since $\Tmappc(\hat{\tid})=\rlisten$ we obtain from \cref{awshard:6b} that	$\obsbetween(\linmap(\prefinv{\inv^\ex_{\hat{\tid}}}),\linmap(\ex))\subseteq\Tmapvalset(\hat{\tid})$,
				and thus $\val'\in\Tmapvalset(\hat{\tid})$.
				
				\item If $\val'=\inputf(\wop')$ for some $\wop'\in\linmap(\ex\cdot\trans)\rst{\regwrite}$
				such that $\hat{\wop}<_{\linmap(\ex\cdot\trans)}\wop'$ for all $\hat{\wop}\in\linmap(\prefinv{\inv^\ex_{\hat{\tid}}})\rst{\regwrite}$:
				
				We consider two sub-cases:
				\begin{itemize}
					\item If $\wop'\notin\writeopset=\setof{\linmap(\ex\cdot\trans)\rst{\regwrite}}\setminus\setof{\linmap(\ex)\rst{\regwrite}}$:
					
					As $\wop'\in\linmap(\ex\cdot\trans)\rst{\regwrite}$, it must be the case that $\wop'\in\linmap(\ex)\rst{\regwrite}$.
					Since $\linmap$ is decisive we have $\linmap(\prefinv{\inv^\ex_{\hat{\tid}}})\subseq\linmap(\ex)\subseq\linmap(\ex\cdot\trans)$.
					Therefore, for every $\hat{\wop}\in\linmap(\prefinv{\inv^\ex_{\hat{\tid}}})$,
					$\hat{\wop}\in\linmap(\ex)$ and  $\hat{\wop}<_{\linmap(\ex\cdot\trans)}\wop'$ implies that $\hat{\wop}<_{\linmap(\ex)}\wop'$.
					Overall by definition of $\obsbetween$ we obtain that $\val'=\inputf(\wop')\in \obsbetween(\linmap(\prefinv{\inv^\ex_{\hat{\tid}}}),\linmap(\ex))$,
					which as before implies that $\val'\in\Tmapvalset(\hat{\tid})$.
					
					\item If $\wop'\in\writeopset$:
					
					By definition $\val'\in\set{\inputf(\wop)\st \wop\in\writeopset}$.
					Using \cref{claim:new_write_set} we have
					$\set{\Tmaparg(\tid')\st\tid'\in\writetidset}=\set{\inputf(\wop)\st \wop\in\writeopset}$,
					and so $\val'\in\set{\Tmaparg(\tid')\st\tid'\in\writetidset}$.
				\end{itemize}
				
			\end{itemize}
			
			Assume now additionally that $\Tmapstart(\hat{\tid})=\ver$ and $\lastop(\linmap(\ex\cdot\trans)\rst{\regwrite})=\lastop(\linmap(\ex)\rst{\regwrite})$.
			It suffices to show that the case $\wop'\in\writeopset$ is now impossible.
			(Indeed, in all other cases we proved that $\val'\in\Tmapvalset(\hat{\tid})$.)
			Assume otherwise.
			From $\Tmapstart(\hat{\tid})=\ver$, using \cref{awshard:5} we deduce that $\lastop(\linmap(\ex)\rst{\regwrite})=\lastop(\linmap(\prefinv{\inv^\ex_{\hat{\tid}}})\rst{\regwrite})$.
			Thus using the assumption that $\lastop(\linmap(\ex\cdot\trans)\rst{\regwrite})=\lastop(\linmap(\ex)\rst{\regwrite})$
			we overall obtain that $\lastop(\linmap(\ex\cdot \trans)\rst{\regwrite})=\lastop(\linmap(\prefinv{\inv^\ex_{\hat{\tid}}})\rst{\regwrite})$.
			Since $\wop'\in\linmap(\ex\cdot\trans)$ we have that
			$\wop^\last\eqdef\lastop(\linmap(\ex\cdot \trans)\rst{\regwrite})\neq\emptyword$,
			and as $\lastop(\linmap(\ex\cdot \trans)\rst{\regwrite})=\lastop(\linmap(\prefinv{\inv^\ex_\tid})\rst{\regwrite})$
			we deduce in particular that $\wop^\last\in\linmap(\prefinv{\inv^\ex_\tid})\rst{\regwrite}$
			By choice of $\wop'$ this implies $\wop^\last<_{\linmap(\ex\cdot\trans)}\wop'$,
			contradicting the definition of $\wop^\last$.
		\end{claimproof}

		We use the following claim to capture all relevant properties of the first part of the sequence we construct.
		This then enables us to argue about the second part and the satisfaction of the required conditions in the post-states in a way which captures both cases for the first part at once.
		
		\begin{claim2} \label{claim:write_read_seq}
			There exists a sequence of transitions $\tup{\val,\ver,\lockLTS,\Tmapid,\Tmappc,\Tmapvalset,\Tmapstart,\Tmaparg} \Longrightarrow^*\\
			\tup{\val',\ver',\lockLTS,\Tmapid,\Tmappc[\forall\tid'\in\writetidset\ldotp\tid'\mapsto\wend],\Tmapvalset',\Tmapstart,\Tmaparg'}$
			such that the following holds:
			\begin{enumerate}
				\item $\val'=\obs(\linmap(\ex\cdot\trans))$.
				\item $\ver'\geq\ver$.
				\item $\ver'>\ver \iff \lastop(\linmap(\ex\cdot\trans)\rst{\regwrite})\neq\lastop(\linmap(\ex)\rst{\regwrite})$.
				\item $\obsbetween(\linmap(\prefinv{\inv^\ex_{\hat{\tid}}}),\linmap(\ex\cdot\trans))\subseteq\Tmapvalset'(\hat{\tid})$ for all $\hat{\tid}\in\readtidset$.
			\end{enumerate}
		\end{claim2}
		\begin{claimproof}
			Denote $\wop^\last\eqdef\lastop(\linmap(\ex\cdot \trans)\rst{\regwrite})$.
			Consider the following cases:
			\begin{itemize}
				\item If $\lastop(\linmap(\ex\cdot\trans)\rst{\regwrite})\neq\lastop(\linmap(\ex)\rst{\regwrite})$:
				
				First, we claim that $\wop^\last\neq\emptyword$ and $\tid^\last\defeq\tidf(\wop^\last)\in\writetidset$.
				
				Indeed, $\linmap(\ex)\subseq\linmap(\ex\cdot\trans)$ due to $\linmap$ being decisive, and from $\lastop(\linmap(\ex\cdot\trans)\rst{\regwrite})\neq\lastop(\linmap(\ex)\rst{\regwrite})$ necessarily $\linmap(\ex)\rst{\regwrite}\subsneq\linmap(\ex\cdot\trans)\rst{\regwrite}$ and thus $\wop^\last=\lastop(\linmap(\ex\cdot \trans)\rst{\regwrite})\neq\emptyword$.
				
				Moreover, it is not possible that $\wop^\last\in \linmap(\ex)$ because then we would get from $\linmap(\ex)\rst{\regwrite}\subseq\linmap(\ex\cdot\trans)\rst{\regwrite}$ that $\wop^\last=\lastop(\linmap(\ex)\rst{\regwrite})$, contradicting $\lastop(\linmap(\ex\cdot\trans)\rst{\regwrite})\neq\lastop(\linmap(\ex)\rst{\regwrite})$.
				Therefore, we obtain that $\wop^\last\in\writeopset$ and $\tid^\last\in\writetidset$.
				
				Now, from \cref{awshard:2} we have $\lockLTS=0$, and as we have shown in \cref{claim:new_write_set} that $\Tmappc(\tid')=\wwait$ for all $\tid'\in\writetidset$, a transition using the rule \writetop\ is available for them.
				
				Using \cref{awshard:4}, a transition using \readloop\ is available for each $\hat{\tid}\in\readtidset$ where the current value in the register is not in its set of loaded values already.
				As $\ver$ only increases and $\Tmapstart(\hat{\tid})$ is not changed by \writetop\ or \readloop\ transitions, \readloop\ remains available after performing these transitions (again dependent on the current register value).
				
				Overall, we can perform a \writetop\ transition for every $\tid'\in\writetidset$, doing the transition for $\tid^\last$ last, where each such transition is followed by a \readloop\ transition for every $\hat{\tid}\in\readtidset$ that does not already contain the value stored in the \writetop\ transition.
				After performing this sequence, for every $\hat{\tid}\in\readtidset$, every value in $\set{\Tmaparg(\tid')\st\tid'\in\writetidset}$ is either added to $\Tmapvalset(\hat{\tid})$ or is already present in that set.
				This overall gives:
				$\tup{\val,\ver,\lockLTS,\Tmapid,\Tmappc,\Tmapvalset,\Tmapstart,\Tmaparg} \Longrightarrow^*\\
				\tup{\Tmaparg(\tid^\last),\ver+\size{\writetidset},\lockLTS,\Tmapid,\Tmappc[\forall\tid'\in\writetidset\ldotp\tid'\mapsto\wend],\Tmapvalset',\Tmapstart,\Tmaparg} $\\
				where $\Tmapvalset'=\Tmapvalset[\forall\hat{\tid}\in\readtidset\ldotp\hat{\tid} \mapsto \Tmapvalset(\hat{\tid})\cup\set{\Tmaparg(\tid')\st\tid'\in\writetidset}]$.
				We show all conditions required for the claim hold:
				\begin{enumerate}
					\item $\Tmaparg(\tid^\last)=\obs(\linmap(\ex\cdot\trans))$:
					
					As we defined $\tid^\last=\tidf(\wop^\last)$ where $\wop^\last=\lastop(\linmap(\ex\cdot \trans)\rst{\regwrite})$, we have $\obs(\linmap(\ex\cdot\trans))=\inputf(\wop^\last)$.
					As we have shown $\tid^\last\in\writetidset$,
					we obtain from \cref{claim:new_write_set} that $\Tmaparg(\tid^\last)=\inputf(\wop^\last)$, and the equality $\Tmaparg(\tid^\last)=\obs(\linmap(\ex\cdot\trans))$ follows.
					
					\item $\ver+\size{\writetidset}\geq\ver$:
					
					Immediate from $\size{\writetidset}\geq 0$.
					
					\item $\ver+\size{\writetidset}>\ver \iff \lastop(\linmap(\ex\cdot\trans)\rst{\regwrite})\neq\lastop(\linmap(\ex)\rst{\regwrite})$:
					
					We have shown $\tid^\last\in\writetidset$.
					This implies $\size{\writetidset}>0$ and so the LHS is true.
					The RHS is true by assumption.

					\item $\obsbetween(\linmap(\prefinv{\inv^\ex_{\hat{\tid}}}),\linmap(\ex\cdot\trans))\subseteq\Tmapvalset'(\hat{\tid}) $ for all $\hat{\tid}\in\readtidset$:
					
					Let $\hat{\tid}\in\readtidset$. Using  \cref{claim:listening_reads} we have that
					$\obsbetween(\linmap(\prefinv{\inv^\ex_{\hat{\tid}}}),\linmap(\ex\cdot\trans))\subseteq\Tmapvalset(\hat{\tid})\cup \set{\Tmaparg(\tid')\st\tid'\in\writetidset}=\Tmapvalset'(\hat{\tid})$.
				\end{enumerate}
				
				\item If $\lastop(\linmap(\ex\cdot\trans)\rst{\regwrite})=\lastop(\linmap(\ex)\rst{\regwrite})$:
				
				Using \cref{claim:new_write_set} we have all neccessary conditions to perform a \writerollback\ transition for every $\tid'\in\writetidset$.
				(In particular, $\Tmapstart(\tid')<\ver$ holds since we assume $\lastop(\linmap(\ex\cdot\trans)\rst{\regwrite})=\lastop(\linmap(\ex)\rst{\regwrite})$.)
				
				Doing this transition for a given $\tid'\in\writetidset$ gives
				$\tup{\val,\ver,\lockLTS,\Tmapid,\Tmappc,\Tmapvalset,\Tmapstart,\Tmaparg} \asteplab{\objact\tup{\tid'}}{}\\
				\tup{\Tmaparg(\tid'),\ver-1,1,\Tmapid,\Tmappc[\tid' \mapsto \wrollback],\Tmapvalset,\Tmapstart,\Tmaparg[\tid'\mapsto\val]}$
				
				Define $\readtidset'\eqdef\set{\hat{\tid}\st \Tmappc(\hat{\tid})=\rlisten\land \Tmapstart(\hat{\tid})\leq\ver-1}$.
				After the above transition, a \readloop\ transition can be done for each $\hat{\tid}\in\readtidset'$ for whom the set of loaded values does not contain the current register value,
				and this remains true when performing other \readloop\ transitions.
				Preforming such a transition for each relevant $\hat{\tid}\in\readtidset'$ gives:
				$\tup{\Tmaparg(\tid'),\ver-1,1,\Tmapid,\Tmappc[\tid' \mapsto \wrollback],\Tmapvalset,\Tmapstart,\Tmaparg[\tid'\mapsto\val]} \Longrightarrow^* \\ \tup{\Tmaparg(\tid'),\ver-1,1,\Tmapid,\Tmappc[\tid' \mapsto \wrollback],\tilde{\Tmapvalset},\Tmapstart,\Tmaparg[\tid'\mapsto\val]}$
				
				Where
				$\tilde{\Tmapvalset}=\Tmapvalset[\forall\hat{\tid}\in\readtidset'\ldotp\hat{\tid} \mapsto \Tmapvalset(\hat{\tid})\cup\set{\Tmaparg(\tid')}]$.
				
				After that, a \writerollforward\ transition is available for the $\tid'$ we did \writerollback\ for, giving:
				$\tup{\Tmaparg(\tid'),\ver-1,1,\Tmapid,\Tmappc[\tid' \mapsto \wrollback],\tilde{\Tmapvalset},\Tmapstart,\Tmaparg[\tid'\mapsto\val]}\asteplab{\objact\tup{\tid'}}{}\\
				\tup{\Tmaparg[\tid'\mapsto\val](\tid'),\ver-1+1,0,\Tmapid,\Tmappc[\tid' \mapsto \wend],\tilde{\Tmapvalset},\Tmapstart,\Tmaparg[\tid'\mapsto\val]}$
				
				Simplifying this (using $\lockLTS=0$) we obtain a state\\
				$\tup{\val,\ver,\lockLTS,\Tmapid,\Tmappc[\tid' \mapsto \wend],\tilde{\Tmapvalset},\Tmapstart,\Tmaparg[\tid'\mapsto\val]}$.
				
				After such a sequence, a similar sequence for other $\tid''\in\writetidset\setminus\set{\tid'}$ starting with \writerollback\ , followed by multiple \readloop\ transitions and ending with \writerollforward\ is similarly available, as the relevant variables enabling this sequence are unchanged.
				
				Overall, we can perform this sequence of each $\tid'\in\writetidset$, and get:\\
				$\tup{\val,\ver,\lockLTS,\Tmapid,\Tmappc,\Tmapvalset,\Tmapstart,\Tmaparg} \Longrightarrow^* \\ \tup{\val,\ver,\lockLTS,\Tmapid,\Tmappc[\forall\tid'\in\writetidset\ldotp\tid'\mapsto\wend],\Tmapvalset',\Tmapstart,\Tmaparg[\forall\tid'\in\writetidset\ldotp\tid'\mapsto\val]}$
				
				Where
				$\Tmapvalset'=\Tmapvalset[\forall\hat{\tid}\in\readtidset'\ldotp\hat{\tid} \mapsto \Tmapvalset(\hat{\tid})\cup\set{\Tmaparg(\tid')\st\tid'\in\writetidset}]$.
				Note that it is possible that $\writetidset=\emptyset$, and in this case the above sequence is empty and the post-state is equal to the pre-state.
				
				We show all conditions required for the claim hold:
				\begin{enumerate}
					\item $\val=\obs(\linmap(\ex\cdot\trans))$:
					
					This follows from \cref{awshard:1} holding in the pre-state and the fact that $\lastop(\linmap(\ex\cdot\trans)\rst{\regwrite})=\lastop(\linmap(\ex)\rst{\regwrite})$
					and thus $\obs(\linmap(\ex))=\obs(\linmap(\ex\cdot\trans))$
					
					\item $\ver\geq\ver$:
					
					Immediate.
					
					\item $\ver>\ver \iff \lastop(\linmap(\ex\cdot\trans)\rst{\regwrite})\neq\lastop(\linmap(\ex)\rst{\regwrite})$:
					
					The RHS is false by assumption and thus this statement is true.
					
					\item $\obsbetween(\linmap(\prefinv{\inv^\ex_{\tid}}),\linmap(\ex\cdot\trans))\subseteq\Tmapvalset'(\tid)$ for all $\hat{\tid}\in\readtidset$:
					
					\begin{itemize}
						\item If $\hat{\tid}\in\readtidset'$:
						
						As $\readtidset'\subseteq\readtidset$ we can use \cref{claim:listening_reads} and obtain that
						$\obsbetween(\linmap(\prefinv{\inv^\ex_{\hat{\tid}}}),\linmap(\ex\cdot\trans))\subseteq\Tmapvalset(\hat{\tid})\cup \set{\Tmaparg(\tid')\st\tid'\in\writetidset}=\Tmapvalset'(\hat{\tid})$.
						
						\item If $\hat{\tid}\in\readtidset\setminus\readtidset'$:
						
						We deduce that $\Tmapvalset'(\hat{\tid}) =\Tmapvalset(\hat{\tid})$ and $\Tmapstart(\hat{\tid})>\ver-1$.
						From \cref{awshard:4} we have $\Tmapstart(\hat{\tid})\leq\ver$ and thus overall $\Tmapstart(\hat{\tid})=\ver$.
						Therefore we can apply \cref{claim:listening_reads} and deduce that
						$\obsbetween(\linmap(\prefinv{\inv^\ex_{\hat{\tid}}}),\linmap(\ex\cdot\trans))\subseteq\Tmapvalset(\hat{\tid})=\Tmapvalset'(\hat{\tid})$.
						\claimqedhere
					\end{itemize}
				\end{enumerate}
			\end{itemize}
		\end{claimproof}
		
		We move on to argue about the second part of the sequence where we perform a sequence ending with a response labeled $\act$.
		The structure of this sequence depends on $\methodf(\act)$.
		
		\begin{claim2}\label{claim:matching_res}
			Starting from the post-state described in \cref{claim:write_read_seq},
			there exists a sequence of transitions:
			$\tup{\val',\ver',\lockLTS,\Tmapid,\Tmappc[\forall\tid'\in\writetidset\ldotp\tid'\mapsto\wend],\Tmapvalset',\Tmapstart,\Tmaparg'} \Longrightarrow^*\\
			\tup{\val',\ver',\lockLTS,\Tmapid,\Tmappc[\forall\tid'\in\writetidset\ldotp\tid'\mapsto\wend][\tid\mapsto\main],\Tmapvalset'[\tid\mapsto\emptyset],\Tmapstart[\tid\mapsto0],\Tmaparg'[\tid\mapsto0]}$
		\end{claim2}
		\begin{claimproof}
			As $\act=\resact\tup{\method,\valout,\tid,\id}\in\Res$, we consider two cases:
			\begin{itemize}
				\item If $\method=\regread$:
				
				We obtain that \cref{awshard:6b} must hold and thus $\Tmappc(\tid)=\rlisten$, $\inv^{\ex}_\tid=\invact\tup{\regread,\bot,\tid,\id}$,
				$\inv^{\ex}_\tid\in\hs{\ex}$, $\inv^{\ex}_\tid\notin\linmap(\ex)$, and from \cref{claim:write_read_seq} we have
				$\obsbetween(\linmap(\prefinv{\inv^\ex_{\tid}}),\linmap(\ex\cdot\trans))\subseteq\Tmapvalset'(\tid)$.
				
				Moreover, as $\op\in\setof{\completed(\hs{\ex\cdot\trans})}$ it must be the case that $\op\in\linmap(\ex\cdot\trans)$ and as $\methodf(\op)=\regread$
				we obtain from \cref{thm:lazyisgood} that $\valout\in\obsbetween(\linmap(\prefinv{\inv^\ex_{\tid}}),\linmap(\ex\cdot\trans))$ and thus $\valout\in\Tmapvalset'(\tid)$.
				
				As $\Tmappc(\tid)=\rlisten$ and $\tid\notin\writetidset$,
				$\Tmappc[\forall\hat{\tid}\in\writetidset\ldotp\hat{\tid}\mapsto\wend](\tid)=\rlisten$
				and a transition using \readpick\ is available:
				
				$\tup{\val',\ver',\lockLTS,\Tmapid,\Tmappc[\forall\tid'\in\writetidset\ldotp\tid'\mapsto\wend],\Tmapvalset',\Tmapstart,\Tmaparg'} \asteplab{\objact\tup{\tid}}{\awsreg}\\
				\tup{\val',\ver',\lockLTS,\Tmapid,\Tmappc[\forall\tid'\in\writetidset\ldotp\tid'\mapsto\wend][\tid\mapsto\rend],\Tmapvalset',\Tmapstart,\Tmaparg'[\tid\mapsto\valout]}$	
				
				Then, as $\Tmappc[\forall\tid'\in\writetidset\ldotp\tid'\mapsto\wend][\tid\mapsto\rend](\tid)=\rend$ and $\Tmaparg'[\tid\mapsto\valout](\tid)=\valout$ a \readres\ transition with label $\act$ is available:
				
				$\tup{\val',\ver',\lockLTS,\Tmapid,\Tmappc[\forall\tid'\in\writetidset\ldotp\tid'\mapsto\wend][\tid\mapsto\rend],\Tmapvalset',\Tmapstart,\Tmaparg'[\tid\mapsto\valout]} \asteplab{\act}{\awsreg}\\
				\tup{\val',\ver',\lockLTS,\Tmapid,\Tmappc[\forall\tid'\in\writetidset\ldotp\tid'\mapsto\wend][\tid\mapsto\main],\Tmapvalset'[\tid\mapsto\emptyset],\Tmapstart[\tid\mapsto0],\Tmaparg'[\tid\mapsto0]}$

				\item If $\method=\regwrite$:
				
				We have two options:
				The first option is that $\op\in\linmap(\ex)$, in which case \cref{awshard:6d} must hold, and we get that $\Tmappc(\tid)=\wend$.
				As $\op\in\linmap(\ex)$, we also get that $\tid\notin\writetidset$ and so $\Tmappc[\forall\tid'\in\writetidset\ldotp\tid'\mapsto\wend](\tid)=\Tmappc(\tid)=\wend$.
				The second option is that $\op\notin\linmap(\ex)$. As $\op\in\setof{\completed(\hs{\ex\cdot\trans})}$, we must have $\op\in\linmap(\ex\cdot\trans)$ and thus $\tid\in\writetidset$ and we deduce
				$\Tmappc[\forall\tid'\in\writetidset\ldotp\tid'\mapsto\wend](\tid)=\wend$.
				In both cases we obtain that a transition using the rule \writeres\ with label $\act$ is available:
				
				$\tup{\val',\ver',\lockLTS,\Tmapid,\Tmappc[\forall\tid'\in\writetidset\ldotp\tid'\mapsto\wend],\Tmapvalset',\Tmapstart,\Tmaparg'} \asteplab{\act}{\awsreg}\\
				\tup{\val',\ver',\lockLTS,\Tmapid,\Tmappc[\forall\tid'\in\writetidset\ldotp\tid'\mapsto\wend][\tid\mapsto\main],\Tmapvalset'[\tid\mapsto\emptyset],\Tmapstart[\tid\mapsto0],\Tmaparg'[\tid\mapsto0]} $. \claimqedhere
			
			\end{itemize}
		\end{claimproof}
		
		Using \cref{claim:write_read_seq,claim:matching_res}, we reach a state of the form
		
		$\tup{\val',\ver',\lockLTS,\Tmapid,\Tmappc',\Tmapvalset'[\tid\mapsto\emptyset],\Tmapstart[\tid\mapsto0],\Tmaparg'[\tid\mapsto0]}$
		
		where $\Tmappc'=\Tmappc[\forall\tid'\in\writetidset\ldotp\tid'\mapsto\wend][\tid\mapsto\main]$,
		such that:
		\begin{enumerate}
			\item $\val'=\obs(\linmap(\ex\cdot\trans))$.
			\item $\ver'\geq\ver$.
			\item $\ver'>\ver \iff \lastop(\linmap(\ex\cdot\trans)\rst{\regwrite})\neq\lastop(\linmap(\ex)\rst{\regwrite})$
			\item $\obsbetween(\linmap(\prefinv{\inv^\ex_{\tid}}),\linmap(\ex\cdot\trans))\subseteq\Tmapvalset'(\tid)$ for all $\hat{\tid}\in\readtidset$.
		\end{enumerate}
		
		We show that all necessary conditions are satisfied.
		For \cref{awshard:1}, we have that indeed $\val'=\obs(\linmap(\ex\cdot\trans))$.
		For \cref{awshard:4} we use the fact that \cref{awshard:4} holds in the pre-states:
		We get that $\Tmapstart[\tid\mapsto0](\tid)=0\leq\ver\leq\ver'$,
		and for all $\tid'\neq\tid$ we have $\Tmapstart[\tid\mapsto0](\tid')=\Tmapstart(\tid')\leq\ver\leq\ver'$.
		For \cref{awshard:5}, consider two cases:
		\begin{itemize}
			\item If $\ver'>\ver$:
			
			Using the fact that $\Tmapstart[\tid\mapsto0](\tid')\leq\ver<\ver'$ for all $\tid'\in\Tid$, the condition $\Tmapstart[\tid\mapsto0](\tid')=\ver'$ is always false and  \cref{awshard:5} holds vacuously.
			
			\item  If $\ver'=\ver$:
			
			We deduce that $\lastop(\linmap(\ex\cdot\trans)\rst{\regwrite})=\lastop(\linmap(\ex)\rst{\regwrite})$.
			
			For $\tid'\neq\tid$, $\inv^{\ex}_{\tid'}=\inv^{\ex\cdot\trans}_{\tid'}$ and so overall no values relevant for \cref{awshard:5} have changed.
			For $\tid'=\tid$ we have $\inv^{\ex\cdot\trans}_{\tid}=\emptyword$ and thus $\prefinv{\inv^{\ex\cdot\trans}_\tid}=\ex$ and the condition follows.
		\end{itemize}
		For \cref{awshard:6}, let $\tid'\in\Tid$.
		Consider the following cases:
		\begin{itemize}
			\item If $\tid'\neq\tid$ and $\Tmappc'(\tid)=\main$:
			
			We deduce that $\tid'\notin\writetidset$ and so $\Tmappc(\tid)=\main$ and \cref{awshard:6a} holds in the pre-states. As $\inv^{\ex\cdot\trans}_{\tid'}=\inv^{\ex}_{\tid'}=\emptyword$, \cref{awshard:6a} continues to hold in the post-states.
			
			\item If $\tid'=\tid$:
			
			We have $\Tmappc'(\tid)=\main$ and as $\op$ is completed and $\ex\cdot\trans$ is well formed we deduce that $\inv^{\ex\cdot\trans}_\tid=\emptyword$ and \cref{awshard:6a} holds.
			
			\item If $\tid'\in\readtidset\setminus\set{\tid}$ and $\Tmappc'(\tid')=\rlisten$:
			
			As $\tid'\neq\tid$ and $\Tmappc(\tid')=\rlisten$, using \cref{awshard:6b} we have $\inv^{\ex\cdot\trans}_{\tid'}=\inv^{\ex}_{\tid'}\in\hs{\ex}\pref\hs{\ex\cdot\trans}$.
			To show $\inv^{\ex\cdot\trans}_{\tid'}\notin\linmap(\ex\cdot\trans)$, using the fact that $\linmap$ is pending-read-free it is sufficient to prove that $\inv^{\ex\cdot\trans}_{\tid'}\notin\completed(\hs{\ex\cdot \trans})$.
			Indeed, if we assume otherwise, as $\trans$ is the response for $\tid\neq\tid'$ we get $\inv^{\ex}_{\tid'}\in\completed(\hs{\ex})$.
			Since \cref{awshard:6b} holds in the pre-state we have $\inv^{\ex}_{\tid'}\notin\linmap(\ex)$, and overall this contradicts $\linmap$ being a linearization mapping.
			We get that $\inv^{\ex\cdot\trans}_{\tid'}\notin\linmap(\ex\cdot\trans)$.
			Additionally, since $\tid'\in\readtidset$ we have shown in \cref{claim:listening_reads} that $\obsbetween(\linmap(\prefinv{\inv^\ex_{\tid'}}),\linmap(\ex\cdot\trans))\subseteq\Tmapvalset'(\tid')$.
			As $\inv^{\ex\cdot\trans}_{\tid'}=\inv^{\ex}_{\tid'}$ and $\Tmapvalset'[\tid\mapsto\emptyset](\tid')=\Tmapvalset'(\tid')$, we deduce that the condition on $\Tmapvalset$ holds and overall \cref{awshard:6b} holds.
			
			\item If $\tid'\in\writetidset$:
			
			We have $\Tmappc'(\tid')=\wend$ and by definition of $\writetidset$ we obtain that $\wop\in\linmap(\ex\cdot\trans)$ for the write operation $\wop\in\writeopset$ such that $\tidf(\wop)=\tid'$. (It is unique due to \cref{claim:new_write_set}.)
			As $\inv^{\ex\cdot\trans}_{\tid'}$ is the invocation of $\wop$ we get $\inv^{\ex\cdot\trans}_{\tid'}\in\linmap(\ex\cdot\trans)$ and $\methodf(\inv^{\ex\cdot\trans}_{\tid'})=\regwrite$, and so \cref{awshard:6d} holds.
			
			\item Otherwise:
			
			It must be the case that $\inv^{\ex\cdot\trans}_{\tid'}$ is a write invocation and $\tid'\notin\writetidset$.
			By definition of $\writetidset$ this means $\inv^{\ex\cdot\trans}_{\tid'}\notin\linmap(\ex\cdot\trans)$ and overall we get \cref{awshard:6c} from this condition holding in the pre-states.\qedhere
		\end{itemize}
	\end{itemize}

\end{proof}

\begin{theorem}
	$\wsreg$ is $\wslin$-hard.
\end{theorem}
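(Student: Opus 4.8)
The plan is to mirror the $\awslin$-hardness proof of $\dreg$ (\cref{thm:awshard}), exploiting the fact that write strong-linearizability makes the construction strictly simpler. Fix a write strongly-linearizable register $\impl\in\wslin$. By \cref{lem:wsnice} there exists a \emph{lazy} write strong-linearization mapping $\linmap:\exec{\impl}\to\regspec$, and by \cref{prop:ghost} it suffices to establish $\ghost\impl\leqsim\wsreg$. I would set up a forward simulation $\fsim$ analogous to the one in \cref{thm:awshard}, relating an execution $\ex$ to a $\wsreg$-state $\tup{\val,\Tmapid,\Tmappc,\Tmapvalset,\Tmaparg}$. Since $\wsreg$ carries neither a version counter nor a lock, the invariant drops the corresponding clauses and keeps only: $\val=\obs(\linmap(\ex))$; the identifier-tracking clause; and, for each thread, the program-counter/pending-operation correspondence, where a read at $\rlisten$ satisfies $\obsbetween(\linmap(\prefinv{\inv^\ex_\tid}),\linmap(\ex))\subseteq\Tmapvalset(\tid)$, a pending-but-unlinearized write sits at $\wbegin$, and a linearized write sits at $\wend$.

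The crucial simplification comes from the defining property of write strong-linearization (\cref{def:wslin}): whenever $\ex\pref\ex\cdot\trans$ we have $\linmap(\ex)\rst{\regwrite}\pref\linmap(\ex\cdot\trans)\rst{\regwrite}$, so every newly linearized write is \emph{appended at the end} of the write sequence. Consequently, writing $\writeopset$ for the set of writes appearing in $\linmap(\ex\cdot\trans)\rst{\regwrite}$ but not in $\linmap(\ex)\rst{\regwrite}$, we have $\writeopset\neq\emptyset \iff \lastop(\linmap(\ex\cdot\trans)\rst{\regwrite})\neq\lastop(\linmap(\ex)\rst{\regwrite})$. In the terminology of the $\dreg$ proof, the case in which writes are inserted strictly before the current rightmost write — the case handled there by the $\writerollback$/$\writerollforward$ mechanism — simply never arises. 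Hence a single store path suffices: new writes are realized by the atomic $\writeint$ step, exactly as $\writetop$ was used in \cref{thm:awshard}.

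With this in hand, the simulation clauses are discharged as in \cref{thm:awshard}, but only along the simpler path. Invocations and responses are matched by identical labels; a read invocation additionally performs $\readinit$, loading the current value once (which equals $\obs(\linmap(\prefinv{\inv^\ex_\tid}))$ by the invariant, establishing the $\obsbetween$ clause at invocation time). For a response transition, I would first bring the $\wsreg$-state up to date with $\linmap(\ex\cdot\trans)$: for every $w\in\writeopset$ perform its $\writeint$ step (its writer is necessarily at $\wbegin$, by the argument that bounds $\writetidset$ in the $\dreg$ proof), storing $w$'s input and thereby advancing $\val$ to $\obs(\linmap(\ex\cdot\trans))$; immediately after each store, let every pending read execute $\readloop$ so that it collects the freshly stored value. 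This maintains $\obsbetween(\linmap(\prefinv{\inv^\ex_\tid}),\linmap(\ex\cdot\trans))\subseteq\Tmapvalset(\tid)$ for all listening reads, since the values observable between the two linearizations are precisely $\obs(\linmap(\prefinv{\inv^\ex_\tid}))$ together with the inputs of the appended writes. Finally, the matching response is produced via $\readpick$/$\readres$ (for reads) or $\writeres$ (for writes); for reads, \cref{thm:lazyisgood} guarantees that the returned value lies in $\obsbetween(\linmap(\prefinv{\inv^\ex_\tid}),\linmap(\ex\cdot\trans))$ and hence in $\Tmapvalset(\tid)$, so $\readpick$ can select it.

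The main obstacle is the same one as in the $\dreg$ case: proving that each listening read's accumulated value set $\Tmapvalset(\tid)$ always contains every value the read could still legitimately return, i.e. the $\obsbetween$ invariant — the very issue flagged in the sketch of \cref{thm:ws-complete}, where storing a write in $\wsreg$ too early prevents concurrent reads from loading earlier values. Laziness resolves this: writes are added only when forced, so a pending read has already collected (via the timed $\readloop$ steps) the value $\obs(\linmap(\prefinv{\inv^\ex_\tid}))$ present at its start, and the write-strong property pins down the remaining observable values as the inputs of the subsequently appended writes. The remaining care is bookkeeping: checking that the writers of $\writeopset$ are indeed at $\wbegin$ (so $\writeint$ is enabled) and that the read/write program counters and identifier sets transition correctly, all of which reduce to the corresponding, already-verified arguments of \cref{thm:awshard} after deleting the version- and lock-related clauses and discarding the rollback branch.
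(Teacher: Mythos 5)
Your proposal is correct and follows essentially the same route as the paper: reduce to the lazy write strong-linearization mapping from \cref{lem:wsnice}, reuse the simulation relation of \cref{thm:awshard} with the lock/version clauses deleted, and observe that the write-strong prefix property forces $\writetidset=\emptyset$ whenever the rightmost linearized write is unchanged, so the rollback branch (and hence the version-number machinery) is never exercised and the single $\writeint$ store path of $\wsreg$ suffices. The paper's own argument is exactly this observation applied to the second case of \cref{claim:write_read_seq}.
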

\begin{proof}[Proof Sketch]
	The proof follows along very similar lines to \cref{thm:awshard}.
	The simulation relation is the same, except for the omission of \cref{awshard:2}, \cref{awshard:4}, and \cref{awshard:5} as they are irrelevant to the state of $\wsreg$.
	For a write strong implementation, we get via \cref{lem:wsnice} a linearization mapping $\linmap$ that is write strong in addition to being lazy.
	
	Using the fact that $\linmap$ is write strong, in the second case considered in the proof of \cref{claim:write_read_seq} ($\lastop(\linmap(\ex\cdot\trans)\rst{\regwrite})=\lastop(\linmap(\ex)\rst{\regwrite})$),
	we can deduce that $\linmap(\ex)\rst{\regwrite}=\linmap(\ex\cdot\trans)\rst{\regwrite}$.
	That is, no new writes are added to the linearization following the transition $\trans$, and so $\writetidset=\emptyset$.
	As the \quotes{rollback} mechanism of $\awsreg$ is only used to construct a sequence in the proof of \cref{claim:write_read_seq} under an assumption that  $\lastop(\linmap(\ex\cdot\trans)\rst{\regwrite})=\lastop(\linmap(\ex)\rst{\regwrite})$ and $\writetidset\neq\emptyset$,
	and without using it the mechanism involving version numbers is redundant, as they can only increase,
	we overall have that the mechanisms in $\awsreg$ that are not available in $\wsreg$ are not needed for the simulation, and we are able to use $\wsreg$ to simulate the implementation.
\end{proof}

\section{Lazy Linearizations}
\label{sec:lazy}

In this section we prove the existence of linearizations with the properties described in \cref{lem:wsnice,lem:awsnice}.

In \cref{subsec:minimal} we discuss the existence and usefulness of linearization mappings that are minimal \wrt preorders satisfying certain properties.
In \cref{subsec:wslazy}/\cref{subsec:declazy} we prove several properties of write strong/decisive linearizations that are minimal \wrt certain preorders.
In \cref{subsec:lazy_conclusion} we use the above to prove \cref{lem:wsnice,lem:awsnice}.

\subsection{Minimal Linearization Mappings}\label{subsec:minimal}

In this section we describe the general framework we use to construct linearization mappings with desirable properties. This framework is not specific to register objects.
The construction is generally done via the following stages:
\begin{enumerate}
	\item Define a notion of minimality among linearization mapping that satisfy a certain condition. (\eg decisive linearization mappings.) Intuitively, the notions of minimality we use represent ``delaying as much as possible decision on ordering between operations'', hence minimal mappings are also called ``lazy''.
	
	\item Prove that a minimal mapping exists.
	
	\item Prove that minimal mappings have desirable properties. (In the write strong case we also show they can be modified to obtain additional properties not guaranteed by all minimal mappings.)
\end{enumerate}

We begin by formally defining notions related to preorders and equivalences:

\begin{description}[leftmargin=0pt,itemsep=2pt]
\item[\em Preorders.]
	A preorder $\preleq$ on a set $A$ is a reflexive and transitive relation over $A$.
	An element $x\in A$ is \emph{$\preleq$-minimal}
	if $y\preleq x$ implies $x\preleq y$ for every $y\in A$.
	$\preleq$ is \emph{well-founded} if every non-empty $B\subseteq A$ has a $\preleq$-minimal element.
	A \emph{$\preleq$-chain} is a subset $C\subseteq A$ such that $\preleq$ is a total order (not just preorder)
	on $C$.
	An element $x\in A$ is a \emph{$\preleq$-lower bound} of a set $B\subseteq A$ if $x\preleq y$ for all $y\in B$.
	If we also have $x\in B$, we say that $x$ is a (not necessarily unique) \emph{$\preleq$-minimum} of $B$.
	
	To find minima we use the following classical lemma by Zorn (stated here for preorders and minima):
	
	\begin{lemma}[Zorn]
		If $\preleq$ is a preorder on a set $A$
		and every $\preleq$-chain $C\subseteq A$ has a $\preleq$-lower bound,
		then $A$ contains a $\preleq$-minimal element.
	\end{lemma}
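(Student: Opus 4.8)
The plan is to derive this preorder-with-minima formulation from the classical Zorn's Lemma for partial orders by passing to a quotient. First I would use the induced equivalence relation $x \preeq y \defiff x \preleq y \land y \preleq x$, form the quotient set $A/\preeq$, and equip it with the relation $[x] \leq [y] \defiff x \preleq y$. A short check shows this is well-defined on classes and antisymmetric (if $[x]\leq[y]$ and $[y]\leq[x]$ then $x \preeq y$, so $[x]=[y]$), hence a genuine partial order on $A/\preeq$. The point of this move is that minimality for the preorder $\preleq$ on $A$ corresponds exactly to minimality for the partial order $\leq$ on $A/\preeq$.

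Next I would transfer the chain/lower-bound hypothesis across the quotient. Given any $\leq$-chain $\mathcal{C} \subseteq A/\preeq$, choosing a representative of each class yields a set $C \subseteq A$ on which $\preleq$ is total, i.e. a $\preleq$-chain; by assumption $C$ has a $\preleq$-lower bound $\ell \in A$, and then $[\ell]$ is a $\leq$-lower bound of $\mathcal{C}$, since $\ell \preleq y$ for every chosen representative $y$ gives $[\ell] \leq [y]$. In particular, applying this to the empty chain shows that $A$, and hence $A/\preeq$, is nonempty. Thus every $\leq$-chain in $A/\preeq$ admits a lower bound, so the order-dual of the classical Zorn's Lemma (minima for lower bounds, obtained by reversing the order) supplies a $\leq$-minimal class $[m]$.

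Finally I would pull this minimality back to $A$: if $y \preleq m$ then $[y] \leq [m]$, and $\leq$-minimality of $[m]$ forces $[y] = [m]$, i.e. $m \preleq y$; hence $m$ is $\preleq$-minimal, as required. The only genuinely delicate point is the transfer step, where one must invoke the axiom of choice to pick class representatives and be careful that a lower bound of the representative set $C$ is simultaneously a lower bound of the entire chain $\mathcal{C}$ — this is precisely where well-definedness of $\leq$ on classes is used. Everything else is routine bookkeeping distinguishing preorders from partial orders, and I expect no further obstacle.
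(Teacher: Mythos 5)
Your proposal is correct. The paper does not actually prove this lemma --- it is stated as a classical fact (``stated here for preorders and minima'') and used as a black box in the proof of the subsequent result on minimal linearization mappings --- so there is no in-paper argument to compare against. Your reduction to the standard partial-order form of Zorn's Lemma is the standard way to justify the preorder variant, and the details check out: the quotient order $[x]\leq[y]\defiff x\preleq y$ is well defined and antisymmetric; a set of one representative per class from a $\leq$-chain is genuinely a $\preleq$-chain in the paper's sense (total \emph{order}, since distinct representatives lie in distinct $\preeq$-classes, giving antisymmetry); a lower bound of the representatives descends to a lower bound of the whole chain by well-definedness; the empty chain forces $A\neq\emptyset$; and minimality of $[m]$ pulls back to $\preleq$-minimality of $m$ exactly as the paper defines it ($y\preleq m$ implies $m\preleq y$). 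No gap.
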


\item[\em Equivalences.] An equivalence $\preeq$ on a set $A$ is a reflexive, transitive and symmetric relation over $A$.
$\preeq$ is a \emph{congruence} \wrt a relation $\rel\subseteq A\times A$ if $\tup{a,c}\in\rel\iff\tup{b,d}\in\rel$ whenever $a\preeq b$ and $c\preeq d$.
\end{description}

In the sequel, we use $\preleq$ to denote an arbitrary preorder, and ${\preeq} \defeq {\preleq}\,\cap\,{\preleq^{-1}}$ to denote the equivalence relation induced by $\preleq$.

Due to transitivity, a preorder and the induced equivalence satisfy the following:

\begin{lemma}\label{lem:self_congruent}
	Let $\preleq$ be a preorder and let ${\preeq} \defeq {\preleq}\,\cap\,{\preleq^{-1}}$ be the induced equivalence. 
	Then,
	$\preeq$ is a \emph{congruence} \wrt $\preleq$.
\end{lemma}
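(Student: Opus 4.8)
The plan is to unfold the definition of congruence and reduce everything to transitivity of $\preleq$. Recall that the claim concerns congruence with respect to the relation $\rel = {\preleq}$ itself, so what must be shown is: whenever $a \preeq b$ and $c \preeq d$, we have $a \preleq c \iff b \preleq d$. First I would record what the hypotheses $a \preeq b$ and $c \preeq d$ give us concretely. Since $\preeq = {\preleq} \cap {\preleq^{-1}}$, the hypothesis $a \preeq b$ unpacks to the two facts $a \preleq b$ and $b \preleq a$, and similarly $c \preeq d$ yields $c \preleq d$ and $d \preleq c$. This is the only place the definition of the induced equivalence is used.

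Next I would prove the two directions of the biconditional separately, each by a single chain of transitivity. For the forward direction, assuming $a \preleq c$, I would chain $b \preleq a$, then $a \preleq c$, then $c \preleq d$, and conclude $b \preleq d$ by two applications of transitivity. For the backward direction, assuming $b \preleq d$, I would chain $a \preleq b$, then $b \preleq d$, then $d \preleq c$, and conclude $a \preleq c$ symmetrically. Together these give $a \preleq c \iff b \preleq d$, which is exactly the congruence condition instantiated at $\rel = {\preleq}$.

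There is essentially no obstacle here: reflexivity is not even needed, only transitivity and the symmetric unpacking of $\preeq$. The one point worth stating carefully is that the symmetry of $\preeq$ is what lets both the forward and backward chains go through, since each direction uses a different one of the two inequalities packaged inside each equivalence hypothesis (the forward direction uses $b \preleq a$ and $c \preleq d$, the backward direction uses $a \preleq b$ and $d \preleq c$). If anything, the only thing to double-check is that I have oriented these four inequalities correctly so that the transitivity chains actually compose; a quick check confirms the orientations line up as above.

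\begin{proof}
Assume $a \preeq b$ and $c \preeq d$. By definition of $\preeq = {\preleq} \cap {\preleq^{-1}}$, we have $a \preleq b$, $b \preleq a$, $c \preleq d$, and $d \preleq c$. We must show $a \preleq c \iff b \preleq d$.

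If $a \preleq c$, then by transitivity $b \preleq a \preleq c \preleq d$, so $b \preleq d$. Conversely, if $b \preleq d$, then by transitivity $a \preleq b \preleq d \preleq c$, so $a \preleq c$. Hence $a \preleq c \iff b \preleq d$, as required.
\end{proof}
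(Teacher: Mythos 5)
Your proof is correct and matches the paper's intent exactly: the paper states this lemma with only the remark that it follows from transitivity, and your explicit unpacking of $a \preeq b$ and $c \preeq d$ into four inequalities followed by the two transitivity chains is precisely that argument spelled out. Nothing to add.
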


Preorders on specifications are lifted point-wise to preorders on linearization mappings:

\begin{notation}
	Given an implementation $\impl$, a specification $\seqspec$,
	linearization mappings $\linmap_1,\linmap_2:\exec{\impl}\to\seqspec$,
	and a preorder $\preleq$ on $\seqspec$,
	we write $\linmap_1\preleq\linmap_2$
	if $\linmap_1(\ex)\preleq\linmap_2(\ex)$ for every $\ex\in\exec{\impl}$.
\end{notation}

Note that if $\preleq$ is a preorder on specifications,
then its point-wise lift is a preorder on linearization mappings.

We are interested in families of linearization mappings for a specific implementation that satisfy a condition on the linearizations of prefixes.
(Strong, write strong and decisive linearizability are all instances of such a definition.)
This is formalized as follows:
\begin{definition2}\label{def:respectrel}
	Let $\impl$ be an implementation of an object $\obj$ and let $\seqspec$ be a specification of $\obj$.
	A linearization mapping $\linmap:\exec{\impl}\to\seqspec$ \emph{respects} a relation $\rel\suq \seqspec\times\seqspec$ if $\tup{\linmap(\ex_1),\linmap(\ex_2)}\in\rel$ for every
	$\ex_1,\ex_2\in\exec{\impl}$ such that $\ex_1\pref\ex_2$.

\end{definition2}

Let $\linmaps_{\seqspec}^\rel(\impl)$ denote the set of all linearization mappings $\linmap:\exec{\impl}\to\seqspec$ that respect $\rel$.
The following lemma is used to prove the existence of linearization mappings that are $\preleq$-minimal in $\linmaps_{\seqspec}^\rel(\impl)$:

\begin{lemma}\label{lem:proc}
	Let $\impl$ be an implementation.
	Let $\preleq$ be a well-founded preorder on a specification $\seqspec$ such that $\preeq \defeq {\preleq}\,\cap\,{\preleq^{-1}}$ is a congruence \wrt a relation $\rel\subseteq\seqspec\times\seqspec$.
	If $\linmaps_{\seqspec}^\rel(\impl)\neq\emptyset$, then $\linmaps_{\seqspec}^\rel(\impl)$
	contains some $\preleq$-minimal linearization mapping.
\end{lemma}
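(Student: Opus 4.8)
The plan is to apply the stated Zorn's lemma to the preorder $\preleq$, lifted pointwise, on the set $\linmaps_\seqspec^\rel(\impl)$. Since $\linmaps_\seqspec^\rel(\impl)\neq\emptyset$ by hypothesis, it suffices to show that every $\preleq$-chain $\chain\subseteq\linmaps_\seqspec^\rel(\impl)$ has a $\preleq$-lower bound lying in $\linmaps_\seqspec^\rel(\impl)$; the empty chain is covered by any fixed element of the nonempty set $\linmaps_\seqspec^\rel(\impl)$. So I fix a nonempty $\preleq$-chain $\chain$ and build a lower bound $\linmap^*$ explicitly.

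For each $\ex\in\exec{\impl}$, consider $S_\ex\defeq\set{\linmap(\ex)\st\linmap\in\chain}$. Because $\chain$ is a $\preleq$-chain in the lifted order, any two $\linmap_1,\linmap_2\in\chain$ are comparable, and since the lifted order is defined coordinatewise this comparability descends to $S_\ex$; hence $\preleq$ totally preorders $S_\ex$. As $\preleq$ is well-founded, $S_\ex$ has a $\preleq$-minimal element, which in a total preorder is in fact a $\preleq$-minimum: some $\seqh\in S_\ex$ with $\seqh\preleq\seqh'$ for all $\seqh'\in S_\ex$. I set $\linmap^*(\ex)\defeq\seqh$ and fix a witness $\linmap_\ex\in\chain$ with $\linmap_\ex(\ex)=\linmap^*(\ex)$. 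By construction $\linmap^*(\ex)\preleq\linmap(\ex)$ for every $\linmap\in\chain$, so $\linmap^*$ is a pointwise, hence lifted, $\preleq$-lower bound of $\chain$. Moreover $\linmap^*$ is a genuine linearization mapping into $\seqspec$: each value $\linmap^*(\ex)=\linmap_\ex(\ex)$ lies in $\seqspec$ and satisfies $\hs{\ex}\linleq\linmap_\ex(\ex)$ because $\linmap_\ex\in\chain$ is one.

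It remains to verify that $\linmap^*\in\linmaps_\seqspec^\rel(\impl)$, i.e.\ that $\linmap^*$ respects $\rel$; this is the crux. Fix $\ex_1\pref\ex_2$ and let $\linmap_-$ be the $\preleq$-smaller of the two witnesses $\linmap_{\ex_1},\linmap_{\ex_2}$ (which exists since $\chain$ is a chain), so $\linmap_-\preleq\linmap_{\ex_1}$ and $\linmap_-\preleq\linmap_{\ex_2}$. Evaluating at $\ex_1$ gives $\linmap_-(\ex_1)\preleq\linmap_{\ex_1}(\ex_1)=\linmap^*(\ex_1)$, while minimality of $\linmap^*(\ex_1)$ in $S_{\ex_1}$ gives $\linmap^*(\ex_1)\preleq\linmap_-(\ex_1)$; hence $\linmap^*(\ex_1)\preeq\linmap_-(\ex_1)$. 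Symmetrically $\linmap^*(\ex_2)\preeq\linmap_-(\ex_2)$. Now $\linmap_-\in\chain$ respects $\rel$, so $\tup{\linmap_-(\ex_1),\linmap_-(\ex_2)}\in\rel$; and since $\preeq$ is a congruence \wrt $\rel$, the two equivalences just established transfer this to $\tup{\linmap^*(\ex_1),\linmap^*(\ex_2)}\in\rel$. Thus $\linmap^*$ respects $\rel$ and is the desired lower bound in $\linmaps_\seqspec^\rel(\impl)$, completing the appeal to Zorn's lemma.

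The main obstacle is precisely this final transfer. The coordinatewise-minimal mapping $\linmap^*$ is assembled from potentially different members of $\chain$ at different executions, so a priori it could break the relational link between a prefix and its extension. What rescues the argument is the combination of two ingredients: the chain structure, which lets a single element $\linmap_-$ of $\chain$ $\preeq$-dominate $\linmap^*$ simultaneously at both $\ex_1$ and $\ex_2$, and the congruence hypothesis on $\preeq$, which lets membership in $\rel$ be read off through $\preeq$-equivalent representatives. I expect the routine checks (that the lift is a preorder, that minimal equals minimum under totality, and that $\linmap^*$ is a linearization mapping) to be unproblematic.
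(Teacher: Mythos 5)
Your proposal is correct and follows essentially the same route as the paper's proof: Zorn's lemma on the pointwise lift, a lower bound built from pointwise minima of the chain (which exist by well-foundedness and are minima by totality on the chain), and the congruence of $\preeq$ \wrt $\rel$ to transfer $\rel$-membership from a single chain element to the assembled mapping. The only cosmetic difference is that you apply the congruence symmetrically at both coordinates via the smaller witness $\linmap_-$, whereas the paper case-splits on which witness is smaller and uses the congruence at one coordinate; the substance is identical.
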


Note that while $\preleq$ is assumed to be well-founded, 
this property is not generally preserved when lifting point-wise to linearization mappings,
as they may be decreased infinitely often \wrt $\preleq$, by decreasing the linearizations of different executions.
Thus an additional argument is required to obtain a $\preleq$-minimal mapping.

\begin{proof}
	In the following proof, 
	to avoid confusion between $\preleq$ and its point-wise lifting to linearization mappings, we will use $\preleq^*$ to denote the point-wise lifting.
	
	Using Zorn's Lemma, it is sufficient to prove that any $\preleq^*$-chain of linearization mappings in $\linmaps_{\seqspec}^\rel(\impl)$ has a $\preleq^*$-lower bound.
	
	Let $\chain\subseteq\linmaps_{\seqspec}^\rel(\impl)$ be a $\preleq^*$-chain.

	If $\chain=\emptyset$, then any $\linmap\in\linmaps_{\seqspec}^\rel(\impl) $ is trivially a $\preleq^*$-lower bound of $\chain$.
	
	Otherwise, we define a $\preleq^*$-lower bound $\boundlin:\exec{\impl}\to\seqspec$ as follows:

	Let $\ex\in\exec{\impl}$.
	Since $\preleq$ is well-founded
	and the set $\set{\linmap(\ex)\st\linmap\in\chain}$
	is non-empty,
	it contains a (not necessarily unique) $\preleq$-minimal element.
	Since $\preleq^*$ is a total order on $\chain$, $\preleq$ is a total preorder on $\set{\linmap(\ex)\st\linmap\in\chain}$
	and so any such $\preleq$-minimal element is a $\preleq$-minimum.
	We define $\boundlin(\ex)$ to be an arbitrarily chosen  $\preleq$-minimum of $\set{\linmap(\ex)\st\linmap\in\chain}$.
	
	Clearly, $\boundlin$ is a $\preleq^*$-lower bound on $\chain$. %
	
	It remains to show that $\boundlin\in\linmaps_{\seqspec}^\rel(\impl)$.
	It is immediate that it is a linearization mapping since for every $\ex\in\exec{\impl}$,
	$\boundlin(\ex)=\linmap'(\ex)$ for some $\linmap'\in\linmaps_{\seqspec}^\rel(\impl)$.
	
	Let $\ex_1,\ex_2\in\exec{\impl}$ such that $\ex_1\pref\ex_2$.
	We show that $\tup{\boundlin(\ex_1),\boundlin(\ex_2)}\in\rel$.
	
	Let $\linmap_1,\linmap_2\in\chain$ be mappings such that $\boundlin(\ex_1)=\linmap_1(\ex_1)$ and $\boundlin(\ex_2)=\linmap_2(\ex_2)$.
	
	We show $\tup{\linmap_1(\ex_1),\linmap_2(\ex_2)}\in\rel$.
	
	Since $\preleq^*$ is a total order on $\chain$, one of the following holds:
	\begin{itemize}
		\item $\linmap_1\preleq^*\linmap_2$:
		
		In this case, we have $\linmap_1(\ex_2)\preleq\linmap_2(\ex_2)$.
		As $\linmap_1\in\chain$ and $\boundlin$ is a $\preleq^*$-lower bound of $\chain$
		we obtain that $\linmap_2(\ex_2)=\boundlin(\ex_2)\preleq\linmap_1(\ex_2)$.
		Thus, $\linmap_1(\ex_2)\preeq\linmap_2(\ex_2)$ and from congruence of $\preeq$ \wrt $\rel$ we obtain that
		$\tup{\linmap_1(\ex_1),\linmap_2(\ex_2)}\in\rel\iff
		\tup{\linmap_1(\ex_1),\linmap_1(\ex_2)}\in\rel$.
		Then, $\tup{\linmap_1(\ex_1),\linmap_1(\ex_2)}\in\rel$ follows from the fact that $\linmap_1\in\linmaps_{\seqspec}^\rel(\impl)$.
		
		\item $\linmap_2\preleq^*\linmap_1$:
		We similarly obtain that $\linmap_1(\ex_1)\preeq\linmap_2(\ex_1)$,
		and the desired property follows from $\linmap_2\in\linmaps_{\seqspec}^\rel(\impl)$.
\qedhere	
	\end{itemize}
\end{proof}

For (partial) orders, we obtain the following corollary:
\begin{corollary}\label{cor:proc}
	Let $\impl$ be an implementation.
	Let $\partleq$ be a well-founded (partial) order on a specification $\seqspec$,
	and let $\rel$ be a relation on $\seqspec$.
	If $\linmaps_{\seqspec}^\rel(\impl)\neq\emptyset$, then $\linmaps_{\seqspec}^\rel(\impl)$ contains some $\partleq$-minimal linearization mapping.
\end{corollary}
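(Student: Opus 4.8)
The plan is to derive Corollary~\ref{cor:proc} directly from Lemma~\ref{lem:proc} by instantiating the preorder $\preleq$ of the lemma with the given partial order $\partleq$. The only real work is to verify that the hypotheses of Lemma~\ref{lem:proc} are satisfied under this instantiation.

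First I would observe that a partial order is in particular a preorder (it is reflexive and transitive), and that $\partleq$ is well-founded by assumption, so two of the three hypotheses of Lemma~\ref{lem:proc} hold immediately. The remaining hypothesis is the congruence condition, namely that the induced equivalence $\preeq \defeq {\partleq}\,\cap\,{\partleq^{-1}}$ is a congruence \wrt $\rel$. This is where I would exploit that $\partleq$, being a partial order, is antisymmetric: if $x \preeq y$, that is $x \partleq y$ and $y \partleq x$, then $x = y$. Hence $\preeq$ is exactly the equality relation on $\seqspec$. Equality is trivially a congruence \wrt any relation $\rel\suq\seqspec\times\seqspec$, since $a \preeq b$ and $c \preeq d$ mean $a = b$ and $c = d$, so that $\tup{a,c}=\tup{b,d}$ and therefore $\tup{a,c}\in\rel \iff \tup{b,d}\in\rel$.

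With all hypotheses in place, I would invoke Lemma~\ref{lem:proc} to conclude that $\linmaps_{\seqspec}^\rel(\impl)$ contains a $\partleq$-minimal linearization mapping. As a final remark I would note that, for the antisymmetric relation $\partleq$, the notion of $\partleq$-minimality supplied by the lemma (where $y \partleq x$ implies $x \partleq y$) coincides with the usual order-theoretic notion of a minimal element, since $y \partleq x$ together with $x \partleq y$ forces $x = y$; thus the conclusion of the corollary is exactly what Lemma~\ref{lem:proc} yields.

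I do not anticipate any genuine obstacle: the corollary is a routine specialization, and its entire content is the observation that antisymmetry collapses the induced equivalence to equality, rendering the congruence requirement automatic and thereby removing the only nontrivial hypothesis of Lemma~\ref{lem:proc}.
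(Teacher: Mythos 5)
Your proposal is correct and follows exactly the paper's own argument: the paper also derives the corollary by instantiating \cref{lem:proc} with $\partleq$ and observing that antisymmetry makes the induced equivalence ${\partleq}\cap{\partleq^{-1}}$ equal to equality, which is a congruence \wrt any relation $\rel$. Your write-up merely spells out the same one-line reasoning in more detail.
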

\begin{proof}
	Follows from \cref{lem:proc}, using the fact that $\partleq\cap\partleq^{-1}$ is equality, which is a congruence \wrt any relation $\rel$.
\end{proof}

An immediate consequence of the definition of minimality is that one cannot strictly decrease sequential histories returned by minimal mappings that respect $\rel$ without violating the respect of $\rel$.
In other words, any decrease which respects $\rel$ is not a strict decrease.
This fact will be useful to establish properties of these sequential histories.

\begin{lemma}\label{lem:local_change}
	Let $\impl$ be an implementation of an object $\obj$,
	let $\seqspec$ be a specification of $\obj$,
	let $\preleq$ be a preorder on $\seqspec$,
	and let $\rel$ be a reflexive relation on $\seqspec$.
	Suppose that $\linmap\in\linmaps_{\seqspec}^\rel(\impl)$ is $\preleq$-minimal,
	$\ex\in\exec{\impl}$,
	and $\seqh$ is a sequential history of $\obj$,
	such that:
	\begin{enumerate}
		\item $\seqh\in\seqspec$,
		\item $\hs{\ex}\linleq\seqh$,
		\item $\seqh\preleq \linmap(\ex)$,
		\item $\tup{\linmap(\ex'),\seqh}\in\rel$ for all $\ex'\in\exec{\impl}$ such that $\ex'\prefneq\ex$, and
		\item $\tup{\seqh,\linmap(\ex')}\in\rel$ for all $\ex'\in\exec{\impl}$ such that $\ex\prefneq\ex'$.
	\end{enumerate}
	Then, $\seqh\preeq\linmap(\ex)$.
\end{lemma}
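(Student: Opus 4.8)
The plan is to use a \emph{local modification} argument exploiting the $\preleq$-minimality of $\linmap$. Since condition~(3) already supplies $\seqh\preleq\linmap(\ex)$, the goal reduces to establishing the reverse inequality $\linmap(\ex)\preleq\seqh$, after which $\seqh\preeq\linmap(\ex)$ follows at once from the definition ${\preeq}\defeq{\preleq}\,\cap\,{\preleq^{-1}}$.

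To this end, I would define a modified mapping $\linmap'\colon\exec{\impl}\to\seqspec$ that agrees with $\linmap$ on every execution except $\ex$, where I set $\linmap'(\ex)\defeq\seqh$. The first step is to check that $\linmap'\in\linmaps_{\seqspec}^\rel(\impl)$. That it is a linearization mapping into $\seqspec$ is immediate: off $\ex$ it coincides with $\linmap$, while at $\ex$ conditions~(1) and~(2) give $\seqh\in\seqspec$ and $\hs{\ex}\linleq\seqh$. For respecting $\rel$, I would take arbitrary $\ex_1\pref\ex_2$ and split into cases according to whether $\ex_1$ or $\ex_2$ equals $\ex$: when neither equals $\ex$ the requirement is inherited from $\linmap$; when $\ex_2=\ex$ and $\ex_1\prefneq\ex$ it is exactly condition~(4); when $\ex_1=\ex$ and $\ex\prefneq\ex_2$ it is exactly condition~(5); and the diagonal case $\ex_1=\ex_2$ holds by reflexivity of $\rel$. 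Here I rely on $\pref$ being the prefix order, so that $\ex_1\pref\ex_2$ forces the two executions to be comparable, ruling out any configuration not covered above.

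Next I would observe that $\linmap'\preleq\linmap$ holds pointwise: off $\ex$ the two mappings are equal, so reflexivity of $\preleq$ applies, and at $\ex$ we have $\linmap'(\ex)=\seqh\preleq\linmap(\ex)$ by condition~(3). Since $\linmap$ is $\preleq$-minimal in $\linmaps_{\seqspec}^\rel(\impl)$ and $\linmap'$ lies in this set with $\linmap'\preleq\linmap$, minimality yields $\linmap\preleq\linmap'$; evaluating at $\ex$ gives $\linmap(\ex)\preleq\linmap'(\ex)=\seqh$. Together with condition~(3) this closes both directions and establishes $\seqh\preeq\linmap(\ex)$.

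I expect the only delicate point to be the case analysis verifying that $\linmap'$ respects $\rel$: one must ensure that every comparable pair $\ex_1\pref\ex_2$ is accounted for and that conditions~(4) and~(5) line up with the correct argument order in $\rel$, which need not be symmetric. Everything else is a direct consequence of minimality together with the reflexivity of $\preleq$ and of $\rel$.
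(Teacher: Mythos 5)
Your proof is correct and takes essentially the same route as the paper: both define the mapping $\linmap'$ that agrees with $\linmap$ everywhere except at $\ex$, where it returns $\seqh$, verify $\linmap'\in\linmaps_{\seqspec}^\rel(\impl)$ and $\linmap'\preleq\linmap$, and then invoke $\preleq$-minimality. Your case analysis for why $\linmap'$ respects $\rel$ (including the use of reflexivity of $\rel$ for the diagonal case) just spells out what the paper leaves as ``easy to verify.''
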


\begin{proof}
	Consider the linearization mapping $\linmap':\exec{\impl}\to\seqspec$ defined by
	\begin{equation*}
		\linmap'(\ex')=\begin{cases}
			\linmap(\ex') & \ex'\neq\ex \\
			\seqh & \ex'=\ex
		\end{cases}
	\end{equation*}
	It is easy to verify that $\linmap'\in\linmaps_{\seqspec}^\rel(\impl)$ and $\linmap'\preleq\linmap$.
	Since $\linmap$ is $\preleq$-minimal, we deduce that $\linmap'\preeq\linmap$
	and in particular $\seqh=\linmap'(\ex)\preeq\linmap(\ex)$.
\end{proof}

\subsection{Minimal Linearizations for Write-Strong Registers}\label{subsec:wslazy}

In this section we apply the framework described in \cref{subsec:minimal} to write strong-linearizability.
First, we notice write strong-linearizability is an instance of \cref{def:respectrel}.
Indeed, define: $$\wsleq\eqdef\set{\tup{\seqh_1,\seqh_2}\in\regspec\times\regspec\st\seqh_1\rst{\regwrite}\pref\seqh_2\rst{\regwrite}}$$
Then $\linmaps_{\regspec}^{\wsleq}(\impl)$ is the set of write strong-linearizations of the implementation $\impl$,
and the class of write strongly-linearizable register implementations satisfies: $$\impl\in\wslin\iff\linmaps_{\regspec}^{\wsleq}(\impl)\neq\emptyset$$
This is a restatement of the definition from the body of the paper, using the extra notation we now have.

Now, we define a preorder on $\regspec$ (which provides a preorder on write strong linearizations via point-wise lifting):

\begin{definition2}
	For $\seqh_1,\seqh_2\in\regspec$, we define $\seqh_1\permleq\seqh_2$ if $\setof{\seqh_1}\subseteq\setof{\seqh_2}$ and $(\writeid\seq<_{\seqh_1})\subseteq<_{\seqh_2}$.
\end{definition2}
Note that $\permleq$ is a preorder.
We define $\permeq\defeq(\permleq\cap{(\permleq)}^{-1})$, the induced equivalence.
The condition $(\writeid\seq<_{\seqh_1})\subseteq<_{\seqh_2}$ means that if a write operation appears before some other operation is $\seqh_1$, than this is also the case in $\seqh_2$.
The idea behind the condition $\setof{\seqh_1}\subseteq\setof{\seqh_2}$ is that linearizations that are $\permleq$-minimal do not linearize operations unless they must.
Indeed, if we can get a valid linearization $\seqh_1$ (respecting conditions on prefixes) by removing some operation from $\seqh_2$,
then a linearization mapping using $\seqh_1$ instead of $\seqh_2$ will be strictly smaller since $\setof{\seqh_1}\subseteq\setof{\seqh_2}$ (and $(\writeid\seq<_{\seqh_1})\subseteq<_{\seqh_2}$) but $\setof{\seqh_2}\nsubseteq\setof{\seqh_1}$.
The condition $(\writeid\seq<_{\seqh_1})\subseteq<_{\seqh_2}$ will be later used to show that minimal linearization avoid in some cases the re-ordering of already linearized operations in the linearization of an extension of an execution.
This will be a useful step towards showing that minimal linearizations can be modified to become decisive, 
in which case the ordering between operations is never changed once they are linearized.

The following follow immediately from definition:

\begin{lemma}\label{lem:subseq_permleq}
	Let $\seqh_1,\seqh_2\in\regspec$ be register sequential histories.
	If $\seqh_1\subseq\seqh_2$, then $\seqh_1\permleq\seqh_2$.
\end{lemma}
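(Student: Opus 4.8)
The plan is to unfold the definition of $\permleq$ into its two conjuncts and verify each one directly from the hypothesis $\seqh_1 \subseq \seqh_2$. Recall that $\seqh_1 \permleq \seqh_2$ asserts both (i) $\setof{\seqh_1} \subseteq \setof{\seqh_2}$ and (ii) $(\writeid \seq <_{\seqh_1}) \subseteq\, <_{\seqh_2}$, so it suffices to establish these separately.

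For (i), I would argue by transitivity of the subsequence relation $\subseq$. Any operation $\op \in \setof{\seqh_1}$ satisfies $\op \subseq \seqh_1$ by definition of $\setof{\cdot}$, and since $\seqh_1 \subseq \seqh_2$ we obtain $\op \subseq \seqh_2$, i.e.\ $\op \in \setof{\seqh_2}$. A small point worth stating explicitly is that an operation of $\seqh_1$ remains the \emph{same} operation in $\seqh_2$: a completed operation consists of an invocation and matching response carrying a common, unique action identifier, and these two actions (hence their pairing) are preserved when passing to the larger sequence.

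For (ii), the key observation is that the subsequence relation preserves the order of operations, namely $<_{\seqh_1} \subseteq\, <_{\seqh_2}$. Given operations $\op, \op'$ with $\op <_{\seqh_1} \op'$, the invocation of $\op$ precedes that of $\op'$ in $\seqh_1$; since $\seqh_1$ is a subsequence of $\seqh_2$, the relative order of any two retained symbols is unchanged, so the invocation of $\op$ still precedes that of $\op'$ in $\seqh_2$, whence $\op <_{\seqh_2} \op'$. Composing on the left with $\writeid$ (which merely restricts the left endpoint to write operations) then yields $(\writeid \seq <_{\seqh_1}) \subseteq (\writeid \seq <_{\seqh_2}) \subseteq\, <_{\seqh_2}$, which is (ii). In fact this shows the stronger $<_{\seqh_1} \subseteq\, <_{\seqh_2}$, of which (ii) is the special case obtained by prepending $\writeid$.

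The only genuinely delicate point is the passage from the action-level hypothesis $\seqh_1 \subseq \seqh_2$ to the operation-level order claim: one must keep in mind that ``$<_{\seqh}$'' here denotes the order of whole operations, so the argument should rest on preservation of the relative order of the operations' invocation actions under $\subseq$, using that both $\seqh_1$ and $\seqh_2$ are sequential histories in $\regspec$ (where operations do not interleave and are thus linearly ordered by their invocations). Once this order-preservation fact is isolated, everything else is immediate from the definitions, and I expect the lemma to close in a few lines.
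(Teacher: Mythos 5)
Your proof is correct and matches the paper's intent exactly: the paper states this lemma follows immediately from the definition and gives no proof body, and your argument is precisely that definitional unfolding (subsequence preserves both the set of operations and their relative order, hence both conjuncts of $\permleq$). The small care you take with the operation-level order is sound, since the retained invocation/response actions keep their relative positions under $\subseq$.
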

\begin{lemma}\label{lem:permleq_writesubseq}
	Let $\seqh_1,\seqh_2\in\regspec$ be register sequential histories.
	If $\seqh_1\permleq\seqh_2$, then $\seqh_1\rst{\regwrite}\subseq\seqh_2\rst{\regwrite}$.
\end{lemma}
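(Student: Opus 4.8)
The plan is to unfold the definition of $\permleq$ and observe that its two conjuncts translate directly into the two ingredients needed for the subsequence relation $\seqh_1\rst{\regwrite}\subseq\seqh_2\rst{\regwrite}$: containment of the underlying sets of write operations, and preservation of their relative order. The statement is essentially a matter of reading the definitions carefully, so the proof will be short.

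First I would handle the set inclusion. Since $\seqh_1\permleq\seqh_2$ gives $\setof{\seqh_1}\subseteq\setof{\seqh_2}$, every operation occurring in $\seqh_1$ occurs in $\seqh_2$; restricting to operations whose method is $\regwrite$, every write operation appearing in $\seqh_1\rst{\regwrite}$ also appears in $\seqh_2\rst{\regwrite}$. Next I would establish order preservation among writes. Take any two write operations $\wop_1,\wop_2$ with $\wop_1<_{\seqh_1}\wop_2$. Because $\wop_1$ is a write we have $\tup{\wop_1,\wop_1}\in\writeid$, and hence $\tup{\wop_1,\wop_2}\in\writeid\seq<_{\seqh_1}$. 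The second conjunct of $\seqh_1\permleq\seqh_2$, namely $(\writeid\seq<_{\seqh_1})\subseteq<_{\seqh_2}$, then yields $\wop_1<_{\seqh_2}\wop_2$. Thus the relative order of the writes of $\seqh_1$ is reproduced inside $\seqh_2$.

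Finally I would assemble these two facts. Since $\seqh_1,\seqh_2\in\regspec$ are sequential histories, their operations carry unique action identifiers and are therefore pairwise distinct, so $\seqh_1\rst{\regwrite}$ and $\seqh_2\rst{\regwrite}$ are sequences of distinct write operations, and $<_{\seqh_i\rst{\regwrite}}$ is precisely the restriction of $<_{\seqh_i}$ to writes. The set inclusion provides an embedding of the writes of $\seqh_1$ into those of $\seqh_2$, and the order-preservation step shows this embedding respects the two orders; this is exactly the statement that $\seqh_1\rst{\regwrite}$ is a subsequence of $\seqh_2\rst{\regwrite}$.

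There is no genuinely hard step here; the only point that demands a little care is reading the relational composition $\writeid\seq<_{\seqh_1}$ correctly as ``a write ordered before some operation,'' so that its inclusion into $<_{\seqh_2}$ delivers write-to-write order preservation and not something weaker. I expect this to be the one place where a sloppy reading could go wrong, but once the composition is interpreted properly the argument is immediate.
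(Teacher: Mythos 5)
Your proof is correct and matches the paper's intent: the paper states this lemma without proof as following immediately from the definition of $\permleq$, and your unfolding of the two conjuncts (set containment of operations, plus $(\writeid\seq<_{\seqh_1})\subseteq<_{\seqh_2}$ applied with a write on the left to get order preservation among writes) is exactly the intended argument.
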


Using \cref{lem:proc} we obtain the existence of write strong-linearizations that are $\permleq$-minimal:

\begin{lemma}\label{lem:wsexistproc}
	If $\impl\in\wslin$,
	then $\linmaps_{\regspec}^{\wsleq}(\impl)$ contains a $\permleq$-minimal linearization.
\end{lemma}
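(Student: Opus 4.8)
The plan is to apply \Cref{lem:proc} with the preorder $\permleq$ on $\regspec$ and the relation $\wsleq$. Recall that $\linmaps_{\regspec}^{\wsleq}(\impl)$ is precisely the set of write strong-linearizations of $\impl$, so the hypothesis $\linmaps_{\regspec}^{\wsleq}(\impl)\neq\emptyset$ is exactly $\impl\in\wslin$. Since $\permleq$ is already noted to be a preorder, the two remaining hypotheses of \Cref{lem:proc} that I would verify are: that $\permleq$ is well-founded, and that the induced equivalence $\permeq=\permleq\cap\permleq^{-1}$ is a congruence \wrt $\wsleq$. Once both are in hand, \Cref{lem:proc} immediately yields a $\permleq$-minimal element of $\linmaps_{\regspec}^{\wsleq}(\impl)$.

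For the congruence, suppose $\seqh_1\permeq\seqh_1'$ and $\seqh_2\permeq\seqh_2'$. Applying \Cref{lem:permleq_writesubseq} to both directions of $\seqh_i\permeq\seqh_i'$ gives $\seqh_i\rst{\regwrite}\subseq\seqh_i'\rst{\regwrite}$ and $\seqh_i'\rst{\regwrite}\subseq\seqh_i\rst{\regwrite}$; since two finite sequences that are mutually subsequences of one another are equal, $\seqh_i\rst{\regwrite}=\seqh_i'\rst{\regwrite}$ for $i\in\set{1,2}$. Membership in $\wsleq$ depends only on the restrictions to writes (by definition $\tup{\seqh,\seqh'}\in\wsleq$ iff $\seqh\rst{\regwrite}\pref\seqh'\rst{\regwrite}$), so $\tup{\seqh_1,\seqh_2}\in\wsleq\iff\tup{\seqh_1',\seqh_2'}\in\wsleq$, which is exactly the congruence condition.

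The main obstacle is well-foundedness of $\permleq$ (note that, as the paper warns, this property would not survive the point-wise lift to mappings, so it must be established for $\permleq$ itself on $\regspec$). I would prove it by a two-stage measure argument. Given a non-empty $B\subseteq\regspec$, first let $n$ be the least operation count $\size{\setof{\seqh}}$ attained over $\seqh\in B$ (a minimum of a set of naturals) and set $B'=\set{\seqh\in B\st\size{\setof{\seqh}}=n}$. Among $B'$, choose $\seqh^*$ minimizing the finite cardinality $\size{\writeid\seq{<_{\seqh}}}$ of the set of write-before pairs (this minimum again exists as it ranges over a set of naturals bounded below by $0$). I claim $\seqh^*$ is $\permleq$-minimal in $B$, which gives well-foundedness.

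To verify the claim, suppose $\seqh\permleq\seqh^*$ for some $\seqh\in B$. Then $\setof{\seqh}\subseteq\setof{\seqh^*}$, so $\size{\setof{\seqh}}\leq n$; minimality of $n$ forces $\size{\setof{\seqh}}=n$ and hence $\setof{\seqh}=\setof{\seqh^*}$, so $\seqh\in B'$. Moreover $\permleq$ gives $\writeid\seq{<_{\seqh}}\subseteq{<_{\seqh^*}}$; every pair in $\writeid\seq{<_{\seqh}}$ has a write as its first component, so these pairs in fact lie in $\writeid\seq{<_{\seqh^*}}$, i.e. $\writeid\seq{<_{\seqh}}\subseteq\writeid\seq{<_{\seqh^*}}$. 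By minimality of $\size{\writeid\seq{<_{\seqh}}}$ over $B'$ we have $\size{\writeid\seq{<_{\seqh}}}\geq\size{\writeid\seq{<_{\seqh^*}}}$, and combined with the inclusion of finite sets this yields $\writeid\seq{<_{\seqh}}=\writeid\seq{<_{\seqh^*}}$. Then $\writeid\seq{<_{\seqh^*}}=\writeid\seq{<_{\seqh}}\subseteq{<_{\seqh}}$, together with $\setof{\seqh^*}=\setof{\seqh}$, gives $\seqh^*\permleq\seqh$. Thus $\seqh\permleq\seqh^*$ implies $\seqh^*\permleq\seqh$, so $\seqh^*$ is $\permleq$-minimal and $\permleq$ is well-founded. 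With both hypotheses of \Cref{lem:proc} verified, $\linmaps_{\regspec}^{\wsleq}(\impl)$ contains a $\permleq$-minimal linearization.
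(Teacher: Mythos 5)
Your proposal is correct and follows the same route as the paper: invoke \Cref{lem:proc} and verify that $\permeq$ is a congruence \wrt $\wsleq$ via \Cref{lem:permleq_writesubseq}, exactly as the paper's proof does. The only difference is that you also explicitly establish the well-foundedness of $\permleq$ (through the two-stage measure on $\size{\setof{\seqh}}$ and then $\size{\writeid\seq{<_{\seqh}}}$), a hypothesis of \Cref{lem:proc} that the paper's proof leaves unstated; your argument for it is correct and relies only on the finiteness of histories.
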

\begin{proof}
	By \cref{lem:proc} it suffices to show that $\permeq$ is a congruence \wrt $\wsleq$.
	Indeed,
	if $\seqh_i\permeq\seqh'_i$ for $i\in\set{1,2}$,
	then from \cref{lem:permleq_writesubseq} we have $\seqh_i\rst{\regwrite}=\seqh'_i\rst{\regwrite}$ and thus $\seqh_1\wsleq\seqh_2\iff\seqh_1\rst{\regwrite}\pref\seqh_2\rst{\regwrite}\iff\seqh'_1\rst{\regwrite}\pref\seqh'_2\rst{\regwrite}\iff\seqh'_1\wsleq\seqh'_2$.
\end{proof}

We wish to use \cref{lem:local_change} to argue about properties of these minimal linearizations.
Due to \cref{lem:self_congruent}, conditions 4 and 5 in \cref{lem:local_change} can be replaced with a requirement that $\seqh\rst{\regwrite}=\linmap(\ex)\rst{\regwrite}$.
(In other words, replacing a linearization with another one with the same sequence of writes maintains the respect of $\wsleq$ between prefixes.)
This gives the following corollary of \cref{lem:local_change}:

\begin{lemma}\label{lem:local_change_ws}
	If $\impl\in\wslin$, $\linmap\in\linmaps_{\regspec}^{\wsleq}(\impl)$ is $\permleq$-minimal,	$\ex\in\exec{\impl}$,
	and $\seqh$ is a sequential history of $\Reg$,
	such that:
	\begin{enumerate}
		\item $\seqh\in\regspec$,
		\item $\hs{\ex}\linleq\seqh$,
		\item $\seqh\permleq \linmap(\ex)$, and
		\item $\seqh\rst{\regwrite}=\linmap(\ex)\rst{\regwrite}$.
	\end{enumerate}
	Then, $\seqh\permeq\linmap(\ex)$.
\end{lemma}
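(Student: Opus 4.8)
The plan is to derive this as a direct instance of the general \cref{lem:local_change}, specializing the preorder to $\preleq := \permleq$ and the relation to $\rel := \wsleq$. With this choice the induced equivalence $\preeq$ is exactly $\permeq$, so the conclusion $\seqh \preeq \linmap(\ex)$ delivered by \cref{lem:local_change} is precisely the desired $\seqh \permeq \linmap(\ex)$. First I would check the cheap prerequisites: $\wsleq$ is reflexive (since $\seqh\rst{\regwrite} \pref \seqh\rst{\regwrite}$ always holds), and hypotheses~(1)--(3) of \cref{lem:local_change} are literally hypotheses~(1)--(3) of the present lemma. It then remains only to supply hypotheses~(4) and~(5) of \cref{lem:local_change}, i.e. the $\wsleq$-relatedness of $\seqh$ with the linearizations of proper prefixes and of strict extensions of $\ex$.

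The key observation driving those two verifications is that membership in $\wsleq$ depends on a history only through its restriction to writes, combined with the fact that $\linmap$ respects $\wsleq$ (being an element of $\linmaps_{\regspec}^{\wsleq}(\impl)$). Concretely, for hypothesis~(4) I would take any $\ex' \prefneq \ex$; then $\ex' \pref \ex$, so respect of $\wsleq$ gives $\linmap(\ex')\rst{\regwrite} \pref \linmap(\ex)\rst{\regwrite}$, and substituting the present hypothesis~(4) $\linmap(\ex)\rst{\regwrite} = \seqh\rst{\regwrite}$ yields $\linmap(\ex')\rst{\regwrite} \pref \seqh\rst{\regwrite}$, that is $\tup{\linmap(\ex'),\seqh} \in \wsleq$. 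Symmetrically, for hypothesis~(5) I would take $\ex \prefneq \ex'$; respect of $\wsleq$ gives $\linmap(\ex)\rst{\regwrite} \pref \linmap(\ex')\rst{\regwrite}$, and again using $\seqh\rst{\regwrite} = \linmap(\ex)\rst{\regwrite}$ I obtain $\tup{\seqh,\linmap(\ex')} \in \wsleq$.

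Viewed abstractly, this reduction is just the congruence statement of \cref{lem:self_congruent}: since $\seqh$ and $\linmap(\ex)$ have identical write-projections they are $\wsleq$-indistinguishable on either side, so replacing $\linmap(\ex)$ by $\seqh$ in the two prefix/extension conditions preserves $\wsleq$-membership, while those conditions for $\linmap(\ex)$ itself hold because $\linmap$ respects $\wsleq$. There is no real obstacle here — the argument is essentially bookkeeping — and the only point requiring any care is confirming that $\wsleq$ is genuinely a function of the write-restriction alone, so that the equal-write-restriction hypothesis legitimately lets one swap $\seqh$ for $\linmap(\ex)$; this is immediate from the definition $\wsleq = \set{\tup{\seqh_1,\seqh_2} \st \seqh_1\rst{\regwrite} \pref \seqh_2\rst{\regwrite}}$. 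Once hypotheses~(4) and~(5) are established, \cref{lem:local_change} applies and closes the proof.
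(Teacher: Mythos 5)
Your proposal is correct and matches the paper's own treatment: the paper likewise presents this lemma as a direct corollary of \cref{lem:local_change} with $\preleq=\permleq$ and $\rel=\wsleq$, obtaining conditions (4) and (5) from the equal-write-restriction hypothesis together with $\linmap$ respecting $\wsleq$. Your explicit verification of those two conditions (and of the reflexivity of $\wsleq$) is exactly the bookkeeping the paper leaves implicit.
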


We now begin the process of proving several properties of $\permleq$-minimal write strong linearizations.
First, we show that they contain no pending reads:

\begin{lemma}\label{lem:readcompleted_ws}
	If $\impl\in\wslin$ and $\linmap\in\linmaps_{\regspec}^{\wsleq}(\impl)$ is $\permleq$-minimal,
	then $\linmap$ is pending-read-free.
\end{lemma}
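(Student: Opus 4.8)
The plan is to argue by contradiction and to exhibit a strictly smaller (in fact, equally-ranked) linearization by simply deleting an offending pending read. Suppose $\linmap$ is not pending-read-free, so there is an execution $\ex\in\exec{\impl}$ and a read operation $\rop\in\setof{\linmap(\ex)\rst{\regread}}$ with $\rop\notin\setof{\completed(\hs{\ex})}$. Let $\hat{\seqh}$ be the sequential history obtained from $\linmap(\ex)$ by removing $\rop$ (so $\hat{\seqh}\subseq\linmap(\ex)$ and $\linmap(\ex)\rst{\regwrite}\subseq\hat{\seqh}\subseq\linmap(\ex)$). The goal is to feed $\hat{\seqh}$ into \cref{lem:local_change_ws}, which will force $\hat{\seqh}\permeq\linmap(\ex)$, and then derive a contradiction from the fact that $\rop$ was deleted.

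First I would verify the four hypotheses of \cref{lem:local_change_ws} for $\hat{\seqh}$ and $\ex$. Condition~(1), $\hat{\seqh}\in\regspec$, follows directly from \cref{lem:remove_read}, since $\hat{\seqh}$ is a sequential history with $\linmap(\ex)\rst{\regwrite}\subseq\hat{\seqh}\subseq\linmap(\ex)$ and $\linmap(\ex)\in\regspec$. Condition~(3), $\hat{\seqh}\permleq\linmap(\ex)$, is immediate from \cref{lem:subseq_permleq} because $\hat{\seqh}\subseq\linmap(\ex)$. Condition~(4), $\hat{\seqh}\rst{\regwrite}=\linmap(\ex)\rst{\regwrite}$, holds because we deleted only a read operation, leaving the write subsequence untouched.

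The crux is Condition~(2), $\hs{\ex}\linleq\hat{\seqh}$, which I would obtain via \cref{lem:remove_pending} applied with $\seqh=\linmap(\ex)$ and $\hat{\seqh}$ as above. Its hypotheses are $\hat{\seqh}\in\regspec$ (just shown), $\hat{\seqh}\subseq\linmap(\ex)$ (by construction), $\hs{\ex}\linleq\linmap(\ex)$ (since $\linmap$ is a linearization mapping), and crucially $\setof{\completed(\hs{\ex})}\subseteq\setof{\hat{\seqh}}$. For the last inclusion I would observe that, by the definition of $\linleq$ (\cref{obs:lin}), $\setof{\linmap(\ex)}=\setof{\completed(\hs{\ex}\cdot\ressq)}$ for the witnessing response sequence $\ressq$, so every operation completed in $\hs{\ex}$ already lies in $\setof{\linmap(\ex)}$; and since the deleted $\rop$ is by assumption \emph{not} completed in $\hs{\ex}$, deleting it preserves $\setof{\completed(\hs{\ex})}\subseteq\setof{\hat{\seqh}}$. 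This step—confirming that the only operation we throw away is genuinely absent from $\setof{\completed(\hs{\ex})}$—is the one delicate point of the argument, though it is short given the lemmas at hand.

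With all four conditions established, \cref{lem:local_change_ws} yields $\hat{\seqh}\permeq\linmap(\ex)$. Unfolding $\permeq={\permleq}\cap{(\permleq)}^{-1}$ and recalling that $\permleq$ requires $\setof{\cdot}\subseteq\setof{\cdot}$ in both directions, I get $\setof{\hat{\seqh}}=\setof{\linmap(\ex)}$. But $\rop\in\setof{\linmap(\ex)}\setminus\setof{\hat{\seqh}}$ by construction, a contradiction. Hence no such pending $\rop$ exists, and $\linmap$ is pending-read-free.
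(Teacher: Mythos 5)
Your proposal is correct and follows essentially the same route as the paper's proof: both remove pending reads from $\linmap(\ex)$, verify the four hypotheses of \cref{lem:local_change_ws} via \cref{lem:remove_read}, \cref{lem:remove_pending}, and \cref{lem:subseq_permleq}, and conclude from $\permeq$ that the set of operations is unchanged. The only cosmetic difference is that you delete a single offending read and argue by contradiction, whereas the paper deletes all pending reads at once and shows the result equals $\linmap(\ex)$ directly.
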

\begin{proof}
	We need to show $\setof{\linmap(\ex)\rst{\regread}}\subseteq\setof{\completed(\hs{\ex})}$ for every $\ex\in\exec{\impl}$.
	
	Denote 	$\seqh=\linmap(\ex)\rst{\setof{\linmap(\ex)}\setminus\left(\setof{\linmap(\ex)\rst{\regread}}\setminus\setof{\completed(\hs{\ex})}\right)}$ the subsequence of $\linmap(\ex)$ where every pending read operation is removed.
	It is sufficient to show that $\seqh=\linmap(\ex)$.
	We apply \cref{lem:local_change_ws} to $\ex$ and $\seqh$. 
	We have:
	\begin{enumerate}
		\item $\seqh\in\regspec$: 
		Follows from \cref{lem:remove_read}.
		
		\item $\hs{\ex}\linleq\seqh$: 
		Follows from \cref{lem:remove_pending}, using the fact that $\hs{\ex}\linleq\linmap(\ex)$.
		
		\item $\seqh\permleq\linmap(\ex)$:
		Using \cref{lem:subseq_permleq} it is sufficient to verify that $\seqh\subseq\linmap(\ex)$, which follows immediately from the construction of $\seqh$.
		
		\item $\seqh\rst{\regwrite}=\linmap(\ex)\rst{\regwrite}$:
		This is again immediate from the construction, as we only remove reads from the sequence.
	\end{enumerate}
	Overall, we can apply \cref{lem:local_change_ws} and deduce that $\seqh\permeq\linmap(\ex)$.
	This implies that $\setof{\seqh}=\setof{\linmap(\ex)}$ and as $\seqh\subseq\linmap(\ex)$ we obtain $\seqh=\linmap(\ex)$.
\end{proof}

Next, we prove several properties related to maintaining operations and the ordering between them in the linearizations when extending an execution.
These properties are proven as steps in the way to obtaining a minimal write strong-linearization mapping that is also decisive, a property which is stronger than those discussed in the lemmas leading up to that point.

We start this process by showing that once operations appear in the minimal linearization of some execution, they remain in the linearization for all extensions:

\begin{lemma}\label{lem:wssetinclusion}
	If $\impl\in\wslin$ and $\linmap\in\linmaps_{\regspec}^{\wsleq}(\impl)$ is $\permleq$-minimal,
	then $\setof{\linmap(\ex_1)}\suq\setof{\linmap(\ex_2)}$ for all $\ex_1,\ex_2\in\exec{\impl}$ such that $\ex_1\pref\ex_2$.
\end{lemma}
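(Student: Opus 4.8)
The plan is to exploit the $\permleq$-minimality of $\linmap$ through \cref{lem:local_change_ws}. Fixing $\ex_1\pref\ex_2$ in $\exec{\impl}$, I would define $\seqh$ to be the restriction of $\linmap(\ex_1)$ to the operations that also occur in $\linmap(\ex_2)$, that is $\seqh \defeq \linmap(\ex_1)\rst{\setof{\linmap(\ex_1)}\cap\setof{\linmap(\ex_2)}}$. Since $\seqh\subseq\linmap(\ex_1)$ and $\permeq$ identifies sequences with equal operation sets, it suffices to establish $\seqh\permeq\linmap(\ex_1)$: this gives $\setof{\linmap(\ex_1)}=\setof{\seqh}\suq\setof{\linmap(\ex_2)}$, the desired inclusion. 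Everything therefore reduces to checking the four hypotheses of \cref{lem:local_change_ws} for $\ex_1$ and $\seqh$.

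First I would note that $\seqh$ drops only read operations. Because $\linmap$ respects $\wsleq$ and $\ex_1\pref\ex_2$, we have $\linmap(\ex_1)\rst{\regwrite}\pref\linmap(\ex_2)\rst{\regwrite}$, so every write of $\linmap(\ex_1)$ already occurs in $\linmap(\ex_2)$ and is kept in $\seqh$. This immediately yields hypothesis~(4), $\seqh\rst{\regwrite}=\linmap(\ex_1)\rst{\regwrite}$, and also $\linmap(\ex_1)\rst{\regwrite}\subseq\seqh\subseq\linmap(\ex_1)$, whence $\seqh\in\regspec$ by \cref{lem:remove_read} (hypothesis~(1)). Hypothesis~(3), $\seqh\permleq\linmap(\ex_1)$, follows from $\seqh\subseq\linmap(\ex_1)$ via \cref{lem:subseq_permleq}.

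The remaining hypothesis~(2), $\hs{\ex_1}\linleq\seqh$, is where the real work lies, and I would obtain it from \cref{lem:remove_pending} applied with history $\hs{\ex_1}$, full linearization $\linmap(\ex_1)$, and reduced sequence $\seqh$. Its first three premises ($\seqh\subseq\linmap(\ex_1)$, membership in $\regspec$, and $\hs{\ex_1}\linleq\linmap(\ex_1)$) are already in hand, so the crux is $\setof{\completed(\hs{\ex_1})}\subseteq\setof{\seqh}$. For this I would combine two inclusions: on the one hand, by the definition of linearization (\cref{obs:lin}) every completed operation of a history lies in each of its linearizations, so $\setof{\completed(\hs{\ex_1})}\subseteq\setof{\linmap(\ex_1)}$; on the other hand, since appending transitions only appends to the induced history we have $\hs{\ex_1}\pref\hs{\ex_2}$, hence $\setof{\completed(\hs{\ex_1})}\subseteq\setof{\completed(\hs{\ex_2})}\subseteq\setof{\linmap(\ex_2)}$. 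Intersecting the two, $\setof{\completed(\hs{\ex_1})}\subseteq\setof{\linmap(\ex_1)}\cap\setof{\linmap(\ex_2)}=\setof{\seqh}$, as needed.

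With all four hypotheses verified, \cref{lem:local_change_ws} delivers $\seqh\permeq\linmap(\ex_1)$ and the lemma follows. I expect the main obstacle to be hypothesis~(2): one must recognize that the operations $\seqh$ discards are precisely the pending reads of $\ex_1$ that are not carried over to $\ex_2$, which are exactly the operations \cref{lem:remove_pending} is permitted to remove, while the write-prefix guarantee of $\wsleq$ is what keeps every write of $\linmap(\ex_1)$ in place and thereby keeps $\seqh$ a legal register history. The subtle point is that completed operations survive the passage from $\ex_1$ to $\ex_2$ and so remain linearized in $\linmap(\ex_2)$, which is what forces them to persist in $\seqh$.
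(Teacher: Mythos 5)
Your proof is correct, but it takes a genuinely different route from the paper's. The paper's proof is a direct two-case argument over operations $\op\in\setof{\linmap(\ex_1)}$: if $\op$ is a write, write-strongness gives $\linmap(\ex_1)\rst{\regwrite}\pref\linmap(\ex_2)\rst{\regwrite}$, so $\op\in\setof{\linmap(\ex_2)}$; if $\op$ is a read, it invokes \cref{lem:readcompleted_ws} (minimal mappings are pending-read-free) to get $\op\in\setof{\completed(\hs{\ex_1})}\subseteq\setof{\completed(\hs{\ex_2})}$, and every completed operation must appear in any linearization of $\hs{\ex_2}$. You instead re-run the minimality machinery from scratch: you build the candidate $\seqh=\linmap(\ex_1)\rst{\setof{\linmap(\ex_1)}\cap\setof{\linmap(\ex_2)}}$ and feed it to \cref{lem:local_change_ws}, observing that the only operations $\seqh$ can drop are reads of $\linmap(\ex_1)$ that are pending in $\hs{\ex_1}$ --- exactly what \cref{lem:remove_pending} permits discarding --- and that write-strongness keeps every write in place so that hypotheses (1), (3) and (4) go through. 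Both arguments rest on the same two pillars (write-strongness preserves the writes; completedness forces the reads to persist), but the paper gets the reads' side for free from the already-established \cref{lem:readcompleted_ws}, whereas you re-derive the relevant consequence inside a fresh application of \cref{lem:local_change_ws}, closely mirroring the proof of \cref{lem:readcompleted_ws} itself. Your version is self-contained with respect to that lemma but duplicates its central idea; the paper's is a two-line corollary once the lemma is in place.
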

\begin{proof}
	Let $\op\in\linmap(\ex_1)$. We consider two cases:
	If $\op$ is a write operation, then from $\linmap$ being write strong we have $\op\in\linmap(\ex_1)\rst{\regwrite}\pref\linmap(\ex_2)\rst{\regwrite}$
	and thus $\op\in\linmap(\ex_2)$.
	If $\op$ is a read operation, from \cref{lem:readcompleted_ws} we have  $\op\in\setof{\completed(\hs{\ex_1})}\subseteq\setof{\completed(\hs{\ex_2})}$, and by definition of linearizability we must get $\op\in\linmap(\ex_2)$, as required.
\end{proof}

Now, we show that not only is the set of operations respected by extensions of an execution, 
but also the ordering between a write operation and a read operation is maintained.
(This is also true for two write operations, as an immediate consequence of the mapping being write strong.)

Intuitively, there are two types of violations we must consider,
given two executions $\ex_1,\ex_2$ such that $\ex_1\pref\ex_2$,
a write operation $\wop$, and a read operation $\rop$:
The first is that $\wop\leq_{\linmap(\ex_1)}\rop$ and $\rop\leq_{\linmap(\ex_2)}\wop$,
and the second is that $\rop\leq_{\linmap(\ex_1)}\wop$ and $\wop\leq_{\linmap(\ex_2)}\rop$.

We wish to exploit the symmetry between arguments required to show both of these violations cannot occur,
and also handle the case where $\wop$ does not appear in $\linmap(\ex_1)$,
but as we show is still guaranteed to appear after $\rop$ in $\linmap(\ex_2)$.

This is done by comparing the sequences $\seqh_1\eqdef\linmap(\ex_1)\cdotnew(\linmap(\ex_2)\rst{\regwrite})$ (concatenating write operations not preset in $\linmap(\ex_1)$ to the end of $\linmap(\ex_1)$)
and $\seqh_2\eqdef\linmap(\ex_2)$.
These sequences can be handled symmetrically, using the observation that they both have the same subsequence of write operations:

\begin{observation}\label{lem:ws_preserve_writes}
	Let $\impl\in\wslin$, let $\linmap\in\linmaps_{\regspec}^{\wsleq}(\impl)$,
	let $\ex_1,\ex_2\in\exec{\impl}$ be executions such that $\ex_1\pref\ex_2$,
	and define $\seqh_1=\linmap(\ex_1)\cdotnew(\linmap(\ex_2)\rst{\regwrite})$, $\seqh_2=\linmap(\ex_2)$.
	Then,
	$\seqh_1\rst{\regwrite}=\seqh_2\rst{\regwrite}$.
\end{observation}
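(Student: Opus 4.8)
The plan is to prove the equality by a direct unfolding of the disjoint concatenation $\cdotnew$ followed by a single appeal to the write-strong property of $\linmap$. First I would abbreviate $u \defeq \linmap(\ex_1)\rst{\regwrite}$ and $w \defeq \linmap(\ex_2)\rst{\regwrite}$. By the definition of $\cdotnew$,
\[
\seqh_1 = \linmap(\ex_1)\cdot\bigl((\linmap(\ex_2)\rst{\regwrite})\rst{\setof{\linmap(\ex_2)\rst{\regwrite}}\setminus\setof{\linmap(\ex_1)}}\bigr),
\]
and restricting to writes I would use that $\rst{\regwrite}$ distributes over concatenation and that the second factor already consists solely of write operations, so that a further $\rst{\regwrite}$ leaves it unchanged. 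This yields $\seqh_1\rst{\regwrite} = u \cdot \bigl(w\rst{\setof{w}\setminus\setof{\linmap(\ex_1)}}\bigr)$. Since every operation occurring in $w$ is a write, membership of such an operation in $\setof{\linmap(\ex_1)}$ coincides with its membership in $\setof{u}$, hence $\setof{w}\setminus\setof{\linmap(\ex_1)} = \setof{w}\setminus\setof{u}$ and the expression becomes $u\cdot\bigl(w\rst{\setof{w}\setminus\setof{u}}\bigr)$.

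Next I would invoke the write-strong property. Because $\ex_1\pref\ex_2$ and $\linmap\in\linmaps_{\regspec}^{\wsleq}(\impl)$, we have $\tup{\linmap(\ex_1),\linmap(\ex_2)}\in\wsleq$, i.e.\ $u\pref w$. Writing $v \defeq \suf{u}{w}$ so that $w = u\cdot v$, the operations of $u$ and $v$ are pairwise distinct (all writes in $w$ carry distinct identifiers, and each occurs exactly once in $u\cdot v$); consequently $\setof{w}\setminus\setof{u} = \setof{v}$ and $w\rst{\setof{w}\setminus\setof{u}} = w\rst{\setof{v}} = v$. Substituting back gives $\seqh_1\rst{\regwrite} = u\cdot v = w = \linmap(\ex_2)\rst{\regwrite} = \seqh_2\rst{\regwrite}$, which is the desired identity.

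The argument is essentially definitional, so I do not expect a genuine obstacle. The only points requiring care are the bookkeeping around the restriction-to-a-set-of-operations notation—in particular the step identifying a write operation's membership in $\setof{\linmap(\ex_1)}$ with its membership in $\setof{u}$—and the use of distinctness of operation identifiers to convert the set-difference restriction $w\rst{\setof{w}\setminus\setof{u}}$ into the literal suffix $v$. Once the write-strong prefix relation $u\pref w$ is in hand, the equality falls out immediately.
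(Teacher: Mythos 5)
Your proposal is correct and follows essentially the same route as the paper's proof: both hinge on the single substantive fact that write-strongness gives $\linmap(\ex_1)\rst{\regwrite}\pref\linmap(\ex_2)\rst{\regwrite}$, and the rest is definitional unfolding of $\cdotnew$ and the restriction operators. Your version merely spells out the bookkeeping (distributing $\rst{\regwrite}$ over the concatenation and identifying the set difference with the suffix) that the paper compresses into a single chain of equalities.
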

\begin{proof}
	Since $\linmap$ is write strong we have $\linmap(\ex_1)\rst{\regwrite}\pref\linmap(\ex_2)\rst{\regwrite}$.
		
	That is, $\seqh_2\rst{\regwrite}=\linmap(\ex_2)\rst{\regwrite}=\linmap(\ex_1)\rst{\regwrite}\cdot\linmap(\ex_2)\rst{\setof{\linmap(\ex_2)\rst{\regwrite}}\setminus\setof{\linmap(\ex_1)\rst{\regwrite}}}=\linmap(\ex_1)\rst{\regwrite}\cdotnew\linmap(\ex_2)\rst{\regwrite}=(\linmap(\ex_1)\cdotnew\linmap(\ex_2)\rst{\regwrite})\rst{\regwrite}=\seqh_1\rst{\regwrite}$
\end{proof}

\begin{lemma}\label{lem:ws_write_read_order}
	Let $\impl\in\wslin$ and $\linmap\in\linmaps_{\regspec}^{\wsleq}(\impl)$ such that $\linmap$ is $\permleq$-minimal.
	For all $\ex_1,\ex_2\in\exec{\impl}$ such that $\ex_1\pref\ex_2$,
	let $\seqh_1=\linmap(\ex_1)\cdotnew(\linmap(\ex_2)\rst{\regwrite})$ and $\seqh_2=\linmap(\ex_2)$.
	Then,
	$(\writeid\seq(<_{\seqh_{i}}\cap<_{\seqh_{\bar{i}}}^{-1}))=\emptyset$
	for $(i,\bar{i})\in\set{(1,2),(2,1)}$.
	
\end{lemma}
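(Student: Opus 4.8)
The plan is to peel the two relational identities down to a single combinatorial statement about where the read operations sit relative to the writes, and then to establish that statement by a minimality argument through \cref{lem:local_change_ws}.

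First I would reduce. By \cref{lem:ws_preserve_writes} the histories $\seqh_1$ and $\seqh_2$ have the same write-subsequence, so for any two \emph{writes} their relative order agrees in $\seqh_1$ and $\seqh_2$; hence any pair witnessing $\writeid\seq(<_{\seqh_i}\cap<_{\seqh_{\bar i}}^{-1})\neq\emptyset$ must consist of a write $\wop$ and a read $\rop$. The reads of $\seqh_1$ are exactly those of $\linmap(\ex_1)$ (the appended copy of $\linmap(\ex_2)\rst{\regwrite}$ contributes only writes, all placed after every read of $\seqh_1$), so such a read $\rop$ lies in $\linmap(\ex_1)$ and, by \cref{lem:readcompleted_ws}, is completed in $\hs{\ex_1}$; by \cref{lem:wssetinclusion} it also occurs in $\linmap(\ex_2)$. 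Let $W_1$ (resp.\ $W_2$) denote the set of writes preceding $\rop$ in $\linmap(\ex_1)$ (resp.\ $\linmap(\ex_2)$); since $\linmap(\ex_1)\rst{\regwrite}\pref\linmap(\ex_2)\rst{\regwrite}$, each is an initial segment of this common write-order. In $\seqh_1$ the writes before $\rop$ are exactly $W_1$ (the new writes come later), and in $\seqh_2$ they are exactly $W_2$. Thus the single claim $W_1=W_2$ makes the relative order of every write w.r.t.\ $\rop$ agree in $\seqh_1$ and $\seqh_2$, killing both $\writeid\seq(<_{\seqh_1}\cap<_{\seqh_2}^{-1})$ and $\writeid\seq(<_{\seqh_2}\cap<_{\seqh_1}^{-1})$.

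For the combinatorial core, each of $W_1,W_2$ is determined by a single count $k_1,k_2$ of writes, and legality in $\regspec$ forces the $k_i$-th write (or the initial value when $k_i=0$) to carry $\rop$'s output $v=\outputf(\rop)$. As $\rop\in\linmap(\ex_1)$ we have $k_1\le\size{\linmap(\ex_1)\rst{\regwrite}}$, so the first $\min(k_1,k_2)$ writes lie in the common prefix. I would prove $k_1=k_2$ by ruling out $k_2<k_1$ via minimality at $\ex_1$ and $k_1<k_2$ via minimality at $\ex_2$, symmetrically. Assume, say, $k_2<k_1$, and build a competitor $\seqh$ from $\linmap(\ex_1)$ that leaves the write-subsequence and all other operations in place but slides $\rop$ to sit just after the first $k_2$ writes. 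Then $\seqh\in\regspec$ by \cref{lem:regspec} (the $k_2$-th write carries $v$, exactly as in $\linmap(\ex_2)$, and every other read keeps its write-prefix), and $\hs{\ex_1}\linleq\seqh$ follows from \cref{lem:lin} instantiated with $\linmap(\ex_1)$, once $\seqh$ is checked to respect $<_{\hs{\ex_1}}$. Moreover $\seqh\permleq\linmap(\ex_1)$ and $\seqh\rst{\regwrite}=\linmap(\ex_1)\rst{\regwrite}$, so \cref{lem:local_change_ws} gives $\seqh\permeq\linmap(\ex_1)$; yet sliding $\rop$ to a strictly smaller write-count deletes the pairs $(\wop,\rop)$ for every write $\wop$ among the first $k_1$ but not the first $k_2$ writes, so $(\writeid\seq<_{\seqh})\subsetneq(\writeid\seq<_{\linmap(\ex_1)})$, contradicting $\permeq$. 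The case $k_1<k_2$ is identical with the roles of $\ex_1$ and $\ex_2$ swapped, where one additionally notes that the first $k_1\le\size{\linmap(\ex_1)\rst{\regwrite}}$ writes are all old, so the $k_1$-th write is the same value-carrier in both histories.

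The main obstacle, and the only step beyond bookkeeping, is verifying that the slid position of $\rop$ respects real-time order, i.e.\ ruling out any $\op'$ with $\op'<_{\hs{\ex_1}}\rop$ that would land after $\rop$ in $\seqh$. Here I would use that $<_{\hs{\ex_1}}\subseteq<_{\hs{\ex_2}}$ and that $\linmap(\ex_2)$ respects real-time: every real-time predecessor of $\rop$ already precedes $\rop$ in $\linmap(\ex_2)$, hence lies among the first $k_2$ writes (if a write) or within the first $k_2$ write-gaps (if a read), and therefore stays below $\rop$ at the slid position, while real-time successors are unaffected since $\rop$ only moves earlier. Making the placement among the \emph{concurrent reads} of the $k_2$-th gap precise—so that $\rop$ is inserted consistently with the partial real-time order there—is the delicate piece that the construction must pin down.
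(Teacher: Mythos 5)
Your overall strategy is the same as the paper's: build a competitor linearization by sliding the read to an earlier position whose write-prefix matches its position in the other linearization, verify the hypotheses of \cref{lem:local_change_ws}, and derive a contradiction with $\permeq$ through the $(\writeid\seq<)$ component, because the slide strictly removes write-before-read pairs. Your reduction to ``the number of writes preceding $\rop$ agrees in $\linmap(\ex_1)$ and $\linmap(\ex_2)$'' is a clean and sound reformulation, and the $\regspec$-membership and $\permleq$ checks for the competitor go through as you describe.

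The gap is exactly where you locate it, and the resolution you sketch does not close it: it is circular. To get $\hs{\ex_1}\linleq\seqh$ you must show that no $\op'$ with $\op'<_{\hs{\ex_1}}\rop$ ends up after the slid $\rop$, i.e.\ that every such $\op'$ already sits before the $(k_2{+}1)$-th write \emph{in $\linmap(\ex_1)$}. Your argument only establishes that $\op'$ sits before the $(k_2{+}1)$-th write \emph{in $\linmap(\ex_2)$} (since $\op'<_{\linmap(\ex_2)}\rop$ and $\rop$ has $k_2$ writes before it there). Transferring this back to $\linmap(\ex_1)$ when $\op'$ is a read is precisely an instance of the lemma being proved: if $\op'$ lies after the $(k_2{+}1)$-th write in $\linmap(\ex_1)$ but before it in $\linmap(\ex_2)$, then that write together with $\op'$ is itself a member of $\writeid\seq(<_{\seqh_1}\cap<_{\seqh_2}^{-1})$. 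So ruling out one violating pair requires assuming there is no other violating pair of this very shape. The paper breaks the circularity by picking a violating pair $\tup{\wopmin,\ropmin}$ minimal for the order $\tup{\wop',\rop'}\partleq\tup{\wop,\rop}\iff\rop'\leq_{\seqh_i}\rop\land\wop'\leq_{\seqh_{\bar i}}\wop$ and proving \cref{claim:write_min,claim:read_min}, which show that any real-time obstruction to the slide would yield a strictly smaller violating pair. Your proof needs an analogous well-ordering of (or induction over) the violating pairs; without it the step ``therefore stays below $\rop$ at the slid position'' is unjustified. By contrast, the other issue you flag --- placing $\rop$ among the concurrent reads of the target gap --- is the easy half: putting $\rop$ at the end of that gap suffices, since every real-time successor of $\rop$ lies after the $k_1$-th write in $\linmap(\ex_1)$ and hence outside the gap.
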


The idea is that a member of $(\writeid\seq(<_{\seqh_{i}}\cap<_{\seqh_{\bar{i}}}^{-1}))$ for some $i$ is a pair of operations $\tup{\wop,\rop}$ such that the sequences $\seqh_1$, $\seqh_2$ disagree on the ordering between them,
$\wop$ is a write operation, and $\rop$ must be a read operation (due to $\linmap$ being write strong).
This would be exactly a violation of the desired property as described above.

\begin{proof}
	
	Denote $\linrel\eqdef(\writeid\seq(<_{\seqh_{i}}\cap<_{\seqh_{\bar{i}}}^{-1}))$.
	Assume for contradiction $\linrel\neq\emptyset$. 
	We begin by proving properties of this relation:
	\begin{claim2}\label{claim:linrel_props}
		For every $\tup{\wop,\rop}\in\linrel$:
		\begin{enumerate*}
			\item $\methodf(\wop)=\regwrite$,
			\item $\methodf(\rop)=\regread$,
			\item $\wop,\rop\in\setof{\linmap(\ex_{i})}$,
			\item $\rop\in\setof{\linmap(\ex_{\bar{i}})}$,
			\item $\wop<_{\linmap(\ex_{i})}\rop$, and
			\item $\rop<_{\seqh_{\bar{i}}}\wop$.
		\end{enumerate*}
	\end{claim2}
	\begin{claimproof}
		First, from definition of $\linrel$ we get that $\methodf(\wop)=\regwrite$, $\wop<_{\seqh_{i}}\rop$, and $\rop<_{\seqh_{\bar{i}}}\wop$.
		
		Due to \cref{lem:ws_preserve_writes} we have $\seqh_{i}\rst{\regwrite}=\seqh_{\bar{i}}\rst{\regwrite}$.
		This means $\wop<_{\seqh_{i}}\rop$ and $\rop<_{\seqh_{\bar{i}}}\wop$ together imply $\methodf(\rop)=\regread$.
		
		Now, since $\wop,\rop$ are ordered by $\seqh_1$ we have in particular $\rop\in\setof{\linmap(\ex_1)\cdotnew(\linmap(\ex_2)\rst{\regwrite})}$.
		As $\methodf(\rop)=\regread$, it must be the case that $\rop\in\setof{\linmap(\ex_1)}$.
		Since $\wop,\rop$ are ordered by $\seqh_2$ we also have $\wop,\rop\in\setof{\linmap(\ex_2)}$.
		If $i=1$ then $\wop<_{\seqh_{i}}\rop\in\setof{\linmap(\ex_1)}$ implies that $\wop\in\setof{\linmap(\ex_1)}$.
		This means in any case we have $\wop,\rop\in\setof{\linmap(\ex_{i})}$ and $\rop\in\setof{\linmap(\ex_{\bar{i}})}$.
		From $\wop,\rop\in\setof{\linmap(\ex_{i})}$, $\wop<_{\seqh_{i}}\rop$, and $\linmap(\ex_{i})\pref\seqh_{i}$ we deduce that $\wop<_{\linmap(\ex_{i})}\rop$.
	\end{claimproof}
	
	Define a partial order $\partleq$ on $\linrel$ by $\tup{\wop',\rop'}\partleq\tup{\wop,\rop}\defiff\rop'\leq_{\seqh_{i}}\rop\land\wop'\leq_{\seqh_{\bar{i}}}\wop$.
	Since $\linrel$ is finite and non-empty,
	it contains a minimal element $\tup{\wopmin,\ropmin}$.
	This minimal element has the following useful properties:
	
	\begin{claim2}\label{claim:write_min}
		There is no write operation $\wop$ such that $\ropmin<_{\seqh_{\bar{i}}}\wop<_{\seqh_{\bar{i}}}\wopmin$.
	\end{claim2}
	\begin{claimproof}
		Otherwise, by \cref{lem:ws_preserve_writes} we have $\seqh_{i}\rst{\regwrite}=\seqh_{\bar{i}}\rst{\regwrite}$ and thus $\wop<_{\seqh_{i}}\wopmin$. As $\wopmin<_{\seqh_{i}}\ropmin$ we get $\wop<_{\seqh_{i}}\ropmin$ and together with  $\ropmin<_{\seqh_{\bar{i}}}\wop$ this implies $\tup{\wop,\ropmin}\in\linrel$.
		Since $\wop<_{\seqh_{\bar{i}}}\wopmin$ we have $\tup{\wop,\ropmin}\partlneq\tup{\wopmin,\ropmin}$,
		contradicting minimality.
	\end{claimproof}

	\begin{claim2}\label{claim:read_min}
		There is no operation $\op$ such that $\wopmin\leq_{\linmap(\ex_{i})}\op<_{\hs{\ex_{i}}}\ropmin$.
	\end{claim2}
	\begin{claimproof}
		Assume otherwise.
		From $\op<_{\hs{\ex_{i}}}\ropmin$ we have by definition that $\op$ responds before $\ropmin$ us invoked in the execution $\ex_i$.
		We also have that $\ropmin\in\setof{\linmap(\ex_1)}$, 
		and so it must be the case that $\ropmin$ was already invoked in $\ex_1$.
		This means that $\op$ already responded before that invocation in $\ex_1$.
		Therefore, by definition $\op<_{\hs{\ex_{1}}}\ropmin$ and as $<_{\hs{\ex_{1}}}\subseteq<_{\hs{\ex_{2}}}$ we obtain $\op<_{\hs{\ex_{2}}}\ropmin$.
		This also means $\op\in\completed(\hs{\ex_1})\subseteq\completed(\hs{\ex_2})$ and thus
		$\op\in\setof{\linmap(\ex_{1})}\cap\setof{\linmap(\ex_{2})}$.
		Overall we obtain for any $j\in\set{1,2}$ that 
		$\op<_{\hs{\ex_{j}}}\ropmin$ and as $\ropmin,\op\in\setof{\linmap(\ex_{j})}$ we have $\op<_{\linmap(\ex_{j})}\ropmin$.
		In particular $\op<_{\linmap(\ex_{\bar{i}})}\ropmin$ and
		as $\linmap(\ex_{\bar{i}})\pref\seqh_{\bar{i}}$ we deduce $\op<_{\seqh_{\bar{i}}}\ropmin$.
		As $\ropmin<_{\seqh_{\bar{i}}}\wopmin$ we obtain $\op<_{\seqh_{\bar{i}}}\wopmin$.
		In particular $\op\neq\wopmin$ and so from $\wopmin\leq_{\linmap(\ex_{i})}\op$ we obtain $\wopmin<_{\linmap(\ex_{i})}\op$.
		As $\linmap(\ex_{i})\pref\seqh_{i}$ we obtain $\wopmin<_{\seqh_{i}}\op$. 
		Now, consider two cases:
		\begin{itemize}
			\item If $\methodf(\op)=\regwrite$, $\op<_{\seqh_{\bar{i}}}\wopmin$ and $\wopmin<_{\seqh_{i}}\op$ together contradict $\seqh_{i}\rst{\regwrite}=\seqh_{\bar{i}}\rst{\regwrite}$, which holds due to \cref{lem:ws_preserve_writes}.
			
			\item 	If $\methodf(\op)=\regread$, we get that $\tup{\wopmin,\op}\in\linrel$.
			As shown above we have $\op<_{\hs{\ex_{i}}}\ropmin$ and thus $\op<_{\linmap(\ex_{i})}\ropmin$, which implies $\op<_{\seqh_{i}}\ropmin$.
			Overall $\tup{\wopmin,\op}\partlneq\tup{\wopmin,\ropmin}$,
			contradicting minimality. \claimqedhere
		\end{itemize}
	\end{claimproof}

	Now, let $\opset=\setof{\prefof{\linmap(\ex_{i})}{\wopmin}}$ and  $\bar{\opset}=\setof{\linmap(\ex_{i})}\setminus(\opset\cup\set{\ropmin})$.
	Let $\seqh$ be the sequence over $\setof{\linmap(\ex_{i})}$ induced by the total relation:
	$$(\opset\times\set{\ropmin})\cup(\set{\ropmin}\times\bar{\opset})\cup(<_{\linmap(\ex_{i})}\cap(\setof{\linmap(\ex_{i})}\setminus\set{\ropmin})^2)$$
	
	We proceed to show $\seqh$ satisfies the conditions of \cref{lem:local_change_ws}.
	
	\begin{enumerate}
		\item $\seqh\in\regspec$:

		We use \cref{lem:regspec}.
		As $\setof{\seqh}=\setof{\linmap(\ex_{i})}$
		and $\linmap(\ex_{j})\in\regspec$ for $j\in\set{1,2}$
		it is sufficient to show for all $\op\in\seqh\rst{\regread}$ that
		$\prefof{\seqh}{\op}\rst{\regwrite}=\prefof{\linmap(\ex_{j})}{\op}\rst{\regwrite}$ for some $j\in\set{1,2}$ such that $\op\in\setof{\linmap(\ex_{j})}$.
		
		Indeed:
		\begin{itemize}
			\item If $\op\neq\ropmin$ 
			then taking $j=i$ we have that $\op\in\setof{\seqh}=\setof{\linmap(\ex_{i})}$, and we can show $\prefof{\seqh}{\op}\rst{\regwrite}=\prefof{\linmap(\ex_{i})}{\op}\rst{\regwrite}$.
			This is true since for all $\wop\in \setof{\seqh}=\setof{\linmap(\ex_{i})}$ such that $\methodf(\wop)=\regwrite$ we have
			$\wop,\op\in\setof{\linmap(\ex_{i})}\setminus\set{\ropmin}$ and thus
			$\wop<_{\seqh}\op\iff \wop'<_{\linmap(\ex_{i})}\op$.
			
			\item If $\op=\ropmin$ we take $j=\bar{i}$.
			From \cref{claim:linrel_props} we have that $\op=\ropmin\in\setof{\linmap(\ex_{\bar{i}})}$.
			We show that
			$\prefof{\seqh}{\ropmin}\rst{\regwrite}=\prefof{\linmap(\ex_{\bar{i}})}{\ropmin}\rst{\regwrite}$.
			Indeed, we have:
			$$\prefof{\seqh}{\ropmin}\rst{\regwrite}=\prefof{\linmap(\ex_{i})}{\wopmin}\rst{\regwrite}=\prefof{\seqh_i}{\wopmin}\rst{\regwrite}=\prefof{\seqh_{\bar{i}}}{\wopmin}\rst{\regwrite}$$
			where the equalities are justified by:
			The construction of $\seqh$,
			the fact due to \cref{claim:linrel_props} and the definition of $\seqh_i$ that $\wopmin\in\linmap(\ex_{i})\pref\seqh_{i}$,
			and the fact due to \cref{lem:ws_preserve_writes} that
			$\seqh_{i}\rst{\regwrite}=\seqh_{\bar{i}}\rst{\regwrite}$.

			Similarly, since $\ropmin\in\linmap(\ex_{\bar{i}})\pref\seqh_{\bar{i}}$ we have
			$\prefof{\linmap(\ex_{\bar{i}})}{\ropmin}\rst{\regwrite}=\prefof{\seqh_{\bar{i}}}{\ropmin}\rst{\regwrite}$.
			
			Overall it remains to show that
			$\prefof{\seqh_{\bar{i}}}{\wopmin}\rst{\regwrite}=\prefof{\seqh_{\bar{i}}}{\ropmin}\rst{\regwrite}$.
			Equivalently, we need to show that for a write operation $\wop$, we have $\wop<_{\seqh_{\bar{i}}}\wopmin\iff \wop<_{\seqh_{\bar{i}}}\ropmin$.
			Indeed,
			from \cref{claim:linrel_props} we have $\ropmin<_{\seqh_{\bar{i}}}\wopmin$ which gives the implication $\wop<_{\seqh_{\bar{i}}}\ropmin\implies \wop<_{\seqh_{\bar{i}}}\wopmin$.
			If we assume for contradiction the other implication does not hold, then there must exist a write operation $\wop$ such that $\ropmin<_{\seqh_{\bar{i}}}\wop<_{\seqh_{\bar{i}}}\wopmin$, but this is impossible due to \cref{claim:write_min}.
		\end{itemize}
		
		\item $\hs{\ex_i}\linleq \seqh$:
		
		As $\hs{\ex_i}\linleq\linmap(\ex_{i})$ and by construction $\setof{\seqh}=\setof{\linmap(\ex_{i})}$,
		using \cref{lem:lin} it is sufficient to show that $\op'<_{\hs{\ex_i}}\op$ implies $\op'<_{\seqh}\op$ for all $\op,\op'\in\setof{\seqh}$.
		
		Indeed, from $\op'<_{\hs{\ex_i}}\op$ and the fact that $\op,\op'\in\setof{\seqh}=\setof{\linmap(\ex_{i})}$
		we get by definition of linearizability that $\op'<_{\linmap(\ex_{i})}\op$.
		Consider the following cases:
		\begin{itemize}
			\item If $\op,\op'\in\setof{\linmap(\ex_{i})}\setminus\set{\ropmin}$ then by construction $\op'<_{\seqh}\op$.
			
			\item If $\op'=\ropmin$ then using \cref{claim:linrel_props} we have $\wopmin<_{\linmap(\ex_{i})}\ropmin=\op'<_{\linmap(\ex_{i})}\op$ and so  $\wopmin<_{\linmap(\ex_{i})}\op$,
			This means $\op\notin\prefof{\linmap(\ex_{i})}{\wopmin}=\opset$.
			$\op=\ropmin$ is impossible as $\op'<_{\linmap(\ex_{i})}\op$ implies that $\op\neq\op'=\ropmin$,
			and thus overall $\op\in\bar{\opset}$,
			which implies $\op'=\ropmin<_{\seqh}\op$.
			
			\item If $\op=\ropmin$, due to \cref{claim:read_min} it is not possible that $\wopmin\leq_{\linmap(\ex_{i})}\op'<_{\hs{\ex_{i}}}\ropmin=\op$,
			and so we must have that $\op'<_{\linmap(\ex_{i})}\wopmin$ and thus
			$\op'\in\prefof{\linmap(\ex_{i})}{\wopmin}=\opset$
			This implies $\op'<_{\seqh}\ropmin=\op$.
		\end{itemize}

		\item $\seqh\permleq\linmap(\ex_{i})$:

		By construction $\setof{\seqh}=\setof{\linmap(\ex_{i})}$.
		It remains to show that $(\writeid\seq<_{\seqh})\subseteq<_{\linmap(\ex_{i})}$.
		
		Let $\op,\op'$ be operations such that $\methodf(\op)=\regwrite$ and $\op<_{\seqh}\op'$.
		In particular $\op\neq\ropmin$.
		Consider the following options:
		\begin{itemize}
			\item If $\op'\neq\ropmin$, then $\op,\op'\in\setof{\linmap(\ex_{i})}\setminus\set{\ropmin}$ and we must have by construction that  $\op<_{\linmap(\ex_{i})}\op'$.
			
			\item If $\op'=\ropmin$, then $\op<_{\seqh}\op'=\ropmin$ implies that $\op\in\opset$ and thus $\op<_{\linmap(\ex_{i})}\wopmin$.
			Due to \cref{claim:linrel_props} we have $\wopmin<_{\linmap(\ex_{i})}\ropmin$ and thus  $\op<_{\linmap(\ex_{i})}\ropmin=\op'$, as required.
		\end{itemize}

		\item $\seqh\rst{\regwrite}=\linmap(\ex_{i})\rst{\regwrite}$:

		Given any pair of write operations $\wop,\wop'\in\setof{\seqh}$ we have that $\wop,\wop'\in\setof{\linmap(\ex_{i})}\setminus\set{\ropmin}$ and so by construction $\wop<_{\seqh}\wop'\iff \wop<_{\linmap(\ex_{i})}\wop'$.
	
	\end{enumerate}
	
	From \cref{lem:local_change_ws} we deduce that $\seqh\permeq\linmap(\ex_{i})$
	and in particular $(\writeid\seq<_{\linmap(\ex_{i})})\subseteq<_{\seqh}$.
	Due to \cref{claim:linrel_props} we have $\wopmin<_{\linmap(\ex_{i})}\ropmin$,
	and so using the inclusion we get $\wopmin<_{\seqh}\ropmin$.
	By construction of $\seqh$ this means $\wopmin\in\opset=\setof{\prefof{\linmap(\ex_{i})}{\wopmin}}$, a contradiction.
	
	We conclude that $\linrel=\emptyset$.\qedhere

\end{proof}

We now extend the above result,
and show the ordering between \emph{any two operations} is maintained, as long as there is some write operation between them. 
(The previous lemma is then the specific case where one of the operations is itself a write.)
Formally we prove the following:

\begin{lemma}\label{lem:write_between_ops}
	If $\impl\in\wslin$,
	$\linmap\in\linmaps_{\regspec}^{\wsleq}(\impl)$ is $\permleq$-minimal,
	and $\wop$ is a write operation,
	then\\ $(\leq_{\seqh_{1}}\seq\idrel{\wop}\seq\leq_{\seqh_{1}})=(\leq_{\seqh_{2}}\seq\idrel{\wop}\seq\leq_{\seqh_{2}})\cap(\setof{\seqh_1}\times\setof{\seqh_1})$
	for all $\ex_1,\ex_2\in\exec{\impl}$ such that $\ex_1\pref\ex_2$,
	where $\seqh_1=\linmap(\ex_1)\cdotnew(\linmap(\ex_2)\rst{\regwrite})$ and $\seqh_2=\linmap(\ex_2)$.
\end{lemma}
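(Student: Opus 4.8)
The plan is to reduce this set identity to \cref{lem:ws_write_read_order}, which already shows that the relative order between a write operation and any other operation is preserved when passing from $\seqh_1$ to $\seqh_2$. First I would record two structural facts. By \cref{lem:wssetinclusion} we have $\setof{\linmap(\ex_1)}\subseteq\setof{\linmap(\ex_2)}$, and since $\setof{\linmap(\ex_2)\rst{\regwrite}}\subseteq\setof{\seqh_2}$, it follows that $\setof{\seqh_1}\subseteq\setof{\seqh_2}$; recall also that $<_{\seqh_1}$ and $<_{\seqh_2}$ are total orders on the operations of the respective sequential histories $\seqh_1,\seqh_2\in\regspec$. By \cref{lem:ws_preserve_writes} we have $\seqh_1\rst{\regwrite}=\seqh_2\rst{\regwrite}$, so the two sequences contain exactly the same write operations; in particular $\wop\in\setof{\seqh_1}\iff\wop\in\setof{\seqh_2}$.

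Next I would dispose of the degenerate case. If $\wop\notin\setof{\seqh_1}$, then the left-hand relation is empty, since $\leq_{\seqh_1}\seq\idrel{\wop}\seq\leq_{\seqh_1}$ requires $\wop$ to occur in $\seqh_1$; by the previous paragraph $\wop\notin\setof{\seqh_2}$ as well, so the right-hand relation is empty too, and both sides coincide with $\emptyset$.

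The main case is $\wop\in\setof{\seqh_1}$ (equivalently $\wop\in\setof{\seqh_2}$). Here I would establish the pointwise claim that for every $\op\in\setof{\seqh_1}$,
\[
\op\leq_{\seqh_1}\wop \iff \op\leq_{\seqh_2}\wop
\qquad\text{and}\qquad
\wop\leq_{\seqh_1}\op \iff \wop\leq_{\seqh_2}\op.
\]
Since $\setof{\seqh_1}\subseteq\setof{\seqh_2}$, both $\op$ and $\wop$ appear in both sequences, so totality of the two orders reduces each biconditional to ruling out an order reversal, and the reflexive instance $\op=\wop$ holds on both sides. A reversal $\wop<_{\seqh_1}\op$ together with $\op<_{\seqh_2}\wop$ would place $\tup{\wop,\op}$ in $\writeid\seq(<_{\seqh_1}\cap<_{\seqh_2}^{-1})$ (as $\wop$ is a write), contradicting \cref{lem:ws_write_read_order} with $(i,\bar i)=(1,2)$; the symmetric reversal $\wop<_{\seqh_2}\op$, $\op<_{\seqh_1}\wop$ contradicts the case $(i,\bar i)=(2,1)$. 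Composing the two biconditionals, a pair $\tup{\op,\op'}$ satisfies $\op\leq_{\seqh_1}\wop\leq_{\seqh_1}\op'$ iff it satisfies $\op\leq_{\seqh_2}\wop\leq_{\seqh_2}\op'$, for all $\op,\op'\in\setof{\seqh_1}$. Since the left-hand composition is automatically contained in $\setof{\seqh_1}\times\setof{\seqh_1}$, this is precisely the asserted equality once the right-hand composition is intersected with $\setof{\seqh_1}\times\setof{\seqh_1}$.

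I do not expect a genuine obstacle here: the conceptual work is carried entirely by \cref{lem:ws_write_read_order}, and the remaining effort is the bookkeeping of converting between $<$ and $\leq$ via totality and of matching the membership restrictions on the two sides. The one point requiring care is justifying that every operation of $\seqh_1$ is in fact ordered against $\wop$ in $\seqh_2$, which is exactly why the inclusion $\setof{\seqh_1}\subseteq\setof{\seqh_2}$—and hence $\permleq$-minimality through \cref{lem:wssetinclusion}—is invoked.
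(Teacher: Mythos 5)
Your proposal is correct and follows essentially the same route as the paper's proof: both reduce the identity to \cref{lem:ws_write_read_order} (ruling out order reversals around the write $\wop$), using \cref{lem:wssetinclusion} and \cref{lem:ws_preserve_writes} to justify that the relevant operations occur in both sequences. The only difference is presentational — you phrase the core step as a pointwise biconditional and the paper as two symmetric inclusions indexed by $(i,\bar i)$ — and your explicit treatment of the degenerate case $\wop\notin\setof{\seqh_1}$ is harmless bookkeeping.
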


That is, every two operations present in $\seqh_1$ (and thus in $\seqh_2$) are ordered the same way by $\seqh_1$,$\seqh_2$,
as long as there is some write operation $\wop$ between them.
(Which leaves only the ordering between reads with no write between them, which might not be maintained.)

\begin{proof}
	As $(\leq_{\seqh_{1}}\seq\idrel{\wop}\seq\leq_{\seqh_{1}})=(\leq_{\seqh_{1}}\seq\idrel{\wop}\seq\leq_{\seqh_{1}})\cap(\setof{\seqh_1}\times\setof{\seqh_1})$ it is sufficient to show that
	$(\leq_{\seqh_{i}}\seq\idrel{\wop}\seq\leq_{\seqh_{i}})\cap(\setof{\seqh_1}\times\setof{\seqh_1})\subseteq(\leq_{\seqh_{\bar{i}}}\seq\idrel{\wop}\seq\leq_{\seqh_{\bar{i}}})$
	for $(i,\bar{i})\in\set{(1,2),(2,1)}$.
	
	Let $\tup{\op,\op'}\in(\leq_{\seqh_{i}}\seq\idrel{\wop}\seq\leq_{\seqh_{i}})\cap(\setof{\seqh_1}\times\setof{\seqh_1})$.
	Then $\op\leq_{\seqh_{i}}\wop\leq_{\seqh_{i}}\op'$ and in particular  $\op,\wop,\op'\in\setof{\seqh_{i}}$.
	We can show $\op,\wop,\op'\in\setof{\seqh_{\bar{i}}}$.
	Indeed,	Consider both options for $i$:	
	\begin{itemize}
		\item If $i=1$ then $\op,\wop,\op'\in\setof{\seqh_{2}}$ follows from  $\op,\wop,\op'\in\setof{\seqh_{1}}$ and $\setof{\seqh_1}\subseteq\setof{\seqh_2}$, which follows from  \cref{lem:wssetinclusion}.
		
		\item If $i=2$ then $\op,\op'\in\setof{\seqh_{1}}$ follows from $\tup{\op,\op'}\in(\setof{\seqh_1}\times\setof{\seqh_1})$ and $\wop\in\setof{\seqh_{1}}$ follows from $\wop\in\setof{\seqh_{2}}$ and $\seqh_1\rst{\regwrite}=\seqh_2\rst{\regwrite}$, which follows from \cref{lem:ws_preserve_writes}.
	\end{itemize}

	We deduce that $\op,\wop,\op'$ are ordered by $\leq_{\seqh_{\bar{i}}}$.
	Moreover, we have:
	
	\begin{itemize}
		
		\item $\op\leq_{\seqh_{\bar{i}}}\wop$:
		
		If $\op=\wop$ this is trivial.
		Otherwise, $\op<_{\seqh_{i}}\wop$.
		If we assume for contradiction $\wop<_{\seqh_{\bar{i}}}\op$ we get 	$\tup{\wop,\op}\in(\writeid\seq(<_{\seqh_{\bar{i}}}\cap<_{\seqh_{i}}^{-1}))$,
		contradicting \cref{lem:ws_write_read_order}.
		We deduce $\op<_{\seqh_{\bar{i}}}\wop$.
		
		\item $\wop\leq_{\seqh_{\bar{i}}}\op'$:
		
		If $\op'=\wop$ this is trivial.
		Otherwise, $\wop<_{\seqh_{i}}\op'$.
		If we assume for contradiction $\op'<_{\seqh_{\bar{i}}}\wop$ we get 	$\tup{\wop,\op'}\in(\writeid\seq(<_{\seqh_{i}}\cap<_{\seqh_{\bar{i}}}^{-1}))$,
		contradicting \cref{lem:ws_write_read_order}.
		We deduce $\wop<_{\seqh_{\bar{i}}}\op'$.
		
	\end{itemize}
	
	Overall, $\op\leq_{\seqh_{\bar{i}}}\wop\leq_{\seqh_{\bar{i}}}\op'$ and thus $\tup{\op,\op'}\in(\leq_{\seqh_{\bar{i}}}\seq\idrel{\wop}\seq\leq_{\seqh_{\bar{i}}})$
	as required.
\end{proof}

Finally, we use the above to show that while a $\permleq$-minimal write strong linearization does not have to be decisive,
one can transform it into a decisive linearization which is also minimal, thus not losing any of the other properties established so far.
Specifically, we modify an arbitrary minimal mapping to ensure that the ordering between reads that have no write between them does not change once they are linearized.
As \cref{lem:write_between_ops} ensures the ordering between operations is maintained in all other cases,
we overall obtain decisiveness.

\begin{lemma}\label{lem:wssubsq}
	If $\impl\in\wslin$, then there exists a write strong and decisive linearization %
	that is $\permleq$-minimal in $\linmaps_{\regspec}^{\wsleq}(\impl)$.
\end{lemma}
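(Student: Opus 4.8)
The plan is to take the $\permleq$-minimal write strong linearization $\linmap\in\linmaps_{\regspec}^{\wsleq}(\impl)$ guaranteed by \cref{lem:wsexistproc}, which by \cref{lem:readcompleted_ws} is pending-read-free, and to \emph{canonicalize the order of reads} without moving any write. For an execution $\ex$, call a maximal consecutive run of read operations in $\linmap(\ex)$ a \emph{block}; the writes of $\linmap(\ex)$ are exactly the separators between blocks. I would define $\linmap'(\ex)$ to be the sequential history obtained from $\linmap(\ex)$ by keeping every write in its position (hence keeping the block partition and all inter-block order intact) and re-sorting the reads inside each block by the position of their response action in $\hs{\ex}$. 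This is well defined because, by pending-read-freeness, every read in $\linmap(\ex)$ is completed in $\hs{\ex}$ and therefore has a response in $\hs{\ex}$.

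First I would check that $\linmap'$ is again a $\permleq$-minimal member of $\linmaps_{\regspec}^{\wsleq}(\impl)$. Since $\obs$ depends only on writes and all reads of a block return the same observed value, $\linmap'(\ex)\in\regspec$ (via \cref{lem:regspec}). For $\hs{\ex}\linleq\linmap'(\ex)$ I would invoke \cref{lem:lin}: we already have $\hs{\ex}\linleq\linmap(\ex)$ and $\setof{\linmap'(\ex)}=\setof{\linmap(\ex)}$, so it suffices to show $\op<_{\hs{\ex}}\op'\implies\op<_{\linmap'(\ex)}\op'$. If $\op,\op'$ lie in the same block they are reads with $\op$'s response before $\op'$'s invocation, hence before $\op'$'s response, so response-order puts $\op$ first; if they lie in different blocks, then $\op<_{\linmap(\ex)}\op'$ (as $\linmap$ respects real time) and $\linmap'$ keeps their relative order. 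Because $\linmap'(\ex)\rst{\regwrite}=\linmap(\ex)\rst{\regwrite}$, the mapping is write strong, and since reordering within blocks never moves a read across a write, both $\setof{\linmap'(\ex)}=\setof{\linmap(\ex)}$ and $\writeid\seq<_{\linmap'(\ex)}=\writeid\seq<_{\linmap(\ex)}$ hold, giving $\linmap'\permeq\linmap$; by transitivity of $\permleq$ this makes $\linmap'$ minimal as well.

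The crux is decisiveness, i.e.\ $\linmap'(\ex_1)\subseq\linmap'(\ex_2)$ for $\ex_1\pref\ex_2$; by \cref{lem:wssetinclusion} the operation sets are nested, so I only need that the order of any two operations of $\linmap'(\ex_1)$ is preserved in $\linmap'(\ex_2)$. For two operations separated by a write in $\linmap(\ex_1)$ this is handled by \cref{lem:write_between_ops} (with $\seqh_1=\linmap(\ex_1)\cdotnew(\linmap(\ex_2)\rst{\regwrite})$ and $\seqh_2=\linmap(\ex_2)$), since the inter-block order of $\linmap'$ coincides with that of $\linmap$. The remaining case is two reads $\rop_1,\rop_2$ in the \emph{same} block of $\linmap(\ex_1)$, and here the plan has two ingredients. \textbf{Block stability:} I would argue they stay in one block of $\linmap(\ex_2)$ — if some write $\wop$ satisfied $\rop_1<_{\seqh_2}\wop<_{\seqh_2}\rop_2$, then \cref{lem:write_between_ops} would force $\rop_1\le_{\seqh_1}\wop\le_{\seqh_1}\rop_2$, which is impossible because a new write sits after all of $\linmap(\ex_1)$ in $\seqh_1$ while an old write between them would contradict same-block membership in $\linmap(\ex_1)$. \textbf{Within-block stability:} since $\hs{}$ is prefix-monotone ($\hs{\ex_1}\pref\hs{\ex_2}$) and both reads are already completed in $\hs{\ex_1}$, their response positions are identical in $\hs{\ex_1}$ and $\hs{\ex_2}$, so response-order ranks them the same way in $\linmap'(\ex_1)$ and $\linmap'(\ex_2)$. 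The main obstacle I anticipate is precisely this block-stability step: getting the case analysis on $\wop$ (new write appended to the end of $\seqh_1$ versus a write already present in $\linmap(\ex_1)$) exactly right, as everything else is bookkeeping built on the already-established lemmas.
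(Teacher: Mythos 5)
Your proposal is correct and follows essentially the same route as the paper's proof: the block-wise re-sorting of reads by response order is exactly the order induced by the paper's relations $\linrel(\ex)\cup\readrespondsbefore(\ex)$, and your decisiveness argument (write-separated pairs via \cref{lem:write_between_ops}, same-block pairs via block stability plus monotonicity of response order) mirrors the paper's proof that $\linrel(\ex_1)\subseteq\linrel(\ex_2)$ and $\readrespondsbefore(\ex_1)\subseteq\readrespondsbefore(\ex_2)$. The only difference is presentational: your constructive block description sidesteps the paper's explicit verification that the combined relation is a strict total order.
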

\begin{proof}
	
	Let $\linmap\in\linmaps_{\regspec}^{\wsleq}(\impl)$ be a $\permleq$-minimal write strong linearization as guaranteed to exist by \cref{lem:wsexistproc}.
	Let $\ex\in\exec{\impl}$.

	First we define the following relation:
	$$\linrel(\ex)=(\leq_{\linmap(\ex)}\seq\writeid\seq\leq_{\linmap(\ex)})\setminus [\setof{\linmap(\ex)}]$$
	
	As we later prove, this is a strict partial order.
	$\linrel(\ex)$ orders operations the same way $<_{\linmap(\ex)}$ does,
	as long as the operations have some write operation between them.
	Otherwise, the ordering between the operations is discarded.
	The idea is to only keep the ordering that will be preserved when extending an execution due to \cref{lem:write_between_ops}, 
	and to reorder the reads between any two writes in a way which overall guarantees decisiveness.
	
	Specifically, we will use the following ordering for reads with no write between them:
	$$\readrespondsbefore(\ex)=\set{\tup{\rop,\rop'}\in\setof{\linmap(\ex)\rst{\regread}}^2\st \rop \text{ responds before }\rop' \text{ in }\ex}\setminus(\linrel(\ex)\cup\linrel(\ex)^{-1})$$
	
	As we show later, this is also a strict partial order.
	This relation orders reads that have no write between them in $\linmap(\ex)$, based on which one responded first.
	
	We use the above relations to construct a linearization mapping $\linmap'\in\linmaps_{\regspec}^{\wsleq}(\impl)$,
	where $\linmap'(\ex)$ is defined for every $\ex\in\exec{\impl}$ as the sequence over $\setof{\linmap(\ex)}$ induced by the relation $\linrel(\ex)\cup\readrespondsbefore(\ex)$,
	that, as we prove later, is a strict total order.
	The idea is that operations with a write between then in $\linmap(\ex)$ are ordered like in $\linmap(\ex)$,
	and operations with no write between them in $\linmap(\ex)$ are two reads, ordered based on which one responded first.

	Finally, we prove that $\linmap'$ has all desired properties.
	We proceed now with filling out the details of the proof.
	First we show that $\linrel(\ex)$ and $\readrespondsbefore(\ex)$ are indeed strict partial orders:
	
	\begin{claim2}\label{claim:linrel_order}
		$\linrel(\ex)$ is a strict partial order.
	\end{claim2}
	\begin{claimproof}
		Clearly, $\linrel(\ex)\subseteq <_{\linmap(\ex)}$.
		Irreflexivity 
		then immediately follows from the inclusion in a strict partial order.
		For transitivity, assume $\tup{\op_1,\op_2},\tup{\op_2,\op_3}\in\linrel(\ex)$.
		From $\tup{\op_1,\op_2}\in\linrel(\ex)$ we have that there exists a a write operation $\wop$ such that $\op_1\leq_{\linmap(\ex)}\wop\leq_{\linmap(\ex)}\op_2$.
		From $\tup{\op_2,\op_3}\in\linrel(\ex)$ we get that $\op_2\leq_{\linmap(\ex)}\op_3$,
		and so overall $\op_1\leq_{\linmap(\ex)}\wop\leq_{\linmap(\ex)}\op_3$
		which implies $\tup{\op_1,\op_3}\in\linrel(\ex)$.
	\end{claimproof}
	
	\begin{claim2}\label{claim:rfr_order}
		$\readrespondsbefore(\ex)$ is a strict partial order.
	\end{claim2}
	\begin{claimproof}
		First, the ``responds before'' relation is irreflexive 
		as a relation over all operations, and so $\readrespondsbefore(\ex)$ is also irreflexive.
		For transitivity, assume
		$\tup{\op_1,\op_2},\tup{\op_2,\op_3}\in\readrespondsbefore(\ex)$.
		Clearly, $\methodf(\op_1)=\methodf(\op_2)=\methodf(\op_3)=\regread$ and $\op_1$ responds before $\op_3$ in $\ex$.
		If we assume for contradiction $\tup{\op_1,\op_3}\in(\linrel(\ex)\cup\linrel(\ex)^{-1})$,
		we get that there exists a write operation $\wop$ between $\op_1$ and $\op_3$ in $\linmap(\ex)$.
		However, due to $\tup{\op_1,\op_2},\tup{\op_2,\op_3}\in\readrespondsbefore(\ex)$, this write operation cannot be between $\op_1$ and $\op_2$ or between $\op_2$ and $\op_3$, which is a contradiction for any relative ordering of $\op_1$,$\op_2$ and $\op_3$ in $\linmap(\ex)$,
		as the union of the segments between $\op_1$ and $\op_2$ and between $\op_2$ and $\op_3$ always contains the segment between $\op_1$ and $\op_3$.
	\end{claimproof}

	Now, we show that $\linrel(\ex)\cup\readrespondsbefore(\ex)$ is a strict total order, thus inducing a sequence of operations.
	
	\begin{claim2}
		$\linrel(\ex)\cup\readrespondsbefore(\ex)$ is a strict total order over $\linmap(\ex)$.
	\end{claim2}
	\begin{claimproof}
		Denote $\linreltotal(\ex)\eqdef\linrel(\ex)\cup\readrespondsbefore(\ex)$
		First, we show $\linreltotal(\ex)$ is a strict (partial) order.
		Irreflexivity follows from the fact that $\linrel(\ex)$ and $\readrespondsbefore(\ex)$ are irreflexive.
		Transitivity follows from a simple case analysis.
		Let $\op_1,\op_2,\op_3$ be operations such that
		$\tup{\op_1,\op_2},\tup{\op_2,\op_3}\in\linreltotal(\ex)$.
		There are four cases to consider:
		\begin{itemize}
			\item If $\tup{\op_1,\op_2},\tup{\op_2,\op_3}\in\linrel(\ex)$, then by transitivity of $\linrel(\ex)$ we obtain that $\tup{\op_1,\op_3}\in\linrel(\ex)\subseteq\linreltotal(\ex)$.
			
			\item Similarly, if $\tup{\op_1,\op_2},\tup{\op_2,\op_3}\in\readrespondsbefore(\ex)$, then by transitivity of $\readrespondsbefore(\ex)$ we obtain that $\tup{\op_1,\op_3}\in\readrespondsbefore(\ex)\subseteq\linreltotal(\ex)$.
			
			\item If $\tup{\op_1,\op_2}\in\linrel(\ex)$, $\tup{\op_2,\op_3}\in\readrespondsbefore(\ex)$,
			then there exists a write operation $\wop$ such that
			$\op_1\leq_{\linmap(\ex)}\wop\leq_{\linmap(\ex)}\op_2$.
			It must be the case that $\wop<_{\linmap(\ex)}\op_3$,
			as otherwise $\op_3\leq_{\linmap(\ex)}\wop\leq_{\linmap(\ex)}\op_2$ which implies $\tup{\op_3,\op_2}\in\linrel(\ex)$, contradicting $\tup{\op_2,\op_3}\in\readrespondsbefore(\ex)$.
			Overall we obtain that $\op_1\leq_{\linmap(\ex)}\wop\leq_{\linmap(\ex)}\op_3$
			and thus $\tup{\op_1,\op_3}\in\linrel(\ex)\subseteq\linreltotal(\ex)$.
			
			\item If $\tup{\op_1,\op_2}\in\readrespondsbefore(\ex)$, $\tup{\op_2,\op_3}\in\linrel(\ex)$,
			then there exists a write operation $\wop$ such that
			$\op_2\leq_{\linmap(\ex)}\wop\leq_{\linmap(\ex)}\op_3$.
			It must be the case that $\op_1<_{\linmap(\ex)}\wop$,
			as otherwise $\op_2\leq_{\linmap(\ex)}\wop\leq_{\linmap(\ex)}\op_1$ which implies $\tup{\op_2,\op_1}\in\linrel(\ex)$, contradicting $\tup{\op_1,\op_2}\in\readrespondsbefore(\ex)$.
			Overall we obtain that $\op_1\leq_{\linmap(\ex)}\wop\leq_{\linmap(\ex)}\op_3$
			and thus $\tup{\op_1,\op_3}\in\linrel(\ex)\subseteq\linreltotal(\ex)$.
		\end{itemize}
		
		In all cases we got that $\tup{\op_1,\op_3}\in\linreltotal(\ex)$ and so $\linreltotal(\ex)$ is transitive.
		
		Next we show $\linreltotal(\ex)$ is total. Let $\op\neq\op'\in\setof{\linmap(\ex)}$.
		As $<_{\linmap(\ex)}$ is a strict total order we can assume \wlg that $\op<_{\linmap(\ex)}\op'$.
		If $\tup{\op,\op'}\in\linrel(\ex)$, then $\tup{\op,\op'}\in\linreltotal(\ex)$ and we are done.
		Otherwise, $\tup{\op,\op'}\notin\linrel(\ex)$.
		Therefore, as $\op<_{\linmap(\ex)}\op'$, it must be the case that $\methodf(\op)=\methodf(\op')=\regread$.
		From $\op<_{\linmap(\ex)}\op'$ we can additionally deduce that $\tup{\op,\op'}\notin<_{\linmap(\ex)}^{-1}$,
		and as $\linrel(\ex)\subseteq<_{\linmap(\ex)}$ we obtain that $\tup{\op,\op'}\notin\linrel(\ex)^{-1}$.
		Moreover, from \cref{lem:readcompleted_ws} we have that $\op,\op'\in \completed(\hs{\ex})$ and thus both operations have a response in $\ex$.
		Overall $\op$ and $\op'$ are ordered by $\readrespondsbefore(\ex)$ and thus ordered by $\linreltotal(\ex)$.
	\end{claimproof}
	
	We obtain a mapping $\linmap':\exec{\impl}\to\regspec$ where $\linmap'(\ex)$ is the sequence over $\setof{\linmap(\ex)}$ induced by the strict total order $\linrel(\ex)\cup\readrespondsbefore(\ex)$.
	For the purpose of showing $\linmap'$ is indeed a linearization mapping with all required properties, 
	we first prove the following claim, 
	which says $\linmap$ and $\linmap'$ agree on the ordering between write operations and operations which come after them:

	\begin{claim2}\label{claim:write-then-op}
		$(\writeid\seq<_{\linmap(\ex)})=(\writeid\seq<_{\linmap'(\ex)})$
		for all $\ex\in\exec{\impl}$.
	\end{claim2}
	\begin{claimproof}
		By construction we have $\setof{\linmap(\ex)}=\setof{\linmap'(\ex)}$.
		Thus, it suffices to show that if $\wop,\op\in\setof{\linmap(\ex)}$ and $\methodf(\wop)=\regwrite$ then $\wop<_{\linmap(\ex)}\op\iff\wop<_{\linmap'(\ex)}\op$.
		
		In one direction, by definition of $\linrel(\ex)$ we have $\wop<_{\linmap(\ex)}\op\implies\tup{\wop,\op}\in\linrel(\ex)\implies\wop<_{\linmap'(\ex)}\op$.
		
		For the converse, as $\wop$ is a write operation, $\tup{\wop,\op}\notin\readrespondsbefore(\ex)$
		and so $\wop<_{\linmap'(\ex)}\op\implies\tup{\wop,\op}\in\linrel(\ex)$.
		As $\linrel(\ex)\subseteq <_{\linmap(\ex)}$ we deduce $\wop<_{\linmap(\ex)}\op$.
	\end{claimproof}
	
	Now, we show $\linmap'$ has all required properties:
	
	\begin{itemize}
		\item $\linmap'$ is a linearization mapping:
		
		Let $\ex\in\exec{\impl}$. First, we use \cref{lem:regspec} to show $\linmap'(\ex)\in\regspec$.
		Indeed, $\linmap(\ex)\in\regspec$, $\setof{\linmap'(\ex)}=\setof{\linmap(\ex)}$ and
		given any $\rop\in \setof{\linmap'(\ex)\rst{\regread}}$ using \cref{claim:write-then-op}
		we get that $(\writeid\seq<_{\linmap(\ex)}\seq\idrel{\rop})=(\writeid\seq<_{\linmap'(\ex)}\seq\idrel{\rop})$,
		or equivalently
		$\prefof{\linmap'(\ex)}{\rop}\rst{\regwrite}=\prefof{\linmap(\ex)}{\rop}\rst{\regwrite}$ and thus from \cref{lem:regspec} we get $\linmap'(\ex)\in\regspec$.
		
		Now, we show $\hs{\ex}\linleq\linmap'(\ex)$.
		As $\hs{\ex}\linleq\linmap(\ex)$ and by construction $\setof{\linmap(\ex)}=\setof{\linmap'(\ex)}$,
		using \cref{lem:lin} it is sufficient to show that $\op<_{\hs{\ex}}\op'\implies \op<_{\linmap'(\ex)}\op'$ for all $\op,\op'\in\setof{\linmap'(\ex)}$.
		
		We show the contra-positive: let $\op,\op'\in\setof{\linmap'(\ex)}$ such that
		$\op<_{\linmap'(\ex)}\op'$ does not hold.
		As $<_{\linmap'(\ex)}$ is total, we must have $\op'<_{\linmap'(\ex)}\op$, and by construction there are two options:
		\begin{itemize}
			\item $\tup{\op',\op}\in\linrel(\ex)$:
			
			In this case as $\linrel(\ex)\subseteq<_{\linmap(\ex)}$ we get $\op'<_{\linmap(\ex)}\op$ which since $\linmap$ is a legal linearization mapping implies $\op<_{\hs{\ex}}\op'$ does not hold.
			
			\item $\tup{\op',\op}\in\readrespondsbefore(\ex)$:
			
			In this case $\op'$ responds before $\op$ in $\ex$ and thus $\op<_{\hs{\ex}}\op'$ does not hold.
		\end{itemize}
		
		\item $\linmap'$ is write-strong:
		
		From \cref{claim:write-then-op} we get that $(\writeid\seq<_{\linmap(\ex)}\seq\writeid)=(\writeid\seq<_{\linmap'(\ex)}\seq\writeid)$ or equivalently $\linmap'(\ex)\rst{\regwrite}=\linmap(\ex)\rst{\regwrite}$ for all $\ex\in\exec{\impl}$.
		
		Therefore, using the fact that $\linmap$ is write-strong we get $\ex_1\pref\ex_2\implies \linmap(\ex_1)\rst{\regwrite}\pref\linmap(\ex_2)\rst{\regwrite}\implies \linmap'(\ex_1)\rst{\regwrite}\pref\linmap'(\ex_2)\rst{\regwrite}$
		for all $\ex_1,\ex_2\in\exec{\impl}$.
		
		\item $\linmap'$ is $\permleq$-minimal in $\linmaps_{\regspec}^{\wsleq}(\impl)$:
		
		From \cref{claim:write-then-op} and the fact that by construction $\setof{\linmap(\ex)}=\setof{\linmap'(\ex)}$,
		we  get have $\linmap'(\ex)\permeq\linmap(\ex)$ for all $\ex\in\exec{\impl}$ and thus $\linmap'\permeq\linmap$.
		The claim then follows from $\permleq$-minimality of $\linmap$.
		
		\item $\linmap'$ is decisive:
		
		Let $\ex_1,\ex_2\in\exec{\impl}$ be executions such that $\ex_1\pref\ex_2$.
		We need to show $\linmap'(\ex_1)\subseq\linmap'(\ex_2)$, 
		or equivalently $<_{\linmap'(\ex_1)}\subseteq<_{\linmap'(\ex_2)}$.
		By definition of $\linmap'$, it is sufficient to prove that $\linrel(\ex_1)\subseteq\linrel(\ex_2)$ and $\readrespondsbefore(\ex_1)\subseteq\readrespondsbefore(\ex_2)$.
		Indeed:
		\begin{itemize}
			\item $\linrel(\ex_1)\subseteq\linrel(\ex_2)$:
			
			Let $\seqh_1=\linmap(\ex_1)\cdotnew(\linmap(\ex_2)\rst{\regwrite})$.
			As $\linmap(\ex_1)\pref\seqh_1$ we have $(\leq_{\linmap(\ex_1)}\seq\writeid\seq\leq_{\linmap(\ex_1)})\subseteq(\leq_{\seqh_1}\seq\writeid\seq\leq_{\seqh_1})$.
			Using \cref{lem:write_between_ops}, taking union over all write operations, we have
			$(\leq_{\seqh_1}\seq\writeid\seq\leq_{\seqh_1})\subseteq(\leq_{\linmap(\ex_2)}\seq\writeid\seq\leq_{\linmap(\ex_2)})$.
			It follows that
			$\linrel(\ex_1)\subseteq\linrel(\ex_2)$.
			
			\item $\readrespondsbefore(\ex_1)\subseteq\readrespondsbefore(\ex_2)$:
			
			Let $\tup{\op,\op'}\in\readrespondsbefore(\ex_1)$.
			Then $\methodf(\op)=\methodf(\op')=\regread$ and $\op$ responds before $\op'$ in $\ex_1$.
			As $\ex_1\pref\ex_2$ this is also the case in $\ex_2$.
			It remains to show that $\tup{\op,\op'}\notin(\linrel(\ex_2)\cup\linrel(\ex_2)^{-1})$.
			Assume otherwise.
			If $\tup{\op,\op'}\in\linrel(\ex_2)$, then there exists a write operation $\wop\in\setof{\linmap(\ex_2)}$ such that $\op\leq_{\linmap(\ex_2)}\wop\leq_{\linmap(\ex_2)}\op'$.
			From \cref{lem:write_between_ops} we have
			$(\leq_{\linmap(\ex_2)}\seq\idrel{\wop}\seq\leq_{\linmap(\ex_2)})\cap(\setof{\seqh_1}\times\setof{\seqh_1})\subseteq(\leq_{\seqh_{1}}\seq\idrel{\wop}\seq\leq_{\seqh_{1}})$
			where $\seqh_1=\linmap(\ex_1)\cdotnew(\linmap(\ex_2)\rst{\regwrite})$.
			As $\op,\op'$ have a response in $\ex_1$ we have $\op,\op'\in\setof{\linmap(\ex_1)}\subseteq\setof{\seqh_1}$, and we obtain that
			$\tup{\op,\op'}\in(\leq_{\seqh_{1}}\seq\idrel{\wop}\seq\leq_{\seqh_{1}})$.
			That is, $\op\leq_{\seqh_{1}}\wop\leq_{\seqh_{1}}\op'$.
			As $\linmap(\ex_1)\pref\seqh_{1}$
			and $\op'\in\setof{\linmap(\ex_1)}$ we must have  $\op\leq_{\linmap(\ex_1)}\wop\leq_{\linmap(\ex_1)}\op'$
			and thus $\tup{\op,\op'}\in\linrel(\ex_1)$, 
			contradicting $\tup{\op,\op'}\in\readrespondsbefore(\ex_1)$.
			
			The case where $\tup{\op,\op'}\in\linrel(\ex_2)^{-1}$ similarly leads to $\tup{\op,\op'}\in\linrel(\ex_1)^{-1}$, which is again a contradiction. \qedhere
		\end{itemize}
	\end{itemize}
\end{proof}

\subsection{Minimal Linearizations for Decisive Registers}\label{subsec:declazy}

Using the notation we developed, every $\linmaps_{\regspec}^{\subseq}(\impl)$ is the set of decisive linearization of the implementation $\impl$, and we have:

$$\impl\in\dlin\iff\linmaps_{\regspec}^{\subseq}(\impl)\neq\emptyset$$

As shown in \cref{lem:wssubsq}, every write strong implementation is decisive.
Fittingly, for decisive implementations we use a weaker notion of minimality.
Specifically, we can get $\subseq$-minimal linearizations:

\begin{lemma}\label{lem:awsexistproc}
	If $\impl\in\dlin$, then $\linmaps_{\regspec}^{\subseq}(\impl)$ contains a $\subseq$-minimal linearization.
\end{lemma}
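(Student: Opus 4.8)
The plan is to derive the statement directly from \cref{cor:proc} by instantiating it with the subsequence relation in both roles. Concretely, I take $\seqspec=\regspec$, the well-founded order $\partleq$ to be $\subseq$, and the relation $\rel$ to be $\subseq$ as well. Under this instantiation, $\linmaps_{\seqspec}^{\rel}(\impl)$ is exactly $\linmaps_{\regspec}^{\subseq}(\impl)$, the set of decisive linearization mappings of $\impl$, so a $\partleq$-minimal element of this set is precisely a $\subseq$-minimal decisive linearization mapping, as required.

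To apply the corollary I must discharge its two hypotheses. First, I verify that $\subseq$ is a well-founded partial order on $\regspec$. Reflexivity and transitivity of the subsequence relation are immediate, and antisymmetry holds because $\seqh_1\subseq\seqh_2$ and $\seqh_2\subseq\seqh_1$ force $\size{\seqh_1}=\size{\seqh_2}$ and hence $\seqh_1=\seqh_2$; thus $\subseq$ is a partial order. For well-foundedness, given a non-empty $B\subseteq\regspec$, I pick an element $x\in B$ of minimal length (possible since lengths range over $\N$). Any $y\in B$ with $y\subseq x$ satisfies $\size{y}\leq\size{x}$, and minimality of $\size{x}$ then gives $\size{y}=\size{x}$, which together with $y\subseq x$ yields $y=x$; hence $x$ is a $\subseq$-minimal element of $B$.

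Second, the hypothesis $\linmaps_{\regspec}^{\subseq}(\impl)\neq\emptyset$ is exactly the assumption $\impl\in\dlin$, via the characterization $\impl\in\dlin\iff\linmaps_{\regspec}^{\subseq}(\impl)\neq\emptyset$ recorded just above the lemma. With both hypotheses in place, \cref{cor:proc} delivers a $\subseq$-minimal linearization mapping in $\linmaps_{\regspec}^{\subseq}(\impl)$, which is the desired conclusion.

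Since the whole argument is an instantiation of the already-established corollary, there is essentially no hard obstacle here; the only substantive point is the well-foundedness of $\subseq$, which the length argument settles cleanly. (In contrast to the write-strong setting in \cref{subsec:wslazy}, here $\partleq$ is a genuine partial order rather than a mere preorder, so the congruence requirement of \cref{lem:proc} is automatic and the simpler \cref{cor:proc} suffices, with no need to lift to an induced equivalence.)
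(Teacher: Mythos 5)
Your proof is correct and follows essentially the same route as the paper, which also obtains the result by applying \cref{cor:proc} with $\subseq$ as the well-founded partial order on $\regspec$. The only difference is that you spell out the well-foundedness of $\subseq$ via the minimal-length argument, which the paper takes as given.
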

\begin{proof}
	We apply \cref{cor:proc}, using the fact that
	$\subseq$ is a well-founded partial order on $\regspec$.
\end{proof}

As already shown in the write strong case, minimal linearizations contain no pending reads:

\begin{lemma}\label{lem:readcompleted_aws}
	If $\impl\in\dlin$ and $\linmap\in\linmaps_{\regspec}^{\subseq}(\impl)$ is $\subseq$-minimal,
	then $\linmap$ is pending-read-free.
\end{lemma}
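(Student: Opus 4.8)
The plan is to mirror the structure of the proof of \cref{lem:readcompleted_ws}, but to replace its use of the write-specialized local lemma \cref{lem:local_change_ws} by a \emph{global} minimality argument. The reason for the change is that \cref{lem:local_change_ws} relied on the congruence of $\permeq$ with respect to $\wsleq$ to reduce conditions (4)--(5) of \cref{lem:local_change} to the single condition $\seqh\rst{\regwrite}=\linmap(\ex)\rst{\regwrite}$; here we have $\rel={\subseq}$, whose induced equivalence is plain equality and which depends on the whole sequence (not only its write subsequence), so no such reduction is available. Instead I would define, for every $\ex\in\exec{\impl}$, the mapping obtained by deleting all pending reads at once,
\[
	\linmap'(\ex)=\linmap(\ex)\rst{\setof{\linmap(\ex)}\setminus(\setof{\linmap(\ex)\rst{\regread}}\setminus\setof{\completed(\hs{\ex})})},
\]
and show that $\linmap'\in\linmaps_{\regspec}^{\subseq}(\impl)$ with $\linmap'\subseq\linmap$ pointwise; then $\subseq$-minimality of $\linmap$ forces $\linmap=\linmap'$, and since $\linmap'$ is pending-read-free by construction, so is $\linmap$.

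First I would check that $\linmap'$ is a linearization mapping. Since only read operations are deleted and every write is retained, $\linmap(\ex)\rst{\regwrite}\subseq\linmap'(\ex)\subseq\linmap(\ex)$, so $\linmap'(\ex)\in\regspec$ by \cref{lem:remove_read}. For $\hs{\ex}\linleq\linmap'(\ex)$ I would invoke \cref{lem:remove_pending} with $\seqh=\linmap(\ex)$ and $\hat{\seqh}=\linmap'(\ex)$: we have $\linmap'(\ex)\subseq\linmap(\ex)$, $\hs{\ex}\linleq\linmap(\ex)$, and the deleted operations are exactly the pending reads, so every completed operation of $\hs{\ex}$ survives, giving $\setof{\completed(\hs{\ex})}\subseteq\setof{\linmap'(\ex)}$, as required.

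The main step is decisiveness of $\linmap'$, i.e.\ $\linmap'(\ex_1)\subseq\linmap'(\ex_2)$ whenever $\ex_1\pref\ex_2$. I would argue that $\setof{\linmap'(\ex_1)}\subseteq\setof{\linmap'(\ex_2)}$: writes are never deleted, so each write of $\linmap'(\ex_1)$ lies in $\linmap(\ex_1)\subseq\linmap(\ex_2)$ (decisiveness of $\linmap$) and hence in $\linmap'(\ex_2)$; and a read in $\linmap'(\ex_1)$ is completed in $\hs{\ex_1}$, hence completed in $\hs{\ex_2}$ (as $\hs{\ex_1}\pref\hs{\ex_2}$) and present in $\linmap(\ex_2)$, hence retained in $\linmap'(\ex_2)$. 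Order is preserved because each $\linmap'(\ex_i)$ is a subsequence of $\linmap(\ex_i)$: for $\op,\op'\in\setof{\linmap'(\ex_1)}$ we have $\op<_{\linmap'(\ex_1)}\op'\iff\op<_{\linmap(\ex_1)}\op'\implies\op<_{\linmap(\ex_2)}\op'\iff\op<_{\linmap'(\ex_2)}\op'$. Thus $\linmap'\in\linmaps_{\regspec}^{\subseq}(\impl)$ and $\linmap'\subseq\linmap$, so by $\subseq$-minimality $\linmap\subseq\linmap'$ as well, whence $\linmap(\ex)=\linmap'(\ex)$ for all $\ex$.

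The subtlety to watch (and the reason for the global, rather than local, construction) is that a read $\rop$ that is pending in $\hs{\ex}$ may already appear in $\linmap(\ex')$ for some $\ex'\prefneq\ex$ (a linearization is free to complete a pending read with an added response, and by decisiveness of $\linmap$ such $\rop$ then persists into $\linmap(\ex)$). Deleting $\rop$ only from $\linmap(\ex)$ would violate condition (4) of \cref{lem:local_change}, namely $\linmap(\ex')\subseq\seqh$. Deleting it uniformly from \emph{every} linearization resolves this cleanly, since $\rop$ is also pending in the shorter history $\hs{\ex'}$ and is therefore removed from $\linmap(\ex')$ too; this is exactly what the preservation-of-subsequence check in the previous paragraph confirms.
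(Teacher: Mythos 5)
Your proof is correct, and it takes a somewhat different route from the paper's. The paper also deletes the pending reads, but it does so \emph{locally}: for a fixed $\ex$ it forms the same $\seqh$ and feeds it to the generic \cref{lem:local_change} (with $\preleq=\rel={\subseq}$), which requires checking $\linmap(\ex')\subseq\seqh$ for all $\ex'\prefneq\ex$; since a priori a read pending in $\hs{\ex}$ might occur in $\linmap(\ex')$, the paper discharges this by an induction on $\size{\ex}$, using the inductive hypothesis that $\linmap(\ex')$ is already pending-read-free. You instead modify the \emph{entire} mapping at once and appeal directly to $\subseq$-minimality of $\linmap$ in $\linmaps_{\regspec}^{\subseq}(\impl)$, which eliminates the induction: the problematic read is pending in $\hs{\ex'}$ as well (no response in $\hs{\ex}$ implies none in its prefix), so it is deleted from $\linmap'(\ex')$ too, and your decisiveness check for $\linmap'$ makes this precise. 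The two arguments rest on the same underlying mechanism --- exhibit a $\subseq$-smaller mapping in the class and invoke minimality (that is exactly how \cref{lem:local_change} is proved) --- but your global construction is self-contained and arguably cleaner here, at the cost of re-verifying that the modified mapping respects $\subseq$ across all prefix pairs rather than at a single execution. All the supporting steps you cite (\cref{lem:remove_read}, \cref{lem:remove_pending}, the set-inclusion and order-preservation argument for decisiveness, and the fact that antisymmetry of $\subseq$ turns minimality into equality) are sound.
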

\begin{proof}
	We need to show $\setof{\linmap(\ex)\rst{\regread}}\subseteq\setof{\completed(\hs{\ex})}$ for every $\ex\in\exec{\impl}$.
	
	We proceed by induction on $\size{\ex}$. For $\ex=\emptyword$ we have by definition of linearizability that $\linmap(\ex)=\emptyword$, and we are done.
	
	Let $\ex$ be an execution of length $\size{\ex}=n$.
	As in the proof of \cref{lem:readcompleted_ws}, it is sufficient to show that $\seqh=\linmap(\ex)$,
	where
	$\seqh=\linmap(\ex)\rst{\setof{\linmap(\ex)}\setminus\left(\setof{\linmap(\ex)\rst{\regread}}\setminus\setof{\completed(\hs{\ex})}\right)}$ is the subsequence of $\linmap(\ex)$ where every pending read operation is removed.	
	
	We wish to apply \cref{lem:local_change}. We have:
	\begin{itemize}
		\item From \cref{lem:remove_read} we have that $\seqh\in\regspec$ and from \cref{lem:remove_pending} we get $\hs{\ex}\linleq\seqh$.
		
		\item  $\linmap(\ex')\subseq\seqh$ for all $\ex'\in\exec{\impl}$ such that $\ex'\prefneq\ex$:
		
		As $\linmap$ is decisive we have $\linmap(\ex')\subseq\linmap(\ex)$.
		By definition we also have $\seqh\subseq\linmap(\ex)$.
		Therefore, it is sufficient to show that $\setof{\linmap(\ex')}\subseteq\setof{\seqh}$.
		
		Let $\op\in\setof{\linmap(\ex')}$
		\begin{itemize}
			\item if $\methodf(\op)=\regwrite$ then from $\linmap(\ex')\subseq\linmap(\ex)$ we have $\op\in\setof{\linmap(\ex)\rst{\regwrite}}$ and as  $\linmap(\ex)\rst{\regwrite}=\seqh\rst{\regwrite}$ we get $\op\in\setof{\seqh}$.
			
			\item If $\methodf(\op)=\regread$ then due to our inductive assumption (applied to the strictly smaller $\ex'$) we have $\op\in\setof{\completed(\hs{\ex'})}$.
			
			As $\setof{\completed(\hs{\ex'})}\subseteq\setof{\completed(\hs{\ex})}$ we deduce that $\op\in\setof{\completed(\hs{\ex})}$ and thus by construction $\op\in\setof{\seqh}$.
		\end{itemize}
		
		\item  $\seqh\subseq\linmap(\ex')$ for all $\ex'\in\exec{\impl}$ such that $\ex\prefneq\ex'$:
			
		Follows from $\seqh\subseq\linmap(\ex)\subseq\linmap(\ex')$, where the former is by definition, and the latter is from $\linmap$ being decisive.
		
	\end{itemize}
	Overall, we can apply \cref{lem:local_change} and get that $\seqh=\linmap(\ex)$, as the equivalence induced by $\subseq$ is equality.
\end{proof}

\Cref{lem:permeq,lem:lastcompleted} are stated in a way which applies both to the set of write strong linearizations and to the set of decisive linearizations.
To enable this simultaneous argument for both cases, we first show that $\permleq$-minimal linearizations are also $\subseq$-minimal:

\begin{lemma}\label{lem:min}
	If $\impl\in\wslin$ and
	$\linmap$ is $\permleq$-minimal in $\linmaps_{\regspec}^{\wsleq}(\impl)$,
	then it is also $\subseq$-minimal in $\linmaps_{\regspec}^{\wsleq}(\impl)$
\end{lemma}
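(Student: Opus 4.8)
The plan is to exploit the fact that the subsequence relation $\subseq$ is finer than the preorder $\permleq$, so that $\permleq$-minimality of $\linmap$ already pins it down tightly enough to force $\subseq$-minimality. The whole argument is a short chain: pass from $\subseq$ to $\permleq$, invoke $\permleq$-minimality to get back $\permeq$-equivalence, and then upgrade that equivalence to genuine equality using the structure of sequential histories.

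First I would unfold what $\subseq$-minimality requires. Since the pointwise lift of $\subseq$ to linearization mappings is a partial order (in particular antisymmetric), showing that $\linmap$ is $\subseq$-minimal in $\linmaps_{\regspec}^{\wsleq}(\impl)$ amounts to proving: for every $\linmap'\in\linmaps_{\regspec}^{\wsleq}(\impl)$ with $\linmap'\subseq\linmap$, we in fact have $\linmap'=\linmap$. So fix such a $\linmap'$ with $\linmap'(\ex)\subseq\linmap(\ex)$ for every $\ex\in\exec{\impl}$. By \cref{lem:subseq_permleq}, $\seqh_1\subseq\seqh_2$ implies $\seqh_1\permleq\seqh_2$, and hence $\linmap'\permleq\linmap$ pointwise. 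Because $\linmap$ is $\permleq$-minimal and $\linmap'$ lies in $\linmaps_{\regspec}^{\wsleq}(\impl)$, minimality yields $\linmap\permleq\linmap'$ as well, i.e.\ $\linmap'\permeq\linmap$.

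Finally I would upgrade this equivalence to genuine equality at each execution. Fix $\ex\in\exec{\impl}$. From $\linmap'(\ex)\permeq\linmap(\ex)$ and the definition of $\permeq$ as $\permleq\cap\permleq^{-1}$, I obtain in particular $\setof{\linmap'(\ex)}=\setof{\linmap(\ex)}$. On the other hand $\linmap'(\ex)\subseq\linmap(\ex)$ is a subsequence. Since every sequential history lists each operation at most once, a subsequence that contains all the operations of $\linmap(\ex)$ must be $\linmap(\ex)$ itself; hence $\linmap'(\ex)=\linmap(\ex)$. As $\ex$ was arbitrary, $\linmap'=\linmap$, which establishes $\subseq$-minimality.

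The argument is genuinely short, and the only real content lies in the last step: $\permeq$ on its own only guarantees equal operation sets and compatible write-orderings, not equality of the full sequences, so one must combine it with the subsequence structure and the uniqueness of operations in a sequential history to conclude literal equality. This is the one place where care is needed; everything else is an immediate consequence of the definitions and of \cref{lem:subseq_permleq}.
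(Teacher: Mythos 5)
Your proof is correct, and it takes a slightly different (and in fact more economical) route than the paper's. The paper argues by contradiction: it extracts from $\linmap'\subsneq\linmap$ a concrete operation $\tilde{\op}\in\setof{\linmap(\ex)}\setminus\setof{\linmap'(\ex)}$ and then splits on its method — if $\tilde{\op}$ is a read it must be pending (completed operations must occur in any linearization of $\hs{\ex}$), contradicting the pending-read-freeness of $\permleq$-minimal mappings (\cref{lem:readcompleted_ws}); if it is a write, the paper derives $\linmap'\permeq\linmap$ exactly as you do and then invokes \cref{lem:permleq_writesubseq} to get $\linmap'(\ex)\rst{\regwrite}=\linmap(\ex)\rst{\regwrite}$, contradicting the missing write. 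You instead observe that $\linmap'\permeq\linmap$ already yields the full set equality $\setof{\linmap'(\ex)}=\setof{\linmap(\ex)}$ straight from the definition of $\permleq$, which together with $\linmap'(\ex)\subseq\linmap(\ex)$ and the uniqueness of operation identifiers in a sequential history forces $\linmap'(\ex)=\linmap(\ex)$. This eliminates both the case split and the dependence on \cref{lem:readcompleted_ws}, using only \cref{lem:subseq_permleq} and $\permleq$-minimality; the paper's version, by contrast, makes the role of pending-read-freeness explicit but is strictly more machinery than needed here. Your one point of care — that $\permeq$ alone does not give equality of sequences and must be combined with the subsequence structure — is exactly the right thing to flag.
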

\begin{proof}
	Assume otherwise and let $\linmap'\in\linmaps_{\regspec}^{\wsleq}(\impl)$ be a linearization mapping such that $\linmap'\subsneq \linmap$.
 	Then, there exist an execution $\ex\in\exec{\impl}$ and a (possibly pending) operation $\op\in\hs{\ex}$ such that $\linmap'(\ex)\subsneq\linmap(\ex)$ and $\tilde{\op}\in\setof{\linmap(\ex)}\setminus\setof{\linmap'(\ex)}$,
 	where $\tilde{\op}$ is $\op$ extended to a completed operation (if necessary).
 	Consider the following cases:
 	\begin{itemize}
 		\item If $\op$ is a read operation, from $\tilde{\op}\notin\setof{\linmap'(\ex)}$ we deduce that $\op$ is pending.
 		As $\tilde{\op}\in\setof{\linmap(\ex)}$, this contradicts \cref{lem:readcompleted_ws}
 		
 		\item  If $\op$ is a write operation,
 		note that due to \cref{lem:subseq_permleq} lifted point-wise we obtain from $\linmap'\subsneq\linmap$ that $\linmap'\permleq\linmap$.
 		Since $\linmap$ is $\permleq$-minimal in $\linmaps_{\regspec}^{\wsleq}(\impl)$ and $\linmap'\in\linmaps_{\regspec}^{\wsleq}(\impl)$ we deduce that
 		$\linmap'\permeq\linmap$ and in particular $\linmap'(\ex)\permeq\linmap(\ex)$.
 		Applying \cref{lem:permleq_writesubseq} this implies that $\linmap'(\ex)\rst{\regwrite}=\linmap(\ex)\rst{\regwrite}$,
 		which contradicts $\tilde{\op}\in\setof{\linmap(\ex)\rst{\regwrite}}\setminus\setof{\linmap'(\ex)\rst{\regwrite}}$.
 	\end{itemize}
	In both cases we reach a contradiction and the lemma follows.
\end{proof}

\Cref{lem:permeq,lem:lastcompleted} now refer to a decisive linearization that is $\subseq$-minimal in $\linmaps_{\regspec}^{\rel}(\impl)$.
They can be applied both to the write-strong case (by taking $\rel=\permleq$ and using \cref{lem:wssubsq} and \cref{lem:min}) and to the decisive case (by taking $\rel=\subseq$).
For both choices it can be easily verified that $\rel$ satisfies the conditions required by the lemmas.

The first property we prove in this way is that minimal linearizations are response-activated. 
(\ie we do not add additional operations to the linearization when extending an execution with an additional transition, 
unless that transition is a response.)

\begin{lemma}\label{lem:permeq}
	If $\impl$ is an implementation and
	$\linmap\in\linmaps_{\regspec}^{\rel}(\impl)$ is $\subseq$-minimal and decisive,
	where $\rel$ is reflexive,
	then $\linmap$ is response-activated.
\end{lemma}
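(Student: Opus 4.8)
The plan is to reduce the statement to the already-established \cref{lem:local_change}. Response-activation asks that $\linmap(\ex)=\linmap(\ex\cdot\trans)$ for every $\ex\in\exec{\impl}$ and non-response transition $\trans$ with $\ex\cdot\trans\in\exec{\impl}$. Since $\linmap$ is decisive and $\ex\pref\ex\cdot\trans$, the inclusion $\linmap(\ex)\subseq\linmap(\ex\cdot\trans)$ is immediate; the entire content of the lemma is the reverse inclusion, which is exactly the ``the linearization cannot be made any smaller'' statement that $\subseq$-minimality is designed to rule out. Concretely, I would instantiate \cref{lem:local_change} with the preorder $\preleq\,=\,\subseq$ (whose induced equivalence $\preeq$ is plain equality, as $\subseq$ is a partial order), at the target execution $\ex\cdot\trans$, and with the candidate sequential history $\seqh\defeq\linmap(\ex)$.

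First I would discharge the five hypotheses of \cref{lem:local_change} for this choice. Condition~(1), $\seqh\in\regspec$, is immediate since $\linmap$ maps into $\regspec$. The crucial condition is~(2), $\hs{\ex\cdot\trans}\linleq\linmap(\ex)$: here I would invoke \cref{lem:non_response}, which applies precisely because $\trans\notin\Res$, to upgrade the known fact $\hs{\ex}\linleq\linmap(\ex)$ to $\hs{\ex\cdot\trans}\linleq\linmap(\ex)$. Condition~(3), $\seqh\subseq\linmap(\ex\cdot\trans)$, is the decisiveness inclusion noted above. For conditions~(4) and~(5) I would use prefix bookkeeping: any $\ex'\prefneq\ex\cdot\trans$ satisfies $\ex'\pref\ex$, so $\tup{\linmap(\ex'),\linmap(\ex)}\in\rel$ because $\linmap$ respects $\rel$; symmetrically, any $\ex'$ with $\ex\cdot\trans\prefneq\ex'$ satisfies $\ex\pref\ex'$, giving $\tup{\linmap(\ex),\linmap(\ex')}\in\rel$ (reflexivity of $\rel$ covers the boundary case $\ex'=\ex$).

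With all hypotheses met, \cref{lem:local_change} yields $\seqh\preeq\linmap(\ex\cdot\trans)$, that is $\linmap(\ex)=\linmap(\ex\cdot\trans)$ since $\preeq$ is equality, which is exactly response-activation. I do not expect a genuine obstacle here, as the argument is essentially a bookkeeping instantiation of \cref{lem:local_change}. The only points requiring care are confirming that \cref{lem:non_response} is indeed the right tool for condition~(2) and that it covers all non-response transitions (internal object steps as well as invocations), and that the proper-prefix relation $\prefneq$ around $\ex\cdot\trans$ is translated correctly into $\pref$-relations with $\ex$, so that the $\rel$-respecting property of $\linmap$ applies verbatim.
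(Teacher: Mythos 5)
Your proposal is correct and follows essentially the same route as the paper's proof: both apply \cref{lem:local_change} at $\ex\cdot\trans$ with candidate history $\linmap(\ex)$, discharge condition (2) via \cref{lem:non_response}, condition (3) via decisiveness, conditions (4)--(5) by the observation that proper prefixes and extensions of $\ex\cdot\trans$ are prefixes/extensions of $\ex$, and conclude equality because the equivalence induced by the partial order $\subseq$ is equality.
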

\begin{proof}
	Let $\ex\in\exec{\impl}$ and $\trans\in\impl.\lT$ such that $\ex\cdot\trans\in\exec{\impl}$ and $\trans\notin\Res$.
	We show $\linmap(\ex)=\linmap(\ex\cdot\trans)$.
	We wish to apply \cref{lem:local_change} to $\ex\cdot\trans$ and $\linmap(\ex)$.
	Indeed, we have the following:
	\begin{enumerate}
		\item $\linmap(\ex)\in\regspec$:
		
		Immediate from $\linmap$ being a valid linearization mapping.
		
		\item $\hs{\ex\cdot\trans}\linleq\linmap(\ex)$:
		
		Follows from \cref{lem:non_response} and $\hs{\ex}\linleq\linmap(\ex)$.
		
		\item $\linmap(\ex)\subseq \linmap(\ex\cdot\trans)$:
		
		Follows from $\linmap$ being decisive. %
		
		\item  $\tup{\linmap(\ex'),\linmap(\ex)}\in\rel$ for all $\ex'\in\exec{\impl}$ such that $\ex'\prefneq\ex\cdot\trans$:
		
		$\ex'\prefneq\ex\cdot\trans\implies\ex'\pref\ex$ and so this follows from the fact that $\linmap$ respects $\rel$. 
		
		\item  $\tup{\linmap(\ex),\linmap(\ex')}\in\rel$ for all $\ex'\in\exec{\impl}$ such that $\ex\cdot\trans\prefneq\ex'$:
		
		$\ex\cdot\trans\prefneq\ex'\implies\ex\pref\ex'$ and so this follows from the fact that $\linmap$ respects $\rel$.  
		
	\end{enumerate}
	From \cref{lem:local_change} we deduce that $\linmap(\ex)=\linmap(\ex\cdot\trans)$,
	as the equivalence induced by the partial order $\subseq$ is equality.
\end{proof}

Next, we show that minimal linearizations are last-completed.
(\ie the last operation in a minimal linearization is some operation that is completed in the linearized execution.)

\begin{lemma}\label{lem:lastcompleted}
	If $\impl$ is an implementation, $\linmap\in\linmaps_{\regspec}^{\rel}(\impl)$ is $\subseq$-minimal and decisive,
	and $\rel$ is a relation such that:
	\begin{enumerate}
		\item $\tup{\seqh_1,\seqh_2\cdot\op}\in\rel\land \op\notin\seqh_1 \implies\tup{\seqh_1,\seqh_2}\in\rel$ for every $\seqh_1,\seqh_2\in\regspec$ and operation $\op$.
		
		\item $\pref\subseteq\rel$.
		
		\item $\rel$ is transitive.
	\end{enumerate}
	then $\linmap$ is last-completed.
\end{lemma}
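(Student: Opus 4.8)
The plan is to argue by strong induction on $\size{\ex}$, proving that whenever $\linmap(\ex)\neq\emptyword$ its last operation is completed in $\hs{\ex}$ (the claim is vacuous when $\linmap(\ex)=\emptyword$). In the inductive step I would assume towards a contradiction that $\op\defeq\lastop(\linmap(\ex))$ is \emph{not} completed in $\hs{\ex}$, and then feed a shorter candidate linearization into the minimality lemma \cref{lem:local_change} to force $\op$ back in, contradicting $\subseq$-minimality.

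Concretely, since every history in $\regspec$ consists only of completed operations, $\op$'s invocation and response occupy the last two positions of $\linmap(\ex)$; deleting them yields $\seqh$ with $\seqh\subsneq\linmap(\ex)$ and $\seqh\in\regspec$ (prefix-closure). I would then verify the hypotheses of \cref{lem:local_change} taking $\preleq$ to be $\subseq$ (whose induced equivalence is equality). Conditions (1) and (3) are immediate. For (2), $\hs{\ex}\linleq\seqh$ follows from \cref{lem:remove_pending}: using \cref{obs:lin} on $\hs{\ex}\linleq\linmap(\ex)$ gives $\setof{\completed(\hs{\ex})}\subseteq\setof{\linmap(\ex)}$, and since $\op\notin\setof{\completed(\hs{\ex})}$ by assumption we get $\setof{\completed(\hs{\ex})}\subseteq\setof{\seqh}$. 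For condition (5), given $\ex\prefneq\ex'$, the relation $\tup{\seqh,\linmap(\ex')}\in\rel$ follows by combining $\tup{\seqh,\linmap(\ex)}\in\rel$ (from $\seqh\pref\linmap(\ex)$ and the assumption $\pref\subseteq\rel$) with $\tup{\linmap(\ex),\linmap(\ex')}\in\rel$ (as $\linmap$ respects $\rel$) via transitivity of $\rel$.

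The real work is condition (4): for each $\ex'\prefneq\ex$ I need $\tup{\linmap(\ex'),\seqh}\in\rel$. Writing $\linmap(\ex)=\seqh\cdot\op$ and using the first hypothesis on $\rel$, it suffices to show $\op\notin\setof{\linmap(\ex')}$ (since $\tup{\linmap(\ex'),\linmap(\ex)}\in\rel$ holds because $\linmap$ respects $\rel$). This is where the induction is used: if $\op\in\setof{\linmap(\ex')}$, then because $\linmap$ is decisive, $\linmap(\ex')\subseq\linmap(\ex)$, and as $\op$ is \emph{last} in $\linmap(\ex)$ it must also be last in the subsequence $\linmap(\ex')$, i.e. $\op=\lastop(\linmap(\ex'))$; the induction hypothesis (applicable since $\size{\ex'}<\size{\ex}$) then gives $\op\in\setof{\completed(\hs{\ex'})}\subseteq\setof{\completed(\hs{\ex})}$, contradicting our assumption. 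Hence $\op\notin\setof{\linmap(\ex')}$, establishing (4). With all hypotheses met, \cref{lem:local_change} yields $\seqh=\linmap(\ex)$, contradicting $\seqh\subsneq\linmap(\ex)$; therefore $\op\in\setof{\completed(\hs{\ex})}$.

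I expect the main obstacle to be exactly the proof of $\op\notin\setof{\linmap(\ex')}$ in condition (4): it is the only place that genuinely couples decisiveness with the inductive hypothesis, and getting the direction of the subsequence/last-operation argument right (that a last operation of the larger history cannot reappear strictly earlier in a subsequence, forcing it to be last in the smaller one) is the delicate point. Everything else is a routine bookkeeping verification of the hypotheses of \cref{lem:local_change} using the three stipulated closure properties of $\rel$.
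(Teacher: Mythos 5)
Your proof is correct and follows essentially the same route as the paper's: a minimal-counterexample/induction on $\size{\ex}$, removing the last operation $\op$ from $\linmap(\ex)$ to obtain $\seqh$, verifying the hypotheses of \cref{lem:local_change} (with the key step being $\op\notin\setof{\linmap(\ex')}$ for strict prefixes $\ex'$, established via decisiveness and the induction hypothesis exactly as in the paper), and deriving the contradiction $\seqh=\linmap(\ex)$. No gaps.
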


\begin{proof}
	We need to show $\lastop(\linmap(\ex))\in\setof{\completed(\hs{\ex})}$
	for all $\ex\in\exec{\impl}$ such that $\linmap(\ex)\neq\emptyword$.
	Assume for contradiction this is false, and let $\ex$ be an execution of minimal size such that $\lastop(\linmap(\ex))\neq\emptyword$ and $\lastop(\linmap(\ex))\notin\setof{\completed(\hs{\ex})}$.
	Denote $\op\eqdef\lastop(\linmap(\ex))$.
	Let $\seqh=\prefof{\linmap(\ex)}{\op}$ be the prefix of $\linmap(\ex)$ with $\op$ removed.
	From $\pref\subseteq\rel$ we get that $\rel$ is reflexive.
	We wish to use \cref{lem:local_change}.
	We have:
	\begin{enumerate}
		\item $\seqh\in\regspec$:
		
		Follows from $\seqh\pref\linmap(\ex)\in\regspec$ and $\regspec$ being prefix closed.
		
		\item $\hs{\ex}\linleq\seqh$:
		
		Follows from \cref{lem:remove_pending} using $\hs{\ex}\linleq\linmap(\ex)$
		
		\item $\seqh\subseq\linmap(\ex)$:
		
		Holds since $\seqh\pref\linmap(\ex)$.
		
		\item $\tup{\linmap(\ex'),\seqh}\in\rel$ for all $\ex'\in\exec{\impl}$ such that $\ex'\prefneq\ex$:
		
		As $\tup{\linmap(\ex'),\linmap(\ex)}\in\rel$,
		using the first assumption on $\rel$ it is sufficient to show that $\op\notin\linmap(\ex')$.
		Indeed, if we assume for contradiction $\op\in\linmap(\ex')$ then since $\setof{\completed(\hs{\ex'})}\subseteq\setof{\completed(\hs{\ex})}$
		we deduce that $\op\notin\setof{\completed(\hs{\ex'})}$.
		From $\linmap(\ex')\subseq\linmap(\ex)$, $\op\in\linmap(\ex')$ ,and $\op=\lastop(\linmap(\ex))$ we can deduce that $\op=\lastop(\linmap(\ex'))$
		and thus $\lastop(\linmap(\ex'))\notin\setof{\completed(\hs{\ex'})}$,
		contradicting the minimality of $\ex$.
		From this contradiction we deduce that $\linmap(\ex')\subseq\seqh$
		
		\item $\tup{\seqh,\linmap(\ex')}\in\rel$ for all $\ex'\in\exec{\impl}$ such that $\ex\prefneq\ex'$:
		
		We have that $\seqh\pref\linmap(\ex)$
		and so due to our assumptions $\tup{\seqh,\linmap(\ex)}\in\rel$.
		As $\tup{\linmap(\ex),\linmap(\ex')}\in\rel$
		we obtain from transitivity that 
		$\tup{\seqh,\linmap(\ex')}\in\rel$.
		
	\end{enumerate}
	From \cref{lem:local_change} we have $\seqh=\linmap(\ex)$, as the equivalence induced by $\subseq$ is equality.
	This is a contradiction, and so the lemma holds.
\end{proof}

\subsection{Conclusion}\label{subsec:lazy_conclusion}

We obtain the existence of the lazy  linearizations required for \cref{sec:hard}:

\begin{proof}[Proof Of \cref{lem:wsnice}]
	From $\cref{lem:wssubsq}$ we obtain a $\permleq$-minimal mapping which is write strong and decisive.
	It is also pending-read-free due to \cref{lem:readcompleted_ws}.
	Using \cref{lem:min},
	 we have that the mapping is also $\subseq$-minimal.
	Therefore, it is response-activated due to \cref{lem:permeq} and last-completed due to \cref{lem:lastcompleted}.
\end{proof}

\begin{proof}[Proof Of \cref{lem:awsnice}]
	From \cref{lem:awsexistproc} we obtain a $\subseq$-minimal mapping which is decisive.
	It is pending-read free due to \cref{lem:readcompleted_aws},
	response-activated due to \cref{lem:permeq},
	and last-completed due to \cref{lem:lastcompleted}.
\end{proof}

\section{ABD}
\label{app:abd}

Now we formally define multi-writer $\ABD$.
First we need some notation:

\begin{enumerate}
	\item $\TS=\N\times(\Tid\cup\set{\bot})$ (The timestamps used in ABD, composed of an integer and a thread id. $\Tid$ is a finite and ordered set, and $\bot$ is defined as smaller than all thread ids).
	
	\item $\Msg=\inarr{\set{\query(\tid,\id) \st \tid\in\Tid, \id\in\Id} \cup {} \\
		\set{ \update(\tid,\id,\tup{\val,\ts}) \st \tid\in\Tid,\id\in\Id,\val\in\Val,\ts\in\TS}}$
	
	\item $\Acks=(\Tid\times(\Val\times\TS))\cup\Tid$
	
	\item $\Pc_\ABD = \set{\main,\wbegin,\wquery,\wupdate,\wend,\rbegin,\rquery,\rupdate,\rend}$
\end{enumerate}

We define order on timestamps and lift it to order on pairs in $\Val \times \TS$:
\begin{align*}
	\pair{\tm_1}{\tid_1} < \pair{\tm_2}{\tid_2} & \Longleftrightarrow \tm_1 < \tm_2 \lor (\tm_1=\tm_2 \land \tid_1 < \tid_2)
	\\
	\pair{\val_1}{\ts_1} < \pair{\val_2}{\ts_2} & \Longleftrightarrow \ts_1 < \ts_2
\end{align*}

Now, we define the LTS:

\begin{itemize}
	\item $\ABD.\lQ = (\Msg\pfn\powerset{\Acks}) \times (\Tid \to \Id) \times (\Tid \to (\Val\times\TS)) \times (\Tid \to \Pc_\ABD) \times (\Tid \to \Val)$ (mapping from messages to acks, used operation IDs,  most recent value and timestamp known, program counter, input/output argument)
	\item $\ABD.\lSigma = \Inv\Res(\Reg)\cup
	\set{\objact\tup{\tid}\st\tid\in\Tid}$
	\item $\ABD.\linit = \tup{\emptyset,\lambda \tid \ldotp 0,\lambda \tid \ldotp \main,\lambda \tid \ldotp \pair{0}{\pair{0}{0}},\lambda \tid \ldotp 0}$
	
	\item The transitions are as follows:
\end{itemize}

\begin{mathparpagebreakable}
	\inferrule[\readinv]
	{\act = \invact\tup{\regread,\bot,\tid,\id} \\\\  \Tmappc(\tid)=\main \\ \id\notin \Tmapid(\tid)}
	{\tup{\msgmap,\Tmapid,\Tmapval,\Tmappc,\Tmaparg} \asteplab{\act}{} \tup{\msgmap,\Tmapid[\tid \mapsto \Tmapid(\tid)\cdot\id], \Tmapval, \Tmappc[\tid \mapsto \rbegin],\Tmaparg}}
	\and
	\inferrule[\readquery]{ \Tmappc(\tid)=\rbegin \\ \id = \Tmapid(\tid)[\last]
		\\\\ \msg=\query(\tid,\id)}
	{\tup{\msgmap,\Tmapid,\Tmapval,\Tmappc,\Tmaparg} \asteplab{\objact\tup{\tid}}{} \tup{ \msgmap[\msg\mapsto\emptyset],\Tmapid,\Tmapval,\Tmappc[\tid \mapsto \rquery],\Tmaparg}}
	\and
	\inferrule[\readupdate]{\Tmappc(\tid)=\rquery \\ \id = \Tmapid(\tid)[\last]
		\\\\ \quorum\subseteq\set{\tid\st\tup{\tid,\_}\in\msgmap(\query(\tid,\id))} \\ \size{\quorum}>\frac{\size{\Tid}}{2}
		\\\\ \tup{\valmax,\tsmax}=\max\set{\tup{\val',\ts'}\st\exists\tid'\in\quorum\ldotp\tup{\tid',\tup{\val',\ts'}}\in \msgmap(\query(\tid,\id))}
		\\\\ \msg = \update(\tid,\id,\tup{\valmax,\tsmax})}
	{\tup{\msgmap,\Tmapid,\Tmapval,\Tmappc,\Tmaparg} \asteplab{\objact\tup{\tid}}{} \tup{\msgmap[\msg\mapsto\emptyset],\Tmapid,\Tmapval,\Tmappc[\tid \mapsto \rupdate],\Tmaparg[\tid\mapsto\valmax]}}
	\and
	\inferrule[\readend]{ \Tmappc(\tid)=\rupdate \\ \id = \Tmapid(\tid)[\last]
		\\\\ \size{\msgmap(\update(\tid,\id,\_)}>\frac{\size{\Tid}}{2}
	}
	{\tup{\msgmap,\Tmapid,\Tmapval,\Tmappc,\Tmaparg} \asteplab{\objact\tup{\tid}}{} \tup{\msgmap,\Tmapid,\Tmapval,\Tmappc[\tid \mapsto \rend],\Tmaparg}}
	\and
	\inferrule[\readres]{\act = \resact\tup{\regread,\Tmaparg(\tid),\tid,\id} \\\\ \Tmappc(\tid)=\rend \\ \id = \Tmapid(\tid)[\last]
	}
	{\tup{\msgmap,\Tmapid,\Tmapval,\Tmappc,\Tmaparg} \asteplab{\act}{} \tup{\msgmap,\Tmapid,\Tmapval,\Tmappc[\tid \mapsto \main],\Tmaparg[\tid\mapsto 0]}}
	\and
	\inferrule[\writeinv]{\act = \invact\tup{\regwrite,\valin,\tid,\id} \\\\
		\Tmappc(\tid)=\main \\ \id\notin \Tmapid(\tid)}
	{\tup{\msgmap,\Tmapid,\Tmapval,\Tmappc,\Tmaparg} \asteplab{\act}{} \tup{\msgmap,\Tmapid[\tid \mapsto \Tmapid(\tid)\cdot\id], \Tmapval, \Tmappc[\tid \mapsto\wbegin],\Tmaparg[\tid\mapsto\valin]}}
	\and
	\inferrule[\writequery]{\Tmappc(\tid)=\wbegin \\ \id = \Tmapid(\tid)[\last]
		\\\\ \msg = \query(\tid,\id)}
	{\tup{\msgmap,\Tmapid,\Tmapval,\Tmappc,\Tmaparg} \asteplab{\objact\tup{\tid}}{} \tup{ \msgmap[\msg\mapsto\emptyset],\Tmapid,\Tmapval,\Tmappc[\tid \mapsto \wquery],\Tmaparg}}
	\and
	\inferrule[\writeupdate]{ \Tmappc(\tid)=\wquery \\ \id = \Tmapid(\tid)[\last]
		\\\\ \quorum\subseteq\set{\tid\st\tup{\tid,\_}\in\msgmap(\query(\tid,\id))}  \\ \size{\quorum}>\frac{\size{\Tid}}{2}
		\\\\ \tup{\tm,\_}=\max\set{\ts'\st\exists\tid'\in\quorum,\val'\ldotp\tup{\tid',\tup{\val',\ts'}}\in \msgmap(\query(\tid,\id))}
		\\\\ \msg = \update(\tid,\id,\tup{\Tmaparg(\tid),\tup{\tm+1,\tid}}}
	{\tup{\msgmap,\Tmapid,\Tmapval,\Tmappc,\Tmaparg} \asteplab{\objact\tup{\tid}}{} \tup{\msgmap[\msg\mapsto\emptyset],\Tmapid,\Tmapval,\Tmappc[\tid \mapsto \wupdate],\Tmaparg}}
	\and
	\inferrule[\writeend]{\Tmappc(\tid)=\wupdate  \\ \id = \Tmapid(\tid)[\last]
		\\\\ \size{\msgmap(\update(\tid,\id,\_)}>\frac{\size{\Tid}}{2}
	}
	{\tup{\msgmap,\Tmapid,\Tmapval,\Tmappc,\Tmaparg} \asteplab{\objact\tup{\tid}}{} \tup{\msgmap,\Tmapid,\Tmapval,\Tmappc[\tid \mapsto \wend],\Tmaparg}}
	\and
	\inferrule[\writeres]{\act = \resact\tup{\regwrite,\bot,\tid,\id}  \\\\
		\Tmappc(\tid)=\wend \\ \id = \Tmapid(\tid)[\last]}
	{\tup{\msgmap,\Tmapid,\Tmapval,\Tmappc,\Tmaparg} \asteplab{\act}{} \tup{\msgmap,\Tmapid,\Tmapval,\Tmappc[\tid \mapsto \main],\Tmaparg[\tid\mapsto 0]}}
	\and	
	\inferrule[\serverqueryack]{
		\msg = \query(\_,\_)\in\dom{\msgmap} \\\\
		\forall \tup{\val',\ts'}. \tup{\tid,\tup{\val',\ts'}}\notin\msgmap(\msg) \\\\
		\msgmap'=\msgmap[\msg\mapsto\msgmap(\msg)\uplus\set{\tup{\tid,\Tmapval(\tid)}}]}
	{\tup{\msgmap,\Tmapid,\Tmapval,\Tmappc,\Tmaparg} \asteplab{\objact\tup{\tid}}{} \tup{\msgmap',\Tmapid,\Tmapval,\Tmappc,\Tmaparg}}
	\and
	\inferrule[\serverupdateack]{
		\msg = \update(\_,\_,\tup{\val',\ts'})\in\dom{\msgmap} \\\\
		\msgmap'=\msgmap[\msg\mapsto\msgmap(\msg)\uplus\set{\tid}] \\\\
		\Tmapval' = \Tmapval[\tid\mapsto\max\set{\Tmapval(\tid),\tup{\val',\ts'}}]
	}
	{\tup{\msgmap,\Tmapid,\Tmapval,\Tmappc,\Tmaparg} \asteplab{\objact\tup{\tid}}{}
		\tup{\msgmap',\Tmapid,\Tmapval',\Tmappc,\Tmaparg}}
\end{mathparpagebreakable}

We show $\ABD$ can be simulated by $\awsreg$, and it follows that it is decisively linearizable.

\begin{theorem}
	$\ABD\leqsim\awsreg$
\end{theorem}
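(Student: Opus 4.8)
The plan is to exhibit an $\Inv\Res(\Reg)$-forward simulation from $\ABD$ to $\awsreg$, witnessing $\ABD\leqsim\awsreg$; decisive linearizability of $\ABD$ then follows from $\awsreg\in\dlin$ and downward closure. Exactly as in the proof of \cref{thm:awshard}, I would first move to the execution recorder, so that by \cref{prop:ghost} it suffices to establish $\ghost{\ABD}\leqsim\awsreg$ and the simulation may refer to the entire execution $\ex\in\exec{\ABD}$ generated so far. The conceptual core is the notion of a timestamped value being \emph{committed}: after an execution $\ex$, I say $\tup{\val,\ts}$ is committed if strictly more than $\size{\Tid}/2$ processes $\tid$ satisfy $\Tmapval(\tid)\geq\tup{\val,\ts}$ in the state reached by $\ex$. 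Before touching the simulation I would prove the standard $\ABD$ invariants on reachable states: (i) \emph{timestamp--value consistency}, that a given timestamp is attached to at most one value anywhere in the state; (ii) \emph{monotonicity}, that the maximal committed timestamp $\tsmax(\ex)$ (with associated value $\valmax(\ex)$) is non-decreasing along executions, which is immediate since local values only grow by $\max$ under \serverupdateack; (iii) that when a read returns $\val$ its timestamp is committed by then, because the read's write-back \readend\ awaits a quorum of acknowledgments for its update; and (iv), by quorum intersection, that the value a read returns has timestamp $\geq\tsmax$ measured at the read's \emph{invocation}, since the read's query quorum meets the quorum that committed $\tsmax$ and local values are monotone.

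Second, I would define the simulation relation $\fsim\subseteq\exec{\ABD}\times\awsreg.\lQ$ relating $\ex$ to a $\awsreg$-state in which: the register value equals $\valmax(\ex)$ and $\lockLTS=0$; the version $\ver$ is a counter tracking the order of committed timestamps, so that two committed pairs share the same $\awsreg$-version iff they share the same top while a strictly larger committed timestamp corresponds to a strictly larger version (this plays the role of \cref{awshard:4,awshard:5}); every in-progress $\ABD$ write is matched to a $\awsreg$ write at the program counter dictated by whether its value has committed, with its pending input recorded in $\Tmaparg$; and every in-progress $\ABD$ read $\tid$ is matched to a $\awsreg$ read at $\rlisten$ with $\Tmapstart(\tid)$ equal to the version at its invocation and with $\Tmapvalset(\tid)$ containing every value the $\ABD$ read could still return---precisely the $\obsbetween$-style set of values committed between the read's invocation and the present. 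Invocations and responses are matched by the identical $\Inv\Res(\Reg)$ action (followed, for invocations, by the internal \readinit/\writeinit\ step), and every internal $\ABD$ step that does \emph{not} newly commit a value---\readquery, \readupdate, \writequery, \writeupdate, \serverqueryack, the quorum-detections \readend/\writeend, and a \serverupdateack\ that raises no pair to a majority---is matched by the empty $\awsreg$-sequence, checking that $\fsim$ is preserved.

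Third comes the heart of the matching: a \serverupdateack\ that pushes one or more pairs to a majority for the first time. I would process the newly committed timestamps in increasing order; by invariant (i) each corresponds to a unique still-pending write (its originating write cannot have completed, as completion would have already committed it), whose matched $\awsreg$ write I keep at $\wwait$. For a newly committed value exceeding the running top I perform a \writetop\ (raising the register and incrementing $\ver$) followed by a \readloop\ for every active reader, realizing ``a write distributing its newly written value''; for a newly committed value below the current top I use the rollback path (\writerollback, reader \readloop s, then \writerollforward), which momentarily stores the value at a \emph{lower} version so that only readers whose $\Tmapstart$ is small enough observe it, realizing ``a read distributing its decided read value''. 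A read's \readres\ is matched by first letting its $\awsreg$ read \readpick\ its return value: invariant (iii) guarantees that value is committed, so by construction it was loaded into $\Tmapvalset(\tid)$ when it committed (or was already present from \readinit), and invariant (iv) together with the $\ver\geq\Tmapstart$ guard of \readloop\ guarantees the load was actually permitted---the exact analogue of \cref{claim:listening_reads} and \cref{thm:lazyisgood}.

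The main obstacle I anticipate is twofold, and both halves live in the third step. First, a single \serverupdateack\ can lift \emph{several} timestamped values across the majority threshold simultaneously (one process raising its local value to $\ts$ can complete the majority for every smaller committed timestamp at once), so a single $\ABD$ transition must be simulated by a carefully ordered \emph{batch} of $\awsreg$ stores with reader loads interleaved between them, and I must show $\fsim$---in particular the $\Tmapvalset$ and version invariants---is restored only at the end of the batch. Second, and most delicate, is proving that the $\awsreg$ reader can always \readpick\ the $\ABD$ read's return value even along the rollback path: this forces the version bookkeeping to be chosen so that the quorum-intersection bound (iv) (``the returned timestamp is no older than the committed top at invocation'') lines up exactly with $\awsreg$'s guarantee that a reader started at the current top version never needs an overwritten value, mirroring the case split on $\readtidset'$ in \cref{claim:write_read_seq}. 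Verifying that the originating write is pending and parked at $\wwait$ at each commit, and that completed writes never need to act, are the remaining bookkeeping lemmas.
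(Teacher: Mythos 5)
Your proposal is correct and follows essentially the same route as the paper's proof: a forward simulation into $\dreg$ driven by a monotone version function over committed timestamped values, with a newly committed maximum simulated by \writetop\ followed by reader loads, a newly committed non-maximum simulated by the \writerollback/\writerollforward\ path, and the reader's \readpick\ justified by the write-back quorum plus quorum intersection (the paper's conditions \cref{abdaws:3b}--\cref{abdaws:3c6} are exactly your invariants). The one genuine divergence is your definition of ``committed'' as a majority of processes having local value at least $\tup{\val,\ts}$, rather than the paper's $\tshist(\msgmap)$, which declares a pair committed only when \emph{its own} update message has accumulated a majority of acknowledgments. Under the paper's definition a single \serverupdateack\ adds at most one new pair to $\tshist$ (only one message's ack set changes), so the case analysis never has to process a batch; your definition lets one local-value increase push many smaller pairs over the threshold simultaneously, which is precisely the obstacle you flag as the hardest part of your step three. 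Your increasing-order batch processing does work, but adopting the per-message commitment criterion dissolves that difficulty entirely, at the cost of needing the paper's invariant \cref{abdaws:1d} to relate pairs floating in the state that are not yet in $\tshist$ to their still-pending originating writes. The remaining differences (working over the execution recorder rather than directly on states, and deferring \readpick\ from the commit point to the \readres\ transition) are cosmetic and do not affect soundness.
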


\begin{proof}

	First, we define the following notations:
	\begin{itemize}
		\item Define $\fsim_\pc\subseteq\Pc_\ABD\times \Pc_\aws$ by
		\begin{equation*}
			\fsim_\pc=\set{\tup{\main,\main},\tup{\rbegin,\rbegin},\tup{\rquery,\rlisten},\tup{\rupdate,\rlisten},\tup{\rupdate,\rend},\tup{\rend,\rend},\tup{\wbegin,\wbegin},\tup{\wquery,\wwait},\tup{\wupdate,\wwait},\tup{\wupdate,\wend},\tup{\wend,\wend}}
		\end{equation*}
		
		\item Given a mapping $\msgmap:\Msg\pfn\powerset{\Acks}$ define
		\begin{equation*}
			\tshist(\msgmap)=\set{\tup{\val',\ts'}\st \exists \msg=\update(\_,\_,\tup{\val',\ts'})\in\dom\msgmap\ldotp\size{\msgmap(\msg)}>\frac{\size{\Tid}}{2}}\cup\set{\tup{0,\tup{0,0}}}
		\end{equation*}
	\end{itemize}

	The simulation relation is defined by\\
	$\tup{q,\abs q}=\tup{\tup{\msgmap,\Tmapid,\Tmapval,\Tmappc,\Tmaparg},\tup{\val,\ver,\lockLTS,\abs\Tmapid,\abs\Tmappc,\Tmapvalset,\Tmapstart,\abs\Tmaparg}}\in\fsim$ iff the following conditions hold:
	
	\noindent Conditions on $q$:
		\begin{enumerate}[label=(\condletter\arabic*),ref=\arabic*,leftmargin=1cm]
			
			\item\label{abdaws:1a} For every $\msg=\update(\_,\_,\tup{\val,\ts})\in\dom{\msgmap}$ and $\tid'\in\msgmap(\msg)$ we have $\Tmapval(\tid)\geq \tup{\val,\ts}$
			
			\item\label{abdaws:1b} For every $\quorum\subseteq\Tid$ such that $\size{\quorum}>\frac{\size{\Tid}}{2}$ we have $\max\set{\Tmapval(\tid)\st\tid\in\quorum}\geq\max \tshist(\msgmap)$
			
			\item For every $\tid\in\Tid$ we have:
			\begin{enumerate}[label=(R\theenumi\alph*),ref=\theenumi\alph*,leftmargin=1cm]
				\item\label{abdaws:1c1} If $\id\notin\Tmapid(\tid)$, or $\id=\Tmapid(\tid)[\last]$ with $\Tmappc(\tid)\in\set{\rbegin,\wbegin,\rquery,\wquery}$, then there exists no pair $\tup{\val,\ts}$ such that $\update(\tid,\id,\tup{\val,\ts})\in \dom{\msgmap}$
				
				\item\label{abdaws:1c2} If $\id=\Tmapid(\tid)[\last]$ and $\Tmappc(\tid)\in\set{\wupdate,\rupdate}$ then there exists a single pair, denoted $\tup{\val_\tid,\ts_\tid}$ in the following, such that $\msg=\update(\tid,\id,\tup{\val_\tid,\ts_\tid})\in \dom{\msgmap}$ and  $\val_\tid=\Tmaparg(\tid)$.
				If $\Tmappc(\tid)=\wupdate$ then $\ts_\tid=\tup{\_,\tid}$
				
			\end{enumerate}
			
			\item\label{abdaws:1d} For every pair $\tup{\val,\tup{\tm,\tid}}$ in the set
			$\set{\tup{\val',\ts'}\st \update(\_,\_,\tup{\val',\ts'})\in\dom\msgmap}\cup\\
			\set{ \tup{\val',\ts'}\st\msg=\query(\_,\_)\in\dom\msgmap\land \tup{\_,\tup{\val',\ts'}}\in\msgmap(\msg)}\cup\set{\Tmapval(\tid')\st\tid'\in\Tid}$\\
			such that $\tup{\val,\tup{\tm,\tid}}\notin\tshist(\msgmap)$ it holds that
			$\Tmappc(\tid)=\wupdate$ and $\tup{\val,\tup{\tm,\tid}}=\tup{\val_\tid,\ts_\tid}$

		\end{enumerate}
		
		\noindent Conditions on $\abs q$:
		\begin{enumerate}[resume*]
			\item\label{abdaws:2a}$\lockLTS=0$
			
			\item\label{abdaws:2b} $\ver\geq0$
			
			\item\label{abdaws:2c} For all $\tid\in\Tid$, $\Tmapstart(\tid)\leq\ver$
		\end{enumerate}

		\noindent Conditions on $\tup{q,\abs q}$: There exists a monotonically increasing function $\Fver:\set{0\til\ver}\to\Val\times\TS$ with $\Fver(0)=\tup{0,\tup{0,0}}$ such that the following holds:
		
		\begin{enumerate}[resume*]
			\item\label{abdaws:3a} $\Tmapid=\abs\Tmapid$
			
			\item\label{abdaws:3b} $\Fver(\ver)=\max \tshist(\msgmap)=\tup{\val,\_}$.  
			
			\item For every $\tid\in\Tid$:
			\begin{enumerate}[label=(\condletter\theenumi\alph*),ref=\theenumi\alph*,leftmargin=1cm]
				\item\label{abdaws:3c1} $\tup{\Tmappc(\tid),\abs\Tmappc(\tid)}\in\fsim_\pc$
				
				\item\label{abdaws:3c2} If $\abs\Tmappc(\tid)\in\set{\rend,\wbegin,\wwait}$ then $\Tmaparg(\tid)=\abs\Tmaparg(\tid)$.
				
				\item\label{abdaws:3c3} If $\abs\Tmappc(\tid)=\rlisten$ then\\ $\Tmapvalset(\tid) = \set{\val'\st \exists \ts'\ldotp \Fver(\Tmapstart(\tid))\leq\tup{\val',\ts'}\leq\Fver(\ver)\land \tup{\val',\ts'}\in \tshist(\msgmap)}$
				
				\item\label{abdaws:3c4} If $\Tmappc(\tid)\in\set{\rquery,\wquery}$ then denoting $\msg=\query(\tid,\id)$ where $\id=\Tmapid(\tid)[\last]$, for every $\quorum\subseteq\set{\tid'\st \tup{\tid',\_}\in\msgmap(\msg)}$ and $\quorum'\subseteq\Tid\setminus\quorum$ with $\size{\quorum\uplus \quorum'}>\frac{\size{\Tid}}{2}$ we have\\
				$\max\set{\tup{\val',\ts'}\st\exists\tid'\in\quorum\ldotp\tup{\tid',\tup{\val',\ts'}}\in \msgmap(\msg)}\cup\set{\Tmapval(\tid')\st\tid'\in\quorum'}\geq\Fver(\Tmapstart(\tid))$
				
				\item\label{abdaws:3c5} If $\Tmappc(\tid)\in\set{\rupdate,\wupdate}$ and $\tup{\val_\tid,\ts_\tid}\notin \tshist(\msgmap)$ then $\Fver(\Tmapstart(\tid))\leq\tup{\val_\tid,\ts_\tid}$ and $\abs\Tmappc(\tid)\in\set{\rlisten,\wwait}$.
				
				\item\label{abdaws:3c6} If $\Tmappc(\tid)\in\set{\rupdate,\wupdate}$ and $\tup{\val_\tid,\ts_\tid}\in \tshist(\msgmap)$ then  $\abs\Tmappc(\tid)\in\set{\rend,\wend}$.
			\end{enumerate}
		\end{enumerate}

	Clearly, for the initial states we have %
	$\tup{\ABD.\linit,\awsreg.\linit}\in\fsim$.
	
	Given states $\tup{q,\abs q}=\tup{\tup{\msgmap,\Tmapid,\Tmapval,\Tmappc,\Tmaparg},\tup{\val,\ver,\lockLTS,\abs\Tmapid,\abs\Tmappc,\Tmapvalset,\Tmapstart,\abs\Tmaparg}}\in\fsim$,
	a monotonic function $\Fver$ such that the above conditions hold,
	and a transition $\tup{\msgmap,\Tmapid,\Tmapval,\Tmappc,\Tmaparg} \asteplab{\act}{\ABD} q'=\tup{\msgmap',\_,\_,\_,\_}$, we present the following:
	\begin{enumerate}
		\item Proof that $q'$ upholds the necessary conditions.
		
		\item A sequence of transitions\\ $\tup{\val,\ver,\lockLTS,\abs\Tmapid,\abs\Tmappc,\Tmapvalset,\Tmapstart,\abs\Tmaparg}\bsteplab{\actsq}{\awsreg}{\abs q}'=\tup{\val',\ver',\_,\_,\_,\_,\_,\_}$ such that $\act\rst{\Inv\Res(\Reg)}=\actsq\rst{\Inv\Res(\Reg)}$, and a proof that ${\abs q}'$ upholds necessary conditions.
		
		\item A proof using a function $\Fver':\set{0\til\ver'}\to\Val\times\TS$ that the pair $\tup{q',{\abs q}'}$ upholds necessary conditions.
	\end{enumerate}
	
	Denote $\tshist=\tshist(\msgmap)$,$\tshist'=\tshist(\msgmap')$.
	
	When all values relevant for certain items in the list of conditions are unchanged between $\tup{q,{\abs q}}$ and $\tup{q',{\abs q}'}$, they trivially still hold, and we omit their justification.
	
	In particular, for all transition rules except possibly for \serverupdateack\ and \writeupdate\, we will have $\ver'=\ver$,$\val'=\val$ take $\Fver'=\Fver$, and it is easy to check that $\tshist'=\tshist$, so there is no need to justify \cref{abdaws:3b}.
	
	We also omit the justification of \cref{abdaws:3c1} as it is easily verifiable given $\tup{q',{\abs q}'}$,
	and justification for conditions that clearly hold vacuously.
	
	In the following we only mention conditions that require extra justification.
	
	We consider all cases for the transition rule used:
	
	\begin{itemize}
		\item \readinv: 
		
		Here $\act=\invact\tup{\regread,\bot,\tid,\id}$, $q'=\tup{\msgmap,\Tmapid[\tid \mapsto \Tmapid(\tid)\cdot\id], \Tmapval, \Tmappc[\tid \mapsto \rbegin],\Tmaparg}$,
		and we have preconditions $\Tmappc(\tid)=\main$ and $\id\notin \Tmapid(\tid)$.
		
		For \cref{abdaws:1c1}, $\id$ is required to satisfy it as $\Tmappc[\tid \mapsto \rbegin](\tid)=\rbegin$, and indeed we can deduce this from $\id\notin \Tmapid(\tid)$.
		
		We have that $\abs\Tmapid=\Tmapid$ which implies $\id\notin \abs\Tmapid(\tid)$, and $\abs\Tmappc(\tid)=\Tmappc(\tid)=\main$, and thus using \readinv\ we get \\
		$\tup{\val,\ver,\lockLTS,\abs\Tmapid,\abs\Tmappc,\Tmapvalset,\Tmapstart,\abs\Tmaparg} \asteplab{\act}{} \tup{\val,\ver,\lockLTS,\abs\Tmapid[\tid \mapsto \abs\Tmapid(\tid)\cdot\id], \abs\Tmappc[\tid \mapsto \rbegin],\Tmapvalset,\Tmapstart,\abs\Tmaparg}$.

		For \cref{abdaws:3a} we have that $\Tmapid=\abs\Tmapid$ and thus $\Tmapid[\tid \mapsto \Tmapid(\tid)\cdot\id]=\abs\Tmapid[\tid \mapsto \abs\Tmapid(\tid)\cdot\id]$.
		
		\item \readquery: 
		
		Here $\act=\objact\tup{\tid}$, $q'=\tup{ \msgmap[\msg\mapsto\emptyset],\Tmapid,\Tmapval, \Tmappc[\tid \mapsto \rquery], \Tmaparg}$,
		and we have preconditions $\Tmappc(\tid)=\rbegin$, $\id = \Tmapid(\tid)[\last]$, and $\msg = \query(\tid,\id)$.
		
		All conditions on $q$ are maintained under the change to $\msgmap$ and $\Tmappc$.
		
		We get that $\abs\Tmappc(\tid)=\rbegin$ and since $\lockLTS=0$ we can use the transition \readinit\ and get \\
		$\tup{\val,\ver,\lockLTS,\abs\Tmapid, \abs\Tmappc,\Tmapvalset,\Tmapstart,\abs\Tmaparg} \asteplab{\objact\tup{\tid}}{}\\
		\tup{\val,\ver,\lockLTS,\abs\Tmapid, \abs\Tmappc[\tid \mapsto \rlisten],\Tmapvalset[\tid\mapsto\set{\val}],\Tmapstart[\tid\mapsto\ver],\abs\Tmaparg}$.
		
		For \cref{abdaws:2c} we have that $\Tmapstart[\tid\mapsto\ver](\tid)=\ver$.
		
		Given that $\Tmappc[\tid \mapsto \rquery](\tid)=\rquery$
		and $\abs\Tmappc[\tid \mapsto \rlisten](\tid)=\rlisten$ it remains to show:
		\begin{itemize}
			\item \cref{abdaws:3c3}: we need to show
			
			$\Tmapvalset[\tid\mapsto\set{\val}](\tid) = \set{\val'\st \exists \ts'\ldotp \Fver(\Tmapstart[\tid\mapsto\ver](\tid))\leq\tup{\val',\ts'}\leq\Fver(\ver)\land \tup{\val',\ts'}\in \tshist'}$:
			
			As $\tshist'=\tshist$, simplifying both sides we get
			$\set{\val}=\set{\val'\st  \tup{\val',\ts'}=\Fver(\ver)\in \tshist}$ which follows from \cref{abdaws:3b}.
			
			\item \cref{abdaws:3c4}:
			
			Indeed, taking $\quorum\subseteq\set{\tid'\st \tup{\tid',\_}\in \msgmap[\msg\mapsto\emptyset](\msg)}=\set{\tid'\st \tup{\tid',\_}\in \emptyset}$ we must have $\quorum=\emptyset$, and it remains to show that for any $\quorum'\subseteq\Tid$ with $\size{\quorum'}>\frac{\size{\Tid}}{2}$.
			we have
			$\max\set{\Tmapval(\tid')\st\tid'\in\quorum'}\geq\Fver(\Tmapstart(\tid))$.
			Indeed, we get:
			\begin{equation*}
				\max\set{\Tmapval(\tid')\st\tid'\in\quorum'}\geq\max\tshist=\Fver(\ver)\geq\Fver(\Tmapstart(\tid))
			\end{equation*}
			Where we used \cref{abdaws:1b},\cref{abdaws:3b}, \cref{abdaws:2c} and the monotonicity of $\Fver$.
		\end{itemize}
		
		\item \readupdate: 
		
		Here $\act=\objact\tup{\tid}$, $q'=\tup{\msgmap[\msg\mapsto\emptyset],\Tmapid,\Tmapval,\Tmappc[\tid \mapsto \rupdate],\Tmaparg[\tid\mapsto\valmax]}$,
		and we have preconditions
		$\Tmappc(\tid)=\rquery$, $\id = \Tmapid(\tid)[\last]$,
		$\quorum\subseteq\set{\tid\st\tup{\tid,\_}\in\msgmap(\query(\tid,\id))}$,  $\size{\quorum}>\frac{\size{\Tid}}{2}$,
		$\tup{\valmax,\tsmax}=\max\set{\tup{\val',\ts'}\st\exists\tid'\in\quorum\ldotp\tup{\tid',\tup{\val',\ts'}}\in \msgmap(\query(\tid,\id))}$,
		
		and $\msg = \update(\tid,\id,\tup{\valmax,\tsmax})$.
		
		Note that by definition, and using the fact that $\frac{\size{\Tid}}{2}>0$ we get $\tshist'=\tshist(\msgmap[\msg\mapsto\emptyset])=\tshist(\msgmap)=\tshist$, and so \cref{abdaws:1b} still holds.
		
		Using \cref{abdaws:1c1} and the above preconditions, we get \cref{abdaws:1c2} and specifically
		$\tup{\valmax,\tsmax}=\tup{\val_\tid,\ts_\tid}$,
		noticing that indeed $\val_\tid=\valmax=\Tmaparg[\tid\mapsto\valmax](\tid)$.
		
		For \cref{abdaws:1d} note that from the above there exists some $\tid'$ such that $\tup{\tid',\tup{\valmax,\tsmax}}\in \msgmap(\query(\tid,\id))$, and so the set mentioned in \cref{abdaws:1d} is unchanged by the addition of $\update(\tid,\id,\tup{\valmax,\tsmax})$.
		For every element in that set, there exists a thread $\tid'$ such that $\Tmappc(\tid')=\wupdate\neq\rquery$, and in particular $\tid'\neq\tid$ and so the condition for each element continues to hold.
		
		Now, as $\Tmappc(\tid)=\rquery$, using \cref{abdaws:3c4} taking the above $\quorum$ and $\quorum'=\emptyset$ we get\\ $\max\set{\tup{\val',\ts'}\st\exists\tid'\in\quorum\ldotp\tup{\tid',\tup{\val',\ts'}}\in \msgmap(\query(\tid,\id))}\geq\Fver(\Tmapstart(\tid))$.\\
		That is, $\tup{\val_\tid,\ts_\tid}\geq\Fver(\Tmapstart(\tid))$.
		
		Consider two cases for the post-state $q'$:
		\begin{itemize}
			\item If $\tup{\val_\tid,\ts_\tid}\notin\tshist'$:
			
			We perform an empty sequence of transitions.
			
			We have that $\Tmappc[\tid \mapsto \rupdate](\tid)=\rupdate$, $\tup{\val_\tid,\ts_\tid}\notin\tshist'$, and from $\Tmappc(\tid)=\rquery$ we get $\abs\Tmappc(\tid)=\rlisten$.
			Togather with the fact that $\tup{\val_\tid,\ts_\tid}\geq\Fver(\Tmapstart(\tid))$ we get \cref{abdaws:3c5}.
			
			The relevant values for \cref{abdaws:3c3} are unchanged as $\tshist'=\tshist$.

			\item If $\tup{\val_\tid,\ts_\tid}\in\tshist'$:
			
			From the fact that $\Tmappc(\tid)=\rquery$ we get that $\abs\Tmappc(\tid)=\rlisten$ and thus from \cref{abdaws:3c3} we have $\Tmapvalset(\tid) = \set{\val'\st \exists \ts'\ldotp \Fver(\Tmapstart(\tid))\leq\tup{\val',\ts'}\leq\Fver(\ver)\land \tup{\val',\ts'}\in \tshist}$.
			
			We claim that $\valmax\in\Tmapvalset(\tid)$. As $\tup{\valmax,\tsmax}=\tup{\val_\tid,\ts_\tid}\in\tshist'=\tshist$, it is sufficient to show that
			$\Fver(\Tmapstart(\tid))\leq\tup{\val_\tid,\ts_\tid}\leq\Fver(\ver)$.
			Indeed, we have shown that  $\tup{\val_\tid,\ts_\tid}\geq\Fver(\Tmapstart(\tid))$,
			and the fact that $\tup{\val_\tid,\ts_\tid}\leq\Fver(\ver)$ follows from $\Fver(\ver)=\max \tshist$.
			Overall, we get that $\abs\Tmappc(\tid)=\rlisten$ and $\valmax\in\Tmapvalset(\tid)$,
			and using the rule \readpick\ , we can perform a transition\\
			$\tup{\val,\ver,\lockLTS,\abs\Tmapid,\abs\Tmappc,\Tmapvalset,\Tmapstart,\abs\Tmaparg} \asteplab{\objact\tup{\tid}}{}\\ \tup{\val,\ver,\lockLTS,\abs\Tmapid,\abs\Tmappc[\tid \mapsto \rend],\Tmapvalset,\Tmapstart,\abs\Tmaparg[\tid\mapsto \valmax]}$.
			
			For \cref{abdaws:3c6} we have $\Tmappc[\tid \mapsto \rupdate](\tid)=\rupdate$, $\tup{\val_\tid,\ts_\tid}\in \tshist'$,
			and $\abs\Tmappc[\tid \mapsto \rend](\tid)=\rend$.
			
			For \cref{abdaws:3c2} we have $\Tmaparg[\tid\mapsto\valmax](\tid)=\abs\Tmaparg[\tid\mapsto\valmax](\tid)=\valmax$.
		\end{itemize}
		
		\item \readend: 
		
		Here $\act=\objact\tup{\tid}$, $q'=\tup{\msgmap,\Tmapid,\Tmapval,\Tmappc[\tid \mapsto \rend],\Tmaparg}$,
		and we have preconditions
		$\Tmappc(\tid)=\rupdate$, $\id = \Tmapid(\tid)[\last]$,
		and $\size{\msgmap(\update(\tid,\id,\_)}>\frac{\size{\Tid}}{2}$.
		
		We perform an empty sequence of transitions.
		
		The precondition $\size{\msgmap(\update(\tid,\id,\_)}>\frac{\size{\Tid}}{2}$ guarantees by definition that $\tup{\val_\tid,\ts_\tid}\in\tshist$, and together with $\Tmappc(\tid)=\rupdate$ we get that $\abs\Tmappc(\tid)=\rend$, and all conditions hold.
		
		\item \readres: 
		
		Here $\act=\resact\tup{\regread,\Tmaparg(\tid),\tid,\id}$, $q'=\tup{\msgmap,\Tmapid,\Tmapval,\Tmappc[\tid \mapsto \main],\Tmaparg[\tid\mapsto 0]}$,
		and we have preconditions $\Tmappc(\tid)=\rend$ and $\id = \Tmapid(\tid)[\last]$.
		
		We have that $\abs\Tmapid=\Tmapid$ which implies $\id = \abs\Tmapid(\tid)[\last]$,
		$\Tmappc(\tid)=\rend$ implies $\abs\Tmappc(\tid)=\rend$,
		and thus using \readres\ we get \\
		$\tup{\val,\ver,\lockLTS,\abs\Tmapid,\abs\Tmappc,\Tmapvalset,\Tmapstart,\abs\Tmaparg} \asteplab{\act}{}\\ \tup{\val,\ver,\lockLTS,\abs\Tmapid,\abs\Tmappc[\tid \mapsto \main],\Tmapvalset[\tid\mapsto\emptyset],\Tmapstart[\tid\mapsto 0],\abs\Tmaparg[\tid\mapsto 0]}$.
		
		For \cref{abdaws:2c} we use \cref{abdaws:2b} and get $\Tmapstart[\tid\mapsto0](\tid)=0\leq\ver$.
		
		\item \writeinv: 
		
		Here $\act=\invact\tup{\regwrite,\valin,\tid,\id}$, $q'=\tup{\msgmap,\Tmapid[\tid \mapsto \Tmapid(\tid)\cdot\id], \Tmapval, \Tmappc[\tid \mapsto \wbegin],\Tmaparg[\tid\mapsto\valin]}$,
		and we have preconditions $\Tmappc(\tid)=\main$ and $\id\notin \Tmapid(\tid)$.
		
		\cref{abdaws:1c1} is justified exactly like for a \readinv\ transition.
		
		We have that $\abs\Tmapid=\Tmapid$ which implies $\id\notin \abs\Tmapid(\tid)$, and $\abs\Tmappc(\tid)=\Tmappc(\tid)=\main$, and thus using \writeinv\ we get \\
		$\tup{\val,\ver,\lockLTS,\abs\Tmapid,\abs\Tmappc,\Tmapvalset,\Tmapstart,\abs\Tmaparg} \asteplab{\act}{}\\ \tup{\val,\ver,\lockLTS,\Tmapid[\tid \mapsto \Tmapid(\tid)\cdot\id], \Tmappc[\tid \mapsto\wbegin], \Tmapvalset,\Tmapstart,\Tmaparg[\tid\mapsto\valin]}$
		
		\cref{abdaws:3a} is justified like for a \readinv\ transition.
		For \cref{abdaws:3c2} note that $\Tmaparg[\tid\mapsto\valin](\tid)=\abs\Tmaparg[\tid\mapsto\valin](\tid)=\valin$.

		\item \writequery:
		
		Here $\act=\objact\tup{\tid}$, $q'=\tup{ \msgmap[\msg\mapsto\emptyset],\Tmapid,\Tmapval, \Tmappc[\tid \mapsto \wquery], \Tmaparg}$,
		and we have preconditions $\Tmappc(\tid)=\wbegin$, $\id = \Tmapid(\tid)[\last]$, and $\msg = \query(\tid,\id)$.
		
		All conditions on $q$ are maintained under the change to $\msgmap$ and $\Tmappc$.
		
		We get that $\abs\Tmappc(\tid)=\wbegin$ and since $\lockLTS=0$ we can use the transition \writeinit\ and get \\
		$\tup{\val,\ver,\lockLTS,\abs\Tmapid,\abs\Tmappc,\Tmapvalset,\Tmapstart,\abs\Tmaparg} \asteplab{\objact\tup{\tid}}{} \tup{\val,\ver,\lockLTS,\abs\Tmapid,\abs\Tmappc[\tid \mapsto \wwait],\Tmapvalset,\Tmapstart[\tid\mapsto\ver],\abs\Tmaparg}$.
		
		For \cref{abdaws:2c} we have that $\Tmapstart[\tid\mapsto\ver](\tid)=\ver$.
		
		Given that $\Tmappc[\tid \mapsto \wquery](\tid)=\wquery$,
		$\abs\Tmappc[\tid \mapsto \wwait](\tid)=\wwait$, and $\Tmaparg(\tid)=\abs\Tmaparg(\tid)$, we get \cref{abdaws:3c2} and it remains to show that \cref{abdaws:3c4} holds. The proof of this follows exactly like the \readquery\ case.

		\item \writeupdate:
		
		Here $\act=\objact\tup{\tid}$, $q'=\tup{\msgmap[\msg\mapsto\emptyset],\Tmapid,\Tmapval,\Tmappc[\tid \mapsto \wupdate],\Tmaparg}$,
		and we have preconditions
		$\Tmappc(\tid)=\wquery$, $\id = \Tmapid(\tid)[\last]$,
		$\quorum\subseteq\set{\tid\st\tup{\tid,\_}\in\msgmap(\query(\tid,\id))}$,  $\size{\quorum}>\frac{\size{\Tid}}{2}$,\\
		$\tup{\tm,\_}=\max\set{\ts'\st\exists\tid'\in\quorum,\val'\ldotp\tup{\tid',\tup{\val',\ts'}}\in \msgmap(\query(\tid,\id))}$,\\
		and $\msg = \update(\tid,\id,\tup{\Tmaparg(\tid),\tup{\tm+1,\tid}})$.
		
		Note that by definition, and using the fact that $\frac{\size{\Tid}}{2}>0$ we get $\tshist'=\tshist(\msgmap[\msg\mapsto\emptyset])=\tshist(\msgmap)=\tshist$, and so \cref{abdaws:1b} still holds.
		
		Using \cref{abdaws:1c1} we get that after the update $\msgmap[\msg\mapsto\emptyset]$ \cref{abdaws:1c2} holds and specifically $\tup{\val_\tid,\ts_\tid}=\tup{\Tmaparg(\tid),\tup{\tm+1,\tid}}$ in the post-state $q'$.
		
		For \cref{abdaws:1d} the pair $\tup{\Tmaparg(\tid),\tup{\tm+1,\tid}}$ is possibly %
		added to the set mentioned in the condition, and satisfies the condition since $\Tmappc[\tid \mapsto \wupdate](\tid)=\wupdate$ and $\tup{\val_\tid,\ts_\tid}=\tup{\Tmaparg(\tid),\tup{\tm+1,\tid}}$.
		
		From the preconditions we get that there exists some $\val'',\tid''$ such that
		$\tup{\val'',\tup{\tm,\tid''}}=\max\set{\tup{\val',\ts'}\st\exists\tid'\in\quorum\ldotp\tup{\tid',\tup{\val',\ts'}}\in \msgmap(\query(\tid,\id))}$.
		
		As $\Tmappc(\tid)=\wquery$, using \cref{abdaws:3c4} taking $\quorum$ and $\quorum'=\emptyset$ we get\\ $\tup{\val'',\tup{\tm,\tid''}}=\max\set{\tup{\val',\ts'}\st\exists\tid'\in\quorum\ldotp\tup{\tid',\tup{\val',\ts'}}\in \msgmap(\query(\tid,\id))}\geq\Fver(\Tmapstart(\tid))$.
		Overall, $\tup{\val_\tid,\ts_\tid}=\tup{\Tmaparg(\tid),\tup{\tm+1,\tid}}>\tup{\val'',\tup{\tm,\tid''}}\geq\Fver(\Tmapstart(\tid))$.
		
		Also note that using \cref{abdaws:3c1}, $\Tmappc(\tid)=\wquery$ implies that $\abs\Tmappc(\tid)=\wwait$ thus from \cref{abdaws:3c2} $\Tmaparg(\tid)=\abs\Tmaparg(\tid)$.
		Overall we get $\val_\tid=\abs\Tmaparg(\tid)$.
		
		Now we consider three cases:
		\begin{itemize}
			\item If $\tup{\val_\tid,\ts_\tid}\notin\tshist$:
			
			We perform an empty sequence of transitions.
			
			Using what we proved above, we get that \cref{abdaws:3c5} and \cref{abdaws:3c2} hold.
			
			\item If $\tup{\val_{\tid},\ts_{\tid}}=\max\tshist$
			
			Since $\abs\Tmappc({\tid})=\wwait$ and $\lockLTS=0$, we can use the rule \writetop\ and get a transition\\
			$\tup{\val,\ver,\lockLTS,\abs\Tmapid,\abs\Tmappc,\Tmapvalset,\Tmapstart,\abs\Tmaparg} \asteplab{\objact\tup{\tid}}{}\\ \tup{\abs\Tmaparg(\tid),\ver+1,\lockLTS,\abs\Tmapid,\abs\Tmappc[\tid \mapsto \wend],\Tmapvalset,\Tmapstart,\abs\Tmaparg}$
			
			For \cref{abdaws:2b} and \cref{abdaws:2c} we can use $\ver<\ver+1$.
			
			To show the conditions on $\tup{q',{\abs q}'}$ hold we define $\Fver':\set{0\til \ver+1}\to\Val\times\TS$ by:
			\begin{equation*}
				\Fver'(\tm)=\begin{cases}
					\Fver(\tm) & \tm\leq\ver \\
					\tup{\val_{\tid},\ts_{\tid}} & \tm=\ver+1
				\end{cases}
			\end{equation*}
			This function is monotonic since $\Fver'(\ver+1)=\tup{\val_{\tid},\ts_{\tid}}=\max\tshist=\Fver(\ver)=\Fver'(\ver)$ and $\Fver$ is monotonic.
			
			For \cref{abdaws:3b} we have $\Fver'(\ver+1)=\max\tshist=\tup{\abs\Tmaparg(\tid),\_}$, where for the last equality we use $\val_{\tid}=\abs\Tmaparg(\tid)$, proved above.

			For conditions on each thread $\tid''\in\Tid$ (aside from \cref{abdaws:3c1} which is easily verified):
			Conditions dependent only on $\Fver'(\Tmapstart(\tid''))$ hold as they did before the transition since using \cref{abdaws:2c} we have $\Tmapstart(\tid'')\leq\ver<\ver+1$ and thus $\Fver'(\Tmapstart(\tid''))=\Fver(\Tmapstart(\tid''))$.
			For \cref{abdaws:3c3} we additionally use $\Fver'(\ver+1)=\Fver(\ver)$.
			\cref{abdaws:3c6} clearly holds for $\tid''=\tid$.
			All other values relevant for conditions are unchanged.
			
			\item If $\tup{\val_{\tid},\ts_{\tid}}\in\tshist$ and $\tup{\val_{\tid},\ts_{\tid}}<\max\tshist$
			
			As we have shown above $\Fver(\Tmapstart(\tid))\leq\tup{\val_{\tid},\ts_{\tid}}$, and using $\max\tshist=\Fver(\ver)$ we get $\Fver(\Tmapstart(\tid))<\Fver(\ver)$ and thus from monotonicity of $\Fver$ (in the contra-positive) we get $\Tmapstart(\tid)<\ver$.
			
			Together with the fact that $\abs\Tmappc({\tid})=\wwait$ and $\lockLTS=0$, we can use the rule \writerollback\ and get a transition\\
			$\tup{\val,\ver,\lockLTS,\abs\Tmapid,\abs\Tmappc,\Tmapvalset,\Tmapstart,\abs\Tmaparg} \asteplab{\objact\tup{\tid}}{}\\
			\tup{\abs\Tmaparg(\tid),\ver-1,1,\abs\Tmapid,\abs\Tmappc[\tid \mapsto \wrollback],\Tmapvalset,\Tmapstart,\abs\Tmaparg[\tid\mapsto\val]}$
			
			Then, Since $\abs\Tmappc[\tid \mapsto \wrollback](\tid)=\wrollback$ we can perform a transition using the rule \writerollforward\ and get a transition\\
			$\tup{\abs\Tmaparg(\tid),\ver-1,1,\abs\Tmapid,\abs\Tmappc[\tid \mapsto \wrollback],\Tmapvalset,\Tmapstart,\abs\Tmaparg[\tid\mapsto\val]}\\
			\asteplab{\objact\tup{\tid}}{}\\
			\tup{\abs\Tmaparg[\tid\mapsto\val](\tid),\ver-1+1,0,\abs\Tmapid,\abs\Tmappc[\tid \mapsto \wend],\Tmapvalset,\Tmapstart,\abs\Tmaparg[\tid\mapsto\val]}$.
			
			Simplifying this we get a state
			$\tup{\val,\ver,0,\abs\Tmapid,\abs\Tmappc[\tid \mapsto \wend],\Tmapvalset,\Tmapstart,\abs\Tmaparg[\tid\mapsto\val]}$
			
			For \cref{abdaws:2a} note that $\lockLTS=0$ in the post-state ${\abs q}'$.
			
			Additionally, \cref{abdaws:3c1} and \cref{abdaws:3c6} clearly hold for $\tid$.
		\end{itemize}

		\item \writeend:
		
		Here $\act=\objact\tup{\tid}$, $q'=\tup{\msgmap,\Tmapid,\Tmapval,\Tmappc[\tid \mapsto \wend],\Tmaparg}$,
		and we have preconditions
		$\Tmappc(\tid)=\wupdate$, $\id = \Tmapid(\tid)[\last]$,
		and $\size{\msgmap(\update(\tid,\id,\_)}>\frac{\size{\Tid}}{2}$.
		
		We perform an empty sequence of transitions.
		
		The precondition  $\size{\msgmap(\update(\tid,\id,\_)}>\frac{\size{\Tid}}{2}$ guarantees by definition that $\tup{\val_\tid,\ts_\tid}\in\tshist$, and together with $\Tmappc(\tid)=\wupdate$ we get that $\abs\Tmappc(\tid)=\wend$, and all conditions hold.
		
		\item \writeres: 
		
		Here $\act=\resact\tup{\regwrite,\bot,\tid,\id}$, $q'=\tup{\msgmap,\Tmapid,\Tmapval,\Tmappc[\tid \mapsto \main],\Tmaparg[\tid\mapsto 0]}$,
		and we have preconditions $\Tmappc(\tid)=\wend$ and $\id = \Tmapid(\tid)[\last]$.
		
		We have that $\abs\Tmapid=\Tmapid$ which implies $\id = \abs\Tmapid(\tid)[\last]$,
		$\Tmappc(\tid)=\wend$ implies $\abs\Tmappc(\tid)=\wend$,
		and thus using \writeres\ we get \\
		$\tup{\val,\ver,\lockLTS,\abs\Tmapid,\abs\Tmappc,\Tmapvalset,\Tmapstart,\abs\Tmaparg} \asteplab{\act}{}\\ \tup{\val,\ver,\lockLTS,\abs\Tmapid,\abs\Tmappc[\tid \mapsto \main],\Tmapvalset[\tid\mapsto\emptyset],\Tmapstart[\tid\mapsto 0],\abs\Tmaparg[\tid\mapsto 0]}$
		
		For \cref{abdaws:2c} we use \cref{abdaws:2b} and get $\Tmapstart[\tid\mapsto0](\tid)=0\leq\ver$.
		
		\item \serverqueryack: 
		
		Here $\act=\objact\tup{\tid}$ and $q'=\tup{\msgmap',\Tmapid,\Tmapval,\Tmappc,\Tmaparg}$ where
		$\msg = \query(\_,\_)\in\dom{\msgmap}$,\\
		$\forall \tup{\val',\ts'}. \tup{\tid,\tup{\val',\ts'}}\notin\msgmap(\msg)$, and
		$\msgmap'=\msgmap[\msg\mapsto\msgmap(\msg)\uplus\set{\tup{\tid,\Tmapval(\tid)}}]$.
		
		For \cref{abdaws:1d} we note that the set mentioned in the condition is unchanged as $\Tmapval(\tid)$ is already in it, and the condition for each element in the set still holds as $\Tmappc$ is unchanged.
		
		We perform an empty sequence of transitions
		
		Let $\tid'',\id''$ be the identifiers such that $\msg = \query(\tid'',\id'')$.
		All conditions on $\tup{q',{\abs q}'}$ for all threads except $\tid''$ are unaffected.
		We show that if $\id''=\Tmapid(\tid'')[\last]$ and $\Tmappc(\tid'')\in\set{\rquery,\wquery}$, then \cref{abdaws:3c4} still holds for $\tid''$.
		
		Indeed, Let $\quorum\subseteq\set{\tid'\st \tup{\tid',\_}\in\msgmap'(\msg)}$ and $\quorum'\subseteq\Tid\setminus\quorum$ be sets such that $\size{\quorum\uplus \quorum'}>\frac{\size{\Tid}}{2}$.
		If $\tid\notin\quorum$ we have $\quorum\subseteq\set{\tid'\st \tup{\tid',\_}\in\msgmap(\msg)}$ and from \cref{abdaws:3c4} holding in the pre-states $\tup{q,\abs q}$ we get
		$\max\set{\tup{\val',\ts'}\st\exists\tid'\in\quorum\ldotp\tup{\tid',\tup{\val',\ts'}}\in \msgmap(\msg)}\cup\set{\Tmapval(\tid')\st\tid'\in\quorum'}\geq\Fver(\Tmapstart(\tid))$.
		As $\msgmap'(\msg)=\tup{\tid,\Tmapval(\tid)}\uplus\msgmap(\msg)$ and $\tid\notin\quorum$ we get\\
		$\set{\tup{\val',\ts'}\st\exists\tid'\in\quorum\ldotp\tup{\tid',\tup{\val',\ts'}}\in \msgmap(\msg)}=\set{\tup{\val',\ts'}\st\exists\tid'\in\quorum\ldotp\tup{\tid',\tup{\val',\ts'}}\in \msgmap'(\msg)}$
		and overall $\max\set{\tup{\val',\ts'}\st\exists\tid'\in\quorum\ldotp\tup{\tid',\tup{\val',\ts'}}\in \msgmap'(\msg)}\cup\set{\Tmapval(\tid')\st\tid'\in\quorum'}\geq\Fver(\Tmapstart(\tid))$.
		
		Otherwise, consider $\hat{\quorum}=\quorum\setminus\set{\tid},\hat{\quorum'}=\quorum'\cup\set{\tid}$.
		As $\msgmap'=\msgmap[\msg\mapsto\msgmap(\msg)\uplus\set{\tup{\tid,\Tmapval(\tid)}}]$,
		we get that $\hat{\quorum}\subseteq\set{\tid'\st \tup{\tid',\_}\in\msgmap(\msg)}$.
		Additionally, $\hat{\quorum'}\subseteq\Tid\setminus\hat{\quorum}$ and $\size{\hat{\quorum}\uplus \hat{\quorum'}}=\size{\quorum\uplus \quorum'}>\frac{\size{\Tid}}{2}$.
		Thus from the assumption that \cref{abdaws:3c4} holds in the pre-states we get\\
		$\max\set{\tup{\val',\ts'}\st\exists\tid'\in\hat{\quorum}\ldotp\tup{\tid',\tup{\val',\ts'}}\in \msgmap(\msg)}\cup\set{\Tmapval(\tid')\st\tid'\in\hat{\quorum'}}\geq\Fver(\Tmapstart(\tid))$.
		Using the fact that $\msgmap'(\msg)=\tup{\tid,\Tmapval(\tid)}\uplus\msgmap(\msg)$, $\tid\notin\hat{\quorum}$, and $\tid\in\hat{\quorum'}$ we have:
		\begin{alignat*}{2}
			&\set{\tup{\val',\ts'}\st\exists\tid'\in\hat{\quorum}\ldotp\tup{\tid',\tup{\val',\ts'}}\in \msgmap(\msg)}&\cup\set{\Tmapval(\tid')\st\tid'\in\hat{\quorum'}} \\
			=&\set{\tup{\val',\ts'}\st\exists\tid'\in\hat{\quorum}\ldotp\tup{\tid',\tup{\val',\ts'}}\in \msgmap'(\msg)}\cup\set{\Tmapval(\tid)}&\cup\set{\Tmapval(\tid')\st\tid'\in\quorum'}\\
			=&\set{\tup{\val',\ts'}\st\exists\tid'\in\quorum\ldotp\tup{\tid',\tup{\val',\ts'}}\in \msgmap'(\msg)}&\cup\set{\Tmapval(\tid')\st\tid'\in\quorum'}
		\end{alignat*}
		And we get again the desired inequality:\\
		$\max\set{\tup{\val',\ts'}\st\exists\tid'\in\quorum\ldotp\tup{\tid',\tup{\val',\ts'}}\in \msgmap'(\msg)}\cup\set{\Tmapval(\tid')\st\tid'\in\quorum'}\geq\Fver(\Tmapstart(\tid))$.
		
		\item \serverupdateack: 
		
		Here $\act=\objact\tup{\tid}$ and $q'=\tup{\msgmap',\Tmapid,\Tmapval',\Tmappc,\Tmaparg}$ where
		$\msg = \update(\_,\_,\tup{\val',\ts'})\in\dom{\msgmap}$, $\msgmap'=\msgmap[\msg\mapsto\msgmap(\msg)\uplus\set{\tid}]$, and $\Tmapval' = \Tmapval[\tid\mapsto\max\set{\Tmapval(\tid),\tup{\val',\ts'}}]$.
		
		For \cref{abdaws:1a} we have that $\tid\in \msgmap'(\msg)$ and as required $\Tmapval'(\tid)\geq\tup{\val',\ts'}$, and other needed inequalities are preserved when increasing $\Tmapval'(\tid)$.
		
		For \cref{abdaws:1d} we note that the set mentioned in the condition is unchanged as both $\Tmapval(\tid)$ and $\tup{\val',\ts'}$ are already in it, and the condition for each element in the set still holds as $\Tmappc$ is unchanged.
		\cref{abdaws:1b} will be justified based on the cases considered below.
		
		Specifically, by definition of $\tshist$ we either have $\tshist'=\tshist$ or $\tshist'=\tshist\uplus\set{\tup{\val',\ts'}}$ . We consider both cases:
		\begin{itemize}
			\item If $\tshist'=\tshist$:
			
			For \cref{abdaws:1b},  $\Tmapval'(\tid)\geq\Tmapval(\tid)$ and so given any $\quorum$, $\max\set{\Tmapval'(\tid')\st\tid'\in\quorum}$ can only increase, while $\max\tshist'=\max\tshist$ is unchanged.
			
			We perform an empty sequence of transitions
			
			Note that the change to $\msgmap'(\msg)$ where $\msg$ is an update message can only possibly effect $\tshist'$, but here $\tshist'=\tshist$.
			
			For \cref{abdaws:3c4} we similarly have that a maximum involving $\Tmapval(\tid)$ is an upper bound on another value and this still holds when replacing it with $\Tmapval'(\tid)$.

			\item If $\tshist'=\tshist\uplus\set{\tup{\val',\ts'}}$:
			
			For \cref{abdaws:1b}, let $\quorum\subseteq\Tid$ be a set such that $\size{\quorum}>\frac{\size{\Tid}}{2}$.
			We have that $\tup{\val',\ts'}\in\tshist'$ and thus $\size{\msgmap'(\msg)}>\frac{\size{\Tid}}{2}$.
			Therefore, there must exist some $\tilde{\tid}\in\quorum\cap\msgmap'(\msg)$.
			We get that $\max\set{\Tmapval'(\tid')\st\tid'\in\quorum}\geq\Tmapval'(\tilde{\tid})\geq\tup{\val',\ts'}$, where the last inequality uses \cref{abdaws:1a}.
			Furthermore, similarly to the previous case as $\Tmapval$ only increases point-wise we have using \cref{abdaws:1b} that $\max\set{\Tmapval'(\tid')\st\tid'\in\quorum}\geq\max\set{\Tmapval(\tid')\st\tid'\in\quorum}\geq\max\tshist$.
			Combining these we get
			$\max\set{\Tmapval'(\tid')\st\tid'\in\quorum}\geq\max\set{\tup{\val',\ts'},\max\tshist}=\max\tshist'$.
			
			Now, since $\tup{\val',\ts'}\notin\tshist$ and $\msg = \update(\_,\_,\tup{\val',\ts'})\in\dom{\msgmap}$, using \cref{abdaws:1d} we get that for the thread $\tid'$ such that $\ts'=\tup{\_,\tid'}$, taking $\id'=\Tmapid(\tid')[\last]$ we have $\update(\tid',\id',\tup{\val',\ts'})\in\dom{\msgmap}$, $\Tmappc(\tid')=\wupdate$,
			and $\tup{\val',\ts'}=\tup{\val_{\tid'},\ts_{\tid'}}\notin\tshist$.
			Thus using \cref{abdaws:3c5} it must be the case that $\Fver(\Tmapstart({\tid'}))\leq\tup{\val_{\tid'},\ts_{\tid'}}$, $\abs\Tmappc({\tid'})=\wwait$, and thus from \cref{abdaws:3c2} $\Tmaparg({\tid'})=\abs\Tmaparg({\tid'})$.
			Using \cref{abdaws:1c2} we have that  $\val_{\tid'}=\Tmaparg({\tid'})$, and overall this gives $\val_{\tid'}=\abs\Tmaparg({\tid'})$.
			
			We consider two sub-cases:
			\begin{itemize}
				\item If $\tup{\val_{\tid'},\ts_{\tid'}}=\max\tshist'$
				
				Since $\abs\Tmappc({\tid'})=\wwait$ and $\lockLTS=0$, we can use the rule \writetop\ and get a transition\\
				$\tup{\val,\ver,\lockLTS,\abs\Tmapid,\abs\Tmappc,\Tmapvalset,\Tmapstart,\abs\Tmaparg} \asteplab{\objact\tup{\tid'}}{}\\ \tup{\abs\Tmaparg(\tid'),\ver+1,\lockLTS,\abs\Tmapid,\abs\Tmappc[\tid' \mapsto \wend],\Tmapvalset,\Tmapstart,\abs\Tmaparg}$
				
				Consider the set $\tidset=\set{\hat{\tid}\st\abs\Tmappc(\hat{\tid})=\rlisten}$.
				For every $\hat{\tid}\in\tidset$, $\abs\Tmappc(\hat{\tid})=\rlisten$ and \cref{abdaws:2c} gives that $\Tmapstart(\hat{\tid})\leq\ver<\ver+1$.
				These conditions together enable a transition using rule \readloop\ with label $\objact\tup{\hat{\tid}}$ for every $\hat{\tid}\in\tidset$ such that $\abs\Tmaparg(\tid')\notin\Tmapvalset(\hat{\tid})$, and this availability is unaffected by performing such a transition for other thread ids.
				Thus, we can perform a sequence of transitions using \readloop\ , one for each member of $\tidset$ such that $\abs\Tmaparg(\tid')\notin\Tmapvalset(\hat{\tid})$, and get:
				$\tup{\abs\Tmaparg(\tid'),\ver+1,\lockLTS,\abs\Tmapid,\abs\Tmappc[\tid' \mapsto \wend],\Tmapvalset,\Tmapstart,\abs\Tmaparg} \Longrightarrow^*\\ \tup{\abs\Tmaparg(\tid'),\ver+1,\lockLTS,\abs\Tmapid,\abs\Tmappc[\tid' \mapsto \wend],\Tmapvalset',\Tmapstart,\abs\Tmaparg}$
				
				where
				$\Tmapvalset'=\Tmapvalset[\forall\hat{\tid}\in\tidset\ldotp\hat{\tid} \mapsto \Tmapvalset(\hat{\tid})\cup\set{\abs\Tmaparg(\tid')}]$
				
				Now, consider the set $\readtidset=\set{\hat{\tid}\st\Tmappc(\hat{\tid})=\rupdate\land \tup{\val_{\hat{\tid}},\ts_{\hat{\tid}}}=\tup{\val_{\tid'},\ts_{\tid'}}}$.
				For every $\tid''\in\readtidset$, $\Tmappc(\tid'')=\rupdate$, $\tup{\val_{\tid''},\ts_{\tid''}}=\tup{\val_{\tid'},\ts_{\tid'}}\notin\tshist$ and thus from \cref{abdaws:3c5} we get $\abs\Tmappc(\tid'')=\rlisten$,
				which implies $\tid''\in\tidset$ and thus we have that $\abs\Tmaparg(\tid')\in\Tmapvalset'(\tid'')$.
				These conditions together enable a transition using rule \readpick\ with label $\objact\tup{\tid''}$, and they are unaffected by performing such a transition for other thread ids.
				Thus, we can perform a sequence of transitions using \readpick\ , one for each member of $\readtidset$:
				
				$\tup{\abs\Tmaparg(\tid'),\ver+1,\lockLTS,\abs\Tmapid,\abs\Tmappc[\tid' \mapsto \wend],\Tmapvalset',\Tmapstart,\abs\Tmaparg} \Longrightarrow^*\\ \tup{\abs\Tmaparg(\tid'),\ver+1,\lockLTS,\abs\Tmapid,{\abs\Tmappc}',\Tmapvalset',\Tmapstart,{\abs\Tmaparg}'}$
				
				where ${\abs\Tmappc}'=\abs\Tmappc[\tid' \mapsto \wend][\forall\hat{\tid}\in\readtidset\ldotp\hat{\tid}\mapsto\rend]$,  ${\abs\Tmaparg}'=\abs\Tmaparg[\forall\hat{\tid}\in\readtidset\ldotp\hat{\tid}\mapsto\abs\Tmaparg(\tid')]$
				
				For \cref{abdaws:2b} and \cref{abdaws:2c} we can use $\ver<\ver+1$.
				
				To show the conditions on $\tup{q',{\abs q}'}$ hold we define $\Fver:\set{0\til \ver+1}\to\Val\times\TS$ by:
				\begin{equation*}
					\Fver'(\tm)=\begin{cases}
						\Fver(\tm) & \tm\leq\ver \\
						\tup{\val_{\tid'},\ts_{\tid'}} & \tm=\ver+1
					\end{cases}
				\end{equation*}
				This function is monotonic since $\Fver$ is monotonic and $\tup{\val_{\tid'},\ts_{\tid'}}=\max\tshist'$ which implies that for all $\tup{\val'',\ts''}\in\tshist$ we have $\tup{\val'',\ts''}<\tup{\val_{\tid'},\ts_{\tid'}}$, and in particular $\Fver'(\ver)=\Fver(\ver)=\max\tshist$ and thus $\Fver'(\ver)<\Fver'(\ver+1)$.
				We also get for \cref{abdaws:3b} that $\Fver'(\ver+1)=\max\tshist'=\tup{\abs\Tmaparg(\tid'),\_}$, where for the last equality we use $\val_{\tid'}=\abs\Tmaparg(\tid')$, shown above.
				
				For the condition on each thread $\tid''\in\Tid$ (aside from \cref{abdaws:3c1} which is easily verified):
				
				To show \cref{abdaws:3c2} still holds for all $\tid''\in\readtidset$, we have $\Tmaparg(\tid'')=\val_{\tid''}=\val_{\tid'}=\abs\Tmaparg(\tid')={\abs\Tmaparg}'(\tid'')$,
				where we used \cref{abdaws:1c2}, the definition of $\readtidset$, an equality proven above, and the definition of ${\abs\Tmaparg}'$
				
				For \cref{abdaws:3c4}, using \cref{abdaws:2c} we have $\Tmapstart(\tid'')\leq\ver<\ver+1$ and thus $\Fver'(\Tmapstart(\tid''))=\Fver(\Tmapstart(\tid''))$.
				As other values are unchanged, the condition continues to holds.

				To show that \cref{abdaws:3c3} holds for all $\tid''\in\tidset$.
				Explicitly, we need to show that\\
				$\Tmapvalset[\forall\hat{\tid}\in\tidset\ldotp\hat{\tid} \mapsto \Tmapvalset(\hat{\tid})\cup\set{\abs\Tmaparg(\tid')}](\tid'') =\\ \set{\val'\st \exists \ts'\ldotp \Fver'(\Tmapstart(\tid''))\leq\tup{\val',\ts'}\leq\Fver'(\ver+1)\land \tup{\val',\ts'}\in \tshist'}$
				
				Simplifying both sides, and using $\Fver'(\ver+1)=\tup{\abs\Tmaparg(\tid'),\_}$, $\tshist'=\tshist\uplus\set{\tup{\abs\Tmaparg(\tid'),\_}}$, and the fact that$\Fver(\tm)=\Fver'(\tm)$ for $\tm\leq\ver$, we get\\
				$\Tmapvalset(\tid'')\cup\set{\abs\Tmaparg(\tid')} = \\
				\set{\val'\st \exists \ts'\ldotp \Fver(\Tmapstart(\tid''))\leq\tup{\val',\ts'}\leq\Fver(\ver)\land \tup{\val',\ts'}\in \tshist}\cup\set{\abs\Tmaparg(\tid')}$
				
				This equality follows from \cref{abdaws:3c3} holding for the pre-states $q,\abs q$ by adding $\set{\abs\Tmaparg(\tid')}$ to both sides.

				For \cref{abdaws:3c5} if we assume $\Tmappc(\tid'')\in\set{\rupdate,\wupdate}$ and $\tup{\val_{\tid''},\ts_{\tid''}}\notin\tshist'$,
				then as $\tshist\subseteq\tshist'$ we get $\tup{\val_{\tid''},\ts_{\tid''}}\notin\tshist$.
				From the assumption of \cref{abdaws:3c5} this means $\Fver(\Tmapstart(\tid''))\leq\tup{\val_\tid'',\ts_\tid''}$ and $\abs\Tmappc(\tid'')\in\set{\rlisten,\wwait}$.
				As $\tup{\val_{\tid''},\ts_{\tid''}}\notin\tshist'$ we have
				$\tup{\val_{\tid''},\ts_{\tid''}}\neq\tup{\val_{\tid'},\ts_{\tid'}}$ and thus $\tid''\neq\tid'$ and $\tid''\notin\readtidset$, which means ${\abs\Tmappc}'(\tid'')=\abs\Tmappc(\tid'')$, and as $\Fver'(\Tmapstart(\tid''))=\Fver(\Tmapstart(\tid''))$ the condition holds.
				
				For \cref{abdaws:3c6}, if $\Tmappc(\tid'')\in\set{\rupdate,\wupdate}$ and $\tup{\val_{\tid''},\ts_{\tid''}}\in \tshist'(\msgmap)$,
				Since $\tshist'=\tshist\uplus\set{\tup{\val_{\tid'},\ts_{\tid'}}}$, we need to consider two cases:
				The first case is $\tup{\val_{\tid''},\ts_{\tid''}}\in \tshist(\msgmap)$, where again we deduce $\tup{\val_{\tid''},\ts_{\tid''}}\neq\tup{\val_{\tid'},\ts_{\tid'}}$ and get that the condition still holds.
				The second case is that $\tup{\val_{\tid''},\ts_{\tid''}}=\tup{\val_{\tid'},\ts_{\tid'}}$.
				If $\Tmappc(\tid'')=\wupdate$ then using \cref{abdaws:1c2} and $\ts_{\tid''}=\ts_{\tid'}$ we get $\tid''=\tid'$,
				and we have that ${\abs\Tmappc}'(\tid')=\wend$, as needed.
				If $\Tmappc(\tid'')=\rupdate$ then we have $\tid''\in\readtidset$ and thus  ${\abs\Tmappc}'(\tid')=\rend$, again as needed.
				
				\item If $\tup{\val_{\tid'},\ts_{\tid'}}<\max\tshist'$
				
				Since $\tshist'=\tshist\uplus\set{\tup{\val',\ts'}}$ and $\tup{\val_{\tid'},\ts_{\tid'}}=\tup{\val',\ts'}$ is not the maximum of $\tshist'$,
				we have that $\max\tshist'=\max\tshist=\Fver(\ver)$.
				We get that $\Fver(\Tmapstart(\tid'))\leq\tup{\val_{\tid'},\ts_{\tid'}}<\Fver(\ver)$ and thus from monotonicity of $\Fver$ (in the contra-positive) we get $\Tmapstart(\tid')<\ver$.
				
				Together with the fact that $\abs\Tmappc({\tid'})=\wwait$ and $\lockLTS=0$, we can use the rule \writerollback\ and get a transition\\
				$\tup{\val,\ver,\lockLTS,\abs\Tmapid,\abs\Tmappc,\Tmapvalset,\Tmapstart,\abs\Tmaparg} \asteplab{\objact\tup{\tid'}}{}\\
				\tup{\abs\Tmaparg(\tid'),\ver-1,1,\abs\Tmapid,\abs\Tmappc[\tid' \mapsto \wrollback],\Tmapvalset,\Tmapstart,\abs\Tmaparg[\tid'\mapsto\val]}$
				
				Consider the set $\tidset'=\set{\hat{\tid}\st \abs\Tmappc(\hat{\tid})=\rlisten\land \Fver(\Tmapstart(\hat{\tid}))\leq\tup{\val_{\tid'},\ts_{\tid'}}}$.
				For every $\hat{\tid}\in\tidset'$, similar to the above using monotonicity of $\Fver$ we get $\Tmapstart(\hat{\tid})<\ver$ and thus $\Tmapstart(\hat{\tid})\leq\ver-1$.
				Therefore, a transition using rule \readloop\ with label $\objact\tup{\hat{\tid}}$ is available for every $\hat{\tid}\in\tidset'$ such that $\abs\Tmaparg(\tid')\notin\Tmapvalset(\hat{\tid})$, and remains available as such a transition is performed for other threads in $\tidset'$.
				Denoting $\Tmapvalset'=\Tmapvalset[\forall\hat{\tid}\in\tidset'\ldotp\hat{\tid} \mapsto \Tmapvalset(\hat{\tid})\cup\set{\abs\Tmaparg(\tid')}]$, we can perform a sequence of transitions using \readloop\ , one for each member of $\tidset'$ such that $\abs\Tmaparg(\tid')\notin\Tmapvalset(\hat{\tid})$:\\
				$\tup{\abs\Tmaparg(\tid'),\ver-1,1,\abs\Tmapid,\abs\Tmappc[\tid' \mapsto \wrollback],\Tmapvalset,\Tmapstart,\abs\Tmaparg[\tid'\mapsto\val]} \Longrightarrow^* \\ \tup{\abs\Tmaparg(\tid'),\ver-1,1,\abs\Tmapid,\abs\Tmappc[\tid' \mapsto \wrollback],\Tmapvalset',\Tmapstart,\abs\Tmaparg[\tid'\mapsto\val]}$
				
				now, Since $\abs\Tmappc[\tid' \mapsto \wrollback](\tid')=\wrollback$ we can perform a transition using the rule \writerollforward\ and get a transition\\
				$\tup{\abs\Tmaparg(\tid'),\ver-1,1,\abs\Tmapid,\abs\Tmappc[\tid' \mapsto \wrollback],\Tmapvalset',\Tmapstart,\abs\Tmaparg[\tid'\mapsto\val]}\\
				\asteplab{\objact\tup{\tid'}}{}\\
				\tup{\abs\Tmaparg[\tid'\mapsto\val](\tid'),\ver-1+1,0,\abs\Tmapid,\abs\Tmappc[\tid' \mapsto \wend],\Tmapvalset',\Tmapstart,\abs\Tmaparg[\tid'\mapsto\val]}$
				
				Simplifying this we get a state
				$\tup{\val,\ver,0,\abs\Tmapid,\abs\Tmappc[\tid' \mapsto \wend],\Tmapvalset',\Tmapstart,\abs\Tmaparg[\tid'\mapsto\val]}$
				
				Finally, consider the set $\readtidset=\set{\hat{\tid}\st\Tmappc(\hat{\tid})=\rupdate\land \tup{\val_{\hat{\tid}},\ts_{\hat{\tid}}}=\tup{\val_{\tid'},\ts_{\tid'}}}$.
				For every $\tid''\in\readtidset$, $\Tmappc(\tid'')=\rupdate$, $\tup{\val_{\tid''},\ts_{\tid''}}=\tup{\val_{\tid'},\ts_{\tid'}}\notin\tshist$ and thus from \cref{abdaws:3c5} we get $\Fver(\Tmapstart(\tid''))\leq\tup{\val_{\tid''},\ts_{\tid''}}$ and $\abs\Tmappc(\tid'')=\rlisten$,
				which implies $\tid''\in\tidset'$ and thus we have that $\abs\Tmaparg(\tid')\in\Tmapvalset'(\tid'')$.
				These conditions together enable a transition using rule \readpick\ with label $\objact\tup{\tid''}$, and they are unaffected by performing such a transition for other thread ids.
				Thus, we can perform a sequence of transitions using \readpick\ , one for each member of $\readtidset$:
				
				$\tup{\val,\ver,0,\abs\Tmapid,\abs\Tmappc[\tid' \mapsto \wend],\Tmapvalset',\Tmapstart,\abs\Tmaparg[\tid'\mapsto\val]} \Longrightarrow^*\\ \tup{\val,\ver,0,\abs\Tmapid,{\abs\Tmappc}',\Tmapvalset',\Tmapstart,{\abs\Tmaparg}'}$
				
				where ${\abs\Tmappc}'=\abs\Tmappc[\tid' \mapsto \wend][\forall\hat{\tid}\in\readtidset\ldotp\hat{\tid}\mapsto\rend]$ and\\
				${\abs\Tmaparg}'=\abs\Tmaparg[\tid'\mapsto\val][\forall\hat{\tid}\in\readtidset\ldotp\hat{\tid}\mapsto\abs\Tmaparg(\tid')]$

				For \cref{abdaws:2a} note that $\lockLTS=0$ in the post-state ${\abs q}'$.
				To show the conditions on $\tup{q',{\abs q}'}$:
				
				For \cref{abdaws:3b} we use the fact that $\max\tshist'=\max\tshist$.
				
				For the condition on each thread $\tid''\in\Tid$: \cref{abdaws:3c2}, \cref{abdaws:3c5} and \cref{abdaws:3c6} are justified exactly like in the previous case.
				For \cref{abdaws:3c4}, no relevant values are changed.
				
				It remains to show that \cref{abdaws:3c3} holds for every $\tid''\in\tidset=\set{\hat{\tid}\st \abs\Tmappc(\hat{\tid})=\rlisten}$.
				Explicitly, we need to show that
				$\Tmapvalset[\forall\hat{\tid}\in\tidset'\ldotp\hat{\tid} \mapsto \Tmapvalset(\hat{\tid})\cup\set{\abs\Tmaparg(\tid')}](\tid'') = \set{\val'\st \exists \ts'\ldotp \Fver(\Tmapstart(\tid''))\leq\tup{\val',\ts'}\leq\Fver(\ver)\land \tup{\val',\ts'}\in \tshist'}$.
				
				If $\tid''\in\tidset'$, this simplifies to:\\
				$\Tmapvalset(\tid'')\cup\set{\abs\Tmaparg(\tid')} = \set{\val'\st \exists \ts'\ldotp \Fver(\Tmapstart(\tid''))\leq\tup{\val',\ts'}\leq\Fver(\ver)\land \tup{\val',\ts'}\in \tshist'}$.
				From $\tid''\in\tidset'$ we get that $\Fver(\Tmapstart(\tid''))\leq\tup{\val_{\tid'},\ts_{\tid'}}$,
				and using the fact that
				$\tup{\abs\Tmaparg(\tid'),\_}=\tup{\val_{\tid'},\ts_{\tid'}}\leq\max\tshist'=\Fver(\ver)$
				and $\tup{\val_{\tid'},\ts_{\tid'}}\in\tshist'$
				we get\\
				$\set{\val'\st \exists \ts'\ldotp \Fver(\Tmapstart(\tid''))\leq\tup{\val',\ts'}\leq\Fver(\ver)\land \tup{\val',\ts'}\in \tshist'}=\\
				\set{\val'\st \exists \ts'\ldotp \Fver(\Tmapstart(\tid''))\leq\tup{\val',\ts'}\leq\Fver(\ver)\land \tup{\val',\ts'}\in \tshist}\cup\set{\abs\Tmaparg(\tid')}$,
				and then similarly to the previous case the desired equality follows from \cref{abdaws:3c3}.
				
				If $\tid''\in\tidset\setminus\tidset'$, this simplifies to:\\
				$\Tmapvalset(\tid'') = \set{\val'\st \exists \ts'\ldotp \Fver(\Tmapstart(\tid''))\leq\tup{\val',\ts'}\leq\Fver(\ver)\land \tup{\val',\ts'}\in \tshist'}$.
				
				Since $\abs\Tmappc(\tid'')=\rlisten$ and $\tid''\notin\tidset'$,
				$\Fver(\Tmapstart(\tid''))>\tup{\val_{\tid'},\ts_{\tid'}}$ and thus\\
				$\set{\val'\st \exists \ts'\ldotp \Fver(\Tmapstart(\tid''))\leq\tup{\val',\ts'}\leq\Fver(\ver)\land \tup{\val',\ts'}\in \tshist'}=\\
				\set{\val'\st \exists \ts'\ldotp \Fver(\Tmapstart(\tid''))\leq\tup{\val',\ts'}\leq\Fver(\ver)\land \tup{\val',\ts'}\in \tshist}$
				and it remains to verify the equality $\Tmapvalset(\tid'') = \set{\val'\st \exists \ts'\ldotp \Fver(\Tmapstart(\tid''))\leq\tup{\val',\ts'}\leq\Fver(\ver)\land \tup{\val',\ts'}\in \tshist}$, which is true using \cref{abdaws:3c3}. \qedhere
			\end{itemize}

		\end{itemize}

	\end{itemize}
	
\end{proof}

\end{document}